\numberwithin{equation}{section}
\theoremstyle{plain}\newtheorem{definition}{Definition}[section]
\newtheorem{lem}[definition]{Lemma}
\newtheorem{proposition}[definition]{Proposition}
\theoremstyle{remark}\newtheorem{remark}[definition]{Remark}
\theoremstyle{plain}
\theoremstyle{plain}\newtheorem{assumption}{Assumption}
\theoremstyle{plain}\newtheorem{theorem}{Theorem}
\newcommand{\remit}[1]{\begin{enumerate}[label={(\alph*)}, ref={\theremark\alph*}]{#1}\end{enumerate}}
\newcommand{\R}{\mathbb{R}}
\newcommand{\C}{\mathbb{C}}
\newcommand{\N}{\mathbb{N}}
\newcommand{\Fock}{\mathcal{F}}
\newcommand{\Number}{\mathcal{N}}
\newcommand{\vac}{|\Omega\rangle}
\newcommand{\id}{\mathbbm{1}}
\newcommand{\cL}{\mathcal{L}}
\newcommand{\cR}{\mathcal{R}}
\newcommand{\op}{\mathrm{op}}
\let\textl\l
\renewcommand{\l}{\ell}
\renewcommand{\i}{\mathrm{i}}
\newcommand{\e}{\mathrm{e}}
\newcommand{\hc}{\mathrm{h.c.}}
\newcommand{\sym}{\mathrm{sym}}
\newcommand{\Tr}{\mathrm{Tr}}
\newcommand{\bPhi}{{\boldsymbol{\phi}}}
\renewcommand{\Im}{{\mathrm{Im}}}
\renewcommand{\Re}{{\mathrm{Re}}}
\newcommand{\bj}{\boldsymbol{j}}
\renewcommand{\hat}[1]{\widehat{#1}}
\renewcommand{\tilde}[1]{\widetilde{#1}}
\newcommand{\ls}{\lesssim}
\newcommand{\gs}{\gtrsim}
\newcommand{\lr}[1]{\left\langle #1 \right\rangle}
\newcommand{\norm}[1]{\lVert#1\rVert}
\renewcommand{\d}{\mathop{}\!\mathrm{d}}
\newcommand{\dx}{\d x}
\newcommand{\dy}{\d y}
\newcommand{\dz}{\d z}
\newcommand{\dt}{\d t}
\newcommand{\ds}{\d s}
\newcommand{\ad}{a^\dagger}
\newcommand\mydots{,\makebox[1em][c]{.\hfil.\hfil.},}
\newcommand\mycdots{\makebox[1em][c]{$\cdot$\hfil$\cdot$\hfil$\cdot$}}
\DeclareFontFamily{OMX}{MnSymbolE}{}
\DeclareSymbolFont{MnLargeSymbols}{OMX}{MnSymbolE}{m}{n}
\DeclareFontShape{OMX}{MnSymbolE}{m}{n}{
    <-6>  MnSymbolE5
   <6-7>  MnSymbolE6
   <7-8>  MnSymbolE7
   <8-9>  MnSymbolE8
   <9-10> MnSymbolE9
  <10-12> MnSymbolE10
  <12->   MnSymbolE12
}{}
\DeclareFontShape{OMX}{MnSymbolE}{b}{n}{
    <-6>  MnSymbolE-Bold5
   <6-7>  MnSymbolE-Bold6
   <7-8>  MnSymbolE-Bold7
   <8-9>  MnSymbolE-Bold8
   <9-10> MnSymbolE-Bold9
  <10-12> MnSymbolE-Bold10
  <12->   MnSymbolE-Bold12
}{}
\let\llangle\@undefined
\let\rrangle\@undefined
\DeclareMathDelimiter{\llangle}{\mathopen}
                     {MnLargeSymbols}{'164}{MnLargeSymbols}{'164}
\DeclareMathDelimiter{\rrangle}{\mathclose}
                     {MnLargeSymbols}{'171}{MnLargeSymbols}{'171}
\newcommand\smallO[1]{
        \mathchoice
            {
                \ensuremath{\mathop{}\mathopen{}{\scriptstyle\mathcal{O}}\mathopen{}\left(#1\right)}
            }
            {
                \ensuremath{\mathop{}\mathopen{}{\scriptstyle\mathcal{O}}\mathopen{}\left(#1\right)}
            }
            {
                \ensuremath{\mathop{}\mathopen{}{\scriptscriptstyle\mathcal{O}}\mathopen{}\left(#1\right)}
            }
            { 
                \ensuremath{\mathop{}\mathopen{}{o}\mathopen{}\left(#1\right)}
            }
    }
\newcommand{\PsiN}{\Psi_N}
\newcommand{\UN}{U_N}
\newcommand{\FNp}{{\Fock_\perp^{\leq N}}}
\newcommand{\Fp}{{\Fock_\perp}}
\newcommand{\Chi}{{\boldsymbol{\chi}}}
\newcommand{\tChi}{\tilde{\Chi}}
\newcommand{\Chiz}{\Chi_{0}}
\renewcommand{\P}{\mathbb{P}}
\newcommand{\Q}{\mathbb{Q}}
\newcommand{\Pz}{\P_{0}}
\newcommand{\Qz}{\Q_{0}}
\newcommand{\Ez}{E_{0}}
\newcommand{\Np}{\Number_\perp}
\newcommand{\FockB}{\mathbb{B}}
\newcommand{\FockA}{\mathbb{A}}
\newcommand{\FockO}{\mathbb{O}}
\newcommand{\goint}{\oint_{\gamma}}
\newcommand{\vNb}{v_{N,\beta}}
\newcommand{\hvNb}{\hat{v}_N^\beta}
\newcommand{\HNb}{H_{N,\beta}}
\newcommand{\cHNb}{\mathcal{H}_{N,\beta}}
\newcommand{\ENb}{E_{N,\beta}}
\newcommand{\FockHb}{\FockH_\beta}
\newcommand{\Ls}{{\Lambda^*}}
\newcommand{\Lsp}{{\Lambda_+^*}}
\newcommand{\boldKz}{\mathbb{K}_{0}}
\newcommand{\boldKo}{\mathbb{K}_{1}}
\newcommand{\boldKt}{\mathbb{K}_{2}}
\newcommand{\boldKth}{\mathbb{K}_{3}}
\newcommand{\boldKf}{\mathbb{K}_{4}}
\newcommand{\boldKtbar}{\mathbb{K}_2^*}
\newcommand{\boldKthbar}{\mathbb{K}_{3}^*}
\newcommand{\FockH}{\mathbb{H}}
\newcommand{\FockG}{\mathbb{G}}
\newcommand{\FockGz}{\FockG_0}
\newcommand{\FockR}{\mathbb{R}}
\newcommand{\FockRz}{\FockR_{0}}
\newcommand{\FockT}{\mathbb{T}}
\newcommand{\BogUz}{\mathbb{U}_{\tau}}
\newcommand{\aNb}{\mathfrak{a}_{N,\beta}}
\newcommand{\abs}[1]{\left| #1 \right|}
\newcommand{\scp}[2]{\left\langle #1 , #2 \right\rangle}
\newcommand{\Rz}{\frac{1}{z-\FockGz}}
\newcommand{\RQz}{\frac{\mathbb{Q}_0}{z-\FockGz}}
\newcommand{\REz}{\frac{\Qz}{E_0-\FockGz}}
\newcommand{\RG}{\frac{1}{z-\FockG}}
\newcommand{\RQG}{\frac{\mathbb{Q}}{z-\FockG}}
\title{Ground state of Bose gases interacting through singular potentials}
\author{Lea Boßmann\thanks{Department of Mathematics, Ludwig-Maximilians-Universität München, Theresienstr. 39, 80333 Munich, Germany. \texttt{bossmann@math.lmu.de}},\;
Nikolai Leopold\thanks{Department of Mathematics and Computer Science, University of Basel, Spiegelgasse 1, 4051 Basel, Switzerland. \texttt{nikolai.leopold@unibas.ch}},\;
Sören Petrat\thanks{School of Science, Constructor University Bremen, Campus Ring 1, 28759 Bremen, Germany. \texttt{spetrat@constructor.university}},\;
and Simone Rademacher\thanks{Department of Mathematics, Ludwig-Maximilians-Universität München, Theresienstr. 39, 80333 Munich, Germany. \texttt{simone.rademacher@math.lmu.de}}}
\date{\today}
\begin{document}
\maketitle

\begin{abstract}
\noindent We consider a system of $N$ bosons on the three-dimensional unit torus. The particles interact through repulsive pair interactions of the form $N^{3\beta-1}v(N^\beta x)$ for $\beta\in(0,1)$. We prove the next order correction to Bogoliubov theory for the ground state and the ground state energy.
\end{abstract}

\section{Introduction and main results}
We consider a system of $N$ interacting bosons on the three-dimensional unit torus, i.e., the box $\Lambda = [0,1]^{3}$ with periodic boundary conditions. The system is described by the $N$-body Hamiltonian
\begin{align}
\HNb &=   \sum_{j=1}^N  - \Delta_j + \frac{1}{N}\sum_{1 \leq i < j \leq N} \vNb (x_i - x_j) \,,
\end{align}
where
\begin{equation}
\vNb(x)=N^{3\beta}v(N^\beta x)\,,\qquad \beta\in(0,1)\,.
\end{equation}
We make the following assumption on the interaction $v$:

\begin{assumption}\label{ass}
Let $v=\kappa V$ for $\kappa\in\R$ suffiently small and let $V:\R^3\to\R$ be bounded, compactly supported, non-negative and spherically symmetric. Moreover, assume that $v$ is of positive type, i.e., that it has a non-negative Fourier transform. 
\end{assumption}

The scaling parameter $\beta$ interpolates between the mean-field regime ($\beta=0$), which describes a system of many weak and long-range interactions, and the Gross--Pitaevskii regime ($\beta=1$), featuring few and essentially on-site interactions. In this paper, we focus on the regime $\beta\in(1/2,1)$ of singular interactions well beyond the mean-field regime.

The Hamiltonian $\HNb$ acts on the Hilbert space of permutation symmetric, square integrable functions on $\Lambda^N$, which we denote as $L^2_s(\Lambda^N)\subset L^2(\Lambda^N)$.
It is well known  that $\HNb$ has a discrete spectrum and a unique ground state $\PsiN$, which is defined as
\begin{equation}\label{def:ENb}
\HNb\PsiN=\ENb\PsiN\,,\qquad \ENb:=\inf\sigma(\HNb)\,.
\end{equation}
An important property of the potential $\vNb$ is its scattering length, which can be defined as follows: Denote by $f:\Lambda\to\R$ the scattering solution on the torus, i.e., the solution of the equation
\begin{equation}\label{scattering:solution}
\left(-\Delta +\frac{1}{2N}\vNb\right) f=0\,.
\end{equation}
The scattering length of $\vNb$ on the torus is then given by
\begin{equation}\label{def:box:scattering:length}
8\pi\aNb:=
\sum_{p\in\Ls}\hat{v}\left(\frac{p}{N^\beta}\right)\hat{f}_p\,,
\end{equation}
where $\hat{f}_p:=\int_\Lambda f(x)\e^{\i p\cdot x}$ denotes the Fourier transform of the scattering solution. We furthermore introduce 
\begin{equation}
\label{def:eta1}
\eta_p:=N \left(\hat{f}_p-\delta_{p,0}\right)\,.
\end{equation}

Our main goal in this paper is to prove the first three terms in an expansion of the ground state energy $\ENb$ for large $N$. The leading and next-to-leading order have first been predicted by Bogoliubov in 1947 \cite{bogoliubov1947}; a rigorous proof was given for our model by Boccato, Brennecke, Cenatiempo and Schlein in \cite{boccato2017_2}. In our main result, we now prove the third order in the expansion:

\begin{theorem}\label{thm:energy}
Let Assumption \ref{ass} be satisfied and let $\beta\in(\frac12,1)$. Then there exists a constant $C>0$ such that
\begin{equation}\label{eqn:thm:energy}
\left|\ENb-4\pi  (N-1)  \mathfrak{a}_N^\beta - E_{0,0} - E_{\mathrm{corr}} \right|\leq CN^{\frac32(\beta-1)}
\end{equation}
for sufficiently large $N$,
where
\begin{align}
E_{0,0}&=  \frac{1}{2} \sum_{p \in \Lambda_+^*} \left( -p^2 - \widehat{v} (0) + \sqrt{\vert p \vert^4 + 2p^2 \widehat{v} (0)} + \frac{\widehat{v}(0)^2}{2p^2} \right) \,\label{E00}
\end{align}
and
\begin{align}\label{E_corr_def}
E_{\mathrm{corr}} = C_{N,\beta} \Bigg( -\frac{1}{2N} \sum_{p \in \Lsp} \frac{\widehat{v}_N^\beta (p)^2}{2p^2} \Bigg),
\end{align}
with 
\begin{subequations}
\begin{align}
C_{N,\beta} &= \frac{1}{2} \sum_{p \in \Lsp} (s_pc_p-\eta_p) + \widehat{v}(0)^2 \sum_{p \in \Lsp} \frac{1}{\sqrt{\vert p \vert^4 + 2 p^2 \widehat{v}(0)} \left( p^2 + \sqrt{\vert p \vert^4 + 2 p^2 \widehat{v}(0)} \right)} \label{thm_const_Bog} \\
&\quad + \sum_{p \in \Lsp} 4 \Big( c_p\widetilde{s}_p + 2 \widetilde{c}_p s_p \Big)^2.\label{thm_const_pert}
\end{align}
\end{subequations}
Here, we introduced the notation $s_p := \sinh( \eta_p)$ and $c_p := \cosh(\eta_p)$ with $\eta_p$ given by \eqref{def:eta1}, and $\widetilde{s}_p := \sinh( \tau_p)$ and $\widetilde{c}_p := \cosh(\tau_p)$ with $\tau_p$ given by \eqref{def:tau}.
\end{theorem}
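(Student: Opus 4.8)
The plan is to reduce $\HNb$ to an effective Hamiltonian on the excitation Fock space and then to extract the first three terms of $\ENb$ by a rigorous Rayleigh--Schr\"odinger expansion, realised through a contour-integral representation of the ground-state energy. First I would apply the excitation map $\UN$ of Lewin--Nam--Serfaty--Solovej, which unitarily maps $\HNb$ on $L^2_s(\Lambda^N)$ onto an operator on the truncated Fock space $\FNp$ over $L^2_\perp(\Lambda)$, organised in powers of $N^{-1/2}$. Then I would conjugate with the generalised Bogoliubov transformation built from the correlation kernel $\eta_p=N(\hat f_p-\delta_{p,0})$ of \eqref{def:eta1} (in practice split according to momentum scale around $|p|\sim N^\beta$): this implements the short-scale scattering structure, replaces the bare coupling $\hat v(0)$ in the leading energy by $8\pi\aNb$, and produces at next order both the convergence-restoring counterterm entering $E_{0,0}$ and the constant $\tfrac12\sum_{p\in\Lsp}(s_pc_p-\eta_p)$ entering $C_{N,\beta}$. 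Finally I would diagonalise the resulting quadratic part by the Bogoliubov transformation $\BogUz$ with parameter $\tau_p$ of \eqref{def:tau}, arriving at the effective Hamiltonian $\FockG=\FockGz+\FockG_3+\FockG_4+(\text{lower-order leftovers})$, where the scalar part $4\pi(N-1)\aNb+E_{0,0}$ has been subtracted, the quadratic part $\FockGz=\sum_{p\in\Lsp}\sqrt{|p|^4+2p^2\hat v(0)}\,\ad_p a_p$ is gapped with simple ground state $\vac$, and $\FockG_3$, $\FockG_4$ are the cubic and quartic remainders; expanding the Bogoliubov-diagonalisation constant one order beyond $E_{0,0}$ supplies the $\hat v(0)^2$-piece of $C_{N,\beta}$.

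The second ingredient is a package of a priori estimates for the fully conjugated ground state $\Phi_N:=\BogUz^*\cdots\UN\PsiN$: condensation $1-|\scp{\vac}{\Phi_N}|^2=o(1)$, the moment bounds $\scp{\Phi_N}{(\Np+1)^k\Phi_N}\ls 1$ for all $k$, the energy bound $\scp{\Phi_N}{\FockGz\Phi_N}\ls 1$, and operator inequalities controlling $\FockG_3$, $\FockG_4$ and the leftover terms by $\FockGz+\Np$ with small $N$-dependent prefactors. These follow by localising in $\Np$, tracking the scaling $\hvNb(p)=\hat v(p/N^\beta)$ through the conjugations, and controlling the truncation at $\Np\leq N$, sharpening to the present precision the Bogoliubov-order analysis of \cite{boccato2017_2}.

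With these in hand, I would represent $\delta_N:=\ENb-4\pi(N-1)\aNb-E_{0,0}$, the ground-state energy of $\FockG$, as $\delta_N=\big(\tfrac1{2\pi\i}\goint z\,\scp{\vac}{\RG\vac}\,\dz\big)\big/\big(\tfrac1{2\pi\i}\goint\scp{\vac}{\RG\vac}\,\dz\big)$ for $\gamma$ a small contour around the bottom of the spectrum of $\FockG$; this is legitimate because the condensation bound guarantees that $\vac$ is not orthogonal to the ground state of $\FockG$. Expanding $\RG$ in a Neumann series about $\Rz$ in $\FockG_3+\FockG_4+(\text{leftovers})$ and integrating term by term, the vacuum expectations of $\FockG_3$ and of the normal-ordered $\FockG_4$ vanish, so the first non-zero contribution is the second-order term $\scp{\vac}{\FockG_3\,\REz\,\FockG_3\,\vac}$, with $\REz$ the reduced resolvent of $\FockGz$ at its ground-state energy $E_0$. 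Evaluating this, the dominant contribution comes from intermediate states in which one of the three created quasi-particles carries a momentum of order $N^\beta$; there the reduced resolvent is governed by that momentum, the sum factorises into $\sum_{p\in\Lsp}4(c_p\widetilde{s}_p+2\widetilde{c}_p s_p)^2$ times $-\tfrac1{2N}\sum_{p\in\Lsp}\tfrac{\hvNb(p)^2}{2p^2}$, and combining this perturbative part with the two constants produced in the renormalisation and diagonalisation steps gives $E_{0,0}+E_{\mathrm{corr}}$. The a priori estimates then bound all higher orders of the Neumann series together with all contributions of $\FockG_4$ and of the leftover terms by $CN^{\frac32(\beta-1)}$.

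The main obstacle is precisely this last error control in the singular regime $\beta\to 1$. The factor $-\tfrac1{2N}\sum_p\tfrac{\hvNb(p)^2}{2p^2}$ multiplying $C_{N,\beta}$ is only of order $N^{\beta-1}$, so $E_{\mathrm{corr}}$ must be computed with a further relative accuracy $N^{\frac12(\beta-1)}$; since $\FockG_3$ carries the peaked potential $\hvNb$ and is not form-bounded by $\FockGz$ through naive estimates, establishing this requires sharp norm bounds on $\FockG_3\,\Rz$ and a sharp relative bound for $\FockG_4$ that exploit simultaneously the $\mathcal{O}(1)$ spectral gap of $\FockGz$ and the localisation of $\hvNb$ at momentum scale $N^\beta$. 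A closely related difficulty is the renormalisation step itself: extracting $8\pi\aNb$ together with the constant $\tfrac12\sum_p(s_pc_p-\eta_p)$ requires expanding the conjugated Hamiltonian while retaining all commutators between the correlation kernel and the interaction up to the order at which they feed into $E_{\mathrm{corr}}$, and carefully absorbing the errors created by the $\Np\leq N$ truncation of the Bogoliubov transformations.
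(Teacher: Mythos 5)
Your overall strategy coincides with the paper's: excitation map $\UN$, conjugation with the true Bogoliubov transformation $\FockT$ built from $\eta_p$, and rigorous Rayleigh--Schr\"odinger perturbation theory implemented through contour integrals of the resolvent. The perturbative formula you land on -- the second-order cubic contribution $\scp{\vac}{\FockG_3\REz\FockG_3\vac}$, computed after diagonalisation -- is unitarily equivalent to the paper's $\lr{\Chiz,\FockG_1\frac{\Qz}{\Ez-\FockGz}\FockG_1\Chiz}$, and your heuristic for the factorisation of that sum (small quasi-particle momentum $q$ versus two momenta near $N^\beta$; reduced resolvent $\approx(2p^2)^{-1}$) matches what the paper's Section~7.2 proves. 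A few organisational points differ. (i) You diagonalise $\FockGz$ with $\BogUz$ \emph{before} doing perturbation theory and expand about $\vac$; the paper keeps the non-diagonal quadratic $\FockGz$ with quasi-free ground state $\Chiz=\BogUz^*\vac$ and only diagonalises to evaluate the resulting expressions. Equivalent, but in your scheme the vacuum expectation of the normal-ordering constant of the conjugated quartic, $\lr{\vac,\BogUz\FockG_2\BogUz^*\vac}=\lr{\Chiz,\FockG_2\Chiz}$, does not automatically sit in the previously subtracted scalar: you must still verify it is $\mathcal O(N^{-1})$ (as the paper does using $|\tau_p|\lesssim|p|^{-4}$), rather than asserting that "the vacuum expectation of the normal-ordered $\FockG_4$ vanishes." (ii) You compute the energy via a ratio of two contour integrals $\goint z\lr{\vac,\RG\vac}/\goint\lr{\vac,\RG\vac}$; the paper instead expands the rank-one projector $\P=\frac{1}{2\pi\i}\goint(z-\FockG)^{-1}$ and uses $E=\Tr(\FockG\P)$. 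Both are fine; the paper's choice also yields the expansion of the density matrix (Theorem~\ref{thm:state}) for free. (iii) You propose splitting $\eta$ by momentum scale around $|p|\sim N^\beta$; this is unnecessary here because $\|\eta\|_{\ell^2}\lesssim1$ already holds for the full kernel (Lemma~\ref{lem:HST}), and the paper deliberately uses the unsplit transformation. Finally a small factual slip: in the dominant intermediate states two (not one) of the three created quasi-particle momenta are of order $N^\beta$, namely $p$ and $-(p+q)$, while $-q$ is $\mathcal O(1)$.
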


The coefficient $E_{0,0}$, which is of order one, is the leading order of the Bogoliubov ground state energy. It has been proven rigorously for our model in \cite{boccato2017_2} under the assumption of a sufficiently small interaction potential. Previously, Bogoliubov theory had been rigorously justified in the mean-field regime ($\beta=0$) \cite{seiringer2011,grech2013,lewin2015_2,nam2014,derezinski2014}. For the mathematically very difficult Gross--Pitaevskii regime ($\beta=1)$, the Bogoliubov correction to the ground state energy was obtained in \cite{boccato2018,nam2021,brennecke2022_2,hainzl2022,caraci2023,basti2023}. Note that all these works are not only restricted to the ground state energy but extend to the full low-energy excitation spectrum.

The coefficient $E_{\mathrm{corr}}$ is of order $N^{\beta-1}$. The contribution from line \eqref{thm_const_Bog} comes from the next order of the Bogoliubov ground state energy and can therefore also be retrieved from the analysis in \cite{boccato2017_2}, where it was absorbed in the error term (see Section \ref{sec:proof:diag}). The main novelty of our result is the proof of the contribution coming from \eqref{thm_const_pert}. A comparable result beyond Bogoliubov theory has previously been obtained for $\beta=0$ in \cite{spectrum}, see also \cite{pizzo2015} for related results in a different setting.

After the first version of our work was published as a preprint, two new works appeared which improved the error estimate for the Bogolibov approximation in the Gross--Pitaevskii regime \cite{brooks2023} and proved the third-order correction to the ground state energy~\cite{caraci2023_2}.\\
 
In addition to the ground state energy, we are interested in the ground state wave function $\PsiN$, and in particular in the reduced one-body density matrix of the ground state,
\begin{equation}
\gamma_N^{(1)}:=\Tr_{L^2(\Lambda^{N-1})}|\PsiN\rangle\langle\PsiN|\,.
\end{equation} 
Since we are in the translation invariant setting on the torus, $\PsiN$ exhibits  Bose--Einstein condensation (BEC) in the state $\varphi_0\equiv 1$ (see, e.g., \cite{lieb2002,boccato2017,boccato2018_2,nam2020,brennecke2022,hainzl2020} for proofs of complete BEC in various settings). In our second main theorem, we prove the next order correction to the ground state. This allows us to show that the projection onto the condensate state approximates the reduced one-body density matrix up to an error of order $N^{\frac32(\beta-1)}$:
 
\begin{theorem}\label{thm:state}
Let Assumption \ref{ass} be satisfied, let $\beta\in(\frac12,1)$ and let $N$ be sufficiently large. Then there exists a constant $C>0$ such that
\begin{equation}
\left\|\PsiN -\Psi_{N,0}-\Psi_{N,1}-\Psi_{N,2}\right\|_{L^2(\Lambda^N)} \leq C N^{\frac32(\beta-1)}\,,
\end{equation}
where the functions $\Psi_{N,\l}\in L^2_s(\Lambda^N)$ are defined in \eqref{def:PsiNl}.
Moreover, it holds that
\begin{equation}\label{eqn:thm:RDM}
\Tr\left|\gamma_N^{(1)}-\gamma_0\right|\leq C N^{\frac32(\beta-1)}
\end{equation}
for $\gamma_{0}(x,y)=1$.
\end{theorem}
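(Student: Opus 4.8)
The plan is to deduce Theorem~\ref{thm:state} from the renormalized eigenvalue problem that already yields Theorem~\ref{thm:energy}, now extracting the ground state \emph{vector} rather than only its energy. Recall from that analysis the excitation map $\UN$ onto the truncated space $\FNp$ and the two (approximately unitary) Bogoliubov transformations $\mathbb{U}_{\BogVz}$ and $\BogUz$ with kernels $\eta$ (see~\eqref{def:eta1}) and $\tau$; conjugating $\HNb-\ENb^{(0)}$, with $\ENb^{(0)}=4\pi(N-1)\mathfrak{a}_N^\beta+E_{0,0}$ the explicit leading constants, by $\BogUz\mathbb{U}_{\BogVz}\UN$ produces on $\FNp$ an operator of the form $\FockG=\FockGz+\FockG_1+\FockG_2+\mathcal{E}$. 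Here $\FockGz$ is the diagonal quadratic Hamiltonian with Bogoliubov dispersion $\epsilon_p\approx\sqrt{|p|^4+2p^2\widehat{v}(0)}$, so that $\Omega$ is its unique ground state with eigenvalue $0$ and spectral gap $\mathfrak{g}=\min_{p\in\Lsp}\epsilon_p>0$ uniformly in large $N$; $\FockG_1$ is the leading cubic term, of effective size $N^{\frac12(\beta-1)}$ (the scattering renormalization encoded in $\eta$ being precisely what keeps it this small in spite of the non-decaying coefficients $\widehat{v}_N^\beta$); $\FockG_2$ collects the pieces of effective size $N^{\beta-1}$, whose vacuum expectation --- together with the second-order contribution of the cubic $\FockG_1$ --- produces $E_{\mathrm{corr}}$; and $\mathcal{E}$ is a remainder of effective size $N^{\frac32(\beta-1)}$, read as a form on states with a controlled number of excitations. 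By the gap and the smallness of $\FockG-\FockGz$ the operator $\FockG$ has a unique isolated ground state $\xi_N$, and $\PsiN=\UN^*\mathbb{U}_{\BogVz}^*\BogUz^*\,\xi_N$.

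Next I would represent the ground state projection by the Riesz integral $\frac{1}{2\pi\i}\goint\RG\,\d z$, with $\gamma$ the positively oriented circle of radius $\mathfrak{g}/2$ about $0$, which for large $N$ encircles exactly the lowest eigenvalue of $\FockG$ (the rest of $\sigma(\FockG)$ lying beyond $\mathfrak{g}/2$). Expanding $\RG=\sum_{k\geq0}\Rz\big((\FockG-\FockGz)\Rz\big)^k$, applying the projection to $\Omega$, normalizing, and performing the contour integral reproduces the Rayleigh--Schr\"odinger expansion of $\xi_N$, with the projections $\Qz=1-|\Omega\rangle\langle\Omega|$ appearing since $\FockG_1\Omega\perp\Omega$. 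Truncating after the terms of order $N^{\beta-1}$ gives
\begin{equation*}
\xi_N=\Omega+\xi_N^{(1)}+\xi_N^{(2)}+\mathcal{R}_N\,,\qquad\norm{\mathcal{R}_N}\leq CN^{\frac32(\beta-1)}\,,
\end{equation*}
where $\xi_N^{(1)}=-\tfrac{\Qz}{\FockGz}\FockG_1\Omega$ is the (sole) order-$N^{\frac12(\beta-1)}$ term, and $\xi_N^{(2)}$ is the order-$N^{\beta-1}$ part of the unit-normalized expansion: the second-order vector from $\FockG_1$, the first-order vector from $\FockG_2$, and the scalar normalization term $-\tfrac12\norm{\xi_N^{(1)}}^2\Omega$ --- the last of which must not be omitted, on pain of a final error of only order $N^{\beta-1}$. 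Setting $\Psi_{N,\l}:=\UN^*\mathbb{U}_{\BogVz}^*\BogUz^*\,\xi_N^{(\l)}$ with $\xi_N^{(0)}:=\Omega$ --- these are the vectors of~\eqref{def:PsiNl} --- and using that $\UN,\mathbb{U}_{\BogVz},\BogUz$ are (approximately) isometric yields the first claim.

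For the reduced density matrix, the shortest route does not require the full wave function expansion. Since the unique ground state $\PsiN$ lies in the zero-momentum sector, $\gamma_N^{(1)}$ is diagonal in the plane-wave basis, so $\gamma_N^{(1)}-\gamma_0=(\lambda_0-1)|\varphi_0\rangle\langle\varphi_0|+\sum_{p\neq0}\lambda_p|\varphi_p\rangle\langle\varphi_p|$ with $\lambda_0=1-\tfrac1N\langle\PsiN,\Np\PsiN\rangle$ and $\lambda_p\geq0$, whence $\Tr\abs{\gamma_N^{(1)}-\gamma_0}=\tfrac2N\langle\PsiN,\Np\PsiN\rangle=\mathcal{O}(1/N)\leq CN^{\frac32(\beta-1)}$ for $\beta\in(\tfrac12,1)$, once the a priori bound $\langle\PsiN,\Np\PsiN\rangle\leq C$ is available. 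Equivalently, one combines the just-proven $L^2$ estimate with the general inequality $\Tr\abs{\gamma_\Psi^{(1)}-\gamma_\Phi^{(1)}}\leq2\norm{\Psi-\Phi}_{L^2(\Lambda^N)}$ applied to $\Phi$ the normalization of $\Psi_{N,0}+\Psi_{N,1}+\Psi_{N,2}$, together with the analogous computation for $\Phi$.

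The crux is controlling the remainder $\mathcal{R}_N$ of the Neumann series uniformly in the singular regime. Since $\FockG-\FockGz$ contains cubic and quartic monomials in creation and annihilation operators with coefficients built from the non-decaying $\widehat{v}_N^\beta$, the series cannot be summed in operator norm; one sandwiches $\FockG-\FockGz$ with $(\FockGz+1)^{-1/2}$ and powers of $(\Np+1)^{-1}$, uses the relative form-boundedness of each piece of $\FockG-\FockGz$ with respect to $\FockGz+\Np$ (the dangerous $\widehat{v}_N^\beta$-terms being tamed by the scattering cancellations already built into $\eta$), and feeds in the a priori moment bounds $\langle\xi_N,(\FockGz+1)^j(\Np+1)^k\xi_N\rangle\leq C_{j,k}$ on the transformed ground state together with $\onorm{(\FockGz+1)^{1/2}\,\Qz(z-\FockGz)^{-1}(\FockGz+1)^{1/2}}\leq C$ for $z\in\gamma$; the effective expansion ratio is then $\sim N^{\frac12(\beta-1)}\to0$, which closes the estimates. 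Tracking the $\leq N$-excitation cutoff while commuting the number-non-preserving transformations through these bounds is an extra but routine layer, and all of this apparatus --- the transformations, the renormalized operator $\FockG$, and the a priori bounds --- is already assembled in the proof of Theorem~\ref{thm:energy}.
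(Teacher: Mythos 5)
Your argument for the $L^2$ estimate follows the same route as the paper: express $\PsiN$ through the transformed ground state $\Chi=\FockT\,\UN\PsiN$, expand the rank-one spectral projector of the renormalized operator via the Riesz integral $\frac{1}{2\pi\i}\goint(z-\FockG)^{-1}\dz$, truncate after two orders, and pull back through the unitaries. The paper packages the analytic content in Proposition~\ref{prop:expansion:P} and then cites \cite[Theorem~4]{spectrum} to pass from the trace-norm bound \eqref{eq:estimate for the projector P} on the projector to the $L^2$ bound on the wave function. Your emphasis on the normalization scalar $-\tfrac12\norm{\xi_N^{(1)}}^2\,\Omega$ is correct and corresponds precisely to the third summand of the paper's $\Chi_2$. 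One notational discrepancy: you conjugate additionally by $\BogUz$, taking the Fock vacuum $\Omega$ as ground state of the diagonalized quadratic operator, whereas the paper keeps $\FockGz$ in its non-diagonal form \eqref{G0} with ground state $\Chiz=\BogUz^*\vac$ and performs perturbation theory around $\Chiz$. The two setups are unitarily equivalent and lead to the same $\Psi_{N,\l}$.

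Your argument for the reduced density matrix is genuinely different from the paper's, and simpler on the torus. You use that the unique ground state $\PsiN$ carries zero total momentum, so $\gamma_N^{(1)}$ is diagonal in the plane-wave basis with eigenvalues $\lambda_p=\tfrac1N\lr{\PsiN,\ad_p a_p\PsiN}$; combined with $\Tr\gamma_N^{(1)}=\Tr\gamma_0=1$, this yields $\Tr\abs{\gamma_N^{(1)}-\gamma_0}=2(1-\lambda_0)=\tfrac2N\lr{\PsiN,\Np\PsiN}\ls N^{-1}$, where the last step uses the a priori moment bound (available from \eqref{eq:moments-chi} in Lemma~\ref{lem:kinetic:energy:excitations}, via \cite{nam2023}). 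This gives a strictly stronger $O(N^{-1})$ bound than the stated $N^{\frac32(\beta-1)}$. The paper's Section~\ref{sec:proof:thm:state} instead decomposes $\Tr(A\gamma_N^{(1)})$ into the blocks $p_0\gamma^{(1)}p_0$, $q_0\gamma^{(1)}q_0$ and the cross terms $p_0\gamma^{(1)}q_0$, $q_0\gamma^{(1)}p_0$, and bounds the cross terms \eqref{eq:trace norm distance estimate c}, \eqref{eq:trace norm distance estimate d} through the wavefunction expansion --- although on the torus these cross terms vanish identically, by exactly the momentum-conservation argument you use. The paper's route is the more robust one (it would carry over to an inhomogeneous trap, where $\gamma_N^{(1)}$ is not diagonal and the off-diagonal contribution is genuinely present, as the paper's remark about $\gamma_1$ acknowledges), but your approach is the natural and more efficient argument for the homogeneous system actually under consideration.
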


The leading order $\gamma_0$ is determined by the condensate. From our perturbative approach, we obtain a correction $\gamma_1$ to $\gamma_0$, which is, on the torus, of order $N^{-1}$ because the leading term of $\gamma_1$ vanishes by translation invariance. Hence, $\gamma_1$ is subleading compared to the overall error, while we expect to see a nonzero contribution in the inhomogeneous setting. The proof of Theorem \ref{thm:state} is given in Section \ref{sec:proof:thm:state}.
A comparable expansion of the reduced density matrix was previously obtained for the mean-field regime $\beta=0$ in \cite{spectrum,nam2020_2,proceedings}.

\begin{remark}
\remit{
\item The bounds in \eqref{eqn:thm:energy} and \eqref{eqn:thm:RDM} for the ground state energy and the reduced densities are due to the fact that we do a perturbative expansion and each step of the perturbation series gains a factor $N^{(\beta-1)/2}$. Since we truncate the expansion after two iterations, this leads to the error $N^{3(\beta-1)/2}$, which is not optimal because one can infer from parity arguments that all orders with non-integer powers of $N^{\beta-1}$ are zero. Hence, we expect the optimal errors to be of the order $N^{2(\beta-1)}$. We conjecture that our expansion could be extended to higher orders as in \cite{spectrum}, which would also yield the optimal bounds.

\item We assume that the interaction potential is sufficiently small. The only reason why we need this smallness assumption is that we use the result from \cite{boccato2017_2} to construct the ground state projector (see Section \ref{sec:intro:pert:theory}). Given the work \cite{boccato2018}, where the corresponding result was shown for $\beta=1$ without the smallness assumption, we expect that \cite{boccato2017_2} can be extended to work without this restriction. Then our result holds without restricting to small potentials.

\item In \cite{boccato2018}, $\ENb$ is approximated for $\beta=1$ by $E_{0,0}$ but with $\hat{v}(0)$ replaced by the scattering length. Given this result, we expect that we should be able to reconstruct the scattering length in $E_{0,0}$ from the corrections $E_{\mathrm{corr}}$, at least for values of $\beta$ close to one. This can nicely be seen from the formula \eqref{E_corr_def}, where the next order of the Born series of the scattering length appears. Let us also remark that in \cite{boccato2018} the formulas contain the infinite volume scattering length, while we work with the box scattering length on the torus as in \cite{boccato2017_2}. The difference between both quantities is of order $N^{-1}$ (see, e.g., \cite[Lemma 9]{hainzl2020}).

\item The restriction $\beta>1/2$ is only for technical reasons since it allows to summarize several error terms in a more efficient way. Our proof works for the full range $\beta\in(0,1)$, but for small $\beta$ we would obtain some extra contributions to the energy which are for $\beta>1/2$ subleading. 

\item We expect that our result can be extended to the full low-energy excitation spectrum similarly to \cite{spectrum}.
}
\end{remark}

Formally, the next order corrections in Theorems \ref{thm:energy} and \ref{thm:state}  can be computed by Rayleigh-Schrödinger perturbation theory. To make this rigorous, we follow the general approach introduced in \cite{spectrum} for the mean-field regime, which was inspired by the previous works \cite{corr,QF} in the dynamical context. The general idea in the mean-field regime is to first unitarily transform the $N$-body Hamiltonian to the corresponding excitation Hamiltonian $\FockH^\mathrm{mf}$ on the excitation Fock space, as is by now the standard procedure \cite{lewin2015_2}. Subsequently, $\FockH^\mathrm{mf}$ is  expanded in a power series  in $N^{-1/2}$ around the Bogoliubov Hamiltonian $\FockH_0^\mathrm{mf}$. Finally, one expresses the ground state projector as
$$\P^{\mathrm{mf}}= \frac{1}{2\pi\i}\goint\frac{1}{z-\FockH^\mathrm{mf}}\dz$$
for a suitable contour $\gamma$ encircling the ground state energy and uses the expansion of $\FockH^\mathrm{mf}$ to construct an expansion of $\P^\mathrm{mf}$
around the Bogoliubov ground state projector 
$$\P_0^{\mathrm{mf}}= \frac{1}{2\pi\i}\goint\frac{1}{z-\FockH^\mathrm{mf}_0}\dz\,.$$%
In our situation, where $\beta\in(1/2,1)$ is far beyond the mean field regime and the interaction potentials converge to $\delta$-interactions, any naive perturbative expansion must fail. The idea is therefore to first reduce the problem to a mean-field problem by conjugating the excitation Hamiltonian $\FockH$ with a suitable quadratic transformation which regularizes $\FockH$. 

There are different  suggestions for such a quadratic transformation in the literature, see e.g.\ \cite{boccato2017_2,nam2021,hainzl2022}. We choose a transformation \eqref{eq:Bogoliubov transformation} which slightly differs from these works. It is a true Bogoliubov transformation of the form 
$$\FockT=\exp\left\{\frac12\sum_{p}\eta_p(\ad_p\ad_{-p}-a_pa_{-p})\right\}\,$$
for $\eta_p$ from \eqref{def:eta1}. In this way, conjugating $\FockH$ with $\FockT$ directly reconstructs the scattering length \eqref{def:box:scattering:length} in the leading order of $\ENb$. Another advantage is that the action of $\FockT$ on creation and annihilation operators is explicit, which makes the computation of $\FockT\FockH\FockT^*$ relatively simple. On the other hand, the transformation $\FockT$ does not preserve the truncation of the excitation Fock space (in contrast to e.g.\ \cite{boccato2017_2}, where a generalized Bogoliubov transformation was used), which makes the estimates more involved.

The transformed Hamiltonian $\FockT\FockH\FockT^*$ can now be used for perturbation theory. While we follow in principle the road of \cite{spectrum}, obtaining sufficiently strong estimates is much more difficult than in the mean-field setting because we deal with very singular potentials. \\

The paper is structured as follows: In Section \ref{sec:method} we introduce all necessary notation and explain the strategy of the proof. The two main building blocks of the proof are Proposition~\ref{prop:renormalized:hamiltonian}, which concerns the renormalization of $\FockH$ and the estimates of the errors, and Proposition~\ref{prop:expansion:P}, containing the perturbation theory. Section \ref{sec:BT} collects some useful bounds on the quadratic transformations. In Section \ref{sec:proof:diag} we diagonalize the Bogoliubov Hamiltonian and extract the relevant contributions to the ground state energy. In Section \ref{sec:estimates-G}, we prove Proposition \ref{prop:renormalized:hamiltonian}. Section \ref{sec:excitation:vector} contains estimates of the kinetic energy and the number of excitations in the ground state. Finally, we prove Proposition \ref{prop:expansion:P} in Section \ref{sec:pert:theory}. In Section \ref{section:explicit calculation of E-pert}, we explicitly calculate the energy correction terms, and in Section \ref{sec:proof:thm:state} we prove Theorem \ref{thm:state}.

\subsection*{Notation}
\begin{itemize}
\item We denote the Fourier transform on $\Lambda$ as
\begin{equation}
\hat{f}(p):=\int_\Lambda f(x)\e^{-\i p\cdot x}\dx\,,\qquad
\check{f}(x):=\sum_{p\in\Lambda^*}f(p)\e^{\i p\cdot x}
\end{equation} 
for $\Lambda^*=2\pi\mathbb{Z}^3$. We also use the notation $\Lsp=\Ls\setminus\{0\}$.

\item We abbreviate $\hvNb:=\hat{v}(\cdot/N^\beta)$.

\item The notation $A\ls B$ indicates that there exists some constant $C>0$ such that $A\leq CB$.
\end{itemize}

\section{Method}\label{sec:method}
From now on we will always assume that Assumption \ref{ass} is satisfied. 

\subsection{Excitation Fock space}
To compute the higher order terms in the ground state energy, we need to focus on the excitations from the condensate. Since we are in the translation invariant setting on the torus, the condensate is described by the constant function $\varphi_0:\Lambda\to\C$, $\varphi_0\equiv 1$. 
We follow the method introduced in \cite{lewin2015_2} to split the ground state $\PsiN$ into a condensate and  excitations as
\begin{equation}
\PsiN=\sum\limits_{k=0}^N{\varphi_0}^{\otimes (N-k)}\otimes_s\tilde{\chi}^{(k)}\,,
\qquad \tilde{\chi}^{(k)}\in \bigotimes\limits_\sym^k L^2_\perp(\Lambda)\,, 
\end{equation}
with $\otimes_s$ the symmetric tensor product and where $L^2_\perp(\Lambda)$ denotes the $L^2$-orthogonal complement of $\varphi_0$.
The sequence
\begin{equation}\label{def:Chi}
\tChi:=\big(\tilde{\chi}^{(k)}\big)_{k=0}^N\oplus 0
\end{equation}
of $k$-particle excitations forms a vector in the (truncated) excitation Fock space over $L^2_\perp(\Lambda)$,
\begin{equation}\label{Fock:space}
\FNp=\bigoplus_{k=0}^N\bigotimes_\sym^k L^2_\perp(\Lambda)
\;\subset \;
\Fp=\bigoplus_{k=0}^\infty\bigotimes_\sym^k L^2_\perp(\Lambda)\,,
\end{equation}
where the direct sum in \eqref{def:Chi} is with respect to the decomposition $\id=\id^{\leq N}\oplus \id^{>N}$. In the following, direct sums are always understood in this sense, unless otherwise specified.

The creation and annihilation operators on $\Fp$ are
\begin{eqnarray}
(\ad(f)\bPhi)^{(k)}(x_1\mydots x_k)&=&\frac{1}{\sqrt{k}}\sum\limits_{j=1}^kf(x_j)\phi^{(k-1)}(x_1\mydots x_{j-1},x_{j+1}\mydots x_k)\,,\; k\geq 1\,,\\
(a(f)\bPhi)^{(k)}(x_1\mydots x_k)&=&\sqrt{k+1}\int\d x\overline{f(x)}\phi^{(k+1)}(x_1\mydots x_k,x)\,,\; k\geq 0
\end{eqnarray} 
for $f\in L^2_\perp(\Lambda)$ and $\bPhi\in\Fp$. 
They satisfy the canonical commutation relations 
\begin{equation}
[a(f),\ad(g)]=\lr{f,g}\,,\qquad [a(f),a(g)]=[\ad(f),\ad(g)]=0\,.
\end{equation}
Since we consider the translation invariant setting, we will mostly work in the momentum space representation. We denote 
\begin{equation}
\Ls:=2\pi\mathbb{Z}^3\,,\qquad \Lsp:=\Ls\setminus\{0\}
\end{equation}
and define for $p\in\Lambda^*$ the normalized plane waves $\varphi_p(x)=\e^{\i p\cdot x}\in L^2(\Lambda)$. The  operators which create/annihilate a particle in the state $\varphi_p$ are given as
\begin{equation}
\ad_p:=\ad(\varphi_p)\,,\qquad a_p:=a(\varphi_p)\,.
\end{equation}
We denote the vacuum of $\Fp$ as $\vac=(1,0,0,\dots)$. The number operator on $\Fp$ is given by
\begin{equation}
\Np:=\sum_{p\in\Lsp}\ad_p a_p\,,\qquad (\Np\bPhi)^{(k)}=k\phi^{(k)}\;\text{ for }
\bPhi\in\Fp\,.
\end{equation}
The $N$-body ground state $\PsiN$ is mapped onto its excitation vector $\tChi$  by the excitation map
\begin{eqnarray}\label{map:U}
\UN: L^2(\Lambda^N) \to  \FNp\;, \quad \tChi:=\UN \PsiN\oplus 0\,.
\end{eqnarray}
The map $\UN$ is unitary and acts as 
\begin{subequations}\label{eqn:substitution:rules}
\begin{eqnarray}
\UN \ad_0a_0\UN^*&=&N-\Np\,,\\
\UN\ad_p a_0 \UN^*&=&\ad_p\sqrt{N-\Np}\,,\\
\UN \ad_0 a_p\UN^*&=&\sqrt{N-\Np}a_p\,,\\
\UN \ad_p a_q\UN^*&=&\ad_p a_q
\end{eqnarray}
\end{subequations}
for $p,q\in\Lsp$.

\subsection{Excitation Hamiltonian}\label{sec:exc:ham}
In this section we conjugate the $N$-body Hamiltonian $\HNb$ with the excitation map $\UN$. The embedding of $\HNb$ in the Fock space is given by
\begin{equation}
\cHNb=\sum_{p\in\Lambda^*}p^2\ad_p a_p +\frac{1}{2N}\sum\limits_{p,q,r\in\Lambda^*}\hvNb(r)\ad_p\ad_q a_{q-r} a_{p+r} \,,
\end{equation}
 where we used the abbreviation
\begin{equation}\label{eqn:hvNb}
\hvNb:=\hat{v}(\cdot/N^\beta)\,.
\end{equation}
Using \eqref{eqn:substitution:rules}, we compute
\begin{align}
\UN H_{N,\beta}\UN^*
&=  \frac12(N-1)\hat{v}(0)+\boldKz + \left[\frac{N-\Np}{N}\right]_+ \boldKo\nonumber\\
&\quad+\left( \boldKt\frac{\sqrt{\left[(N-\Np)(N-\Np-1)\right]_+}}{N}+ \hc\right)\nonumber\\
&\quad+\left( \boldKth\frac{\sqrt{\left[N-\Np\right]_+}}{N}+ \hc\right) 
+\frac{1}{N} \boldKf\label{eqn:FockHN}
\end{align} 
as operator on $\FNp$, where we denoted by $[\cdot]_+$ the positive part and used the shorthand notation 
\begin{subequations}\label{eqn:K:notation}
\begin{eqnarray}
\boldKz&:=&\sum_{p\in\Ls}p^2\ad_pa_p\,,\label{eqn:K:notation:0}\\
\boldKo&:=&\sum_{p\in\Lsp}\hvNb(p)\ad_pa_p\,,\label{eqn:K:notation:1}\\
\boldKt&:=&\tfrac12\sum_{p\in\Lsp}\hvNb(p)\ad_p\ad_{-p}\,,\label{eqn:K:notation:2}\\
\boldKth&:=&\sum_{\substack{p,q\in\Lsp\\p+q\neq0}}\hvNb(p)\ad_{p+q}\ad_{-p}a_q\,,\label{eqn:K:notation:3}\\
\boldKf&:=&\tfrac12\sum_{\substack{p,q,r\in\Lsp\\ p+r\neq0,q+r\neq0}}\hvNb(r) \ad_{p+r}\ad_qa_pa_{q+r} \label{eqn:K:notation:4}\,.
\end{eqnarray}
\end{subequations}
Note that we added the positive part in the prefactors of $\boldKz$, $\boldKt$ and $\boldKth$ for free since the operator $\UN\HNb\UN^*$ is defined on the truncated Fock space where $\Np\leq N$.
To be able to renormalize it in the next step, we first need to extend $\UN\HNb\UN^*$ to an operator ${\FockH}$ on the full Fock space $\Fp$ in such a way that its low-energy spectrum coincides with the low-energy spectrum of $\HNb$. In particular, we require that $\tChi=\UN\PsiN\oplus 0$ is an eigenstate of the new operator $\FockH$ satisfying
\begin{equation}\label{eqn:eigenvalue:eqn:tilde}
{\FockH}\tChi=\ENb\tChi\
\end{equation}
for $\ENb$ from \eqref{def:ENb}.
This is satisfied by the choice ${\FockH}:\Fp\to\Fp$,
\begin{align}\label{def:H:extended}
{\FockH}&:=\frac{N-1}{2}\hat{v}(0)
+\boldKz +\frac{1}{N} \boldKf\nonumber\\
&\quad+ \boldKo\left[\frac{N-\Np}{N}\right]_+ \oplus0
+\left( \boldKt\frac{\sqrt{\left[(N-\Np)(N-\Np-1)\right]_+}}{N}\oplus 0+\hc\right)\nonumber\\
&\quad+\left( \boldKth\frac{\sqrt{\left[N-\Np\right]_+}}{N}\oplus 0+ \hc\right) \,.
\end{align}
Here we extended the particle number preserving operators $\boldKz$ and $\boldKf$ trivially to the full space, whereas all other operators are extended by zero outside $\FNp$. 
To see that \eqref{eqn:eigenvalue:eqn:tilde} holds true, one observes that 
\begin{align}
{\FockH}-4\pi(N-1)\aNb={\FockH}^<\oplus{\FockH}^>\,,
\end{align} 
where 
\begin{align}
{\FockH}^<&:=\UN\HNb\UN^*-4\pi(N-1)\aNb\,,\\
{\FockH}^>&:=\id^{>N}\left(\boldKz+\frac{1}{N}\boldKf\right)\,.
\end{align}
From \cite[Theorem 1.1]{boccato2017_2}, we know that the two lowest eigenvalues of $\FockH^<$ are of order one.
Since $\boldKf\geq 0$ by Assumption \ref{ass} and  as $|p|^2\geq 4\pi^2$ as operator on $L^2_\perp(\Lambda)$, we conclude that
\begin{equation}
\sigma\left({\FockH}^>\right)\subset(4\pi^2 N,\infty)\,.
\end{equation}
Consequently, neither the ground state energy nor the first excited eigenvalue of $\FockH^<$ are elements $\sigma(\FockH^>)$.
Since 
\begin{equation}
\sigma({\FockH})=\sigma({\FockH}^<)\cup\sigma({\FockH}^>)\,,
\end{equation}
this implies that the ground state of ${\FockH}$ is given by $\tChi=\UN\PsiN\oplus0$; in particular, \eqref{eqn:eigenvalue:eqn:tilde} is satisfied. 
Hence, we can from now on work with the operator ${\FockH}$, to which we refer as the excitation Hamiltonian.

\begin{remark}
We remark that there was a small mistake in \cite{spectrum}, where the prefactor of $\boldKo$ was defined without the positive part and the prefactors of $\boldKt$ and $\boldKth$ were expanded in Taylor series on $\Fp$ and not only on $\FNp$. Hence, there are in fact some additional remainder terms, which were not taken into account. However, these remainders are arbitrarily small because they live only on $\Fock_\perp^{>N}$ and any power of the number operator with respect to the low-energy states of both the full Hamiltonian and the Bogoliubov Hamiltonian can be bounded uniformly in $N$. Hence, this does not affect the result of \cite{spectrum}.
\end{remark}

\subsection{Quadratic transformation}
In this section we construct the Bogoliubov transformation which we will use to renormalize the excitation Hamiltonian $\FockH$.
Recall that $f$ denotes the scattering solution on the torus as defined in \eqref{scattering:solution} and 
\begin{equation*}
\eta_p=N \left(\hat{f}_p-\delta_{p,0}\right)
\end{equation*}
as in \eqref{def:eta1}.
Then it follows that
\begin{equation}
\label{def:eta}
p^2 \eta_p+\frac{1}{2N}\sum_{q\in\Lsp}\hat{v}\left(\frac{p-q}{N^\beta}\right)\eta_q=-\frac{1}{2}\hat{v}\left(\frac{p}{N^\beta}\right)
\end{equation}
for $p\in\Lsp$.
As discussed in \cite[Section 3]{hainzl2022}, there exists a solution to \eqref{def:eta}, namely $\check{\eta} \in L^2( \Lambda)$ given by 
\begin{align}
\label{def:check_eta}
\check{\eta} = - \frac{1}{2} q_0 \left[ q_0 \left( - \Delta + \frac{1}{2N} \vNb \right) q_0 \right]^{-1} q_0 \vNb\,,
\end{align}
where $q_0$ denotes the projector onto the orthogonal complement of $\varphi_0\equiv 1$. The Fourier coefficients of the solution $\check{\eta}$ are given through \eqref{def:eta}. \\

Now we can define the Bogoliubov transformation $\FockT$ on $\Fp$ as
\begin{equation}
\label{eq:Bogoliubov transformation}
\FockT:=\exp\left\{\frac12\sum_{p\in\Lsp} {\eta}_p\left(\ad_p\ad_{-p}-a_pa_{-p}\right)\right\}\,.
\end{equation}
It acts on creation and annihilation operators as
\begin{subequations}\label{eq:actionbogo}
\begin{align}
\FockT\ad_p\FockT^*&= c_p\ad_p+s_pa_{-p}\,,\\
\FockT a_p\FockT^*&= c_p a_p+s_p\ad_{-p}\,,
\end{align}
\end{subequations}
where 
\begin{equation}\label{def:s:c}
c_p:=\cosh (\eta_p)\,,\qquad s_p:=\sinh(\eta_p)\,.
\end{equation}
Since $v$ is spherically symmetric and consequently $\hat{v}(p)=\hat{v}(-p)$, it follows that 
\begin{equation}
s_{-p} = s_p\,,\qquad c_{-p} = c_p\,.
\end{equation}

\subsection{Regularized excitation Hamiltonian}
The next step is to renormalize the excitation Hamiltonian $\FockH$ by conjugating it with the Bogoliubov transformation $\FockT$. This is done in the following proposition, whose proof we postpone to Section \ref{sec:estimates-G}.

\begin{proposition}\label{prop:renormalized:hamiltonian}
For $\FockH$ as in \eqref{def:H:extended}, it holds that
\begin{align}\label{def:G}
\FockG:=\FockT \FockH\FockT^*- \mathcal{C}
 =  \FockG_0+\FockG_1+\FockG_2+\FockR_2\,,
\end{align}
where
\begin{align}
\mathcal{C}&=\frac12(N-1)\hat{v}(0)+
\sum_{p\in\Ls} \left(p^2+\hvNb(p)\right) s_p^2+\sum_{p\in\Ls}  \hvNb(p)  c_p s_p\nonumber\\
&\quad+\frac1{2N}\sum_{p\in\Lsp}(\hvNb*cs)_pc_p s_p 
-\frac{1}{N}\sum_{p\in\Lsp}\hvNb(p)\left(\frac12 c_p s_p+ c_p s_p^3\right)\,,
\label{C}\\
\FockGz &= \sum_{p \in \Lambda_+^*} F_p \ad_p a_p + \frac{1}{2}\sum_{p \in \Lambda_+^*} G_p ( \ad_p \ad_{-p}  + a_pa_{-p}) \label{G0}\,,\\
\FockG_1&:=\frac{1}{\sqrt{N}}\FockT(\boldKth+\boldKth^*)\FockT^*\,,\label{G1}\\
\FockG_2&:=\frac{1}{2N} \sum_{\substack{p,q,r\in\Lsp\\ p+r\neq0,q+r\neq0}}\hvNb(r) c_{p+r} c_q c_p  c_{q+r} \ad_{p+r} \ad_q   a_p a_{q+r} \label{G2}
\,,
\end{align}
with coefficients $F_p, G_p$ given by 
\begin{subequations}\label{def:F,G}
\begin{align}
F_p :=& ( c_p^2 + s_p^2) p^2+ ( c_p + s_p)^2 \widehat{v}_N^{\beta} (p) + \frac{2}{N} ( \widehat{v}_N^{\beta} * cs)_p c_ps_p  \,,\\
G_p :=& 2  c_p s_pp^2 + ( c_p + s_p)^2 \widehat{v}_N^\beta ( p) + \frac{1}{N} (\widehat{v}_N^{\beta} * cs)_p (c_p^2 + s_p^2) \,.
\end{align}
\end{subequations}
Here, we abbreviated
\begin{equation}\label{def:convolution}
(\hvNb*cs)_p:=\sum_{\substack{q\in\Lsp\\q\neq p}}\hvNb(p-q)c_qs_q\,.
\end{equation}
The remainder satisfies for $\ell \in \mathbb{R}$ and $\psi, \xi \in \mathcal{F}_\perp$
\begin{align}
\left|\lr{\psi,\FockR_2\xi}\right|
&\ls   N^{\frac32(\beta- 1) }  
\bigg(\norm{(\mathbb{K}_0 +1 )^{1/2}\left(\Np + 1 \right)^{3/4-\l} \psi} \; 
\norm{ \left(\Np + 1 \right)^{5/4+\l} \xi} \notag \\
&\qquad\qquad\qquad+ \norm{(\left(\Np + 1 \right)^{5/4-\l} \psi}\; 
\norm{ (\mathbb{K}_0 +1 )^{1/2}\left(\Np + 1 \right)^{3/4+\l} \xi} \bigg)\; . 
\end{align}
for all $\beta \in  (1/2,1)$. 
\end{proposition}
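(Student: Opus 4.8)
The plan is to compute $\FockT\FockH\FockT^*$ term by term, using the explicit conjugation rules \eqref{eq:actionbogo} for the five building blocks $\boldKz,\boldKo,\boldKt,\boldKth,\boldKf$ appearing in \eqref{def:H:extended}, and to sort the resulting expressions into (i) constants, which are collected into $\mathcal C$, (ii) the quadratic part $\FockGz$, (iii) the cubic term $\FockG_1$, (iv) the "fully dressed" quartic term $\FockG_2$, and (v) everything else, which is the remainder $\FockR_2$. First I would treat the genuinely quadratic pieces $\frac12(N-1)\hat v(0)+\boldKz+\boldKo[\tfrac{N-\Np}{N}]_+ +(\boldKt\tfrac{\sqrt{[(N-\Np)(N-\Np-1)]_+}}{N}+\hc)$: replacing the $\Np$-dependent prefactors by $1$ produces an error supported on configurations with $\Np\gtrsim N$, which is harmless (it contributes to $\FockR_2$ with a huge negative power of $N$ once one pays with powers of $\Np/N$); then $\FockT(\boldKz+\boldKo+\boldKt+\boldKt^*)\FockT^*$ is computed exactly via \eqref{eq:actionbogo}. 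This step simultaneously produces the $c$--$s$ constants in \eqref{C} (the vacuum expectation of the quadratic form), the coefficients $F_p,G_p$ up to the convolution terms, and — crucially — uses the scattering equation \eqref{def:eta}/\eqref{def:eta1} to see that the off-diagonal coefficient would-be divergence $2c_ps_p p^2 + (c_p+s_p)^2\hvNb(p)$ is what it is; no cancellation is forced here since $\FockGz$ is allowed to keep an $O(1)$-per-mode coefficient, but one checks summability of $F_p,G_p$ in $p$.

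Next I would handle the cubic term. Here $\FockG_1$ is \emph{defined} as $\tfrac1{\sqrt N}\FockT(\boldKth+\boldKth^*)\FockT^*$, so there is nothing to prove except to observe that the prefactor $\tfrac{\sqrt{[N-\Np]_+}}{N}$ may be replaced by $\tfrac1{\sqrt N}$ at the cost of a term in $\FockR_2$ (again paying with $\Np/N$), and that the extension-by-zero on $\Fock_\perp^{>N}$ similarly costs only a high-$\Np$ remainder. The bulk of the work is the quartic term $\tfrac1{2N}\boldKf$. Conjugating $\ad_{p+r}\ad_q a_p a_{q+r}$ by $\FockT$ expands into $2^4=16$ monomials in $a,\ad$; the one with all four $c$'s is (after normal ordering, which generates lower-order terms absorbed into $\FockGz$, $\mathcal C$, and $\FockR_2$) exactly $\FockG_2$; I would argue that every monomial containing at least one $s$-factor carries a gain, because $s_p=\sinh\eta_p$ and $\eta_p = N(\hat f_p-\delta_{p,0})$ satisfies (by \eqref{def:eta} and Lemma-type bounds on the scattering solution, cf.\ \eqref{def:check_eta} and the smallness of $\kappa$) the estimates $\|\eta\|_\infty\lesssim 1$, $|\eta_p|\lesssim N^{\beta}|p|^{-2}$ pointwise and $\|\check\eta\|_2\lesssim N^{(\beta-1)/2}$, so that $\|s\|_2,\|s\|_\infty\lesssim N^{(\beta-1)/2}$ while $\|cs\|_1$, $\|\hvNb\|_\infty\lesssim 1$. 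Pairing these with a Cauchy–Schwarz splitting of each monomial in momentum, keeping one $(\mathbb K_0+1)^{1/2}$ on the side where the $p^2$-weighted sum converges and distributing the number operators as $\Np^{5/4\pm\l}$ vs.\ $\Np^{3/4\pm\l}$ to match the $a$/$\ad$ counts, produces exactly the claimed bilinear bound with prefactor $N^{3(\beta-1)/2}$; the extra $N^{-1}$ from $\tfrac1{2N}\boldKf$ against the $N$-growth hidden in an $\|\eta\|_1$-type factor is what yields the power $\tfrac32(\beta-1)$ rather than $\tfrac12(\beta-1)$.

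The main obstacle is controlling the terms in which $\FockT$ does \emph{not} preserve the truncation: because $\FockT$ is a true Bogoliubov transformation, $\FockT^*$ maps $\Fock_\perp^{\leq N}$ out of itself, so the "extension by zero" operators $\boldKo\oplus 0$, $\boldKt\oplus 0$, $\boldKth\oplus 0$ conjugated by $\FockT$ are \emph{not} equal to the naive conjugations $\FockT\boldKo\FockT^*$ etc.; the discrepancy must be shown to be an $\FockR_2$-type remainder. The standard device is to write $\boldKt\oplus 0 = \boldKt - \boldKt\id^{>N}$, conjugate both, and bound the $\id^{>N}$ piece by commuting $\FockT$-transformed number operators through and using that $\FockT$ changes $\Np$ only by $O(\|\eta\|_2^2)=O(N^{\beta-1})$ in expectation on low-energy states, so $\id^{>N}$ forces a factor $(\Np/N)^M$ for any $M$. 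A second, more technical obstacle is the $\Np$-dependent prefactors $[\tfrac{N-\Np}{N}]_+$ and the square roots: after Taylor expanding $\sqrt{[(N-\Np)(N-\Np-1)]_+}/N = 1 - \tfrac{\Np}{N}+\ldots$ one must show each correction, conjugated by $\FockT$, fits the $\FockR_2$ bound — here one repeatedly trades a factor $\Np/N$ for the $N^{(\beta-1)/2}$ improvement, using that $\FockT$-conjugation of $\Np$ stays comparable to $\Np+\|\eta\|_2^2(\Np+1)$, which is the content of the commutator estimates to be proved in Section \ref{sec:BT}. Everything else is bookkeeping: collecting the $p$-sums that make up $\mathcal C$, $F_p$, $G_p$, and verifying via \eqref{def:eta} that the convolution terms $\tfrac1N(\hvNb*cs)_p$ are the ones appearing in \eqref{def:F,G} and \eqref{C}.
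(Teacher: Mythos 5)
Your structural plan matches the paper's: conjugate $\FockH$ by $\FockT$ term by term using \eqref{eq:actionbogo}, identify $\FockG_2$ as the all-$c$ monomial in the transformed $N^{-1}\boldKf$, leave $\FockG_1$ as the transformed cubic with the $\sqrt{[N-\Np]_+}/N$ prefactor replaced by $N^{-1/2}$, and collect the Taylor corrections of the $\Np$-dependent prefactors together with the remaining quartic monomials into $\FockR_2$. This is the decomposition $\FockH=\FockHb+\tilde{\FockR}_{\sqrt{}}$ of \eqref{eq:Hsplit}--\eqref{def:R_sqrt}, with the remainders $\FockR_a,\FockR_b,\FockR_c$ bounded in Lemmas~\ref{lem:Ra}--\ref{lem:R2}; your attention to the fact that $\FockT$ does not preserve the truncated Fock space is also appropriate.

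However, the quantitative estimates you invoke to make the argument close are incorrect, and they are precisely what governs the $N$-powers. The paper's Lemma~\ref{lem:HST} gives $|\eta_p|\lesssim|p|^{-2}$, $\|\eta\|_{\ell^\infty}\lesssim 1$, $\|\eta\|_{\ell^2}\lesssim 1$, and $\|p\eta\|_{\ell^2}\lesssim N^{\beta/2}$. Your claims $|\eta_p|\lesssim N^\beta|p|^{-2}$, $\|\check\eta\|_{L^2}\lesssim N^{(\beta-1)/2}$, and $\|s\|_{\ell^2},\|s\|_{\ell^\infty}\lesssim N^{(\beta-1)/2}$ are all false, and $\|cs\|_{\ell^1}$ is not even finite in $3$D since $|c_ps_p|\sim|p|^{-2}$ is not $\ell^1$-summable, so there is no "$\|\eta\|_1$-type factor" to balance $N^{-1}$. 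The $N$-growth that actually sets the scales comes from the potential bounds of Lemma~\ref{lem:aux:potential}, namely $\sum_p\hvNb(p)^2/p^2\lesssim N^\beta$ and $\sup_q\sum_r\hvNb(r)/(q+r)^2\lesssim N^\beta$; each $s$-factor replaces one such $N^\beta$-growing sum by an $O(1)$ one (because $|s_p|\lesssim|p|^{-2}$ is square-summable), giving a gain of $N^{-\beta/2}$ per $s$ after the Cauchy--Schwarz split, not $N^{(\beta-1)/2}$. Consequently $N^{-1}\boldKf$ yields $\FockR_a$ of order $N^{\beta/2-1}$ and $\FockR_b$ of order $N^{-1}$, both strictly smaller than $N^{3(\beta-1)/2}$ for $\beta>1/2$; the $N^{3(\beta-1)/2}$ that caps the remainder actually comes from the cubic square-root correction $\boldKth\,g_2(\Np)$ in $\FockR_c$, where $g_2(\Np)=(\sqrt{[N-\Np]_+}-\sqrt N)/N$ satisfies $g_2^2\lesssim N^{-3/2}\Np$ and is paired against $\|\hvNb\|_{\ell^1}\lesssim N^{3\beta}$ (Lemma~\ref{lem:R2}). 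Finally, these Taylor remainders are not "supported on configurations with $\Np\gtrsim N$": the functions $g_0,g_1,g_2$ are nonzero for every $\Np\geq 1$, and what makes them small is the fractional power of $\Np/N$ they carry, not a large-$\Np$ cutoff.
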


The decomposition of $\mathbb{G}= \mathbb{T} \mathbb{H} \mathbb{T}^*-\mathcal{C}$ in Proposition \ref{prop:renormalized:hamiltonian} allows to extract the leading-order contribution of the renormalized excitation Hamiltonian. In Section \ref{sec:estimates-G}, we analyze $\mathbb{G}$  and show that $\mathcal{C}$ is of order $N$, while $\mathbb{G}_0$ is of order one, $\mathbb{G}_1$ is of order $N^{(\beta-1)/2}$, and $\mathbb{G}_2$ is of order $N^{\beta-1}$. The remainder $\mathbb{R}_2$ is of even lower order $N^{3(\beta-1)/2}$. 
We can further decompose $\mathbb{G}_1$ as 
\begin{align}
\mathbb{G}_1 &= \frac{1}{\sqrt{N}}\sum_{\substack{p,q\in\Lsp\\p+q\neq0}}\widehat{v}_N^\beta (p) c_{p+q}c_pc_q( \ad_{p+q}\ad_{-p}a_q + {\rm h.c.} ) \notag \\
&\quad +\frac{1}{\sqrt{N}}\sum_{\substack{p,q\in\Lsp\\p+q\neq0}}\widehat{v}_N^\beta (p) c_{p+q}c_ps_q(\ad_{p+q}\ad_{-p}\ad_{-q} +
{\rm h.c.} ) \notag \\
&\quad+ \mathbb{R}_d\,,
\end{align}
where $\mathbb{R}_d$ is of order $N^{-1/2}$ and thus even smaller than the remainder $\mathbb{R}_2$  in the parameter regime $\beta \in(2/3,1)$(see Section \ref{subsec:G_1}, Lemma \ref{lem:G1}).

\subsection{Bogoliubov-Hamiltonian}
The quadratic operator $\FockGz$ is known as the Bogoliubov Hamiltonian, and its ground state energy $E_0$, satisfying
\begin{equation}\label{def:E_0}
\FockGz\Chiz=\Ez\Chiz\,,
\end{equation}
contributes the next-to-leading order in the ground state energy of $\HNb$. To extract it, we need to diagonalize $\FockGz$, which can be done explicitly by means of a Bogoliubov transformation. To construct the diagonalizing transformation, we define for $p\in\Lsp$ the sequence $\tau_p$ by
\begin{equation}\label{def:tau}
\tanh(2\tau_p)=-\frac{G_p}{F_p}\,.
\end{equation}
By Lemma \ref{lemma:propF,G} below, this is well defined. Note that $\tau_p$ is real-valued by definition. Making use of this sequence $\tau_p$, we define the Bogoliubov transformation 
\begin{align}
\label{def:Ttau}
\mathbb{U}_\tau := \exp \left[ \frac{1}{2} \sum_{p \in \Lambda_+^*} \tau_p (\ad_p \ad_{-p} - a_p a_{-p}) \right] \,.
\end{align}
It is well known that $\BogUz$ diagonalizes the quadratic Hamiltonian $\mathbb{G}_0$ as 
\begin{align}
\label{eq:G0-diag}
\mathbb{U}_\tau \mathbb{G}_0 \mathbb{U}_\tau^* = \frac{1}{2} \sum_{p \in \Lambda_+^*} \big[ - F_p + \sqrt{F_p^2 - G_p^2} \big] + \sum_{p \in \Lambda_+^*} \sqrt{F_p^2 - G_p^2} \ad_pa_p \; 
\end{align} 
(see, e.g., \cite[Lemma 5.1]{boccato2017_2}).
Consequently, 
\begin{equation}\label{eqn:Chiz}
\Chiz=\BogUz^*\vac
\end{equation}
and
\begin{equation}
E_0=\frac{1}{2} \sum_{p \in \Lambda_+^*} \big[ - F_p + \sqrt{F_p^2 - G_p^2} \big]\,.
\end{equation}
From Lemma \ref{lemma:propF,G} below, it is easy to see that $E_0=\mathcal{O}(1)$. Adding $E_0$ to the constant $\mathcal{C}$ from Proposition \ref{prop:renormalized:hamiltonian} yields the leading and next-to-leading order contribution to the ground state energy:

\begin{lem}\label{lem:diag}
Let $\beta \in (0,1)$. Then, for $\mathcal{C}$ as in \eqref{C}, it holds for every $\alpha < \beta$ that
\begin{align}
\mathcal{C}  + E_0 =4 \pi  (N-1)  \mathfrak{a}_N^\beta + E_{0,0} +E_{0,1}  + \mathcal{O}(N^{2( \beta -1)}) +\mathcal{O}(N^{-1})+ \mathcal{O}(N^{-\alpha}) \,,
\end{align}
where $\mathfrak{a}_N^\beta$ denotes the scattering length defined in \eqref{def:box:scattering:length} and with $E_{0,0}$ and $E_{0,1}$ given in \eqref{E00} and \eqref{E01}. 
Moreover,
\begin{equation}\label{eqn:orders:E00:01}
|E_{0,0}|\ls 1\,,\qquad |E_{0,1}|\ls N^{\beta-1}\,.
\end{equation}
\end{lem}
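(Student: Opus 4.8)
The plan is to compute the constant $\mathcal{C}+E_0$ explicitly by expanding every term in powers of $N^{\beta-1}$ (and $N^{-1}$, $N^{-\alpha}$), keeping the orders $N$, $1$, and $N^{\beta-1}$, and collecting all smaller contributions into the error. The central input is the quantitative control on the scattering coefficients $\eta_p$ (equivalently $s_p,c_p$) provided by the scattering equation \eqref{def:eta} and the explicit formula \eqref{def:check_eta}: one needs pointwise and summed bounds of the form $|\eta_p|\ls \min\{1,\hvNb(p)/p^2\}$, $\sum_p|\eta_p|^2\ls N^{\beta-1}$, $\sum_p p^2|\eta_p|^2\ls 1$, and the ``first Born'' identity $\eta_p = -\tfrac{1}{2p^2}\hvNb(p) + \mathcal{O}(\text{lower order})$ coming from iterating \eqref{def:eta}. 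These are standard consequences of Assumption \ref{ass} (bounded, compactly supported, positive-type $v$) and I would either quote them or record them in a preliminary lemma; combined with $\cosh\eta_p = 1+\tfrac12\eta_p^2+\dots$, $\sinh\eta_p=\eta_p+\dots$ they let me Taylor-expand $c_p,s_p,c_ps_p,s_p^2$ around their leading behaviour.

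First I would analyse $\mathcal{C}$ from \eqref{C} term by term. The term $\tfrac12(N-1)\hat v(0)$ is kept as is. In $\sum_p(p^2+\hvNb(p))s_p^2 + \sum_p\hvNb(p)c_ps_p$, I substitute $s_p^2=\eta_p^2+\mathcal{O}(\eta_p^4)$ and $c_ps_p=\eta_p+\mathcal{O}(\eta_p^3)$; the leading piece $\sum_p(p^2\eta_p^2 + \hvNb(p)\eta_p)$ is, using the scattering equation \eqref{def:eta} to replace $p^2\eta_p$, exactly what produces $4\pi(N-1)\aNb - \tfrac12(N-1)\hat v(0)$ up to the stated errors — this is the step where the scattering length \eqref{def:box:scattering:length} is reconstructed, and it is the delicate one (see below). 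The remaining pieces $\sum_p\hvNb(p)s_p^2$ and the corrections to $c_ps_p$ are of order $\sum_p|\eta_p|^2\cdot\norm{\hvNb}\ls N^{\beta-1}$ (since $\hvNb(p)=\hat v(p/N^\beta)\to\hat v(0)$), and after replacing $s_p^2\to\eta_p^2\to \tfrac1{4p^4}\hvNb(p)^2$ via the first Born approximation they contribute to $E_{0,1}$; the genuine $N^{\beta-1}$-content must be tracked, while the next correction is $\mathcal{O}(N^{2(\beta-1)})$. The two explicitly $1/N$-prefactored sums in \eqref{C}, namely $\tfrac1{2N}\sum_p(\hvNb*cs)_pc_ps_p$ and $-\tfrac1N\sum_p\hvNb(p)(\tfrac12 c_ps_p + c_ps_p^3)$, are bounded by $\tfrac1N\sum_p|\eta_p|\cdot\text{(bounded convolution)}$; using $\sum_p|\eta_p|\ls \sum_p\hvNb(p)/p^2 \ls 1$ one gets $\mathcal{O}(N^{-1})$, absorbed into the error.

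Next I would handle $E_0 = \tfrac12\sum_p[-F_p+\sqrt{F_p^2-G_p^2}]$ using the definitions \eqref{def:F,G} of $F_p,G_p$ and Lemma \ref{lemma:propF,G} for well-definedness and basic bounds. The idea is to expand $F_p$ and $G_p$ around their ``Bogoliubov values'' $F_p^{(0)}=p^2+\hat v(0)$, $G_p^{(0)}=\hat v(0)$: write $F_p = p^2+\hat v(0) + \delta F_p$, $G_p=\hat v(0)+\delta G_p$ where $\delta F_p,\delta G_p$ collect the deviations of $\hvNb(p)$ from $\hat v(0)$, the $\eta_p^2$-corrections in $c_p^2+s_p^2$, and the $\tfrac1N$-convolution terms. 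One checks $\delta F_p,\delta G_p$ are small in the appropriate summed sense, Taylor-expands the square root $\sqrt{F_p^2-G_p^2}$ to first order in $(\delta F_p,\delta G_p)$, and identifies: the zeroth-order term gives precisely $E_{0,0}$ after one recognizes the counterterm $\tfrac{\hat v(0)^2}{2p^2}$ in \eqref{E00} as arising from reorganizing $\tfrac12\sum_p[-p^2-\hat v(0)+\sqrt{|p|^4+2p^2\hat v(0)}]$ against the leading content of $\mathcal{C}$; the first-order term gives the $N^{\beta-1}$ piece contributing to $E_{0,1}$ (this is where $E_{0,1}$ — whose explicit form is in the to-be-supplied \eqref{E01} — gets its second, ``Bogoliubov'' part, matching \eqref{thm_const_Bog}); and the second-order Taylor remainder is $\mathcal{O}(\sum_p(\delta F_p^2+\delta G_p^2)/p^2)=\mathcal{O}(N^{2(\beta-1)})$. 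Convergence of the $p$-sum defining $E_{0,0}$ (the summand decays like $|p|^{-2}$ times $\hat v(0)^2/(\dots)$, actually $\mathcal{O}(|p|^{-4})$ after the counterterm cancellation) gives $|E_{0,0}|\ls 1$, and the $N^{\beta-1}$ bound on $E_{0,1}$ follows from $\sum_p|\delta F_p|/p^2 + \sum_p|\delta G_p|/p^2 \ls N^{\beta-1}$.

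The main obstacle is the first step: showing that $\sum_{p\in\Lsp}(p^2\eta_p^2 + \hvNb(p)\eta_p) = 4\pi(N-1)\aNb - \tfrac12(N-1)\hat v(0) + (\text{acceptable error})$. Naively each term here is $\mathcal{O}(1)$ or smaller, so one must be careful that nothing of order $N$ is hiding — in fact the $N$-dependence enters through $\eta_p = N(\hat f_p-\delta_{p,0})$ and the factor $1/N$ in the scattering equation, so the bookkeeping of $N$-powers against $\beta$-powers is the crux. The clean way is to multiply \eqref{def:eta} by $\eta_p$ and sum: $\sum_p p^2\eta_p^2 + \tfrac1{2N}\sum_{p,q}\hvNb\!\big(\tfrac{p-q}{N^\beta}\big)... $ wait, more precisely $\sum_p p^2\eta_p^2 + \tfrac1{2N}\sum_{p,q}\hat v(\tfrac{p-q}{N^\beta})\eta_p\eta_q = -\tfrac12\sum_p\hvNb(p)\eta_p$, so $\sum_p(p^2\eta_p^2+\hvNb(p)\eta_p) = \tfrac12\sum_p\hvNb(p)\eta_p - \tfrac1{2N}\sum_{p,q}\hat v(\tfrac{p-q}{N^\beta})\eta_p\eta_q$; then $\tfrac12\sum_p\hvNb(p)\eta_p = \tfrac12\sum_p\hvNb(p)N(\hat f_p-\delta_{p,0}) = \tfrac{N}{2}(8\pi\aNb - \hvNb(0)\cdot 1) = 4\pi N\aNb - \tfrac{N}{2}\hat v(0)$ by the very definition \eqref{def:box:scattering:length}, and the double sum must be shown to be $\mathcal{O}(1)$ plus an $N^{\beta-1}$-correction — this is where one invokes $\sum_p|\eta_p| \ls 1$ and the decay of $\hat v$ to isolate its contribution, using $|\Ls|$-counting near $p=q$ carefully. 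Getting the $(N-1)$ rather than $N$, and pinning down which leftover pieces are $\mathcal{O}(N^{-\alpha})$ versus $\mathcal{O}(N^{2(\beta-1)})$ versus part of $E_{0,1}$, is the detailed work; everything else is routine Taylor expansion controlled by the $\eta_p$-estimates.
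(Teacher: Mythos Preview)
Your overall plan is reasonable for the $\mathcal{C}$-side, but the proposed treatment of $E_0$ has a genuine conceptual gap. You propose to expand $F_p,G_p$ around ``Bogoliubov values'' $F_p^{(0)}=p^2+\hat v(0)$, $G_p^{(0)}=\hat v(0)$ and Taylor-expand $\sqrt{F_p^2-G_p^2}$ in the deviations $\delta F_p,\delta G_p$. But after the quadratic transformation $\FockT$, $G_p$ is \emph{not} close to $\hat v(0)$: by Lemma~\ref{lemma:propF,G} one has $|G_p|\ls p^{-2}$. The mechanism is that the leading pieces $2c_ps_p\,p^2 + \hvNb(p) + \tfrac1N(\hvNb*\eta)_p$ in $G_p$ cancel exactly by the scattering equation~\eqref{def:eta}. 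So $\delta G_p := G_p - \hat v(0)\approx -\hat v(0)$ is $O(1)$, not small, and your first-order Taylor remainder cannot be controlled. (Put differently: the $2c_ps_p\,p^2\approx 2\eta_p p^2$ term is linear in $\eta_p$, not an ``$\eta_p^2$-correction'', and it is precisely this term that kills the would-be $\hat v(0)$ in $G_p$.)

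The paper bypasses this by first computing $F_p^2-G_p^2$ algebraically. The factorization $F_p\pm G_p=(c_p\pm s_p)^2\big[p^2+\ldots\big]$ together with the hyperbolic identity $(c_p^2-s_p^2)^2=1$ makes all the $c_p,s_p$-dependence drop out, yielding
\[
F_p^2-G_p^2=|p|^4+2p^2\hvNb(p)+A_p,\qquad |A_p|\ls N^{\beta-1},
\]
as in \eqref{eq:fquare-gsquare}--\eqref{eq:boundAp}. One then expands $\sqrt{\,\cdot\,}$ only in the genuinely small quantity $A_p$; the replacement $\hvNb(p)\to\hat v(0)$ in the leading square root is a \emph{separate} step and is the origin of the $\mathcal{O}(N^{-\alpha})$ error, quoted from \cite{boccato2017_2} (see \eqref{eq:errorNbeta}). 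Your sketch does not identify this source of $N^{-\alpha}$. On the $\mathcal{C}$-side, the paper also organises things differently: rather than manipulating $\sum_p(p^2\eta_p^2+\hvNb(p)\eta_p)$ directly, it forms $\mathcal{C}-\tfrac12\sum_p F_p$ first, which produces the clean identity $\tfrac1N\sum_p\hvNb(p)\eta_p=8\pi\aNb-\hat v(0)$ with the correct $(N-1)$ combinatorics. Finally, two of your stated estimates are off: $\sum_p|\eta_p|^2\ls 1$ (not $N^{\beta-1}$; see Lemma~\ref{lem:HST}), and $\sum_p|\eta_p|$ is not $\ls 1$ (indeed $|\eta_p|\ls p^{-2}$ gives a divergent 3D sum), so the $1/N$-terms in $\mathcal{C}$ cannot be discarded by that argument --- one keeps the $\eta_p$-piece of $c_ps_p$ (it feeds into $\aNb$) and bounds only the $|c_ps_p-\eta_p|\ls p^{-6}$ remainder.
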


The proof of Lemma \ref{lem:diag} is given in Section \ref{sec:proof:diag}.

\subsection{Perturbation theory}\label{sec:intro:pert:theory}
Our goal is  a perturbative expansion of the ground state $\Chi$ of the operator
$$\FockG=\FockT \FockH \FockT^*-\mathcal{C}\,.$$
The ground state $\Chi$  satisfies the eigenvalue equation
\begin{equation}\label{def:E}
\FockG\Chi=(\ENb-\mathcal{C})\Chi=:E\Chi\,.
\end{equation}
Equivalently, $\Chi=\FockT\tilde{\Chi}=\FockT (\UN\PsiN\oplus0)$.
We denote the spectral projectors of $\FockG$ and $\FockGz$ corresponding to their ground state energies $E$ and $E_0$, defined in \eqref{def:E} and \eqref{def:E_0}, respectively, by
\begin{align}
\P&:=|\Chi\rangle\langle\Chi|\,,\qquad\quad\, \Q:=\id-\P\,,\label{def:P}\\
\Pz&:=|\Chiz\rangle\langle\Chiz|\,,\qquad \Qz:=\id-\Pz\,.\label{def:Pz}
\end{align}
In  \cite[Theorem 1.1]{boccato2017_2}, it is shown that
\begin{equation}
\ENb=4\pi(N-1)\aNb+E_{0,0}+\mathcal{O}(N^{-\alpha})
\end{equation}
for $\alpha<\min\{\beta,\frac{1-\beta}{2}\}$. More precisely, the scattering length $\aNb$ is constructed in \cite{boccato2017_2} via its Born series, which is truncated after sufficiently many terms.
Hence, we infer from Lemma~\ref{lem:diag}  that
\begin{equation}\label{eqn:E-E0}
\lim\limits_{N\to\infty}|\ENb-\mathcal{C}-E_0|=\lim\limits_{N\to\infty}|E-E_0|=0\,.
\end{equation}
Moreover, \cite[Theorem 1.1]{boccato2017_2} together with the reasoning in Section \ref{sec:exc:ham} implies that the spectral gap of $\FockG$ above $\ENb$ is of order one. By \eqref{eq:G0-diag}, the same holds true for the spectral gap of $\FockGz$ above $\Ez$. Hence, there exists a constant $c=\mathcal{O}(1)$  such that the closed contour
\begin{equation}\label{gamma}
\gamma:=\{\Ez+ c\,\e^{\i t} \,:\, t\in [0,2\pi)\}\subset \mathbb{C}
\end{equation}
encloses both $E$ and $\Ez$ but contains no other point of the spectra of $\FockG$ and $\FockGz$. Consequently, the projectors $\P$ and $\Pz$ can be expressed as
\begin{equation}
\P=\frac{1}{2\pi\i}\goint\frac{1}{z-\FockG}\,,\qquad
\Pz=\frac{1}{2\pi\i}\goint\frac{1}{z-\FockGz}\,.
\end{equation}
Now we follow the strategy of \cite[Lemma 3.13, Proposition 3.14 and Theorem 2]{spectrum} to expand $\P$ around $\Pz$. The proof of this proposition is given in Section  \ref{sec:pert:theory}. 

\begin{proposition}\label{prop:expansion:P}
Let $\FockA$ be an operator on $\Fp$ such that $\norm{\FockA\psi}\ls \norm{\Np\psi}$ for $\psi\in\Fp$. Then
\begin{equation}\label{eqn:expansion:TrAP}
\left|\Tr\FockA\P-\sum_{\l=0}^2\Tr\FockA\P_\l\right|
\ls N^{\frac32(\beta-1)}\,,
\end{equation}
where 
\begin{subequations}\label{def:Pl}
\begin{align}
\P_1&:=\Pz\FockG_1\REz+\hc\,,\\
\P_2&:=\left(\Pz\FockG_2\REz
+\Pz\FockG_1\REz\FockG_1\REz +\hc\right)\notag\\
&\quad-\lr{\Chiz,\FockG_1,\frac{\Qz}{(\Ez-\FockGz)^2}\FockG_1\Chiz}\Pz
+\REz\FockG_1\Pz\FockG_1\REz\,.
\end{align}
\end{subequations}
Moreover, we find that
\begin{align}
\label{eq:expansion:energy}
\left|E-E_0-E_\mathrm{pert}\right|\ls N^{\frac32(\beta-1)}\,,
\end{align}
where
\begin{equation}\label{E_pert}
E_\mathrm{pert}:= \lr{\Chiz,\FockG_2\Chiz} + \lr{\Chiz,\FockG_1\frac{\Q_0}{E_0-\FockGz}\FockG_1\Chiz}\,.
\end{equation}
\end{proposition}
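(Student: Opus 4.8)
\medskip
\noindent\textbf{Proof plan.}
The plan is to treat $\FockG=\FockGz+\mathcal{W}$ with $\mathcal{W}:=\FockG_1+\FockG_2+\FockR_2$ as a perturbation of the Bogoliubov Hamiltonian $\FockGz$ and to follow the contour-integral form of Rayleigh--Schr\"odinger perturbation theory from \cite[Lemma~3.13, Proposition~3.14, Theorem~2]{spectrum}. First I would iterate the resolvent identity $\frac{1}{z-\FockG}=\frac{1}{z-\FockGz}+\frac{1}{z-\FockGz}\mathcal{W}\frac{1}{z-\FockG}$ twice to obtain
\[
\frac{1}{z-\FockG}=\sum_{k=0}^{2}\frac{1}{z-\FockGz}\Big(\mathcal{W}\tfrac{1}{z-\FockGz}\Big)^{k}
+\frac{1}{z-\FockGz}\Big(\mathcal{W}\tfrac{1}{z-\FockGz}\Big)^{2}\mathcal{W}\frac{1}{z-\FockG}\,\dz\,,
\]
and integrate over the contour $\gamma$ of \eqref{gamma}, obtaining $\P=\Pz+\P^{(1)}+\P^{(2)}+\mathcal{E}$, where $\P^{(k)}$ is the $k$-fold term and $\mathcal{E}$ the summand containing the full resolvent. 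For the energy I would use $E\P=\frac{1}{2\pi\i}\goint z\,\frac{1}{z-\FockG}\,\dz$ together with $\Tr\P=1$ (the ground state of $\FockG$ is simple by the discussion in Section~\ref{sec:exc:ham}), or equivalently expand $E=\lr{\Chiz,\P\FockG\P\Chiz}/\lr{\Chiz,\P\Chiz}$ --- valid for large $N$ since $\norm{\P-\Pz}$ is small --- using the operator expansion of $\P$.

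Then I would evaluate the contour integrals by residues. Splitting $\frac{1}{z-\FockGz}=\frac{\Pz}{z-E_0}+\frac{\Qz}{z-\FockGz}$, the reduced resolvent $z\mapsto\frac{\Qz}{z-\FockGz}$ is holomorphic inside $\gamma$, so each $\P^{(k)}$ is the residue at $z=E_0$. Two observations kill most terms: $\REz\Chiz=0$ since $\Qz\Chiz=0$, and $\lr{\Chiz,\FockG_1\Chiz}=0$ because $\FockG_1$ is a sum of monomials of odd degree in creation and annihilation operators --- a property preserved by conjugation with the quadratic $\BogUz$, so that the expectation in $\Chiz=\BogUz^*\vac$ vanishes. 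Organizing the survivors by powers of $N^{(\beta-1)/2}$: the $\FockG_1$-part of $\P^{(1)}$ gives $\P_1=\Pz\FockG_1\REz+\hc$; the $\FockG_2$-part of $\P^{(1)}$ and the $\FockG_1\FockG_1$-part of $\P^{(2)}$ combine to $\P_2$, where the double pole of the $\tfrac{z}{(z-E_0)^2}$-type integrand generates both the feedback term $-\lr{\Chiz,\FockG_1\tfrac{\Qz}{(E_0-\FockGz)^2}\FockG_1\Chiz}\Pz$ and the term $\REz\FockG_1\Pz\FockG_1\REz$ (the combination being exactly the one for which $\Tr\P_2=0$, consistent with $\Tr\P=\Tr\Pz=1$); and the same computation for the energy reproduces $E_\mathrm{pert}=\lr{\Chiz,\FockG_2\Chiz}+\lr{\Chiz,\FockG_1\REz\FockG_1\Chiz}$. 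Every remaining contribution --- all those containing $\FockR_2$, the $\FockG_2$-parts of $\P^{(2)}$, and the remainder $\mathcal{E}$ --- is of order at most $N^{3(\beta-1)/2}$.

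To make these size statements quantitative I would prove weighted resolvent estimates along $\gamma$ and combine them with the bounds already at hand. The key inputs: (i) $\BogUz$ diagonalizes $\FockGz$ into $E_0+\sum_{p\in\Lsp}\sqrt{F_p^2-G_p^2}\,\ad_pa_p$ with a spectral gap uniform in $N$ (Lemma~\ref{lemma:propF,G}), so $\frac{\Qz}{z-\FockGz}$ and $\frac{\Qz}{(E_0-\FockGz)^2}$ gain one and two powers of $(\Np+1)^{-1}$, respectively; moreover, using $\FockGz\REz=E_0\REz-\Qz$ and $\mathbb{K}_0\ls\FockGz+C(\Np+1)$, the reduced resolvent maps into the form domain of $\mathbb{K}_0$ with control, while commuting $(\Np+1)$-weights past $\FockGz$ only costs lower-order commutators; (ii) along $\gamma$, $\norm{(\Np+1)^{a}\tfrac{1}{z-\FockG}(\Np+1)^{-a}}\ls1$, deduced from (i) by a short Neumann-series bootstrap since $\mathcal{W}$ is small in the weighted norm. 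Feeding in the bounds on $\FockG_1$ ($\mathcal{O}(N^{(\beta-1)/2})$, with a $(\mathbb{K}_0+1)^{1/2}$ weight), $\FockG_2$ ($\mathcal{O}(N^{\beta-1})$), and $\FockR_2$ (the bilinear estimate of Proposition~\ref{prop:renormalized:hamiltonian}) from Section~\ref{sec:estimates-G}, together with the moment bounds for $\Np$ and $\mathbb{K}_0$ in the ground state $\Chi$ from Section~\ref{sec:excitation:vector}, and rewriting every trace against $\FockA$ (which satisfies $\norm{\FockA\psi}\ls\norm{\Np\psi}$) as a matrix element in $\Chi$, $\Chiz$ and $\FockA\Chiz$, one arrives at $\big|\Tr\FockA\P-\sum_{\l=0}^{2}\Tr\FockA\P_\l\big|\ls N^{3(\beta-1)/2}$ and $|E-E_0-E_\mathrm{pert}|\ls N^{3(\beta-1)/2}$.

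The residue bookkeeping is routine and structurally identical to \cite{spectrum}; the main obstacle is the weighted resolvent estimate (i). In contrast to the mean-field regime, where the perturbation is controlled purely through $\Np$, the singular interaction forces the kinetic energy $\mathbb{K}_0$ into the bounds on $\FockG_1$ and $\FockR_2$ (a factor $1/|p|$ extracted by pairing a momentum with $\mathbb{K}_0^{1/2}$ is what turns the naive size $N^{(3\beta-1)/2}$ of $\FockG_1$ into $N^{(\beta-1)/2}$). Hence one cannot simply pull weights through the reduced resolvent but must use quantitatively that $\FockGz$, $\mathbb{K}_0$ and $\Np$ are mutually comparable up to the rotation $\BogUz$ --- whose coefficients $\tau_p$ depend on $N$ but satisfy $\sum_{p}\tau_p^2\ls1$ uniformly, so that moments of $\Np$ are comparable before and after conjugation (Section~\ref{sec:BT}) --- and must control $[\FockGz,(\Np+1)^k]$. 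Transporting the $(\mathbb{K}_0+1)^{1/2}$ weight and the $(\Np+1)$-weights simultaneously through the resolvents, and choosing the free parameter $\l$ in the $\FockR_2$ bound so that the two factors in each matrix element balance, is the technical heart of the proof.
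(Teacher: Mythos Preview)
Your overall plan matches the paper's: iterate the resolvent identity around $\FockGz$, compute residues on $\gamma$, and control the remainders by weighted resolvent estimates. The structural part---identifying $\P_1$, $\P_2$, $E_{\mathrm{pert}}$, using $\lr{\Chiz,\FockG_1\Chiz}=0$ by parity, and organizing terms by powers of $N^{(\beta-1)/2}$---is correct and essentially identical to Section~\ref{sec:proof:prop:expansion:P}.

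The gap is in your step (ii). A Neumann-series bootstrap does \emph{not} yield $\norm{(\Np+1)^{a}(z-\FockG)^{-1}(\Np+1)^{-a}}\ls 1$: the perturbation $\mathcal{W}$ is small only relative to $(\boldKz+1)^{1/2}(\Np+1)^{b}$-weights, and each application of $(z-\FockGz)^{-1}\mathcal{W}$ gains a factor $N^{(\beta-1)/2}$ but \emph{loses} a fixed power of $(\Np+1)$ (Lemma~\ref{lem:aux:2}). Hence the series does not sum as a bounded operator between fixed weighted spaces, and you cannot treat the full resolvent symmetrically. The paper avoids this by decomposing $\frac{1}{z-\FockG}=\frac{\P}{z-E}+\frac{\Q}{z-\FockG}$, splitting the remainder into a part $\FockB_P$ where the full resolvent is tested against $\Chi$---controlled via $\lr{\Chi,(\boldKz+1)(\Np+1)^{\l}\Chi}\ls 1$ (Lemma~\ref{lem:kinetic:energy:excitations})---and a part $\FockB_Q$ involving $\tfrac{\Q}{z-\FockG}$. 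For the latter one proves the \emph{asymmetric} bound
\[
\Big\|(\boldKz+1)^{1/2}(\Np+1)^{-a}\,\frac{\Q}{z-\FockG}\,\psi\Big\|\ls\norm{\psi}\qquad(a\geq 5/2),
\]
not by a Neumann series but by squaring, inserting $\boldKz\ls\FockGz+\FockG_2+C(\Np+1)$, writing $\FockGz+\FockG_2=(\FockG-z)+z-\FockG_1-\FockR_2$, and absorbing the $\FockG_1$, $\FockR_2$ contributions back into the left-hand side (Lemma~\ref{lem:resolventG}). The large negative power $(\Np+1)^{-a}$ is precisely what makes these contributions harmless; no symmetric weighted bound on the full resolvent is claimed or needed.
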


Important ingredients for this proof are estimates of the operators $\FockG_1$ and $\FockG_2$ and of the remainders, which are given in Section \ref{sec:estimates-G}. Moreover, it is crucial that 
\begin{equation}
\lr{\Chiz,(\boldKz+1)(\Np+1)^\l\Chiz}\ls 1\,,\qquad \lr{\Chi,(\boldKz+1)(\Np+1)^\l\Chi}\ls 1\,,
\end{equation}
which we prove in Lemma \ref{lem:BT:K0} and Lemma \ref{lem:kinetic:energy:excitations}, respectively.

\subsection{Expansion of the ground state}
Proposition \ref{prop:expansion:P} implies that
\begin{equation}
|\ENb-\mathcal{C}-\Ez-E_\mathrm{pert}|\ls N^{\frac32(\beta-1)}
\end{equation}
by definition \eqref{def:E_0} of $E$.
In Lemma \ref{lem:diag}, we have shown that  $\mathcal{C}+\Ez=4\pi(N-1)\aNb+E_{0,0}+E_{0,1}$ up to small error terms. Finally, in Section \ref{section:explicit calculation of E-pert}, we compute $E_{0,1}+E_\mathrm{pert}$ explicitly to obtain the expression \eqref{E_corr_def} for $E_\mathrm{corr}$. This concludes the proof of Theorem \ref{thm:energy}.\\

Another consequence of Proposition \ref{prop:expansion:P} is an expansion of the ground state wave function and of its one-body reduced density matrix. Since  \eqref{eqn:expansion:TrAP} holds in particular for any bounded operator $\FockA\in\cL(\Fp)$,  we conclude analogously to \cite[Corollary 3.4]{spectrum} that
\begin{equation}
\label{eq:estimate for the projector P} 
\Tr\left|\P-\sum_{\l=0}^2\P_\l\right|\ls N^{\frac32(\beta-1)}\,.
\end{equation}
Since $\P$ is a rank-one projector, \cite[Theorem 4]{spectrum} implies that 
\begin{equation}
\left\|\PsiN -\Psi_{N,0}-\Psi_{N,1}-\Psi_{N,2}\right\|_{L^2(\Lambda^N)} \ls N^{\frac32(\beta-1)}\,,
\end{equation}
where  
\begin{equation}\label{def:PsiNl}
\Psi_{N,\l}=\UN^*\left(\id^{\leq N}\FockT^*\Chi_\l\right)
\end{equation}
for $\Chiz$ from \eqref{def:E_0} and with
\begin{align}
\Chi_1&=\REz\FockG_1\Chiz\,,\\
\Chi_2&=\left(\frac{\Qz}{(\Ez-\FockGz)^2}\FockG_2+\REz\FockG_1\REz\FockG_1-\frac12\lr{\FockG_1\frac{\Qz}{(\Ez-\FockGz)^2}\FockG_1}\right)\Chiz\,,
\end{align}
where we used the shorthand notation
$$\lr{\FockB}:=\lr{\Chiz,\FockB\Chiz}\,$$
for operators $\FockB$ on $\Fp$.
Finally, following the proof of \cite[Corollary~1.1]{proceedings}, we obtain the estimate for the reduced density matrix, which is proven in Section \ref{sec:proof:thm:state}.

\section{Bogoliubov transformations}\label{sec:BT}

In this section we collect and prove useful properties of the quadratic transformation $\FockT$, which regularizes the excitation Hamiltonian $\FockH$, and of the transformation $\BogUz$, which diagonalizes the Bogoliubov Hamiltonian $\FockGz$.

\subsection{Properties of $\widehat{v}_N^\beta$}
As a preparation, we provide two useful estimates for the interaction potential.

\begin{lem}
\label{lem:aux:potential} Recall that $\widehat{v}_N^\beta = \widehat{v} ( \cdot / N^{\beta})$. Then
\begin{align}
\label{eq:bound-pot-1}
\sum_{p \in \Lambda_+^*} \frac{\left(\widehat{v}_N^\beta (p)\right)^2}{p^2} \ls N^{\beta}
\end{align}
 and  
\begin{align}
\label{eq:auxiliary bound for the potential}
\sup_{q \in \Lsp} \bigg\{  \sum_{r \in \Lsp, r \neq - q}
\frac{ {\widehat{v}_N^\beta(r)}}{(q+r)^2} \bigg\}
\ls  N^{\beta} .
\end{align}
\end{lem}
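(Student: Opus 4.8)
The plan is to prove both estimates by reducing the sums over the dual lattice $\Lambda_+^* = 2\pi\mathbb{Z}^3\setminus\{0\}$ to absolutely convergent integrals after rescaling, exploiting that $\widehat{v}_N^\beta(p) = \widehat{v}(p/N^\beta)$ is a slowly varying bump: $\widehat{v}$ is the Fourier transform of the bounded, compactly supported function $v$, hence $\widehat{v}\in L^\infty$ with $\widehat{v}(0) = \int v > 0$, and $\widehat{v}$ is Schwartz-class (in fact real-analytic) so that $|\widehat{v}(\xi)| \lesssim \lr{\xi}^{-4}$, say. The point is that $\widehat{v}_N^\beta(p)$ is essentially supported on $|p| \lesssim N^\beta$, and on that region it is bounded by $\|\widehat{v}\|_\infty \lesssim 1$.

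For \eqref{eq:bound-pot-1}, I would split $\sum_{p\in\Lambda_+^*} (\widehat{v}_N^\beta(p))^2/p^2$ into the region $|p|\le N^\beta$ and $|p| > N^\beta$. On the first region we bound $(\widehat{v}_N^\beta(p))^2 \le \|\widehat{v}\|_\infty^2 \lesssim 1$ and use $\sum_{0<|p|\le N^\beta} p^{-2} \lesssim \int_{1}^{N^\beta} r^{-2}\,r^2\,dr \lesssim N^\beta$ (the lattice sum is comparable to the integral since the summand is decreasing in $|p|$ and the number of lattice points in a ball of radius $R$ grows like $R^3$). On the tail region $|p|>N^\beta$ we use the decay $|\widehat{v}_N^\beta(p)| = |\widehat{v}(p/N^\beta)| \lesssim (|p|/N^\beta)^{-4}$, so the summand is $\lesssim N^{8\beta}|p|^{-10}$ and $\sum_{|p|>N^\beta} N^{8\beta}|p|^{-10} \lesssim N^{8\beta} (N^\beta)^{-7} = N^\beta$; so both pieces are $\lesssim N^\beta$, giving \eqref{eq:bound-pot-1}. (One could alternatively note that the left side equals $\sum_p |\widehat{v}_N^\beta(p)|^2 \langle p \rangle^{-2}$ up to the $p=0$ adjustment and recognize it as $\|\,(-\Delta)^{-1/2} q_0 \check{v}_N^\beta\|^2$-type expression, but the direct dyadic estimate is cleanest.)

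For \eqref{eq:auxiliary bound for the potential}, fix $q\in\Lambda_+^*$ and estimate $S(q) := \sum_{r\in\Lambda_+^*,\, r\neq -q} \widehat{v}_N^\beta(r)/(q+r)^2$. Since $\widehat{v}$ is of positive type (Assumption \ref{ass}), $\widehat{v}_N^\beta(r)\ge 0$, so all terms are non-negative and there is no cancellation to worry about; I split according to $|r|\le N^\beta$ versus $|r|>N^\beta$. On the tail $|r|>N^\beta$, using $\widehat{v}_N^\beta(r)\lesssim (|r|/N^\beta)^{-4}$ and $(q+r)^{-2}\le$ (bounded by the minimal nonzero value when $r\neq -q$, or simply dropped after the shift), one gets a contribution $\lesssim N^{4\beta}\sum_{|r|>N^\beta}|r|^{-4}\lesssim N^{4\beta}(N^\beta)^{-1} = N^{3\beta}$ — wait, that's too big, so instead I keep the $(q+r)^{-2}$ factor: substituting $s = q+r$ and using $(q+r)^{-2}$ bounded shows $\sum_{|r|>N^\beta}\widehat{v}_N^\beta(r)/(q+r)^2 \lesssim \sup_s (q+s)^{-2}\cdot\sum_{|r|>N^\beta}N^{4\beta}|r|^{-4}$; more carefully, $\sum_{r} \widehat{v}_N^\beta(r)\lesssim N^{3\beta}$ (it's $N^{3\beta}\cdot$[Riemann sum of $\widehat v$] $\sim N^{3\beta}\check{v}(0)/(2\pi)^3 = N^{3\beta} v(0)$, up to the discreteness), but that overcounts; the right bound comes from noting $\sum_{r\neq -q}\widehat{v}_N^\beta(r)(q+r)^{-2} \le (\sum_r \widehat{v}_N^\beta(r)^2)^{1/2}(\sum_{s\neq 0} s^{-4})^{1/2}$ by Cauchy–Schwarz, and $\sum_r \widehat{v}_N^\beta(r)^2 \lesssim N^{3\beta}$ while $\sum_{s\in\Lambda_+^*} s^{-4} \lesssim 1$, so $S(q)\lesssim N^{3\beta/2}$ uniformly in $q$ — and since $3\beta/2 \le \beta$ fails\ldots Let me reconsider: the clean route is the dyadic split where on $|r|\le N^\beta$ we bound $\widehat{v}_N^\beta(r)\le \|\widehat v\|_\infty$ and use $\sup_q\sum_{0<|r|\le N^\beta, r\neq -q}(q+r)^{-2}\lesssim \sum_{0<|s|\lesssim N^\beta}s^{-2}\lesssim N^\beta$ (shift $s=q+r$; the ball of radius $\sim N^\beta$ around $0$ contains all the relevant $s$), while on $|r|>N^\beta$ we bound $(q+r)^{-2}\lesssim 1$ and $\sum_{|r|>N^\beta}\widehat{v}_N^\beta(r) = \sum_{|r|>N^\beta}\widehat v(r/N^\beta)\lesssim N^{3\beta}\int_{|\xi|>1}|\widehat v(\xi)|\,d\xi \lesssim N^{3\beta}$, which is again too large. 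So the tail must retain the $(q+r)^{-2}$ decay: writing it as $\sum_{|r|>N^\beta}\widehat v(r/N^\beta)(q+r)^{-2}$ and using $\widehat v(r/N^\beta)\lesssim N^{4\beta}|r|^{-4}$, this is $\lesssim N^{4\beta}\sum_{|r|>N^\beta}|r|^{-4}(q+r)^{-2}\lesssim N^{4\beta}\sum_{|r|>N^\beta}|r|^{-6}\lesssim N^{4\beta}(N^\beta)^{-3}=N^\beta$ — here I used $(q+r)^{-2}\le c$ and $|r|^{-4}|q+r|^{-2}\le|r|^{-4}$. That works and both pieces are $\lesssim N^\beta$.

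The main obstacle is bookkeeping the tail region of \eqref{eq:auxiliary bound for the potential} uniformly in $q$: one must not lose the decay $(q+r)^{-2}$ but also cannot use it near $r = -q$ since that point is excluded — however $|r| > N^\beta$ forces $|q+r|$ to range over a shifted lattice and the naive bound $\sum_{|r|>N^\beta}\widehat v_N^\beta(r)\lesssim N^{3\beta}$ is fatally lossy, so the resolution is to \emph{always} keep at least a $\langle p\rangle^{-2}$ factor from the interaction's decay and combine it with $(q+r)^{-2}$; the decay rate $|\widehat v(\xi)|\lesssim\langle\xi\rangle^{-M}$ for $M$ large (available since $v\in L^\infty$ compactly supported $\Rightarrow \widehat v$ smooth, and integrations by parts give arbitrary polynomial decay) makes the tail convergent with room to spare. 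I would state the needed decay of $\widehat v$ as an auxiliary observation at the start, then carry out both dyadic splits; each is a few lines of comparing lattice sums with integrals over $\mathbb{R}^3$.
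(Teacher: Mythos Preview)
Your argument has a genuine gap: you repeatedly invoke the pointwise decay $|\widehat v(\xi)|\lesssim\langle\xi\rangle^{-4}$ (and even arbitrary polynomial decay), but this is \emph{not} available under Assumption~\ref{ass}. The potential $v$ is only assumed bounded and compactly supported, with no smoothness whatsoever. Compact support of $v$ makes $\widehat v$ real-analytic, as you note, but that says nothing about decay; decay of $\widehat v$ would require derivatives of $v$, via integration by parts in the formula $\widehat v(\xi)=\int v(x)e^{-i\xi\cdot x}\,dx$. You have the smoothness--decay duality backwards. For a generic bounded compactly supported $v$ (think of an indicator of a ball), $\widehat v$ decays no faster than $|\xi|^{-2}$, which is not enough for your tail estimates. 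The only robust inputs are $\|\widehat v_N^\beta\|_{\ell^\infty}\lesssim 1$ (from $v\in L^1$) and $\|\widehat v_N^\beta\|_{\ell^2}\lesssim N^{3\beta/2}$ (from $v\in L^2$ via Plancherel and a Riemann-sum comparison).

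There is also a computational slip in your final tail estimate for \eqref{eq:auxiliary bound for the potential}: you write $N^{4\beta}\sum_{|r|>N^\beta}|r|^{-4}(q+r)^{-2}\lesssim N^{4\beta}\sum_{|r|>N^\beta}|r|^{-6}$, but your stated justification ``$(q+r)^{-2}\le c$'' only gives $|r|^{-4}$, not $|r|^{-6}$; the resulting bound is $N^{3\beta}$, not $N^\beta$. The fix is to split on $|q+r|$ rather than on $|r|$: after the shift $s=q+r$ the sum becomes $\sum_{s\neq 0}\widehat v_N^\beta(s-q)s^{-2}$; on $|s|\le N^\beta$ use $\|\widehat v\|_\infty$ and $\sum_{0<|s|\le N^\beta}s^{-2}\lesssim N^\beta$, while on $|s|>N^\beta$ apply Cauchy--Schwarz to get $\|\widehat v_N^\beta\|_{\ell^2}\,\|\chi_{|s|>N^\beta}s^{-2}\|_{\ell^2}\lesssim N^{3\beta/2}\cdot N^{-\beta/2}=N^\beta$. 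The same Cauchy--Schwarz handles the tail of \eqref{eq:bound-pot-1} without any pointwise decay of $\widehat v$. This is exactly the paper's route, and it is forced once one realizes no decay of $\widehat v$ is available.
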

\begin{proof} Concerning the first bound, we find 
\begin{align}
\sum_{p \in \Lambda_+^*} \left(\widehat{v}_N^\beta (p)\right)^2 / p^2 
\ls \left( \| \hvNb  \|_{\ell^\infty}^2 \| \| \vert p \vert^{-2} \chi_{\vert p \vert \leq N^\beta} \|_{\ell^1} + \| \hvNb  \|_{\ell^\infty} \| \hvNb  \|_{\ell^2} \| \vert p \vert^{-2} \chi_{\vert p \vert > N^\beta} \|_{\ell^2} \right) \ls N^\beta\label{eq:bound-2}
\end{align}
because $\| \hvNb  \|_{\ell^\infty} \ls1$ and $\| \hvNb  \|_{\ell^2} \ls N^{3\beta/2}$.
For the second estimate, we proceed similarly and find 
\begin{align}
\sum_{r \in \Lsp, r \neq - q}
\frac{ \abs{\hvNb(r)}}{(q+r)^2}
&\ls \norm{\widehat{v}}_{\ell^{\infty}} \| \chi_{\vert p \vert \leq N^\beta } \vert p \vert^{-2} \|_{\ell^1}  + \| \widehat{v}_N^\beta \|_{\ell^2} \| \chi_{\vert p \vert \geq N^\beta } \vert p \vert^{-2} \|_{\ell^2} \,,
\end{align}
which concludes the proof. 
\end{proof}

\subsection{Quadratic transformation $\FockT$}

We recall from \eqref{eq:Bogoliubov transformation} that the Bogoliubov transformation $\FockT$ is given by
$$\FockT=\exp\left\{\frac12\sum_{p\in\Lsp}\eta_p(\ad_p\ad_{-p}-a_pa_{-p})\right\},$$
for $\eta_p$ as in \eqref{def:eta1}. In Lemma \ref{lem:HST} below we summarize some useful properties of the sequence $\eta_p$. In fact this Lemma is a modification of \cite[Lemma 14]{hainzl2020}, where the Gross-Pitaevski regime is considered (i.e. $\beta=1$). The arguments presented there easily translate to our setting $\beta \in (0,1)$. 

\begin{lem}\label{lem:HST}
It holds that $\eta_p \in \mathbb{R}$, $\eta_p  = \eta_{-p}$ for all $p \in \Lambda_+^*$ and 
\begin{align}
\vert \eta_p \vert \ls  \frac{1}{p^2}, \quad \|  \eta\|_{\l^\infty}, \; \| \eta \|_{\l^2} \ls1 \; \quad \text{and} \quad  \| p \eta \|_{\l^2} \ls N^{\beta/2}.
\end{align}
\end{lem}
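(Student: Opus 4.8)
\textbf{Proof proposal for Lemma \ref{lem:HST}.}

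The plan is to read off all claimed properties from the defining equation \eqref{def:eta} together with the explicit representation \eqref{def:check_eta} of $\check\eta$ as a resolvent applied to the potential. First, the reality $\eta_p\in\R$ and the symmetry $\eta_p=\eta_{-p}$ follow structurally: $\vNb$ is real and spherically symmetric, so $\hvNb$ is real and even; since the operator $q_0(-\Delta+\tfrac1{2N}\vNb)q_0$ is real and commutes with the parity $x\mapsto -x$ on $L^2_\perp(\Lambda)$, its inverse does too, and hence $\check\eta$ inherits these properties from $q_0\vNb$. Equivalently, one can argue directly from \eqref{def:eta}: the sequence $p\mapsto\eta_{-p}$ solves the same linear system (using $\hvNb(p-q)=\hvNb(q-p)$ and relabelling $q\to-q$), and the system has a unique solution because the operator is strictly positive on $L^2_\perp(\Lambda)$ (here one uses $|p|^2\ge 4\pi^2$ on the orthogonal complement of $\varphi_0$, plus $\vNb\ge 0$ and $\kappa$ small).

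Next I would establish the pointwise bound $|\eta_p|\ls 1/p^2$. Rewrite \eqref{def:eta} as
\begin{equation*}
\eta_p=-\frac{1}{2p^2}\hvNb(p)-\frac{1}{2Np^2}\sum_{q\in\Lsp}\hvNb(p-q)\eta_q\,.
\end{equation*}
The first term is bounded by $\tfrac12\|\hat v\|_{\ell^\infty}\,p^{-2}\ls p^{-2}$. For the second term it suffices to show that $\tfrac1N\big|\sum_q\hvNb(p-q)\eta_q\big|\ls 1$ uniformly in $p$; this is exactly where the smallness of $\kappa$ enters, allowing a Neumann/fixed-point argument. Concretely, one shows $\tfrac1N\|\hvNb*\eta\|_{\ell^\infty}\le \tfrac1N\|\hvNb\|_{\ell^1}\|\eta\|_{\ell^\infty}$ and $\|\hvNb\|_{\ell^1}\ls N^{3\beta}\cdot\text{(something)}$... — actually the cleaner route is to work in $x$-space with \eqref{def:check_eta}: writing $R:=q_0(-\Delta+\tfrac1{2N}\vNb)q_0\ge q_0(-\Delta)q_0$, one has $\eta_p=-\tfrac12\langle\varphi_p,R^{-1}\vNb\varphi_0\rangle$ and then $|\eta_p|\le\tfrac12\|R^{-1}\varphi_p\|\,\|\vNb\varphi_0\|$... this does not immediately give $p^{-2}$ decay, so instead I would use $R^{-1}\le (q_0(-\Delta)q_0)^{-1}$ in form sense together with the resolvent identity $R^{-1}=(-\Delta)^{-1}-\tfrac1{2N}(-\Delta)^{-1}\vNb R^{-1}$ on $L^2_\perp$, yielding
\begin{equation*}
\eta_p=-\frac{\hvNb(p)}{2p^2}+\frac{1}{4N}\,\frac1{p^2}\big(\hvNb*\eta\big)_p\,,
\end{equation*}
and close the estimate by a fixed-point argument in the norm $\|p^2\eta\|_{\ell^\infty}$: the map is a contraction once $\kappa$ is small because $\tfrac1{4N}\|\hvNb\|_{\ell^1}\,\|p^{-2}\|_{\ell^\infty_{p\in\Lsp}}$ is controllably small (note $\|\hvNb\|_{\ell^1}\ls N\|\hat v\|_{L^1}$ by a Riemann-sum comparison, so the prefactor is $O(\kappa)$). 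This gives $\|p^2\eta\|_{\ell^\infty}\ls 1$, i.e.\ $|\eta_p|\ls p^{-2}$.

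The remaining norm bounds are then immediate corollaries. From $|\eta_p|\ls p^{-2}\le (4\pi^2)^{-1}$ we get $\|\eta\|_{\ell^\infty}\ls 1$; since $\sum_{p\in\Lsp}p^{-4}<\infty$ we get $\|\eta\|_{\ell^2}\ls 1$; and $\|p\eta\|_{\ell^2}^2=\sum_p p^2|\eta_p|^2\ls \sum_{|p|\le N^\beta}p^{-2}+\sum_{|p|>N^\beta}p^2|\eta_p|^2$, where on the low-frequency part $\sum_{|p|\le N^\beta}p^{-2}\ls N^\beta$ and on the high-frequency part one uses instead the bound $|\eta_p|\ls |\hvNb(p)|p^{-2}+\ldots$ together with $\|\hvNb\|_{\ell^2}\ls N^{3\beta/2}$; a short computation gives $\|p\eta\|_{\ell^2}\ls N^{\beta/2}$, matching \cite[Lemma 14]{hainzl2020}. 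The main obstacle is the fixed-point/smallness step establishing $|\eta_p|\ls p^{-2}$ uniformly in $N$: one must be careful that the correction term $\tfrac1N\hvNb*\eta$ is genuinely $O(\kappa)$ and not secretly $N$-dependent, which is why the argument is phrased in terms of $\|\hvNb\|_{\ell^1}\ls N$ against the $1/N$ prefactor rather than naive bounds; all other steps are routine once this is in place, and indeed the proof essentially transcribes \cite[Lemma 14]{hainzl2020} from $\beta=1$ to general $\beta\in(0,1)$, tracking the $N^\beta$-dependence of the momentum sums.
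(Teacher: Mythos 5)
Your proposal has a genuine gap in the key step, namely the pointwise bound $\vert\eta_p\vert\ls p^{-2}$. You correctly observe early on that $\|\hvNb\|_{\ell^1}\ls N^{3\beta}$, but then in the fixed-point step you contradict yourself and assert $\|\hvNb\|_{\ell^1}\ls N\|\hat v\|_{L^1}$. The Riemann-sum comparison in fact gives $\sum_{p\in\Lsp}\vert\hat v(p/N^\beta)\vert\approx (N^\beta/2\pi)^3\int\vert\hat v\vert$, i.e.\ the exponent is $3\beta$, not $1$. Consequently the contraction constant in your iteration is of size $\tfrac1N\|\hvNb\|_{\ell^1}\ls N^{3\beta-1}$, which diverges in the regime $\beta>1/3$ (and in particular throughout $\beta\in(1/2,1)$), so the fixed-point argument does not close and the smallness is not controlled by $\kappa$. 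The smallness of $\kappa$ plays no role in this lemma; the paper's statement and proof hold without it.

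The paper proceeds in the opposite logical order, which sidesteps the issue entirely. It first establishes $\|p\eta\|_{\ell^2}\ls N^{\beta/2}$ by multiplying \eqref{def:eta} by $\eta_p$, summing over $p$, and using $v\geq 0$ to drop the quadratic-form term $\tfrac1N\sum_{p,q}\hvNb(p-q)\eta_p\eta_q=\tfrac1N\int\vNb\vert\check\eta\vert^2\geq 0$; the remaining linear term is handled by Cauchy--Schwarz together with \eqref{eq:bound-pot-1}. Only then is the pointwise bound obtained: from \eqref{def:eta} and Cauchy--Schwarz,
\begin{align}
\vert p^2\eta_p\vert \leq \vert\hvNb(p)\vert + \frac{1}{2N}\Big(\sum_{q\in\Lsp}\frac{\vert\hvNb(p-q)\vert^2}{q^2}\Big)^{1/2}\|q\eta\|_{\ell^2}\ls 1\,,
\end{align}
since the middle factor is $\ls N^{\beta/2}$ by the same computation as Lemma~\ref{lem:aux:potential} and $\|q\eta\|_{\ell^2}\ls N^{\beta/2}$ was already proved, giving $N^{-1}N^{\beta/2}N^{\beta/2}=N^{\beta-1}\ls 1$. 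The $\ell^2$ and $\ell^\infty$ bounds then follow trivially from $\vert\eta_p\vert\ls p^{-2}$. This route is not a fixed-point argument but an a priori bound (which is all one needs, since existence of $\eta$ is already guaranteed by the representation \eqref{def:check_eta}), and it uses only the positivity of $v$. You should restructure your proof accordingly: derive the $\|p\eta\|_{\ell^2}$ bound first variationally, then read off the pointwise decay; your reality/symmetry argument and the final deduction of the remaining norm bounds are fine.
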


\begin{proof}
We follow the lines of the proof of \cite[Lemma 14]{hainzl2020}. We multiply \eqref{def:eta} with $\eta_p$ and find, summing over $p \in \Lambda_+^*$ and using $v \geq 0$,
\begin{align}
2 \| p \eta \|^2_{\l^2} = - \sum_{p \in \Lambda_+^*} \hvNb  (p) \eta_p - \frac{1}{N}\sum_{p,q \in \Lambda_+^*} \hvNb  (p-q) \eta_p \eta_q \leq - \sum_{p \in \Lambda_+^*} \hvNb  (p) \eta_p \leq \| p \eta \|_{\ell^2} \| \hvNb  /p \|_{\ell^2} \; . \label{eq:bound-1}
\end{align}
We conclude that
\begin{align}
\| p \eta \|^2_{\l^2} 
\ls \sum_{p \in \Lambda_+^*} \left(v_N^\beta (p)\right)^2 / p^2 
\ls  N^{\beta}
\end{align}
by Lemma~\ref{lem:aux:potential}. Using once more \eqref{def:eta} we get the pointwise estimate 
\begin{align}
\vert p^2 \eta_p \vert \leq \vert \hvNb  (p) \vert + \frac{1}{2N} \left(\sum_{q \in \Lambda_+^*} \frac{\vert \hvNb  (p-q) \vert^2 }{q^2} \right)^{1/2} \| q \eta \|_{\ell^2} \ls1
\end{align}
proceeding similarly with Lemma \ref{lem:aux:potential}. Consequently,
\begin{align}
\| \eta \|_{\ell^\infty} \leq \| \eta \|_{\ell^2} \ls1 \; . 
\end{align}
\end{proof}

Lemma \ref{lem:HST} immediately implies the following bounds on $c_p=\cosh(\eta_p)$ and $s_p=\sinh(\eta_p)$:
\begin{lem}\label{lem:cp:sp}
For $p\in\Lsp$,  we have the pointwise estimates
\begin{eqnarray}
|c_p|&\ls& 1\,,\qquad\qquad\qquad\;  |c_p-1|\ls\frac{1}{|p|^4}\leq 1\,,\\
|s_p|&\ls& \frac{1}{|p|^2}\ls 1\,,\qquad
|c_ps_p-\eta_p|\ls \frac{1}{|p|^6}\ls 1\,.
\end{eqnarray}
\end{lem}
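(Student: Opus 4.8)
\textbf{Proof proposal for Lemma \ref{lem:cp:sp}.}

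The plan is to derive all four pointwise estimates directly from the bounds on $\eta_p$ collected in Lemma \ref{lem:HST}, in particular from $|\eta_p|\ls |p|^{-2}$ together with $\|\eta\|_{\ell^\infty}\ls 1$. Throughout, I would use the elementary Taylor expansions of $\cosh$ and $\sinh$ around zero and the fact that $|\eta_p|$ is uniformly bounded, so that all hyperbolic functions evaluated at $\eta_p$ are comparable to their first nontrivial Taylor term.

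For the bound $|c_p|\ls 1$: since $|\eta_p|\leq \|\eta\|_{\ell^\infty}\ls 1$, we have $|c_p|=|\cosh(\eta_p)|\leq \cosh(\|\eta\|_{\ell^\infty})\ls 1$. For $|c_p-1|\ls |p|^{-4}$: write $\cosh(\eta_p)-1=\tfrac{\eta_p^2}{2}+\mathcal{O}(\eta_p^4)$, or more cleanly use that $|\cosh t-1|\leq \tfrac{t^2}{2}\cosh t\ls t^2$ for $|t|\ls 1$; applying this with $t=\eta_p$ and $|\eta_p|\ls|p|^{-2}$ gives $|c_p-1|\ls |\eta_p|^2\ls |p|^{-4}$, and since $|p|\geq 2\pi$ on $\Lsp$ this is also $\leq 1$. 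Similarly, $|\sinh t|\leq |t|\cosh t\ls |t|$ for $|t|\ls 1$ yields $|s_p|\ls |\eta_p|\ls |p|^{-2}\ls 1$.

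The last estimate $|c_ps_p-\eta_p|\ls|p|^{-6}$ is the only one requiring a tiny bit of care. I would write $c_ps_p=\cosh(\eta_p)\sinh(\eta_p)=\tfrac12\sinh(2\eta_p)$ and use the expansion $\tfrac12\sinh(2t)-t=\tfrac{(2t)^3}{12}+\dots$, or rather the clean bound $|\tfrac12\sinh(2t)-t|\leq |t|^3 \cosh(2t)\ls |t|^3$ valid for $|t|\ls 1$ (this follows from $\sinh(2t)-2t=\int_0^{2t}(\cosh s-1)\,ds$ and $|\cosh s-1|\le \tfrac{s^2}{2}\cosh s$). Applying this with $t=\eta_p$ and $|\eta_p|\ls|p|^{-2}$ gives $|c_ps_p-\eta_p|\ls|\eta_p|^3\ls|p|^{-6}\ls 1$. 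There is no real obstacle here; the whole lemma is a routine consequence of $|\eta_p|\ls|p|^{-2}$ and boundedness of $\eta$, and the only thing to get right is keeping track of which power of $|p|^{-2}$ the leading Taylor remainder contributes in each case (one power for $s_p$, two for $c_p-1$, three for $c_ps_p-\eta_p$).
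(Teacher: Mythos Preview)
Your proof is correct and is precisely the routine argument the paper has in mind: the paper does not even spell out a proof, merely stating that Lemma~\ref{lem:HST} ``immediately implies'' the bounds, and your Taylor-remainder computations for $\cosh$, $\sinh$, and $\tfrac12\sinh(2t)$ are exactly what makes this immediate.
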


Finally, we prove that the number operator conjugated with $\FockT$ can be estimated in terms of the number operator.
\begin{lem}\label{lem:T:Number}
For any $k\in\N_0$, $\l \in \mathbb{R}$ and $\psi,xi\in\Fp$, it holds that
\begin{align}\label{T_Number_T}
\left|\lr{\psi,\FockT(\Np+1)^k\FockT^*\xi}\right|
&\leq C(k)\norm{(\Np+1)^{\frac{k}{2} + \l}\psi}\norm{(\Np+1)^{\frac{k}{2} - \l}\xi} , \\
\label{N_inverse_T}
\norm{\left( \Np + 1 \right)^{-k} \psi}
&\leq C(k) \norm{\left( \Np + 1 \right)^{-k} \FockT^* \psi} . 
\end{align}
\end{lem}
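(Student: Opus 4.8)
\emph{Proof plan.} The plan is to reduce both estimates to the single operator bound
\[
c(r)^{-1}(\Np+1)^{r}\;\leq\;\FockT(\Np+1)^{r}\FockT^{*}\;\leq\;c(r)\,(\Np+1)^{r}\qquad\text{(as quadratic forms)},
\]
valid for \emph{every} real $r$, with a constant $c(r)\geq1$ depending only on $r$ and on $\|\eta\|_{\ell^{2}}+\|\eta\|_{\ell^{\infty}}\ls1$ (Lemma~\ref{lem:HST}). Since the exponent $B:=\tfrac12\sum_{p\in\Lsp}\eta_{p}(\ad_{p}\ad_{-p}-a_{p}a_{-p})$ in \eqref{eq:Bogoliubov transformation} satisfies $B^{*}=-B$, the map $\FockT=\e^{B}$ is unitary with $\FockT^{*}=\FockT^{-1}$, and because all bounds of Lemma~\ref{lem:HST} are invariant under $\eta\mapsto-\eta$, the displayed estimate also holds with $\FockT$ and $\FockT^{*}$ interchanged.

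To prove the operator bound I would run a Grönwall argument along the flow $\FockT_{t}:=\e^{tB}$. For $\psi$ in a suitable dense domain, set $\psi_{t}:=\FockT_{t}^{*}\psi$ and $f(t):=\lr{\psi_{t},(\Np+1)^{r}\psi_{t}}$, so that $f(0)=\lr{\psi,(\Np+1)^{r}\psi}$, $f(1)=\lr{\psi,\FockT(\Np+1)^{r}\FockT^{*}\psi}$ and $f'(t)=\lr{\psi_{t},[B,(\Np+1)^{r}]\psi_{t}}$. The pull-through identities $(\Np+1)^{r}\ad_{p}\ad_{-p}=\ad_{p}\ad_{-p}(\Np+3)^{r}$ and $(\Np+1)^{r}a_{p}a_{-p}=a_{p}a_{-p}(\Np-1)^{r}$ turn the commutator into
\[
[B,(\Np+1)^{r}]=-\tfrac12\sum_{p\in\Lsp}\eta_{p}\Big(\ad_{p}\ad_{-p}\,\delta_{+}(\Np)+a_{p}a_{-p}\,\delta_{-}(\Np)\Big),
\]
with $\delta_{+}(\Np):=(\Np+3)^{r}-(\Np+1)^{r}$ and $\delta_{-}(\Np):=(\Np+1)^{r}-(\Np-1)^{r}$ satisfying $|\delta_{\pm}(\Np)|\leq C(r)(\Np+1)^{r-1}$ by the mean value theorem. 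Writing $\delta_{\pm}(\Np)=(\Np+1)^{(r-1)/2}g_{\pm}(\Np)(\Np+1)^{(r-1)/2}$ with $g_{\pm}$ a bounded function of $\Np$, distributing one factor $(\Np+1)^{(r-1)/2}$ to each side of the quadratic terms (absorbing the bounded functions of $\Np$ produced by the pull-through), splitting $\ad_{p}\ad_{-p}$ and $a_{p}a_{-p}$ into single operators and estimating by $\norm{a_{p}(\Np+1)^{(r-1)/2}\psi_{t}}$ and $\norm{\ad_{-p}(\Np+1)^{(r-1)/2}\psi_{t}}$, and finally summing over $p$ via a Cauchy--Schwarz controlled by $\sum_{p}\norm{a_{p}\chi}^{2}=\lr{\chi,\Np\chi}$ and $\|\eta\|_{\ell^{2}}+\|\eta\|_{\ell^{\infty}}\ls1$, one obtains $|f'(t)|\leq C(r)f(t)$. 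Grönwall then gives $\e^{-C(r)}f(0)\leq f(1)\leq\e^{C(r)}f(0)$, which is the claimed two-sided bound. To make the differentiation of $f$ rigorous one first replaces $(\Np+1)^{r}$ by the bounded regularisation $(\Np+1)^{r}(1+\varepsilon\Np)^{-M}$ with $M>|r|$, observes that $[B,(1+\varepsilon\Np)^{-M}]$ has the same structure and is harmless, and then removes the cutoff as $\varepsilon\to0$ by monotone convergence.

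Granting the operator bound, the two assertions follow quickly. For \eqref{T_Number_T}, insert $\FockT^{*}\FockT=\id$ to factor
\[
\FockT(\Np+1)^{k}\FockT^{*}=\big(\FockT(\Np+1)^{\frac{k}{2}+\l}\FockT^{*}\big)\big(\FockT(\Np+1)^{\frac{k}{2}-\l}\FockT^{*}\big)
\]
into a product of two self-adjoint operators; then Cauchy--Schwarz gives
\[
\big|\lr{\psi,\FockT(\Np+1)^{k}\FockT^{*}\xi}\big|\leq\norm{\FockT(\Np+1)^{\frac{k}{2}+\l}\FockT^{*}\psi}\,\norm{\FockT(\Np+1)^{\frac{k}{2}-\l}\FockT^{*}\xi},
\]
and $\norm{\FockT(\Np+1)^{\frac{k}{2}\pm\l}\FockT^{*}\varphi}^{2}=\lr{\varphi,\FockT(\Np+1)^{k\pm2\l}\FockT^{*}\varphi}\leq c(k\pm2\l)\lr{\varphi,(\Np+1)^{k\pm2\l}\varphi}=c(k\pm2\l)\norm{(\Np+1)^{\frac{k}{2}\pm\l}\varphi}^{2}$ by the operator bound with exponent $r=k\pm2\l$. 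For \eqref{N_inverse_T}, the lower bound with exponent $r=-2k$ reads $(\Np+1)^{-2k}\leq c(-2k)\FockT(\Np+1)^{-2k}\FockT^{*}$, whence
\[
\norm{(\Np+1)^{-k}\psi}^{2}=\lr{\psi,(\Np+1)^{-2k}\psi}\leq c(-2k)\lr{\psi,\FockT(\Np+1)^{-2k}\FockT^{*}\psi}=c(-2k)\norm{(\Np+1)^{-k}\FockT^{*}\psi}^{2}.
\]

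The step I expect to be the main obstacle is the form estimate $|\lr{\phi,[B,(\Np+1)^{r}]\phi}|\leq C(r)\lr{\phi,(\Np+1)^{r}\phi}$ for \emph{non-integer} and negative $r$: there is no finite Leibniz expansion of the commutator, so one is forced to go through the pull-through identities together with the mean-value control of $\delta_{\pm}$ and the symmetric splitting of the $\Np$-weights, rather than the more elementary bookkeeping available for integer powers. The accompanying domain issues in the Grönwall step (finiteness and differentiability of $f$) are routine and handled by the regularisation above, and the remaining manipulations are standard Fock-space estimates.
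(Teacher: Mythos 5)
Your proposal is correct, and it takes a genuinely different route from the paper's proof. The paper proves \eqref{T_Number_T} by writing out $\FockT(\Np+1)\FockT^*$ explicitly in terms of $c_p, s_p$ for $k=1$, inserting the $(\Np+1)^{\pm\l}$ weights by hand, and then appeals to an induction on $k$ (citing \cite[Lemma~4.4]{QF}); for \eqref{N_inverse_T} it first derives the operator-norm bound $\norm{(\Np+1)^{-k}\FockT(\Np+1)^{k}\FockT^{*}}_{\op}\ls1$ from \eqref{T_Number_T} with $\l=k/2$ and then uses the insertion $\id=\FockT(\Np+1)^{k}\FockT^{*}\,\FockT(\Np+1)^{-k}\FockT^{*}$. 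You instead establish the stronger two-sided quadratic-form inequality $c(r)^{-1}(\Np+1)^{r}\leq\FockT(\Np+1)^{r}\FockT^{*}\leq c(r)(\Np+1)^{r}$ for every real $r$ by running a Gr\"onwall argument along the flow $\e^{tB}$, after which \eqref{T_Number_T} follows from factoring $\FockT(\Np+1)^{k}\FockT^{*}=\bigl(\FockT(\Np+1)^{k/2+\l}\FockT^{*}\bigr)\bigl(\FockT(\Np+1)^{k/2-\l}\FockT^{*}\bigr)$ and \eqref{N_inverse_T} follows in one line from the lower bound at $r=-2k$. What this buys you is a single, uniform statement valid for all real exponents (no induction needed), and a cleaner derivation of \eqref{N_inverse_T} from the lower form bound rather than the paper's operator-norm detour. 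It is worth noting that the Gr\"onwall-along-flow technique is exactly what the paper itself uses in the more delicate Lemma~\ref{lem:BT:K0} for $\BogUz$ with the mixed weight $(\Np+1)^{\l}(\boldKz+1)$, so your method is methodologically consistent with the paper; the technical ingredients you flag (the pull-through identities, the mean-value bound $|\delta_{\pm}(\Np)|\leq C(r)(\Np+1)^{r-1}$, the symmetric splitting of the $\Np$-weights before Cauchy--Schwarz in $p$ controlled by $\|\eta\|_{\ell^{2}}+\|\eta\|_{\ell^{\infty}}\ls1$, and the $\varepsilon$-regularization to justify differentiating $f$) are all sound and suffice to close the argument. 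One minor remark: the constant you obtain, like the paper's, genuinely depends on $\l$ as well as $k$, so the notation $C(k)$ in the lemma statement is a slight abuse common to both proofs.
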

\begin{proof}
By Lemma \ref{lem:cp:sp}, we compute for $k=1$
\begin{align}
&\left|\lr{\psi,\FockT(\Np+1)\FockT^*\xi}\right|
\nonumber\\
&\quad \leq \sum_{p\in\Lsp}(|c_p|^2+|s_p|^2)\left|\lr{\left( \Np + 1 \right)^{\l} \psi,\ad_p a_p \left( \Np + 1 \right)^{-\l} \xi}\right|
\nonumber \\
&\qquad +\sum_{p\in\Lsp}|c_p||s_p|\left|\lr{\psi, \big( \left( \Np + 1 \right)^{\l} \ad_p \ad_{-p}  \left( \Np + 3 \right)^{-\l} 
+ \left( \Np + 3 \right)^{\l} a_p a_{-p}  \left( \Np + 1 \right)^{-\l} \big)\xi}\right|
\nonumber \\
&\qquad + \Bigg(\sum_{p\in\Lsp}|s_p|^2+1\Bigg) \left|\lr{\left( \Np + 1 \right)^{\l} \psi, \left( \Np + 1 \right)^{-\l} \xi}\right| \nonumber\\
&\quad \ls \norm{(\Np+1)^{1/2 + \l}\psi}\norm{(\Np+1)^{1/2 - \l}\xi}\,.
\end{align}
The case $k>1$ follows from this by induction, similarly to \cite[Lemma~4.4]{QF}. In order to prove \eqref{N_inverse_T}, note that \eqref{T_Number_T} with $\l = \frac{k}{2}$ implies
\begin{align}
\norm{\left( \Np + 1 \right)^{-k} \FockT \left( \Np + 1 \right)^{k} \FockT^* \psi }^2
&\leq \scp{\left( \Np + 1 \right)^{-2k} \FockT \left( \Np + 1 \right)^{k} \FockT^* \psi }{ \FockT \left( \Np + 1 \right)^{k} \FockT^* \psi }
\nonumber \\
&\ls \norm{\left( \Np + 1 \right)^{-k} \FockT \left( \Np + 1 \right)^{k} \FockT^* \psi } \norm{\psi},
\end{align}
showing 
\begin{align}
\norm{\left( \Np + 1 \right)^{-k} \FockT \left( \Np + 1 \right)^{k} \FockT^*}_{\op} &\ls 1 .
\end{align}
Hence,
\begin{align}
\norm{\left( \Np + 1 \right)^{-k} \psi}
&= \norm{\left( \Np + 1 \right)^{-k} \FockT \left( \Np + 1 \right)^{k} \FockT^* \FockT \left( \Np + 1 \right)^{-k} \FockT^* \psi}
\nonumber \\
&\ls \norm{\FockT \left( \Np + 1 \right)^{-k} \FockT^* \psi}
\nonumber \\
&= \norm{ \left( \Np + 1 \right)^{-k} \FockT^* \psi} .
\end{align}
\end{proof}

\subsection{Quadratic transformation $\BogUz$}

Recall from \eqref{G0} that the Bogoliubov Hamiltonian is given by
$$\FockGz=\sum_{p\in\Lsp} F_p\ad_pa_p+\frac12\sum_{p\in\Lsp}G_p(\ad_p\ad_{-p}+a_pa_{-p})\,,$$
where the coefficients $F_p$ and $G_p$ are defined in \eqref{def:F,G}.
From similar arguments as given in \cite[Lemma 5.1]{boccato2017_2}, it follows that the operators $F$ and $G$ satisfy the following properties:

\begin{lem}
\label{lemma:propF,G}
For all $p \in \Lambda_+^*$ and $N$ large enough,
\begin{align}
\label{eq:lemmaFG}
p^2 /2 \leq F_p \ls (1 + p^2) , \quad \vert G_p \vert \ls p^{-2}, \quad  \vert G_p \vert / F_p  \ls p^{-4} , \quad  \vert G_p \vert / F_p \leq 1/2  \; . 
\end{align}
\end{lem}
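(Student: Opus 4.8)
The plan is to verify the four inequalities by estimating $F_p$ and $G_p$ in \eqref{def:F,G} term by term, using the pointwise bounds on $c_p$, $s_p$ from Lemma \ref{lem:cp:sp} and on $\eta_p$ from Lemma \ref{lem:HST} together with the scattering equation \eqref{def:eta}; this is an adaptation of the computation behind \cite[Lemma 5.1]{boccato2017_2} to our choice of Bogoliubov transformation $\FockT$. A preliminary step is to control the convolution term $\tfrac1N(\hvNb*cs)_p$ entering both $F_p$ and $G_p$. Splitting $c_qs_q=\eta_q+(c_qs_q-\eta_q)$, the remainder satisfies $|c_qs_q-\eta_q|\ls|q|^{-6}$ by Lemma \ref{lem:cp:sp}, so its convolution with $\hvNb$ is bounded by $\|\hvNb\|_{\ell^\infty}\sum_{q\in\Lsp}|q|^{-6}\ls1$; for the $\eta$-part one uses \eqref{def:eta}, rewritten as $\tfrac1N(\hvNb*\eta)_p=-\hvNb(p)-2p^2\eta_p+\mathcal{O}(N^{-1})$, together with $\|\hvNb\|_{\ell^\infty}\ls1$ and the pointwise bound $|p^2\eta_p|\ls1$ established inside the proof of Lemma \ref{lem:HST}. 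This yields $\tfrac1N|(\hvNb*cs)_p|\ls1$ and, since $|c_ps_p|\ls|p|^{-2}$, also $\tfrac1N|(\hvNb*cs)_pc_ps_p|\ls|p|^{-2}$.

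For $F_p$, the upper bound $F_p\ls 1+p^2$ is immediate from $|c_p|\ls1$, $|c_p+s_p|\ls1$, $|s_p|\ls|p|^{-2}$, $\|\hvNb\|_{\ell^\infty}\ls1$ and the convolution bound. For the lower bound I would argue: $c_p^2+s_p^2=1+2s_p^2\geq1$, hence $(c_p^2+s_p^2)p^2\geq p^2$; $(c_p+s_p)^2\hvNb(p)\geq0$ since $v$ is of positive type (this is where Assumption \ref{ass} enters); and the remaining term is bounded in absolute value by $C|p|^{-2}$ with a constant $C=\mathcal{O}(\kappa^2)$, because both $cs$ and $\hvNb$ are $\mathcal{O}(\kappa)$. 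Since $p^2\geq4\pi^2$ on $\Lsp$, choosing $\kappa$ small enough that $C|p|^{-2}\leq C(2\pi)^{-2}\leq\tfrac12 p^2$ for all $p\in\Lsp$ gives $F_p\geq p^2-\tfrac12 p^2=\tfrac12 p^2$.

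The heart of the matter is $|G_p|\ls|p|^{-2}$, which follows from a cancellation. Substituting $\tfrac1N(\hvNb*cs)_p=-\hvNb(p)-2p^2\eta_p+R_p$ into $G_p$ and regrouping, the terms $2c_ps_pp^2$ and $-2p^2\eta_p(c_p^2+s_p^2)$ combine into $2p^2\big[(c_ps_p-\eta_p)-2\eta_ps_p^2\big]=\mathcal{O}(p^2|p|^{-6})=\mathcal{O}(|p|^{-4})$, while $(c_p+s_p)^2\hvNb(p)-\hvNb(p)(c_p^2+s_p^2)=2c_ps_p\hvNb(p)=\mathcal{O}(|p|^{-2})$, and the leftover $R_p(c_p^2+s_p^2)$ is of lower order. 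Hence $|G_p|\ls|p|^{-2}$ (in fact $\ls\kappa^2|p|^{-2}$). Combining with $F_p\geq\tfrac12 p^2$ gives $|G_p|/F_p\ls|p|^{-4}$, and since $|p|^{-4}\leq(2\pi)^{-4}$ on $\Lsp$ with an implicit constant of order $\kappa^2$, taking $\kappa$ small (and $N$ large) yields $|G_p|/F_p\leq1/2$. I expect the main technical obstacle to be making the remainder estimates tight enough — in particular controlling $R_p$, which collects $\tfrac1N(\hvNb*(cs-\eta))_p$ together with the $\mathcal{O}(N^{-1})$-corrections coming from \eqref{def:eta} — for which one splits the momentum sums at $|q|\leq|p|/2$ and invokes the auxiliary bound \eqref{eq:auxiliary bound for the potential} of Lemma \ref{lem:aux:potential}; apart from this the argument is a routine, if lengthy, bookkeeping parallel to \cite[Lemma 5.1]{boccato2017_2}.
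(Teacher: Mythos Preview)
Your argument is correct and follows the same strategy as the paper: positivity of $\hat v$ for the lower bound on $F_p$, and the scattering equation \eqref{def:eta} to produce the cancellation in $G_p$. The one difference worth noting is how the convolution term $\tfrac{2}{N}(\hvNb*cs)_p\,c_ps_p$ is handled in proving $F_p\geq p^2/2$. You route through the scattering equation to obtain $\tfrac1N|(\hvNb*cs)_p|\ls1$ and then rely on smallness of $\kappa$; the paper instead simply uses $|c_qs_q|\ls q^{-2}$ together with Lemma~\ref{lem:aux:potential} to get the sharper bound
\[
\Big|\tfrac1N(\hvNb*cs)_p\Big|\;\ls\;\tfrac1N\sum_{q\in\Lsp}\frac{|\hvNb(p-q)|}{q^2}\;\ls\;N^{\beta-1},
\]
so that $F_p\geq p^2-CN^{\beta-1}\geq p^2/2$ for $N$ large. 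This is both shorter and aligns with the ``$N$ large enough'' hypothesis in the statement. Your anticipated difficulty with $R_p$ is also overstated: since $|c_qs_q-\eta_q|\ls|q|^{-6}\in\ell^1(\Lsp)$, one has $\tfrac1N|(\hvNb*(cs-\eta))_p|\ls N^{-1}$ immediately, and the $q=p$ correction is $\tfrac1N\hat v(0)\eta_p=\mathcal{O}(N^{-1})$; no momentum splitting is needed.
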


\begin{proof}
The lemma can be proven analogously to \cite[Lemma 5.1]{boccato2017_2}. Note that in \cite{boccato2017_2} the sequence $\eta_p$ was chosen differently. However, as $\eta_p$ defined in \eqref{def:eta} satifies analogous estimates as the sequence from \cite{boccato2017_2}, the proof of \cite[Lemma 5.1]{boccato2017_2} applies here, too. We will briefly sketch the proof. First, note that
\begin{align}
\Big\vert \frac{1}{N} \sum_{q \in \Lambda_+^*} \widehat{v}_N^{\beta} (p-q) s_q c_q \Big \vert \leq \frac{C}{N} \sum_{q \in \Lambda_+^*} \frac{\vert\widehat{v}_N^{\beta} (p-q) \vert}{q^2}  \leq C N^{\beta -1}
\end{align}
by Lemmas~\ref{lem:aux:potential} and \ref{lem:HST}. 
Since $\hat{v} \geq 0$ we arrive by definition \eqref{def:F,G} of $F_p$ (note that $c_p^2 + s_p^2 \geq 1$) at 
\begin{align}
F_p \geq  p^2  - CN^{\beta -1} \geq \frac{1}{2} p^2
\end{align}
for $N$ large enough. The upper bound follows since $\widehat{v}_N^\beta, s_p, c_p$ are bounded in $\ell^\infty ( \Lambda_+^*)$ uniformly in $N$, thus 
\begin{align}
F_p \leq C ( 1 + p^2) \; . 
\end{align} 
To prove the second bound on $G_p$, we write 
\begin{align}
\label{eq:tidle-G}
G_p =2  p^2 \eta_p + \widehat{v}_N^\beta (p) + \frac{1}{N} \sum_{q \in \Lambda_+^*} \widehat{v}_N^\beta (p-q) \eta_q + \widetilde{G}_p
\end{align}
with $\vert \widetilde{G}_p \vert \leq C p^{-2}$ following from  $|s_pc_p-\eta_p|\ls p^{-6}$ by Lemma \ref{lem:HST} and similarly $ \vert (s_p + c_p)^2 - 1 \vert \ls p^{-2}$, $\vert s_p^2 \vert \ls p^{-4}, \vert c_p^2 - 1 \vert \ls p^{-4}$. The remaining three terms of the r.h.s. of \eqref{eq:tidle-G} vanish by \eqref{def:eta} and thus we arrive at \eqref{eq:lemmaFG}. 
\end{proof}

The Bogoliubov Hamiltonian is diagonalized by the quadratic transformation
$$\BogUz=\exp\left\{\frac12\sum_{p\in\Lsp}\tau_p(\ad_p\ad_{-p}-a_pa_{-p})\right\}$$
defined in \eqref{def:Ttau}, where
$\tau_p$ is given by $
\tanh( 2 \tau_p ) = - G_p / F_p$ for all $p \in \Lambda_+^*$
by \eqref{def:tau}. Note that $\tau_p=\tau_{-p}$. Equivalently, we can write 
\begin{align}
\tau_p = \frac{1}{4} \ln \frac{1-G_p/F_p}{1+G_p/F_p}   \quad \text{for all} \quad p \in \Lambda_+^* \,,
\end{align}
hence Lemma \ref{lemma:propF,G} yields the estimate
\begin{align}
\label{eq:bound-tau}
\vert \tau_p \vert \ls \vert G_p \vert / F_p \ls p^{-4} \quad \text{for all} \quad p \in \Lambda_+^* \; .
\end{align}
We now use this estimate to show that $\BogUz$ approximately preserves the number of particles and the kinetic energy:

\begin{lem}\label{lem:BT:K0}
For $\l \in \mathbb{N}_0$ it holds that
\begin{align}\label{eqn:BT:number:1}
\BogUz(\Np+1)^\l\BogUz^*&\leq C(\l) (\Np+1)^\l\,,\qquad \BogUz^*(\Np+1)^\l\BogUz\leq C(\l) (\Np+1)^\l
\end{align}
as well as
\begin{align}
\begin{split}
\label{lem:BT:number}
\BogUz^*(\Np+1)^{\l} (\boldKz+1)\BogUz
&\leq C(\l) (\Np+1)^{\l} (\boldKz+1) \,,\\
\BogUz(\Np+1)^{\l} (\boldKz+1)\BogUz^*
&\leq C(\l) (\Np+1)^{\l} (\boldKz+1) .\end{split}
\end{align}
and
\begin{align}
\begin{split}
\label{lem:BT:number:inverse}
\BogUz^*(\Np+1)^{\l} \frac{1}{\boldKz+1}\BogUz
&\leq C(\l) (\Np+1)^{\l} \frac{1}{\boldKz+1} \,,\\
\BogUz(\Np+1)^{\l} \frac{1}{\boldKz+1}\BogUz^*
&\leq C(\l) (\Np+1)^{\l} \frac{1}{\boldKz+1}\,.
\end{split}
\end{align}
\end{lem}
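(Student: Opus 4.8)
The plan is to prove all six estimates by the standard Grönwall (differential inequality) method for Bogoliubov transformations. Write $\BogUz=\e^{B}$ with the anti-self-adjoint generator $B:=\tfrac12\sum_{p\in\Lsp}\tau_p\big(\ad_p\ad_{-p}-a_pa_{-p}\big)$, and for $t\in[0,1]$ put $\BogUz^{(t)}:=\e^{tB}$, so that $\BogUz^{(1)}=\BogUz$ and $\BogUz^{(0)}=\id$. For a non-negative operator $A$ on $\Fp$ and $\phi$ in a suitable dense set, the map $f_\phi(t):=\langle\BogUz^{(t)}\phi,A\,\BogUz^{(t)}\phi\rangle$ is differentiable with $f_\phi'(t)=\langle\BogUz^{(t)}\phi,[A,B]\,\BogUz^{(t)}\phi\rangle$; hence once a form bound $\pm[A,B]\leq CA$ (in the sense of quadratic forms) is established, Grönwall's lemma yields $\BogUz^*A\,\BogUz\ls A$, and the companion inequality with $\BogUz$ and $\BogUz^*$ interchanged follows by replacing $\tau$ with $-\tau$, under which all our estimates are invariant. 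Throughout we use three elementary inputs: the decay $\lvert\tau_p\rvert\ls p^{-4}$ from \eqref{eq:bound-tau}, which makes $\|\tau\|_{\ell^1}$ and $\|p^2\tau\|_{\ell^1}$ finite; the operator inequality $\boldKz\geq4\pi^2\Np$ on $\Fp$, which yields $(\Np+1)^{k}\ls(\Np+1)^{k-1}(\boldKz+1)$ for every $k\geq1$; and the standard estimates $\|\ad_p\ad_{-p}\psi\|\ls\|(\Np+2)\psi\|$ and $\pm(\ad_p\ad_{-p}+a_pa_{-p})\leq\ad_pa_p+\ad_{-p}a_{-p}+1$.

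For \eqref{eqn:BT:number:1} we take $A=(\Np+1)^\l$. Since $\ad_p\ad_{-p}$ raises $\Np$ by $2$, one has $[(\Np+1)^\l,B]=\tfrac12\sum_{p\in\Lsp}\tau_p\big(\ad_p\ad_{-p}\,g_\l(\Np)-g_\l(\Np)\,a_pa_{-p}\big)$ with $0\leq g_\l(\Np):=(\Np+3)^\l-(\Np+1)^\l\ls(\Np+1)^{\l-1}$, and the bound $\pm[(\Np+1)^\l,B]\leq C(\Np+1)^\l$ follows from a Cauchy--Schwarz splitting of the momentum sum together with $\|\tau\|_{\ell^1}<\infty$ and the CCR estimates above.

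For \eqref{lem:BT:number} we take $A=(\Np+1)^\l(\boldKz+1)$. Now $\ad_p\ad_{-p}$ raises $\Np$ by $2$ \emph{and} $\boldKz$ by $2p^2$, whence $[A,B]=\tfrac12\sum_{p\in\Lsp}\tau_p\big(\ad_p\ad_{-p}D_p-D_p\,a_pa_{-p}\big)$ with $D_p:=(\Np+3)^\l(\boldKz+2p^2+1)-(\Np+1)^\l(\boldKz+1)\geq0$ and $D_p\ls(\Np+1)^{\l-1}(\boldKz+1)+p^2(\Np+1)^\l$. Splitting the $p$-sum by Cauchy--Schwarz and commuting $\ad_p\ad_{-p}$ past $D_p$, the bound $\pm[A,B]\leq CA$ rests on two mechanisms: the kinetic-energy shift $2p^2$ is summable against $\tau_p$ because $\|p^2\tau\|_{\ell^1}<\infty$, and the surplus powers of $\Np$ produced along the way are traded against powers of $\boldKz$ via $\boldKz\gtrsim\Np$; on the subspace where $\ad_p\ad_{-p}$ acts non-trivially one moreover has $\boldKz\geq2p^2$, which absorbs the term $p^2(\Np+1)^\l$ in $D_p$. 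Equivalently, one may compute $\BogUz^*\Np\,\BogUz=\Np+\delta\Np$ and $\BogUz^*\boldKz\,\BogUz=\boldKz+\delta\boldKz$ from the action of $\BogUz$ on $\ad_p,a_p$, which is the analogue of \eqref{eq:actionbogo} with $\tau$ in place of $\eta$; here $\delta\Np$ and $\delta\boldKz$ are quadratic expressions with coefficients decaying like $p^{-4}$, respectively $p^{-2}$, and one bounds $\BogUz^*(\Np+1)^\l(\boldKz+1)\BogUz=\BogUz^*(\Np+1)^\l\BogUz\cdot\BogUz^*(\boldKz+1)\BogUz$ directly. The bookkeeping is the same in both routes.

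For \eqref{lem:BT:number:inverse} we take $A=(\Np+1)^\l(\boldKz+1)^{-1}$ and use the resolvent identity $[(\boldKz+1)^{-1},B]=-(\boldKz+1)^{-1}[\boldKz,B](\boldKz+1)^{-1}$ together with the operator monotonicity $(\boldKz+2p^2+1)^{-1}\leq(\boldKz+1)^{-1}$; here the creation of a high-momentum pair by $\ad_p\ad_{-p}$ \emph{lowers} $(\boldKz+1)^{-1}$, which is favourable, and the fast decay $\lvert\tau_p\rvert\ls p^{-4}$ again makes the $p$-sum converge. The main obstacle is the second group of inequalities, \eqref{lem:BT:number} and \eqref{lem:BT:number:inverse}: one has to keep precise track of the two simultaneous shifts $\Np\mapsto\Np\pm2$ and $\boldKz\mapsto\boldKz\pm2p^2$ caused by the quadratic part of $B$, and organise the Cauchy--Schwarz splittings so that every surplus power of $\Np$ is absorbed by $\boldKz\gtrsim\Np$ and every shift $2p^2$ by the summability of $p^2\tau_p$ — this is exactly what makes the Grönwall argument close.
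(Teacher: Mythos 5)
Your overall strategy — a Grönwall argument driven by the form bound $\pm[A,B]\leq CA$ — is exactly what the paper does for \eqref{lem:BT:number} and \eqref{lem:BT:number:inverse} (for \eqref{eqn:BT:number:1} the paper instead computes $\BogUz\Np\BogUz^*$ explicitly and inducts, as in its Lemma on $\FockT$, but the Grönwall route is equivalent). However there is a concrete error in your list of inputs: $\|p^2\tau\|_{\ell^1}$ is \emph{not} finite. From $\lvert\tau_p\rvert\ls\lvert p\rvert^{-4}$ you only get $\lvert p\rvert^2\lvert\tau_p\rvert\ls\lvert p\rvert^{-2}$, and $\sum_{p\in\Lsp}\lvert p\rvert^{-2}$ diverges in three dimensions. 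Since the commutator $[\boldKz,B]=\i\sum_p p^2\tau_p(\ad_p\ad_{-p}+a_pa_{-p})$ really does carry the factor $p^2$, this summability claim is not a harmless shorthand but the crux of the estimate you invoke for the ``kinetic-energy shift'' term.

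The fix, which is what the paper actually does, is to split $p^2=\lvert p\rvert\cdot\lvert p\rvert$ across the Cauchy--Schwarz so that one factor $\lvert p\rvert$ joins $a_p$ and produces $\boldKz^{1/2}$ acting on the state, while the other factor $\lvert p\rvert$ stays with $\tau_p$: the relevant quantity is then $\|p\tau\|_{\ell^2}$, i.e. $\sum_p p^2\lvert\tau_p\rvert^2\ls\sum_p\lvert p\rvert^{-6}<\infty$, which is fine. Concretely one bounds
\begin{align}
\Big\lvert\sum_{p\in\Lsp}p^2\tau_p\langle\psi,\ad_p\ad_{-p}\psi\rangle\Big\rvert
\leq\Big(\sum_{p}p^2\|a_p\psi\|^2\Big)^{1/2}\Big(\sum_{p}p^2\lvert\tau_p\rvert^2\|\ad_{-p}\psi\|^2\Big)^{1/2}
\ls\|\boldKz^{1/2}\psi\|\,\|(\Np+1)^{1/2}\psi\|\,,
\end{align}
and then interpolates the right-hand side against $(\Np+1)^{\l}(\boldKz+1)$. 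Your related heuristic ``on the subspace where $\ad_p\ad_{-p}$ acts non-trivially one has $\boldKz\geq2p^2$'' should likewise be replaced by this Cauchy--Schwarz mechanism, since the absorption of $p^2$ happens at the level of the momentum sum rather than via a pointwise lower bound on $\boldKz$. With this substitution your proof closes and agrees with the paper's argument.
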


\begin{proof}
\noindent\textbf{Proof of \eqref{eqn:BT:number:1}.}
We compute 
\begin{align}
\BogUz\Np\BogUz^*&= \sum_{p\in\Lsp}(\sinh( \tau_p)^2+\cosh ( \tau_p)^2)\ad_pa_p +\sum_{p\in\Lsp}\sinh(\tau_p)^2\\
&\quad+\sum_{p\in\Lsp}\cosh(\tau_p)\sinh(\tau_p)(\ad_p\ad_{-p}+a_pa_{-p})\,.
\end{align}
By \eqref{eq:bound-tau}, it follows that $|\cosh(\tau_p)|\ls1$
and $|\sinh(\tau_p)|\ls|p|^{-4}$. Consequently \eqref{eqn:BT:number:1} follows analogously to the proof of Lemma \ref{lem:T:Number}.

\paragraph{Proof of  \eqref{lem:BT:number}.} Let us define the symmetric operator
$$\mathbb{A} = \frac{\i}{2} \sum_{p \in \Lambda_+^*} \tau_p (\ad_p \ad_{-p} - a_p a_{-p})\,,\qquad\mathbb{U}_\tau(\theta) = \e^{- \i \mathbb{A} \theta}$$
such that $\mathbb{U}_\tau(0) = \mathbbm{1}$ and $\mathbb{U}_\tau(1) = \mathbb{U}_\tau$. Next, we compute
\begin{subequations}
\begin{align}
&\frac{\d}{\d \theta} \scp{\psi}{\mathbb{U}_\tau^*(\theta) (\Np+1)^\l(\boldKz+1) \mathbb{U}_\tau(\theta) \psi}
\nonumber \\
\label{eq: action of of BT moments of number operator and kinetic energy estimate  1}
&\quad = \i \scp{\psi}{\mathbb{U}_\tau^*(\theta) (\Np+1)^\l  \left[ \mathbb{A} , \boldKz \right] \mathbb{U}_\tau(\theta) \psi}
\\
\label{eq: action of of BT moments of number operator and kinetic energy estimate  2}
&\qquad + \i \scp{\psi}{\mathbb{U}_\tau^*(\theta) \left[ \mathbb{A} , (\Np+1)^\l \right]  (\boldKz+1) \mathbb{U}_\tau(\theta) \psi} .
\end{align}
\end{subequations}
Note that the second term is zero if $\l =0$ and that
\begin{align}
\left[ \mathbb{A} , \boldKz \right] 
&= \i \sum_{p \in \Lambda_+^*} p^2 \tau_p \left( \ad_p \ad_{-p} + a_p a_{-p} \right)\,.
\end{align}
Together with  $\abs{\tau_p} \ls p^{-4}$ (see \eqref{eq:bound-tau}), the shifting property of the number operator and the Cauchy--Schwarz inequality, this leads to
\begin{align}
|\eqref{eq: action of of BT moments of number operator and kinetic energy estimate  1}
|&\leq\left|\sum_{p \in \Lambda_+^*} p^2 \tau_p \scp{\psi}{\mathbb{U}_\tau^*(\theta) (\Np+1)^{\frac{\l}{2}}  \ad_p \ad_{-p}  (\Np+ 3)^{\frac{\l}{2}} \mathbb{U}_\tau(\theta) \psi}\right|
\nonumber \\
&\quad  +\left| \sum_{p \in \Lambda_+^*} p^2 \tau_p \scp{\psi}{\mathbb{U}_\tau^*(\theta) (\Np+1)^{\frac{\l}{2}}  a_p a_{-p}  (\Np - 1)^{\frac{\l}{2}} \mathbb{U}_\tau(\theta) \psi}\right|
\nonumber \\
&\leq   \sum_{p \in \Lambda_+^*} p^2 \abs{\tau_p} \norm{a_p(\Np+1)^{\frac{\l}{2}} \mathbb{U}_\tau(\theta) \psi} \norm{ \ad_{-p}  (\Np+ 3)^{\frac{\l}{2}} \mathbb{U}_\tau(\theta) \psi}
\nonumber \\
&\quad  + \sum_{p \in \Lambda_+^*} p^2 \abs{\tau_p} \norm{ \ad_p (\Np+1)^{\frac{\l}{2}}\mathbb{U}_\tau(\theta)\psi} \norm{ a_{-p}  (\Np - 1)^{\frac{\l}{2}} \mathbb{U}_\tau(\theta) \psi}
\nonumber \\
&\ls
\left( \sum_{p \in \Lambda_+^*} p^2  \norm{a_p(\Np+1)^{\frac{\l}{2}} \mathbb{U}_\tau(\theta) \psi}^2 \right)^{1/2}
\left( \sum_{p \in \Lambda_+^*} p^2 \abs{\tau_p}^2  \norm{ \ad_{-p}  (\Np+ 1)^{\frac{\l}{2}} \mathbb{U}_\tau(\theta) \psi}^2 \right)^{1/2}\nonumber\\
&\ls\norm{\left( \boldKz + 1 \right)^{\frac{1}{2}} (\Np+1)^{\frac{\l}{2}} \mathbb{U}_\tau(\theta) \psi}^2\,.
\end{align}
Using the shifting property of the number operator again, we get
\begin{align}
\left[ \mathbb{A} , \left( \Np + 1 \right)^{\l} \right]
&= - \frac{\i}{2}  \sum_{p \in \Lambda_+^*} \tau_p
\bigg( \left(  \left( \Np + 1 \right)^\l - \left( \Np - 1 \right)^\l \right) \left( \Np - 1 \right)^{-\frac{\l}{2}} \ad_p \ad_{-p} \left( \Np + 1 \right)^{\frac{\l}{2}}
\nonumber \\
&\qquad \qquad  + \left( \left( \Np + 3 \right)^\l - \left( \Np + 1 \right)^\l \right)  \left( \Np + 3 \right)^{-\frac{ \l}{2}}
 a_p a_{-p}  \left( \Np + 1 \right)^{\frac{\l}{2}}
\bigg) \,.
\end{align}
Hence, 
\begin{align}
&|\eqref{eq: action of of BT moments of number operator and kinetic energy estimate  2}|\nonumber\\
&\leq\left|\scp{\psi}{\mathbb{U}_\tau^*(\theta) \left[ \mathbb{A} , (\Np+1)^\l \right]  \mathbb{U}_\tau(\theta) \psi}\right|
\nonumber \\
&\quad + \left|\scp{\psi}{\mathbb{U}_\tau^*(\theta) \left[ \mathbb{A} , (\Np+1)^\l \right]  \left( \Np + 1 \right)^{- \frac{\l}{2}}  \boldKz \left( \Np + 1 \right)^{\frac{\l}{2}}  \mathbb{U}_\tau(\theta)\psi}\right|
\nonumber \\
&\ls \norm{\left( \Np + 1 \right)^{\frac{\l}{2}}  \mathbb{U}_\tau(\theta)\psi}
\sum_{p \in \Lambda_+^*} \abs{\tau_p}
\norm{a_p a_{-p} \left(  \left( \Np + 1 \right)^\l - \left( \Np - 1 \right)^\l \right) \left( \Np - 1 \right)^{- \frac{\l}{2}} \mathbb{U}_\tau(\theta) \psi}
\nonumber \\
&\quad + \norm{\left( \Np + 1 \right)^{\frac{\l}{2}} \mathbb{U}_\tau(\theta) \psi}
\sum_{p \in \Lambda_+^*} \abs{\tau_p}
\norm{\ad_p \ad_{-p} \left(  \left( \Np + 3 \right)^\l - \left( \Np + 1 \right)^\l \right) \left( \Np +3 \right)^{- \frac{\l}{2}} \mathbb{U}_\tau(\theta) \psi}
\nonumber \\
&\quad + \sum_{k \in \Lambda_+^*} k^2 \norm{a_k \left( \Np + 1 \right)^{\frac{\l}{2}} \mathbb{U}_\tau(\theta) \psi}
\nonumber \\
&\qquad  \times 
\sum_{p \in \Lambda_+^*} \abs{\tau_p}
\norm{a_k a_p a_{-p} \bigg( \left(  \left( \Np + 1 \right)^\l - \left( \Np - 1 \right)^\l \right) \left( \Np - 1 \right)^{- \frac{\l}{2}}  \mathbb{U}_\tau(\theta)\psi}
\nonumber \\
&\quad + \sum_{k \in \Lambda_+^*} k^2 \norm{a_k \left( \Np + 1 \right)^{\frac{\l}{2}} \mathbb{U}_\tau(\theta) \psi}
\nonumber \\
&\qquad  \times 
\sum_{p \in \Lambda_+^*} \abs{\tau_p}
\norm{a_k \ad_p \ad_{-p} \bigg( \left(  \left( \Np + 3 \right)^\l - \left( \Np + 1 \right)^\l \right) \left( \Np +3 \right)^{- \frac{\l}{2}}  \mathbb{U}_\tau(\theta) \psi}\,.
\end{align}
Using $\left( \Np + 3 \right)^{\l} - \left( \Np + 1 \right)^{\l}  \ls \l \left( \Np + 3 \right)^{\l -1 }$ and $\abs{\tau_p} \ls p^{-4}$, it is easily seen that the first two lines can be bounded by $ \norm{\left( \Np + 1 \right)^{\frac{\l}{2}} \BogUz(\theta) \psi}^2$. Estimating the remaining terms then by the Cauchy--Schwarz inequality and Young's inequality for products leads to
\begin{align}
\eqref{eq: action of of BT moments of number operator and kinetic energy estimate  2}
&\ls   \norm{\left( \boldKz + 1 \right)^{1/2} \left( \Np + 1 \right)^{\frac{\l}{2}} \mathbb{U}_\tau(\theta) \psi}^2
\nonumber \\
\label{eq: action of of BT moments of number operator and kinetic energy estimate  2a}
&\quad +  \sum_{k \in \Lambda_+^*} k^2 \abs{
\sum_{p \in \Lambda_+^*} \abs{\tau_p}
\norm{a_k a_p a_{-p} \bigg( \left(  \left( \Np + 1 \right)^\l - \left( \Np - 1 \right)^\l \right) \left( \Np - 1 \right)^{- \frac{\l}{2}} \mathbb{U}_\tau(\theta) \psi} }^2
\\
\label{eq: action of of BT moments of number operator and kinetic energy estimate  2b}
&\quad + \sum_{k \in \Lambda_+^*} k^2 
\abs{\sum_{p \in \Lambda_+^*} \abs{\tau_p}
\norm{a_k \ad_p \ad_{-p} \bigg( \left(  \left( \Np + 3 \right)^\l - \left( \Np + 1 \right)^\l \right) \left( \Np +3 \right)^{- \frac{\l}{2}} \mathbb{U}_\tau(\theta) \psi} }^2 .
\end{align}
Similarly to above, we obtain
\begin{align}
\eqref{eq: action of of BT moments of number operator and kinetic energy estimate  2a}
&\ls \sum_{k \in \Lambda_+^*} k^2 \norm{a_k  \left( \Np + 1 \right)^{\frac{\l}{2}} \mathbb{U}_\tau(\theta) \psi}^2
=  \norm{\boldKz^{\frac{1}{2}} \left( \Np + 1 \right)^{\frac{\l}{2}} \mathbb{U}_\tau(\theta) \psi}^2,
\end{align}
and, using that $\left[ a_k , \ad_p \ad_{-p} \right] = (\delta_{k,p}+\delta_{k,-p}) \ad_k$, we estimate
\begin{align}
\eqref{eq: action of of BT moments of number operator and kinetic energy estimate  2b}
&\ls   \sum_{k \in \Lambda_+^*} k^2 
\abs{\sum_{p \in \Lambda_+^*} \abs{\tau_p}
\norm{ \ad_p \ad_{-p} a_k \bigg( \left(  \left( \Np + 3 \right)^\l - \left( \Np + 1 \right)^\l \right) \left( \Np +3 \right)^{- \frac{\l}{2}} \mathbb{U}_\tau(\theta) \psi} }^2
\nonumber \\
&\quad + 
\sum_{k \in \Lambda_+^*} k^2 
\abs{\tau_k}^2
\norm{ \ad_k  \bigg( \left(  \left( \Np + 3 \right)^\l - \left( \Np + 1 \right)^\l \right) \left( \Np +3 \right)^{- \frac{\l}{2}} \mathbb{U}_\tau(\theta) \psi}^2
\nonumber \\
&\ls  \norm{\boldKz^{\frac{1}{2}} \left( \Np + 1 \right)^{\frac{\l}{2}} \mathbb{U}_\tau(\theta) \psi}^2 .
\end{align}
Hence,
\begin{align}
\eqref{eq: action of of BT moments of number operator and kinetic energy estimate  2}
&\ls  \norm{\left( \boldKz + 1 \right)^{1/2} \left( \Np + 1 \right)^{\frac{\l}{2}} \mathbb{U}_\tau(\theta) \psi}^2\,.
\end{align}
Collecting the estimates, we find that
\begin{align}
\frac{\d}{\d\theta}\lr{\psi,\BogUz^*(\theta)(\Np+1)^\l(\boldKz+1)\BogUz(\theta)\psi} 
\leq C(\l) \lr{\psi,\BogUz^*(\theta)(\Np+1)^\l(\boldKz+1)\BogUz(\theta)\psi}\,,
\end{align}
hence Gronwall's lemma lead to
\begin{align}
\scp{\psi}{\mathbb{U}_\tau^*(1) (\Np+1)^\l(\boldKz+1) \mathbb{U}_\tau(1) \psi} 
&\leq e^{C (\l)} \scp{\psi}{\mathbb{U}_\tau^*(0) (\Np+1)^\l(\boldKz+1) \mathbb{U}_\tau(0) \psi}  .
\end{align}
Since $\mathbb{U}_\tau^*(1) = \mathbb{U}_\tau^*$ and $\mathbb{U}_\tau^*(0) = 1$ this shows the first inequality of \eqref{lem:BT:number}. The second one follows analogously.

\paragraph{Proof of \eqref{lem:BT:number:inverse}.} We prove this statement via a similar Gronwall argument. Analogously to above, we compute
\begin{subequations}
\begin{align}
&\frac{\d}{\d \theta} \scp{\psi}{\mathbb{U}_\tau^*(\theta) (\Np+1)^\l\frac{1}{\boldKz+1} \mathbb{U}_\tau(\theta) \psi}
\nonumber \\
\label{eqn:inverse:1}
&\quad = \i \scp{\psi}{\mathbb{U}_\tau^*(\theta) (\Np+1)^\l  \left[ \mathbb{A} , \frac{1}{\boldKz+1}\right] \mathbb{U}_\tau(\theta) \psi}
\\
\label{eqn:inverse:2}
&\qquad + \i \scp{\psi}{\mathbb{U}_\tau^*(\theta) \left[ \mathbb{A} , (\Np+1)^\l \right]  \frac{1}{\boldKz+1} \mathbb{U}_\tau(\theta) \psi} .
\end{align}
\end{subequations}
For \eqref{eqn:inverse:1}, we observe that
\begin{equation}
\left[\FockA,\frac{1}{\boldKz+1}\right]=\i\sum_{p\in\Lsp}\frac{1}{\boldKz+1}(\ad_p\ad_{-p}+a_pa_{-p})\frac{1}{\boldKz+1}\,.
\end{equation}
Analogously to the estimate of \eqref{eq: action of of BT moments of number operator and kinetic energy estimate  1}, this yields
\begin{equation}
|\eqref{eqn:inverse:1}|
\ls\lr{\psi,\BogUz^*(\theta)(\Np+1)^\l\frac{1}{\boldKz+1}\BogUz(\theta)\psi}
\end{equation}
because, for example,
\begin{align}
&\sum_{p\in\Lsp}p^2|\tau_p|\norm{a_p(\Np+1)^\l\frac{1}{\boldKz+1}\BogUz(\theta)\psi}\norm{\ad_{-p}(\Np+3)^\frac{\l}{2}\frac{1}{\boldKz+1}\BogUz(\theta)\psi}\nonumber\\
&\quad\ls \left(\sum_{p\in\Lsp}p^2\norm{a_p(\Np+1)^\frac{\l}{2}\frac{1}{\boldKz+1}\BogUz(\theta)\psi}^2\right)^\frac12 \left(\sum_{p\in\Lsp}p^2|\tau_p|^2\right)^\frac12 \norm{(\Np+3)^\frac{\l+1}{2}\frac{1}{\boldKz+1}\BogUz(\theta)\psi}\nonumber\\
&\ls\norm{(\Np+1)^\frac{\l}{2}\left(\frac{1}{\boldKz+1}\right)^\frac12\BogUz(\theta)\psi}^2\,.
\end{align}
For the second contribution \eqref{eqn:inverse:2}, recall that $(\boldKz+1)^{-\frac12} =\pi^{-1}\int_0^\infty s^{-\frac12}(\boldKz+s+1)^{-1}\ds$, which yields
\begin{equation}
\left[\left(\frac{1}{\boldKz+1}\right)^\frac12,a_pa_{-p}\right]=-\frac{2p^2}{\pi}\int_0^\infty\frac{1}{\sqrt{s}}\frac{1}{\boldKz+s+1}a_pa_{-p}\frac{1}{\boldKz+s+1}\ds\,.
\end{equation}
Consequently, abbreviating 
$$F_1(\Np):=((\Np+1)^\l-(\Np-1)^\l)(\Np+1)^{-\frac{\l}{2}}\,,\quad
F_3(\Np):=((\Np+3)^\l-(\Np+1)^\l)(\Np+3)^{-\frac{\l}{2}}\,,$$
we find
\begin{subequations}
\begin{align}
|\eqref{eqn:inverse:2}|
&\ls\sum_{p\in\Lsp}|\tau_p|\Big\|a_pa_{-p}\left(\frac{1}{\boldKz+1}\right)^\frac12 F_1(\Np)\BogUz(\theta)\psi\Big\|\nonumber\\
&\qquad\qquad\qquad\times\Big\|\left(\frac{1}{\boldKz+1}\right)^\frac12(\Np+1)^\frac{\l}{2}\BogUz(\theta)\psi\Big\|\label{eqn:inverse:2:1}\\
&\quad+\sum_{p\in\Lsp}|\tau_p|\Big\|\left(\frac{1}{\boldKz+1}\right)^\frac12(\Np+1) F_3(\Np)\BogUz(\theta)\psi\Big\|\nonumber\\
&\qquad\qquad\qquad\times\Big\|(\Np+1)^{-1}a_pa_{-p}\left(\frac{1}{\boldKz+1}\right)^\frac12(\Np+1)^\frac{\l}{2}\BogUz(\theta)\psi\Big\|\label{eqn:inverse:2:2}\\
&\quad+\int_0^\infty\ds\frac{1}{\sqrt{s}}\sum_{p\in\Lsp}|\tau_p|p^2
\Big\|a_pa_{-p}\frac{1}{\boldKz+s+1}F_1(\Np)\BogUz(\theta)\psi\Big\|\nonumber\\
&\qquad\qquad\qquad\times
\Big\|\frac{1}{\boldKz+s+1}\left(\frac{1}{\boldKz+1}\right)^\frac12(\Np+1)^\frac{\l}{2}\BogUz(\theta)\psi\Big\|\label{eqn:inverse:2:3}\\
&\quad+\int_0^\infty\ds\frac{1}{\sqrt{s}}\sum_{p\in\Lsp}|\tau_p|p^2
\Big\|\frac{1}{\boldKz+s+1}F_3(\Np)\BogUz(\theta)\psi\Big\|\nonumber\\
&\qquad\qquad\qquad\times
\Big\|a_pa_{-p}\frac{1}{\boldKz+s+1}\left(\frac{1}{\boldKz+1}\right)^\frac12(\Np+1)^\frac{\l}{2}\BogUz(\theta)\psi\Big\|\,.\label{eqn:inverse:2:4}
\end{align}
\end{subequations}
It is easily seen that 
\begin{equation}
\eqref{eqn:inverse:2:1}+\eqref{eqn:inverse:2:2}\ls \Big\|\left(\frac{1}{\boldKz+1}\right)^\frac12(\Np+1)^\frac{\l}{2}\BogUz(\theta)\psi\Big\|^2\,.
\end{equation}
For the third term, we  use the estimates $\boldKz+s+1\geq\max\{\boldKz+1,s+1\}$ to compute
\begin{align}
\eqref{eqn:inverse:2:3}
&\ls \int_0^\infty\ds\frac{1}{\sqrt{s}(s+1)}\left(\sum_{p\in\Lsp}\tau_p^2 p^2\right)^\frac12
\left(\sum_{p\in\Lsp}p^2\Big\|a_p\frac{1}{\boldKz+s+1}\Np^\frac12 F_1(\Np)\BogUz(\theta)\psi\Big\|^2\right)^\frac12\nonumber\\
&\qquad\times
\Big\|\left(\frac{1}{\boldKz+1}\right)^\frac12(\Np+1)^\frac{\l}{2}\BogUz(\theta)\psi\Big\|\nonumber\\
&\ls\int_0^\infty\ds\frac{1}{\sqrt{s}(s+1)}
\Big\|\boldKz^\frac12\frac{1}{\boldKz+s+1}\Np^\frac12 F_1(\Np)\BogUz(\theta)\psi\Big\|
\Big\|\left(\frac{1}{\boldKz+1}\right)^\frac12(\Np+1)^\frac{\l}{2}\BogUz(\theta)\psi\Big\|\nonumber\\
&\ls\Big\|\left(\frac{1}{\boldKz+1}\right)^\frac12(\Np+1)^\frac{\l}{2}\BogUz(\theta)\psi\Big\|^2\,,
\end{align}
and the estimate of \eqref{eqn:inverse:2:4} works similarly. This concludes the Gronwall argument.
\end{proof}

\section{Diagonalization of $\FockGz$}
\label{sec:proof:diag}

In this section we prove Lemma \ref{lem:diag}.
\begin{proof}
For this proof we use ideas from \cite[Proof of Lemma 5.3]{boccato2017_2}, where an expansion up to $o(1)$ is derived. Here we generalize ideas from \cite{boccato2017_2} and provide an expansion up to higher order that is $o(N^{\beta -1})$. 
First we observe from \eqref{def:F,G} and \eqref{eq:G0-diag} and the properties of the hyperbolic functions that 
\begin{align}
\mathcal{C} - \frac{1}{2} \sum_{p \in \Lambda_+^*} F_p  =&  \frac{N-1}{2} \widehat{v} (0) - \frac{1}{2} \sum_{p \in \Lambda_+^*} \Big( p^2 + \widehat{v}_N^\beta (p) 
+ \frac{1}{N} (v_N^\beta * sc )_p c_ps_p\Big) \notag \\
&- \frac{1}{N} \sum_{p \in \Lambda_+^*} v_N^\beta (p) \Big( \frac{s_pc_p}{2} + s_p^3c_p\Big)  \; 
\end{align}
for $(v_N^\beta * sc )_p $ as in \eqref{def:convolution}.
Since $\vert s_p^3 c_p \vert  \ls \vert p \vert^{-6} $ and $\vert c_p s_p - \eta_p \vert \ls \vert p \vert^{-6}$ by Lemma \ref{lem:cp:sp}, we get 
\begin{align}
\mathcal{C} - \frac{1}{2} \sum_{p \in \Lambda_+^*} F_p  =&  \frac{N-1}{2} \widehat{v} (0) - \frac{1}{2} \sum_{p \in \Lambda_+^*} \Big( p^2 + \widehat{v}_N^\beta (p) + \frac{1}{N} (v_N^\beta * sc )_p c_ps_p\Big) \notag \\
&- \frac{1}{2N} \sum_{p \in \Lambda_+^*} v_N^\beta (p) \eta_p  + \mathcal{O}(N^{-1} ) \; .
\end{align}
By definition of the scattering length on  the box in \eqref{def:box:scattering:length}, we can write the term in the last line as 
\begin{align}
\frac{1}{N} \sum_{p \in \Lambda_+^*} v_N^\beta (p) \eta_p = 8\pi\mathfrak{a}_N^\beta-\widehat{v}(0) 
\end{align}
so that 
\begin{align}
\label{eq:C-F}
\mathcal{C} - \frac{1}{2} \sum_{p \in \Lambda_+^*} F_p  =&  \frac{N}{2} \widehat{v} (0) - 4 \pi \mathfrak{a}_N^\beta - \frac{1}{2} \sum_{p \in \Lambda_+^*} \Big( p^2 + \widehat{v}_N^\beta (p) + \frac{1}{N} (v_N^\beta * sc )_p c_ps_p\Big) +  \mathcal{O}(N^{-1} ) \; .
\end{align}  
Furthermore,
\begin{align}
\label{eq:fquare-gsquare}
F_p^2 - G_p^2 = \vert p \vert^4 + 2 p^2 \widehat{v}_N^\beta (p) + A_p
\end{align}
with 
\begin{align}
\label{def:Ap}
A_p &= - \frac{1}{N}
\Big[ 2  \widehat{v}_N^\beta (p) (v_N^\beta* sc)_p
+ \frac{1}{N} (v_N^\beta* sc)_p^2 \Big] .
\end{align}
Note that $\| (v_N^\beta*\eta) \|_{\l^\infty} \ls N^{\beta}$ by Lemma \ref{lem:cp:sp} and thus 
\begin{align}
\vert A_p \vert \ls N^{\beta-1} \; .\label{eq:boundAp}
\end{align}
Note that $\widehat{v}(p) \geq 0$ for all $p \in \Lsp$. Since $A_p$ is of lower order we  obtain that $\vert p \vert^4 + 2 p^2 \widehat{v}_N^\beta (p) + A_p \geq 0$ and $\vert p \vert^4 + 2 p^2 \widehat{v}_N^\beta (p) \geq 0$ for $N$ large enough. 
Therefore, recalling that we are interested in the energy expansion for $\smallO{ N^{(\beta -1)}}$, we expand the square root of $F_p^2 - G_p^2$, i.e., the r.h.s. of \eqref{eq:fquare-gsquare}, using the identity $\sqrt{a+b}=\sqrt{b}+\frac{a}{2\sqrt{b}}-\frac{a^2}{2\sqrt{b}(\sqrt{a+b}+\sqrt{b})^2}$, we find 
\begin{align}
\sqrt{F_p^2 - G_p^2 } =& \sqrt{\vert p \vert^4 + 2 p^2 \widehat{v}_N^\beta (p)} + \frac{A_p}{2\sqrt{\vert p \vert^4 + 2 p^2 \widehat{v}_N^\beta (p)}}  \notag \\
&-
 \frac{ A_p^2 }{2 \Big( \sqrt{\vert p \vert^4 + 2 p^2 \widehat{v}_N^\beta (p)}+ \sqrt{\vert p \vert^4 
+ 2 p^2 \widehat{v}_N^\beta (p)+A_p}\Big)^2 \sqrt{\vert p \vert^4 + 2 p^2 \widehat{v}_N^\beta (p)}} \; . \label{eq:exp-squareroot}
\end{align}
We will show that the last term is $\mathcal{O}( N^{2(\beta-1)})$. To this end we observe that 
\begin{align}
C p^2 \leq \sqrt{\vert p \vert^4 + 2 p^2 \widehat{v}_N^\beta (p)}+ \sqrt{\vert p \vert^4 + 2 p^2 \widehat{v}_N^\beta (p)+A_p} \leq C p^2 \Big( 1 + C \Big( \frac{A_p}{\vert p \vert^4} + \frac{\widehat{v}_N^\beta (p)}{p^2} \Big) \Big) 
\end{align}
for sufficiently large $N$ and thus 
\begin{align}
\frac{A_p^2}{\vert p\vert^6} &\Big( 1 - C \Big( \frac{A_p}{\vert p \vert^4} + \frac{\widehat{v}_N^\beta (p)}{p^2} \Big) \Big) \notag \\
&\leq   \frac{ A_p^2 } {\Big( \sqrt{\vert p \vert^4 + 2 p^2 \widehat{v}_N^\beta (p)}+ \sqrt{\vert p \vert^4 + 2 p^2 \widehat{v}_N^\beta (p)+A_p}\Big)^2 \sqrt{\vert p \vert^4 + 2 p^2 \widehat{v}_N^\beta (p)}} 
\leq \frac{A_p^2}{\vert p\vert^6} \; . 
\end{align}
With \eqref{eq:boundAp} we find 
\begin{align}
\sum_{p \in \Lambda_+^*} \frac{A_p^2}{\vert p \vert^6} \leq CN^{2(\beta -1)}\,,
\end{align}
which leads with \eqref{eq:exp-squareroot} to 
\begin{align}\label{eqn:F_p^2-G_p^2}
\sqrt{F_p^2 - G_p^2 } =& \sqrt{\vert p \vert^4 + 2 p^2 \widehat{v}_N^\beta (p)} + \frac{A_p}{2\sqrt{\vert p \vert^4 + 2 p^2 \widehat{v}_N^\beta (p)}}  + \mathcal{O}( N^{2( \beta-1)} ) \; . 
\end{align}
Next, we further expand the second term of the r.h.s.\ of \eqref{eqn:F_p^2-G_p^2} and write  
\begin{align}
\frac{1}{\sqrt{\vert p\vert^4 + 2 p^2 \widehat{v}_N^\beta (p)}} = \frac{1}{p^2} - \frac{2 \widehat{v}_N^\beta (p)}{\sqrt{\vert p \vert^4 + 2 p^2 \widehat{v}_N^\beta(p) }  \left(  p^2 + \sqrt{\vert  p \vert^4 + 2 p^2 \widehat{v}_N^\beta(p) } \right) } \,,
\end{align}
which leads with the definition
\begin{align}
\label{def:Bp}
B_p := \frac{ A_p \widehat{v}_N^\beta (p)}{\sqrt{\vert p \vert^4 + 2 p^2 \widehat{v}_N^\beta(p) }  \left(  p^2 + \sqrt{\vert  p \vert^4 + 2 p^2 \widehat{v}_N^\beta(p) } \right) } 
\end{align}
to   
\begin{align}
\sqrt{F_p^2 - G_p^2 } = \sqrt{\vert p \vert^4 + 2 p^2 \widehat{v}_N^\beta (p)} + \frac{A_p}{2p^2} - B_p  + \mathcal{O}( N^{2( \beta -1)} )  \; . 
\end{align}
We use this expansion together with \eqref{eq:C-F} to derive an expansion 
\begin{align}
\mathcal{C} & + \frac{1}{2} \sum_{p \in \Lambda_+^*} \big[ - F_p + \sqrt{F_p^2 - G_p^2} \big] \notag \\
=& \frac{N\widehat{v} (0)}{2} - 4 \pi \mathfrak{a}_N^\beta + \frac{1}{2} \sum_{p \in \Lambda_+^*} \big[ -p^2 - v_N^\beta (p) + \sqrt{\vert p \vert^4 + 2p^2 \hat{v}_N^\beta (p)} \big] \notag \\
&+ \sum_{p \in \Lambda_+^*} \Big[ \frac{A_p}{4p^2} - \frac{1}{2N} (\hat{v}_N^\beta *sc)_p c_ps_p \Big] 
- \frac{1}{2}\sum_{p \in \Lambda_+^*}  B_p  + \mathcal{O}(N^{2(\beta -1)}) . 
\end{align}
For the two terms of the first line we proceed as in \cite{boccato2017_2} and replace $\widehat{v}_N^\beta (p)$ by $\widehat{v} (0)$ paying a price that is $\mathcal{O}(N^{-\beta})$. We add and substract $ \sum_{p \in \Lambda_+^*} (v_N^\beta (p))^2 /p^2$ and find since 
\begin{align}
\label{eq:errorNbeta}
& \Big\vert \sum_{p \in \Lambda_+^*} \big[ -p^2 - \widehat{v}_N^\beta (p) + \sqrt{\vert p \vert^4 + 2p^2 \widehat{v}_N^\beta (p)} + \frac{(\widehat{v}_N^\beta (p))^2}{2p^2} \big]  \notag \\
& \hspace{2cm} - \sum_{p \in \Lambda_+^*} \big[ -p^2 - \widehat{v} (0) + \sqrt{\vert p \vert^4 + 2p^2 \widehat{v}_N^\beta (0)} + \frac{\widehat{v}(0)^2}{2 p^2} \Big\vert \leq CN^{-\alpha} \; . 
\end{align}
for every $\alpha < \beta $ from \cite[p. 2362 (before (5.35)]{boccato2017_2} that 
\begin{align}
\mathcal{C} & + \frac{1}{2} \sum_{p \in \Lambda_+^*} \big[ - F_p + \sqrt{F_p^2 - G_p^2} \big] \notag \\
=&  \frac{1}{2} \sum_{p \in \Lambda_+^*} \big[ -p^2 - \widehat{v} (0) + \sqrt{\vert p \vert^4 + 2p^2 \widehat{v}_N^\beta (0)} + \frac{\widehat{v}(0)^2}{ 2 p^2} \big] - 4 \pi \mathfrak{a}_N^\beta\notag \\
&+ \frac{N \widehat{v} (0)}{2} -  \sum_{p \in \Lambda_+^*} \frac{(\widehat{v}_N^\beta (p))^2}{ 4 p^2} + \sum_{p \in \Lambda_+^*} \Big[ \frac{A_p}{4p^2} - \frac{1}{2N} (\widehat{v}_N^\beta *sc)_p s_pc_p  \Big]  \notag \\
&- \frac{1}{2} \sum_{p \in \Lambda_+^*}  B_p + \mathcal{O}(N^{2(\beta -1)}) + \mathcal{O}(N^{-\alpha}  ) \; . \label{eq:allorders}
\end{align}
Using \eqref{def:eta} and \eqref{def:Ap}, we compute
\begin{align}
&\sum_{p \in \Lambda_+^*}  \Big[ \frac{A_p }{4p^2}   - \frac{1}{2N} (\widehat{v}_N^\beta *sc)_p s_pc_p \Big]  
\nonumber \\
&\quad = 
- \frac{1}{2N} \sum_{p \in \Lsp} (\widehat{v}_N^\beta* \eta)_p \frac{\widehat{v}_N^\beta(p)}{2 p^2}
- \frac{1}{2N} \sum_{p \in \Lsp} (\widehat{v}_N^\beta* (sc - \eta))_p \frac{\widehat{v}_N^\beta(p)}{2 p^2}
\nonumber \\
&\qquad - \frac{1}{2N} \sum_{p \in \Lsp}  \frac{(\widehat{v}_N^\beta *sc)_p}{p^2}
\Big[
\frac{1}{2N} (\widehat{v}_N^\beta *(sc - \eta) )_p + p^2 (s_p c_p - \eta_p) 
-\frac12\hat{v}(0)\eta_p\Big] .
\end{align}
Since $\vert s_p c_p \vert \ls \vert p \vert^{-2}$ from Lemma \ref{lem:HST} we have $\| \widehat{v}_N^\beta * s c \|_{\l^\infty} \ls N^\beta$ from Lemma \ref{lem:aux:potential} and thus 
\begin{align}
\left|\frac{1}{4N^2} \sum_{p \in \Lsp}  \frac{(\widehat{v}_N^\beta *sc)_p}{p^2} (\widehat{v}_N^\beta *(sc - \eta))_p  \right|
\ls N^{\beta-2} \sum_{p,q \in \Lambda_+^*, p \not= q}  \frac{\widehat{v}_N^\beta (p)}{p^2 \vert p-q \vert^6}  \ls N^{2(\beta-1)}\,,
\end{align}
where we used once more Lemma \ref{lem:aux:potential}. 
Moreover, 
\begin{align}
\left|\frac{\hat{v}(0)}{4N^2}\sum_{p\in\Lsp}(\hvNb*sc)_p\frac{\eta_p}{p^2}\right| 
\ls N^{-2}\norm{\hvNb*sc}_{\l^\infty}\sum_{p\in\Lsp}\frac{1}{p^4}\ls N^{\beta-2}\,.
\end{align}
Therefore we get 
\begin{align}
&\sum_{p \in \Lambda_+^*}  \Big[ \frac{A_p }{4p^2}   - \frac{1}{2N} (\widehat{v}_N^\beta *sc)_p s_pc_p \Big]  
\nonumber \\
&\quad = 
- \frac{1}{2N} \sum_{p \in \Lsp} (\widehat{v}_N^\beta* \eta)_p \frac{\widehat{v}_N^\beta(p)}{2 p^2} - \frac{1}{2N} \sum_{p,q \in \Lsp, p\not=q} \widehat{v}_N^\beta (p-q) (s_pc_p-\eta_p) \bigg[ s_qc_q + \frac{\widehat{v}_N^\beta (q)}{q^2}\bigg]\notag \\
&+  \mathcal{O}(N^{2(\beta-1)}). 
\end{align}
We insert this into \eqref{eq:allorders} and get 
\begin{align}
\mathcal{C} & + \frac{1}{2} \sum_{p \in \Lambda_+^*} \big[ - F_p + \sqrt{F_p^2 - G_p^2} \big] \notag \\
=&  \frac{1}{2} \sum_{p \in \Lambda_+^*} \big[ -p^2 - \widehat{v} (0) + \sqrt{\vert p \vert^4 + 2p^2 \widehat{v}_N^\beta (0)} + \frac{\widehat{v}(0)^2}{ 2 p^2} \big] - 4 \pi \mathfrak{a}_N^\beta\notag \\
&+ \frac{N \widehat{v} (0)}{2} -  \sum_{p \in \Lambda_+^*} \frac{(\widehat{v}_N^\beta (p))^2}{ 4 p^2} -  \frac{1}{2N} \sum_{p \in \Lsp} (\widehat{v}_N^\beta* \eta)_p \frac{\widehat{v}_N^\beta(p)}{2 p^2} \notag \\
&- \frac{1}{2N} \sum_{p,q \in \Lsp, p\not=q} \widehat{v}_N^\beta (p-q) (s_pc_p-\eta_p) \bigg[ s_qc_q + \frac{\widehat{v}_N^\beta (q)}{q^2}\bigg] \notag \\
&- \frac{1}{2}\sum_{p \in \Lambda_+^*}  B_p + \mathcal{O}(N^{2(\beta -1)}) + \mathcal{O}(N^{-\alpha}  ) \; . 
\end{align}
The second line of the r.h.s. sums by \eqref{def:eta} up to 
\begin{align}
4\pi N\aNb +\frac{\hat{v}(0)}{4N}\sum_{p\in\Lsp}\frac{\hvNb(p)\eta_p}{p^2}\,,
\end{align}
where the second summand is at most of order $ N^{-1}$ by Lemma \ref{lem:cp:sp}.
Hence, 
\begin{align}
\mathcal{C} & + \frac{1}{2} \sum_{p \in \Lambda_+^*} \big[ - F_p + \sqrt{F_p^2 - G_p^2} \big] \notag \\
=&  4 \pi (N-1) \mathfrak{a}_N^\beta +  \frac{1}{2} \sum_{p \in \Lambda_+^*} \big[ -p^2 - \widehat{v} (0) + \sqrt{\vert p \vert^4 + 2p^2 \widehat{v}_N^\beta (0)} + \frac{\widehat{v}(0)^2}{ 2 p^2} \big] \notag \\
&- \frac{1}{2N} \sum_{p,q \in \Lsp, p\not=q} \widehat{v}_N^\beta (p-q) (s_pc_p-\eta_p) \bigg[ s_qc_q + \frac{\widehat{v}_N^\beta (q)}{q^2}\bigg] \notag \\
&- \frac{1}{2} \sum_{p \in \Lambda_+^*}  B_p + \mathcal{O}(N^{2(\beta -1)}) +\mathcal{O}(N^{-\alpha}  ) \; . 
\end{align}
Recalling the definition of $A_p$  in \eqref{def:Ap}, the estimates $\| \widehat{v}_N^\beta * sc \|_{\l^\infty} \ls N^{\beta} $, $\| \widehat{v}_N^\beta *(sc-\eta) \|_{\l^\infty} \ls 1 $ and the definition of $B_p$ in \eqref{def:Bp}, we can extract the leading order contribution of $-\frac12\sum_{p\in\Lsp}B_p$ and find 
\begin{align}
&\mathcal{C} + \frac{1}{2} \sum_{p \in \Lambda_+^*} \big[ - F_p + \sqrt{F_p^2 - G_p^2} \big]\notag\\
&\qquad = 4 \pi (N-1) \mathfrak{a}_N^\beta + E_{0,0}+E_{0,1}\mathcal{O}(N^{2(\beta -1)}) + \mathcal{O}(N^{-1}) + \mathcal{O}(N^{-\alpha}  ) \,,\label{eq:diag-final}
\end{align}
for every $\alpha < \beta$, where
\begin{align}
E_{0,0}=&  \frac{1}{2} \sum_{p \in \Lambda_+^*} \big[ -p^2 - \widehat{v} (0) + \sqrt{\vert p \vert^4 + 2p^2 \widehat{v}_N^\beta (0)} + \frac{\widehat{v}(0)^2}{ 2 p^2} \big]\,
\end{align}
as defined in \eqref{E00} and
\begin{align}
E_{0,1}:=&- \frac{1}{2N} \sum_{p,q \in \Lsp, p\not= q} \widehat{v}_N^\beta (p-q) (s_pc_p-\eta_p) \bigg[ s_qc_q + \frac{\widehat{v}_N^\beta (q)}{q^2}\bigg] \notag \\
 &+ \frac{1}{N} \sum_{p,q \in \Lambda_+^*, p\not= q }  \frac{ \widehat{v}_N^\beta (p)^2 \; \; \widehat{v}_N^\beta (p-q) s_qc_q }{\sqrt{\vert p \vert^4 + 2 p^2 \widehat{v}_N^\beta(p) }  \left(  p^2 + \sqrt{\vert  p \vert^4 + 2 p^2 \widehat{v}_N^\beta(p) } \right) }  \label{E01}\; .
\end{align}

\paragraph{Proof of \eqref{eqn:orders:E00:01}.} Recalling the definition of $E_{0,0}$, we find by expanding the square root that $\big|-p^2 - \widehat{v} (0) + \sqrt{\vert p \vert^4 + 2p^2 \widehat{v}_N^\beta (0)} + \frac{\widehat{v}(0)^2}{ 2 p^2} \big| \ls \vert p \vert^{-4}$, yielding the desired estimate 
\begin{align}
\vert  E_{0,0} \vert \ls 1 \; .
\end{align}
Furthermore, we conclude from the definition of $E_{0,1}$   with  the estimates $\vert s_pc_p-\eta_p \vert \ls \vert p \vert^{-6} $, $\sum_{q \in \Lambda_+^*}\frac{\widehat{v}_N^\beta (q)}{q^2} \ls N^{\beta}$ and  $\norm{\hvNb*sc}_{\l^\infty}\ls N^\beta$ that $\vert E_{0,1} \vert \ls N^{\beta-1}$. 
\end{proof}

\section{Analysis of $\FockG$}
\label{sec:estimates-G}

In this section we study properties of 
\begin{align}
\mathbb{G} = \mathbb{T} \mathbb{H} \mathbb{T}^* -\mathcal{C}
\end{align}
defined in \eqref{def:G}, which will in the end yield Proposition \ref{prop:renormalized:hamiltonian}. The idea is to compute $\mathbb{G}$ using the explicit action of $\mathbb{T}$ on creation and annihilation operators \eqref{eq:actionbogo}. For this we decompose $\mathbb{H}$ from \eqref{def:H:extended} as 
\begin{equation}
\label{eq:Hsplit}
\mathbb{H} =  \FockHb+\tilde{\FockR}_{\sqrt{}} \,,
\end{equation}
where we defined
\begin{equation}\label{def:H:tilde}
\FockHb
:=  \frac12(N-1)\hat{v}(0)+ \boldKz +  \boldKo
+  \boldKt +  \boldKtbar 
+ \frac{1}{\sqrt{N}} \left( \boldKth +  \boldKthbar\right) 
+\frac{1}{N} \boldKf 
\end{equation} 
with $\mathbb{K}_j$ given by \eqref{eqn:K:notation} and 
\begin{align}
\label{def:R_sqrt}
\tilde{\FockR}_{\sqrt{}}
:=& 
\left(\left[\frac{N-\Np}{N}\right]_+ -1\right)\boldKo \notag \\
&+\left(\boldKt    \frac{\sqrt{\left[(N-\Np)(N-\Np-1)\right]_+} - N}{N} 
+  \hc\right) \nonumber\\
&+  \left(\boldKth\frac{\sqrt{\left[N-\Np\right]_+} - \sqrt{N}}{N}+ \hc\right) \; . 
\end{align} 
We prove Proposition \ref{prop:renormalized:hamiltonian} by computing the action of the Bogoliubov transformation $\mathbb{T}$ on the individual contributions of $\mathbb{H}$ in \eqref{eq:Hsplit} and decompose the result w.r.t.\ to the order of the terms in $N$. 
The first observation is that the conjugation of $\tilde{\FockR}_{\sqrt{}}$ with $\mathbb{T}$ contributes only to sub-leading order. To be more precise, we define 
\begin{align}
\label{def:R_sqrt:new}
\mathbb{R}_c := 
\FockT\tilde{\FockR}_{\sqrt{}}\FockT^*+\frac{1}{N}\sum_{p\in\Lsp}\hvNb(p)c_ps_p\FockT(\Np+\frac12)\FockT^* \,.
\end{align}
In Section \ref{subsec:sqrt} (Lemma \ref{lem:R2}), we prove that the remainder $\mathbb{R}_c$ is  of order $N^{3(\beta-1)/2}$. Next, we observe that the explicit action of the Bogoliubov transformation on creation and annihilation operators \eqref{eq:actionbogo} yields
\begin{align}
\label{eq:TKjT}
\mathbb{T} \left( \mathbb{K}_0 + \mathbb{K}_1 + \mathbb{K}_2+ \mathbb{K}_2^* \right) \mathbb{T}^* =& \mathbb{G}_0  + \sum_{p \in \Lambda^*} \big(p^2s_p^2 + \widehat{v}_N^\beta (s_p^2 + c_ps_p)\big) \notag \\
&- \frac{2}{N}\sum_{p \in \Lambda_+^*} ( \widehat{v}_N^\beta * sc)_p c_p s_p \ad_pa_p \notag \\
&- \frac{1}{2N} \sum_{p \in \Lambda_+^*} ( \widehat{v}_N^\beta * cs)_p (c_p^2 + s_p^2 ) (\ad_p\ad_{-p} + a_p a_{-p} )
\end{align} 
for $\FockGz$ as defined in \eqref{G0}.
The last two terms of the r.h.s.\ will be compensated for by $N^{-1}\mathbb{T} \mathbb{K}_4 \mathbb{T}^*$. In fact, a lengthy computation shows that
\begin{align}
\mathbb{T}  & \left( \mathbb{K}_0 + \mathbb{K}_1 + \mathbb{K}_2+\boldKt^* \right) \mathbb{T}^* + \frac{1}{N}\FockT \boldKf \FockT^*  \notag \\
=&\mathbb{G}_0 + \FockG_2+\FockR_a+\FockR_b \notag \\ 
&+\frac1{2N}\sum_{\substack{p\in\Lsp}}(\hvNb*cs)_pc_ps_p + \sum_{p \in \Lambda^*} \big(p^2s_p^2 + \widehat{v}_N^\beta (s_p^2 + c_ps_p)\big) - \frac{1}{N}\sum_{p\in\Lsp}\hvNb(p)c_ps_p \Big( \frac{1}{2} +  s_p^2 \Big) \notag\\
&+\frac{1}{N}\sum_{p\in\Lsp}\hvNb(p)c_ps_p\FockT(\Np+\frac12)\FockT^* \notag\\
=&\mathbb{G}_0 + \FockG_2+\FockR_a+\FockR_b +\frac{1}{N}\sum_{p\in\Lsp}\hvNb(p)c_ps_p\FockT(\Np+\frac12)\FockT^* +\mathcal{C}-\frac{N-1}{2}\hat{v}(0)\,,
\end{align}
where $\FockR_a$ and $\mathbb{R}_b$ are given in \eqref{E_1} and \eqref{E_2} below.
In Section \ref{subsec:E_1} (Lemma \ref{lem:Ra}), we prove that $\mathbb{R}_a$ is of order $N^{-1+\beta/2}$. In Section \ref{subsec:E_2} (Lemma \ref{lem:Rb}), we show that $\FockR_b$ is of order $N^{-1}$. 
Recalling the definition of $\mathbb{G}_1$ in \eqref{G1} and combining this with \eqref{def:H:tilde} and \eqref{def:R_sqrt:new}, we  arrive at 
\begin{align}
\mathbb{G} = \mathbb{G}_0 + \mathbb{G}_1  + \mathbb{G}_2  + \mathbb{R}_a
+ \mathbb{R}_b + \mathbb{R}_c =: \mathbb{G}_0 + \mathbb{G}_1  + \mathbb{G}_2  + \mathbb{R}_2\,,
\end{align}
where we set 
\begin{align}
\mathbb{R}_2 := \mathbb{R}_a + \mathbb{R}_b +\mathbb{R}_c\,.
\end{align}
Then Proposition \ref{prop:renormalized:hamiltonian} follows by Lemmas \ref{lem:Ra} -- \ref{lem:R2} proven in Sections \ref{subsec:E_1} -- \ref{subsec:sqrt} below.

\subsection{Analysis of $\FockG_1$}\label{subsec:G_1}

With \eqref{eq:actionbogo} we find 
\begin{align}
\label{eq:tildeG1}
\mathbb{G}_1 := \widetilde{\mathbb{G}}_1 + \mathbb{R}_d \,,
\end{align}
where the leading order contributions of $\mathbb{G}_1$, which we will show to be of order $N^{(\beta-1)/2}$, is given by 
\begin{align}
\widetilde{\mathbb{G}}_1 :=& \frac{1}{\sqrt{N}}\sum_{\substack{p,q\in\Lsp\\p+q\neq0}}\widehat{v}_N^\beta (p) c_{p+q}c_pc_q( \ad_{p+q}\ad_{-p}a_q + {\rm h.c.} ) \notag \\
& +\frac{1}{\sqrt{N}}\sum_{\substack{p,q\in\Lsp\\p+q\neq0}}\widehat{v}_N^\beta (p) c_{p+q}c_ps_q(\ad_{p+q}\ad_{-p}\ad_{-q} +
{\rm h.c.} ) \notag \\
=:& \left(A_{d_1}  + A_{d_2} +\hc\right)  \; . 
\end{align}
 The remaining terms are of order $N^{-1/2}$ and given by 
\begin{align}
\mathbb{R}_d 
:=&\frac{1}{\sqrt{N}}\sum_{\substack{p,q\in\Lsp\\p+q\neq0}}\widehat{v}_N^\beta (p) c_{p+q}s_pc_q( \ad_{p+q}a_pa_q + {\rm h.c.} ) 
\notag \\
&+\frac{1}{\sqrt{N}}\sum_{\substack{p,q\in\Lsp\\p+q\neq0}}\widehat{v}_N^\beta (p) s_{p+q}c_pc_q ( a_{-(p+q)}\ad_{-p}a_q + {\rm h.c.}) \notag
\\
&+\frac{1}{\sqrt{N}}\sum_{\substack{p,q\in\Lsp\\p+q\neq0}}\widehat{v}_N^\beta (p) s_{p+q}s_pc_q(a_{-(p+q)}a_pa_q+  {\rm h.c.})
\notag
\\
&+\frac{1}{\sqrt{N}}\sum_{\substack{p,q\in\Lsp\\p+q\neq0}}\widehat{v}_N^\beta (p) c_{p+q}s_ps_q (\ad_{p+q}a_p\ad_{-q} +  {\rm h.c.})
\notag
\\
& +\frac{1}{\sqrt{N}}\sum_{\substack{p,q\in\Lsp\\p+q\neq0}}\widehat{v}_N^\beta (p) s_{p+q}c_ps_q (a_{-(p+q)}\ad_{-p}\ad_{-q} +  {\rm h.c.})\notag
\\
& +\frac{1}{\sqrt{N}}\sum_{\substack{p,q\in\Lsp\\p+q\neq0}}\widehat{v}_N^\beta (p) s_{p+q}s_ps_q (a_{-(p+q)}a_p\ad_{-q} +  {\rm h.c.})
\notag \\
=:& \left(A_{d_3} +  A_{d_4}  +  A_{d_5}  +  A_{d_6}  + A_{d_7}  +  A_{d_8} +  \hc\right) \,.
\end{align}

\begin{lem}\label{lem:G1}
For $\psi,\xi\in \Fp$ and $\l_1,\l_2\in \mathbb{R}$, we have 
\begin{align}
\left|\lr{\psi,\widetilde{\FockG}_1\xi}\right| 
&\ls  N^{\frac{\beta -1}{2}}   \norm{\left( \boldKz  + 1 \right)^{1/2} \left(\Np + 1 \right)^{\l_1} \psi}
\norm{\left(\Np + 1 \right)^{- \l_1 +1} \xi}
\nonumber \\
&\quad + 
N^{\frac{\beta -1}{2}}   \norm{\left(\Np + 1 \right)^{- \l_2 +1} \psi}
\norm{\left( \boldKz  + 1 \right)^{1/2} \left(\Np + 1 \right)^{\l_2} \xi} 
\end{align}
and, moreover, 
\begin{align}
\left|\lr{\psi,\mathbb{R}_d \xi}\right| 
&\ls   N^{-\frac12}   \norm{ \left(\Np + 1 \right)^{\l_1} \psi}
\norm{\left(\Np + 1 \right)^{- \l_1 + \frac{3}{2}} \xi} \; . 
\end{align}
\end{lem}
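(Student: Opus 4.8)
The plan is to estimate each of the operators $A_{d_1},\dots,A_{d_8}$ and their adjoints separately, always by normal-ordering the expression and then applying the Cauchy--Schwarz inequality in the momentum sums, using the pointwise bounds $\abs{c_p}\ls1$, $\abs{s_p}\ls\abs{p}^{-2}$ from Lemma~\ref{lem:cp:sp} together with the potential estimates of Lemma~\ref{lem:aux:potential}. The two halves of the statement rest on different inputs: the coefficients of $A_{d_1},A_{d_2}$ contain no decaying factor $s_{(\cdot)}$, so the bound on $\widetilde{\FockG}_1$ has to use the kinetic energy $\boldKz$, and this is what produces the factor $N^{(\beta-1)/2}$; whereas every coefficient of $A_{d_3},\dots,A_{d_8}$ carries at least one factor $s_{(\cdot)}$, and this extra decay lets us estimate $\mathbb{R}_d$ without invoking $\boldKz$ at all, at the price of only $N^{-1/2}$.

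For $\widetilde{\FockG}_1$: in $\lr{\psi,A_{d_1}\xi}$ and $\lr{\psi,A_{d_2}\xi}$ I move two of the creation/annihilation operators onto the $\psi$-slot, so that both become
\[
\frac{1}{\sqrt N}\sum_{\substack{p,q\in\Lsp\\p+q\neq0}}\hvNb(p)\,b_{p,q}\,\lr{a_{-p}a_{p+q}\psi,\;a_q^\sharp\xi}\,,\qquad \abs{b_{p,q}}\ls1\,,
\]
where $b_{p,q}$ collects the $c$- and $s$-factors and $a_q^\sharp=a_q$ for $A_{d_1}$, $a_q^\sharp=\ad_{-q}$ for $A_{d_2}$. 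A Cauchy--Schwarz with weight $\abs{p+q}^2$ then isolates, on the $\psi$-side, the sum $\sum_p\hvNb(p)\norm{\boldKz^{1/2}a_{-p}\psi}^2$, which after commuting $\boldKz$ through $a_{-p}$ and using $\hvNb(p)\le\norm{\hvNb}_{\ell^\infty}\ls1$ is $\ls\norm{(\boldKz+1)^{1/2}(\Np+1)^{1/2}\psi}^2$; and, on the $\xi$-side, the sum $\sum_{p,q}\abs{p+q}^{-2}\hvNb(p)\abs{b_{p,q}}^2\norm{a_q^\sharp\xi}^2$, which is $\ls N^\beta\norm{(\Np+1)^{1/2}\xi}^2$ by the auxiliary bound $\sup_q\sum_p\hvNb(p)/\abs{p+q}^2\ls N^\beta$ of Lemma~\ref{lem:aux:potential} (for $A_{d_2}$ one keeps the factor $s_q$ in $b_{p,q}$, which renders the $q$-sum of $\norm{\ad_{-q}\xi}^2=\norm{a_{-q}\xi}^2+\norm{\xi}^2$ convergent). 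Multiplying the two factors and inserting the prefactor gives $N^{-1/2}N^{\beta/2}=N^{(\beta-1)/2}$, hence the first claimed term with $\l_1=\tfrac12$; the adjoints $A_{d_1}^*,A_{d_2}^*$ give the second term with $\l_2=\tfrac12$; and arbitrary $\l_1,\l_2$ follow by conjugating the inequality with $(\Np+1)^{\pm(\l-1/2)}$, using that each $A_{d_i}$ changes the particle number by a fixed amount and that $\boldKz$ commutes with functions of $\Np$.

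For $\mathbb{R}_d$: for each $A_{d_i}$ I normal-order so that $\psi$ carries one operator and $\xi$ carries two, and apply Cauchy--Schwarz with a weight $w_{p,q}$ tailored to the coefficient --- essentially the squared coefficient with its bounded $c$-factors removed --- chosen so that the $\xi$-side loses all coefficients, becoming $\ls\sum_{p,q}\norm{a_{m_1}^\sharp a_{m_2}^\sharp\xi}^2\ls\norm{(\Np+1)\xi}^2$, while the $\psi$-side, $\sum_{p,q}w_{p,q}\norm{a_{m_0}^\sharp\psi}^2$, is controlled by combining $\sum_k\norm{a_k\phi}^2=\norm{\Np^{1/2}\phi}^2$ (and $\norm{\ad_k\phi}^2\le\norm{a_k\phi}^2+\norm{\phi}^2$) with $\sum_p\hvNb(p)^2\abs{s_p}^2\le\norm{\hvNb}_{\ell^\infty}^2\norm{s}_{\ell^2}^2\ls1$ and $\norm{s}_{\ell^\infty}\ls1$, so that $\sum_{p,q}w_{p,q}\norm{a_{m_0}^\sharp\psi}^2\ls\norm{(\Np+1)^{1/2}\psi}^2$. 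The crucial point is that, thanks to the factor $s_{(\cdot)}$ present in each coefficient, the squared coefficient decays like $\abs{p}^{-4}$ and is hence summable in three dimensions \emph{uniformly in $N$}, so no power $N^\beta$ survives; this gives $N^{-1/2}\norm{(\Np+1)^{1/2}\psi}\norm{(\Np+1)\xi}$, i.e.\ the claimed bound with $\l_1=\tfrac12$, and general $\l_1$ once more by conjugation.

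The main difficulty is precisely that $\norm{\hvNb}_{\ell^1}=\norm{s}_{\ell^1}=\infty$: both sequences decay only like $\abs{p}^{-2}$, which is not summable in three dimensions, so no momentum sum may be estimated through the $\ell^1$-norm of a coefficient. Every such sum must therefore be routed either through the kinetic energy together with $\sum_p\hvNb(p)/\abs{p+q}^2\ls N^\beta$ (for $A_{d_1},A_{d_2}$), or through a number-operator factor together with $\norm{s}_{\ell^2}\ls1$ (for $A_{d_3},\dots,A_{d_8}$); the remaining delicate points are the rearrangement of the three creation operators of $A_{d_2}$ so as to place two of them on $\psi$, and keeping track of the particle-number shifts under normal ordering that are needed to reach arbitrary $\l_1,\l_2$.
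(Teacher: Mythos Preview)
Your proposal is correct and follows essentially the same approach as the paper: term-by-term normal ordering followed by Cauchy--Schwarz in the momentum sums, using $\abs{c_p}\ls1$, $\abs{s_p}\ls\abs{p}^{-2}$ and Lemma~\ref{lem:aux:potential}. The only cosmetic difference is that for $A_{d_1},A_{d_2}$ the paper weights the Cauchy--Schwarz with $\abs{p}$ and uses $\norm{\abs{p}^{-1}\hvNb}_{\ell^2}^2\ls N^\beta$ (the first bound of Lemma~\ref{lem:aux:potential}), whereas you weight with $\abs{p+q}$ and use $\sup_q\sum_p\hvNb(p)/(p+q)^2\ls N^\beta$ (the second bound); both lead to the same estimate.
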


\begin{proof}
We start with the first bound for $\widetilde{\mathbb{G}}_1$ defined in \eqref{eq:tildeG1} and consider both its contributions separately. 

\paragraph{Term $\widetilde{\mathbb{G}}_1$.} Since $\norm{c}_{\l^\infty}\ls 1$ by Lemma \ref{lem:cp:sp}, 
\begin{align}
&\abs{\lr{\psi,A_{d_1}\xi} }\notag \\
&\quad \ls N^{-1/2} \sum_{\substack{p,q\in\Lsp}} \widehat{v}_N^\beta (p)  \abs{\lr{\psi, \left(\Np + 1 \right)^{\l}  \ad_{p+q}\ad_{-p} a_q \left( \Np + 2 \right)^{-\l} \xi} }
\nonumber\\
&\quad \ls N^{-1/2}\left( \sum_{\substack{p,q\in\Lsp}} |p|^2
\norm{a_{p+q}a_{-p} \left(\Np + 1 \right)^{\l} \psi}^2\right)^\frac12
\| \vert p \vert^{-1} \widehat{v}_N^\beta \|_{\ell^2( \Lambda_+^*)} 
\left( \sum_{\substack{q \in\Lsp}} \norm{a_q \left( \Np + 3 \right)^{-\l} \xi}^2\right)^\frac12 \,.
\end{align}
With Lemma \ref{lem:aux:potential} we arrive at the desired bound 
\begin{align}
\abs{\lr{\psi,A_{d_1}\xi} }
&\ls N^\frac{\beta-1}{2} \norm{ \mathbb{K}_0 \left(\Np + 1 \right)^{\l+\frac{1}{2}} \psi}
\norm{\left(\Np + 1 \right)^{- \l + \frac{1}{2}} \xi} .
\end{align}
Using similar ideas we continue with 
\begin{align}
&\abs{ \lr{\psi,A_{d_2}\xi} }
\nonumber \\
&\quad \ls N^{-1/2} 
\norm{ \mathbb{K}_0^{1/2} \left(\Np + 1 \right)^{\l+ \frac{1}{2} } \psi} \, \norm{\abs{p}^{-1} \widehat{v}_N^\beta}_{\ell^2} 
\Big(  \sum_{\substack{q\in\Lsp}} 
\left|s_q \right|^2 \norm{\ad_{-q} \left( \Np + 4 \right)^{-\l} \xi}^2
\Big)^{1/2} 
\end{align}
Using once more Lemmas \ref{lem:aux:potential} and \ref{lem:cp:sp}, we obtain
\begin{align}
\abs{ \lr{\psi,A_{d_2}\xi} }
\nonumber \ls  N^{\frac{\beta-1}{2}} \norm{\mathbb{K}_0^{1/2}  \left(\Np + 1 \right)^{\l+ \frac{1}{2} } \psi}
\norm{\left( \Np + 1 \right)^{-\l + \frac{1}{2}} \xi}  . 
\end{align}
The hermitian conjugates can be estimated simiarly.

\paragraph{Term $\mathbb{R}_d$.} Next we prove the bound on $\mathbb{R}_d$. For this we consider all its contributions separately. We start with 
\begin{align}
&\abs{\lr{\psi, A_{d_3}\xi} }\notag \\
&\quad \ls N^{-1/2} \sum_{\substack{p,q\in\Lsp}} \widehat{v}_N^\beta
\abs{s_p} \abs{\scp{\psi}{\left( \Np + 2 \right)^{\l} \ad_{p+q} a_p a_q \left( \Np + 1 \right)^{-\l}\xi} }
\nonumber \\
&\quad \ls N^{-1/2} \norm{\hat{v}}_{\l^{\infty}}
\left( \sum_{\substack{p,q\in\Lsp}} 
\abs{s_p}^2  \norm{a_{p+q} \left( \Np + 2 \right)^{\l} \psi}^2 \right)^{1/2} 
\left( \sum_{\substack{p,q\in\Lsp}}  \norm{a_p a_q  \left( \Np + 1 \right)^{-\l} \xi}^2  \right)^{1/2} 
\nonumber \\
&\quad \ls N^{-1/2}  \norm{\left( \Np + 1 \right)^{\l + \frac{1}{2}}  \psi} \norm{\left( \Np + 1 \right)^{-\l + 1} \xi} \,,
\end{align}
by Lemma \ref{lem:cp:sp}. In order to estimate the next term, note that
\begin{align}
 A_{d_4}
&= N^{-1/2}  \sum_{\substack{p,q\in\Lsp\\p+q\neq0}}\hvNb(p) s_{p+q} c_p c_q a_{-(p+q)} \ad_{-p} a_q
\nonumber \\
&= 
N^{-1/2}  \sum_{\substack{p,q\in\Lsp\\p+q\neq0}}\hvNb(p) s_{p+q} c_p c_q  \ad_{-p} a_{-(p+q)} a_q
\end{align}
because $q \in \Lsp$. This allows us to estimate with Lemma \ref{lem:cp:sp} 
\begin{align}
&\abs{\scp{\psi}{ A_{d_4} \xi}}
\nonumber \\
&\quad \ls N^{-1/2}   \norm{\hat{v}}_{\l^{\infty}}
\bigg(  \sum_{\substack{p,q\in\Lsp\\p+q\neq0}} \abs{s_{p+q}}^2 \norm{a_{-p} \left( \Np + 2 \right)^\l \psi}^2 \bigg)^{1/2}
\bigg(  \sum_{\substack{p,q\in\Lsp\\p+q\neq0}}  \norm{a_{-(p+q)} a_q  \left( \Np + 1 \right)^{-\l} \xi}^2 \bigg)^{1/2}
\nonumber \\
&\quad \ls N^{-1/2}   \norm{\left( \Np + 1 \right)^{\l + \frac{1}{2} } \psi} \norm{\left( \Np + 1 \right)^{-\l + 1} \xi} .
\end{align}
For the next term, we find with similar arguments 
\begin{align}
&\abs{\scp{\psi}{ A_{d_5} \xi}}
\nonumber \\
&\quad \ls N^{-1/2} \norm{\hat{v}}_{\l^\infty}
\bigg(  \sum_{\substack{p,q\in\Lsp\\p+q\neq0}} \abs{s_{p+q}}^2 \abs{s_p}^2 \norm{\ad_{-(p+q)}  \left( \Np + 4 \right)^{\l} \psi}^2 \bigg)^{1/2}
\bigg(  \sum_{\substack{p,q\in\Lsp\\p+q\neq0}} \norm{a_p a_q \left( \Np + 1 \right)^{-\l} \xi}^2 \bigg)^{1/2}
\nonumber \\
&\quad \ls N^{-1/2}   \norm{ \left( \Np + 1 \right)^{\l+ \frac{1}{2}} \psi} \norm{\left( \Np + 1 \right)^{-\l + 1} \xi} 
\end{align}
and, moreover, by Lemma \ref{lem:cp:sp} 
\begin{align}
&\abs{\scp{\psi}{ A_{d_6} \xi}}\notag \\
&\quad \ls N^{-1/2}  \norm{\hat{v}}_{\l^\infty(\Lsp)}
\bigg(  \sum_{\substack{p,q\in\Lsp\\p+q\neq0}} \abs{s_p}^2
\norm{a_{p+q} \left(\Np + 1 \right)^\l \psi}^2 \bigg)^{1/2}
\bigg(  \sum_{\substack{p,q\in\Lsp\\p+q\neq0}} \abs{s_q}^2
\norm{a_{p} \ad_{-q}  \left( \Np + 2 \right)^{-\l} \xi}^2 \bigg)^{1/2}
\nonumber \\
&\quad \ls N^{-1/2}   \norm{\left(\Np + 1 \right)^{\l + \frac{1}{2} } \psi}
\norm{\left( \Np + 1 \right)^{-\l +1} \xi} .
\end{align}
For the next term, we put the creation and annihilation operators in normal order using that $\delta_{-(p+q), - p} =0$ for all $q \neq 0$ and $\delta_{-(p+q), - q} =0$ for all $p \neq 0$ so that 
\begin{align}
 A_{d_7}
&= N^{-1/2}  \sum_{\substack{p,q\in\Lsp\\p+q\neq0}} \widehat{v}_N^\beta(p)  s_{p+q} c_p s_q  \ad_{-p} \ad_{-q} a_{-(p+q)} .
\end{align}
Then we can proceed analogously as for $A_{d_3}$ and thus get 
\begin{align}
&\abs{\scp{\psi}{ A_{d_7} \xi}}  \ls N^{-1/2} \norm{\left(\Np + 1 \right)^{\l+1} \psi} \norm{\left( \Np + 1 \right)^{-\l + \frac{1}{2}}  \xi} .
\end{align}
For the last term, we find, using that  $[a_{- (p+q)}, \ad_{-q}]=0$,
\begin{align}
&\abs{\scp{\psi}{ A_{d_8}\xi}}
\nonumber \\
&\quad \ls  N^{-1/2}  \norm{\hat{v}}_{\l^\infty} \norm{s}_{\l^\infty}
\bigg( \sum_{\substack{p,q\in\Lsp\\p+q\neq0}} \abs{s_p}^2  \norm{a_{-q} \ad_p \left( \Np + 2 \right)^{\l} \psi}^2  \bigg)^{1/2}
\nonumber \\
&\quad \times
\bigg( \sum_{\substack{p,q\in\Lsp\\p+q\neq0}} \abs{s_q}^2 \norm{a_{-(p+q)} \left( \Np + 1 \right)^{-\l} \xi}^2  \bigg)^{1/2}
\nonumber \\ 
&\quad \ls  N^{-1/2}  \norm{\left( \Np + 1 \right)^{\l+1} \psi} 
\norm{\left( \Np + 1 \right)^{-\l + \frac{1}{2}} \xi} .
\end{align}
The hermitian conjugates can be estimated similarly. 
\end{proof}

\subsection{Analysis of $\FockG_2$}

Here we show that the operator $\mathbb{G}_2$, given by 
\begin{align}
\mathbb{G}_2 = \frac{1}{2N} \sum_{\substack{p,q,r\in\Lsp\\ p+r\neq0,q+r\neq0}}\hvNb(r) c_{p+r} c_q c_p  c_{q+r} \ad_{p+r} \ad_q   a_p a_{q+r}\,,
\end{align}
is of order $N^{\beta-1}$. The proof of the following Lemma closely follows ideas of \cite[Lemma 7.3]{boccato2017_2}. For completeness, we nevertheless sketch the proof here.

\label{subsec:G_2}
\begin{lem}\label{lem:G2}
Let $\psi,\xi\in\Fp$ and $\l \in\R$. Then, 
\begin{equation}
\left|\lr{\psi,\FockG_2\xi}\right| \ls N^{\beta-1}
\norm{ \boldKz^\frac12 \left( \Np +1 \right)^{\frac12+\l} \psi }
\norm{ \boldKz^\frac12 \left( \Np +1 \right)^{\frac12-\l} \xi }  \; . 
\end{equation}
\end{lem}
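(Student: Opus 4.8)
The plan is to follow the scheme of \cite[Lemma~7.3]{boccato2017_2}, estimating $\FockG_2$ essentially as one would estimate $N^{-1}\boldKf$: the only inputs needed are the uniform bound $|c_p|\ls1$ from Lemma~\ref{lem:cp:sp} and the potential estimate \eqref{eq:auxiliary bound for the potential} of Lemma~\ref{lem:aux:potential}.

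First I would insert the number operator weights. Since the quartic monomial $\ad_{p+r}\ad_qa_pa_{q+r}$ preserves the particle number it commutes with $(\Np+1)^{\pm\l}$, so after moving the weights past the two annihilation operators (at the cost of harmless shifts $(\Np+c)^{\pm\l}\ls(\Np+1)^{\pm\l}$ on $\Fp$) one gets
\begin{equation}
\lr{\psi,\FockG_2\xi}=\frac{1}{2N}\sum_{\substack{p,q,r\in\Lsp\\p+r\neq0,\,q+r\neq0}}\hvNb(r)\,c_{p+r}c_qc_pc_{q+r}\,\scp{a_qa_{p+r}(\Np+1)^\l\psi}{a_pa_{q+r}(\Np+1)^{-\l}\xi}\,.
\end{equation}
Bounding $|c_{p+r}c_qc_pc_{q+r}|\ls1$ by Lemma~\ref{lem:cp:sp} and applying Cauchy--Schwarz to the Fock-space inner product reduces the task to estimating $\frac{1}{2N}\sum\hvNb(r)\norm{a_qa_{p+r}(\Np+1)^\l\psi}\,\norm{a_pa_{q+r}(\Np+1)^{-\l}\xi}$.

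The key step is to split the potential so as to produce exactly one factor $\boldKz^{1/2}$ on each side. Since $q$ and $q+r$ are nonzero on $\Lsp$ and $\hvNb\geq0$ by Assumption~\ref{ass}, I would write $\hvNb(r)=\big(\tfrac{|q|}{|q+r|}\hvNb(r)^{1/2}\big)\big(\tfrac{|q+r|}{|q|}\hvNb(r)^{1/2}\big)$ and apply Cauchy--Schwarz over $(p,q,r)$. The first factor is $\sum_{p,q,r}\tfrac{q^2\hvNb(r)}{(q+r)^2}\norm{a_qa_{p+r}(\Np+1)^\l\psi}^2$: the sum over $p$ (equivalently over $k=p+r$) gives $\sum_k\norm{a_qa_k(\Np+1)^\l\psi}^2\ls\norm{a_q(\Np+1)^{\l+1/2}\psi}^2$, the sum over $r$ then closes via $\sup_q\sum_r\hvNb(r)/(q+r)^2\ls N^\beta$ from \eqref{eq:auxiliary bound for the potential}, and $\sum_qq^2\norm{a_q(\Np+1)^{\l+1/2}\psi}^2=\norm{\boldKz^{1/2}(\Np+1)^{\l+1/2}\psi}^2$, so the first factor is $\ls N^\beta\norm{\boldKz^{1/2}(\Np+1)^{1/2+\l}\psi}^2$. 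The second factor is handled symmetrically: the sum over $p$ produces a power of $\Np$, and after the substitution $m=q+r$ (using that $\hvNb$ is even) the bound $\sup_m\sum_q\hvNb(m-q)/q^2\ls N^\beta$, again from \eqref{eq:auxiliary bound for the potential}, yields $\ls N^\beta\norm{\boldKz^{1/2}(\Np+1)^{1/2-\l}\xi}^2$. Multiplying the prefactor $\frac{1}{2N}$ by $(N^\beta)^{1/2}(N^\beta)^{1/2}=N^\beta$ gives the claimed $N^{\beta-1}$.

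I do not anticipate a genuine obstacle here; the two points requiring care are the bookkeeping of the number-operator shifts and, more importantly, the asymmetric choice of the split of $\hvNb(r)$ — the weights $|q|/|q+r|$ versus $|q+r|/|q|$ — which must be arranged so that after Cauchy--Schwarz the two kinetic factors land on the correct sides and both potential sums close through the single estimate \eqref{eq:auxiliary bound for the potential}.
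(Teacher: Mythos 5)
Your proof is correct and follows essentially the same strategy as the paper: move the $(\Np+1)^{\pm\ell}$ weights inside, bound the $c$-factors by Lemma \ref{lem:cp:sp}, split $\hvNb(r)$ with reciprocal weights, apply Cauchy--Schwarz, and close both sums with the potential estimate \eqref{eq:auxiliary bound for the potential}. The only difference is cosmetic: the paper splits with $|p+r|/|q+r|$ and its reciprocal, so both factors close under \eqref{eq:auxiliary bound for the potential} directly, whereas your split $|q|/|q+r|$ requires the extra change of variables $m=q+r$ and the evenness of $\hvNb$ in the second factor, which you handle correctly.
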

\begin{proof}
Since $\vert c_p \vert \ls 1$ by Lemma \ref{lem:cp:sp}, we find with Lemma \ref{lem:aux:potential} 
\begin{align}
\abs{\scp{\psi}{ \FockG_2 \xi}}
&\leq \frac{1}{2N} \bigg( \sum_{\substack{p,q,r\in\Lsp\\ p+r\neq0,q+r\neq0}} \frac{\abs{\widehat{v}_N^\beta (r) }}{(q+r)^2} (p+r)^2 \norm{a_q a_{p+r} \left( \Np +1 \right)^{\l} \psi}^2 \bigg)^{1/2}
\nonumber \\
&\qquad \qquad \times 
\bigg( \sum_{\substack{p,q,r\in\Lsp\\ p+r\neq0,q+r\neq0}} \frac{\abs{\widehat{v}_N^\beta (r)  }}{(p+r)^2} (q+r)^2 \norm{a_p a_{q+r} \left( \Np +1 \right)^{-\l} \xi}^2 \bigg)^{1/2}
\nonumber \\
&\ls N^{\beta - 1}
\norm{ \mathbb{K}_0^{1/2} \left( \Np +1 \right)^{\l + \frac{1}{2}} \psi }
\norm{  \mathbb{K}_0^{1/2} \left( \Np +1 \right)^{-\l + \frac{1}{2}} \xi } .
\end{align}
\end{proof}

\subsection{Analysis of $\FockR_a$}\label{subsec:E_1}

In this section, we show that the remainder term $\mathbb{R}_a$, given by 
\begin{align}\label{E_1}
\FockR_a
&:=\frac1{N}\sum_{\substack{p,q,r\in\Lsp\\p+r\neq0, q+r\neq 0}}\hvNb(r)c_p c_q c_{q+r} s_{p+r}\left(\ad_{q+r}\ad_p\ad_{-(p+r)}a_q+\hc\right)\notag\\
&\quad+ \frac1{2N}\sum_{\substack{p,q,r\in\Lsp\\p+r\neq0, q+r\neq 0}}\hvNb(r) c_{p+r} c_q s_p s_{q+r}\left(\ad_{p+r} \ad_q  \ad_{-p}  \ad_{-(q+r)}+ \hc\right)\nonumber\\
&=:\left(A_{a1}+A_{a2}+\hc\right)\,,
\end{align}
is of order $N^{\beta/2-1}$. 

\begin{lem}\label{lem:Ra}
Let $\psi,\xi\in\Fp$ and $\l_1,\l_2 \in\R$. Then 
\begin{align}
\left|\lr{\psi,\FockR_a\xi}\right| 
&\ls
N^{\frac{\beta}{2}-1}\norm{ \boldKz^{1/2} \left(\Np + 1 \right)^{\frac12 - \l_1} \psi}\norm{\left( \Np + 1 \right)^{1 + \l_1} \xi}
\nonumber \\
&\quad +
N^{\frac{\beta}{2}-1} \norm{\left( \Np + 1 \right)^{1 + \l_2} \psi} 
\norm{ \boldKz^{1/2} \left( \Np + 1 \right)^{\frac12 - \l_2} \xi} .
\end{align}
\end{lem}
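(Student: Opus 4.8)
The plan is to estimate $|\lr{\psi,(A_{a1}+A_{a2})\xi}|$ by the first term on the right-hand side of the claimed bound; the hermitian conjugates in \eqref{E_1} then yield the second term, since $\lr{\psi,A^*\xi}=\overline{\lr{\xi,A\psi}}$ simply exchanges the roles of $\psi$ and $\xi$ (after renaming $\l_1$ as $\l_2$). Each of $A_{a1}$ and $A_{a2}$ is a degree-four monomial in creation and annihilation operators, so each matrix element is controlled by a single Cauchy--Schwarz inequality after bringing the right operators onto $\psi$. For $A_{a1}$ one uses $\lr{\psi,\ad_{q+r}\ad_p\ad_{-(p+r)}a_q\xi}=\lr{a_{-(p+r)}a_pa_{q+r}\psi,\,a_q\xi}$, and for $A_{a2}$, since the four creation operators commute, $\lr{\psi,\ad_{p+r}\ad_q\ad_{-p}\ad_{-(q+r)}\xi}=\lr{a_{-(q+r)}a_qa_{p+r}\psi,\,\ad_{-p}\xi}$. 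In both cases one inserts $(\Np+1)^{a}$ on the $\psi$-slot and $(\Np+1)^{-a}$ on the $\xi$-slot, the commutators of annihilation operators with $\Np$ producing only harmless constant shifts, and $a$ is fixed at the end.

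The crux is the choice of weight split in Cauchy--Schwarz. By Lemma \ref{lem:cp:sp} ($|c_p|\ls1$) the coefficient is bounded by $\hvNb(r)$ times the $s$-factors present ($|s_{p+r}|$ for $A_{a1}$, and $|s_p|\,|s_{q+r}|$ for $A_{a2}$). One then writes $\hvNb(r)=\hvNb(r)^{1/2}\hvNb(r)^{1/2}$ and distributes so that the $\psi$-side receives $\hvNb(r)^{1/2}$ together with exactly one momentum factor, namely $|q+r|$ for $A_{a1}$ and $|p+r|$ for $A_{a2}$, paired with the corresponding annihilation operator; the $\xi$-side receives $\hvNb(r)^{1/2}$, the reciprocal momentum factor, and all the $s$-factors. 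After squaring, on the $\psi$-side the momenta of the three annihilation operators become, after relabelling, independent over $\Lsp$, the factor $\hvNb(r)$ is absorbed by $\norm{\hvNb}_{\ell^\infty}\ls1$, the two unweighted annihilation operators cost a factor $(\Np+1)$, and the single weighted one rebuilds $\boldKz^{1/2}$ via $\sum_k|k|^2\norm{a_k\phi}^2=\norm{\boldKz^{1/2}\phi}^2$; this gives $\norm{\boldKz^{1/2}(\Np+1)^{a+1}\psi}^2$, and the choice $a=-\tfrac12-\l_1$ produces the exponent $\tfrac12-\l_1$.

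On the $\xi$-side one must never sum $\hvNb(r)$ bare over $r$, since $\hvNb$ need not lie in $\ell^1$. Instead the single remaining free summation momentum is absorbed by one $s$-factor through $\sum_p|s_{p+r}|^2=\norm{s}_{\ell^2}^2\ls1$ (for $A_{a1}$) or $\sum_q|s_{q+r}|^2\ls1$ (for $A_{a2}$), uniformly in $r$; what is left is $\sum_r\hvNb(r)/|q+r|^2$, respectively $\sum_r\hvNb(r)/|p+r|^2$, which \eqref{eq:auxiliary bound for the potential} bounds by $N^{\beta}$ uniformly. In the $A_{a2}$ case the second $s$-factor, attached to the creation operator $\ad_{-p}$, is used together with $|s_p|^2\ls|p|^{-4}\ls|p|^{-2}$ and $\sum_p|s_p|^2\ls1$ to dominate $\norm{\ad_{-p}(\Np+1)^{-a}\xi}^2$ by $\norm{(\Np+1)^{1/2-a}\xi}^2$. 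Taking the square root of $N^{\beta}$ and multiplying by the explicit prefactor $N^{-1}$ yields the global factor $N^{\beta/2-1}$, while the $\Np$-exponents assemble into $(\Np+1)^{1+\l_1}$ on the $\xi$-side, matching the claim.

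The main obstacle is precisely this bookkeeping: one must arrange the split so that (a) exactly one momentum factor survives to build $\boldKz^{1/2}$ on one side; (b) each free momentum variable on each side is tamed by an $\ell^2$-summable $s$-factor — this forces the $s$-decay to be used to the second power rather than through the crude bound $|s_k|\ls1$; and (c) every occurrence of $\hvNb$ is paired with an inverse square of a momentum, so that only the supremum-type estimate \eqref{eq:auxiliary bound for the potential} is ever used, which is what keeps the potential contribution at $N^{\beta}$ instead of a higher power. With the split fixed, the remaining steps — relabelling momenta as independent variables, commuting annihilation operators through $\Np$, and applying Cauchy--Schwarz with Lemmas \ref{lem:cp:sp} and \ref{lem:aux:potential} — are routine.
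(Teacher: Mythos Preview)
Your proposal is correct and follows essentially the same approach as the paper: the same three-to-one split of the creation/annihilation operators between $\psi$ and $\xi$, the same Cauchy--Schwarz weight (one momentum factor $|q+r|$ resp.\ $|p+r|$ on the $\psi$-side, all $s$-factors and $\hvNb(r)^{1/2}/|\cdot|$ on the $\xi$-side), and the same use of $\sum_k|s_k|^2\ls1$ together with \eqref{eq:auxiliary bound for the potential} to extract $N^{\beta}$. The only cosmetic difference is that the paper pulls out one factor $\norm{\hat v}_{\ell^\infty}^{1/2}$ before Cauchy--Schwarz, whereas you bound $\hvNb(r)$ by its sup-norm after squaring on the $\psi$-side; these are equivalent.
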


\begin{proof}
For $A_{a1}$, we compute with Lemma \ref{lem:cp:sp} and \eqref{eq:auxiliary bound for the potential}
\begin{align}
&\left|\lr{\psi,A_{a1}\xi}\right|\nonumber\\
&\ls \frac{\norm{\hat{v}}_{\ell^\infty}^\frac12}{N}\bigg(\sum_{p,q,r\in\Lsp}|q+r|^2\norm{a_p a_{-(p+r)} a_{q+r} \left(\Np +1 \right)^{-\frac12 +\l } \psi}\bigg)^\frac12 \notag \\
&\hspace{1cm} \times 
\bigg(\sum_{\substack{p,q,r\in\Lsp\\p+r\neq0, q+r\neq 0}}\frac{\hvNb(r)}{|q+r|^2}|s_{p+r}|^2\norm{a_{q}(\Np+3)^{\frac12 - \l} \xi}^2\bigg)^\frac12 
\nonumber\\
&\ls N^{\frac{\beta}{2}-1}\norm{\mathbb{K}_0^{1/2} \left( \Np + 1 \right)^{\frac12 + \l} \psi}\norm{\left( \Np + 1 \right)^{1 - \l} \xi}
\end{align}
and similarly 
\begin{align}
&\left|\lr{\psi,A_{a2}\xi}\right|\nonumber\\
&\ls \frac{\norm{\hat{v}}_{\ell^\infty}^\frac12}{N}\bigg(\sum_{\substack{p,q,r\in\Lsp\\p+r\neq0, q+r\neq 0}}\frac{\hvNb(r)}{|p+r|^2}|s_p|^2|s_{q+r}|^2\norm{\ad_{-p}(\Np+1)^{\frac12 - \l}\xi}^2\bigg)^\frac12
\nonumber \\
&\quad \times
\bigg(\sum_{\substack{p,q,r\in\Lsp\\p+r\neq0, q+r\neq 0}}|p+r|^2\norm{a_q a_{-(q+r)} a_{p+r} \left(\Np +5 \right)^{ - \frac12 + \l}  \psi}^2\bigg)^\frac12\nonumber\\ 
&\ls N^{\frac{\beta}{2}-1}\norm{ \left( \Np + 1 \right)^{1- \l} \psi}\norm{\mathbb{K}_0^{1/2} \left( \Np + 1 \right)^{\frac{1}{2} + \l}\xi}\,.
\end{align}
The hermitian conjugates can be estimated similarly and we arrive at the desired bounds. 
\end{proof}

\subsection{Analysis of $\FockR_b$}\label{subsec:E_2}

In this section we show that the remainder term $\mathbb{R}_b$ 
\begin{align}
\FockR_b \label{E_2}
&=\frac1{2N}\sum_{\substack{p,q,r\in\Lsp\\p+r\neq0, q+r\neq 0}}\hvNb(r)c_q s_p s_{p+r} s_{q+r} \left(\ad_{-p}\ad_q\ad_{-(q+r)} a_{-(p+r)}+\hc\right)\notag \\
&\quad
+ \frac1{2N}\sum_{\substack{p,q,r\in\Lsp\\p+r\neq0, q+r\neq 0}}\hvNb(r) c_{p+r} s_p s_q  s_{q+r}\left(\ad_{p+r} a_{- q}   \ad_{-p}  \ad_{-(q+r)}+ \hc\right) \notag \\
&\quad+\frac1{2N}\sum_{\substack{p,q,r\in\Lsp\\p+r\neq0, q+r\neq 0}}\hvNb(r)c_pc_qs_{p+r}s_{q+r}\ad_q  \ad_{-(q+r)}a_{-(p+r)} a_p\notag \\
&\quad+\frac1{2N}\sum_{\substack{p,q,r\in\Lsp\\p+r\neq0, q+r\neq 0}}\hvNb(r) c_p c_{p+r} s_q s_{q+r} \left(\ad_{p+r} \ad_{-(q+r)} a_p a_{-q}+\hc\right)\notag \\
&\quad 
+ \frac1{2N}\sum_{\substack{p,q,r\in\Lsp\\p+r\neq0, q+r\neq 0}}\hvNb(r)c_{p+r} c_{q+r}  s_p s_q \ad_{p+r} a_{- q}   \ad_{-p} a_{q+r}\notag \\
&\quad 
+\frac1{2N}\sum_{\substack{p,q,r\in\Lsp\\p+r\neq0, q+r\neq 0}}\hvNb(r) s_p s_{p+r} s_q s_{q+r} a_{-(p+r)} a_{- q}   \ad_{-p}  \ad_{-(q+r)}\notag \\
&\quad +\frac1{2N}\sum_{\substack{p,q\in\Lsp\\p\neq q}}\hvNb(p-q)c_p^2s_q^2\ad_pa_p\notag
\\
&\quad+\frac1{2N}\sum_{\substack{p,q\in\Lsp\\p\neq q}}\hvNb(p-q) c_p  s_ps_{q}^2\left(\ad_{-p}\ad_p+\hc\right)\notag\\
&\quad-\frac1N\sum_{p\in\Lsp}\hvNb c_p s_p\left((c_p^2+s_p^2)\ad_pa_p+c_ps_p(\ad_p\ad_{-p}+\hc)\right)  \notag \\
&=:A_{b1}+A_{b1}^*+A_{b2}+A_{b2}^*+A_{b3}+A_{b4}+A_{b4}^*+A_{b5}+A_{b6}+A_{b7}+A_{b8}+A_{b8}^*+A_{b9}+A_{b9}^*\,.
\end{align}
is of order $N^{-1}$ and thus subleading for our discussion.

\begin{lem}\label{lem:Rb}
Let $\psi,\xi\in\Fp$ and $\l \in\R$. Then,
\begin{equation}
\left|\lr{\psi,\FockR_b\xi}\right| \ls \frac{1}{N}\norm{(\Np+1)^{1+\l}\psi}\norm{(\Np+1)^{1-\l}\xi}
\end{equation}
\end{lem}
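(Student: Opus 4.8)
I will prove the bound term by term, treating each of the fourteen summands $A_{b1},A_{b1}^*,\dots,A_{b9},A_{b9}^*$ in the explicit decomposition of $\FockR_b$ separately; in each case the factor $N^{-1}$ in the asserted estimate comes directly from the explicit $\tfrac1N$ (or $\tfrac1{2N}$) in front of the term, so it suffices to show that the remaining momentum sum is bounded by $\norm{(\Np+1)^{1+\l}\psi}\,\norm{(\Np+1)^{1-\l}\xi}$ (after a relabelling of $\l$). By symmetry under $\psi\leftrightarrow\xi$, $\l\leftrightarrow-\l$ it is enough to treat the non-adjoint representatives.

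For a fixed term I would first move all creation operators into the left slot by taking adjoints, so that $\lr{\psi,A_{bj}\xi}$ becomes a momentum sum of inner products of the form $\lr{(a_{\alpha_1}\cdots a_{\alpha_m})\psi,(a_{\beta_1}\cdots a_{\beta_n})\xi}$ with $m+n\le4$ (bringing $A_{b2},A_{b4},A_{b5},A_{b6}$ into normal order first, the resulting commutator contributions being of strictly lower order and treated exactly as below). Commuting the number-operator weights $(\Np+1)^{\pm\l}$ through the creation and annihilation operators produces harmless shifts, which is unproblematic since the lemma allows arbitrary $\l\in\R$; since $m+n\le4$ the total degree in $\Np+1$ is at most $2$, matching $(\Np+1)^{1+\l}\otimes(\Np+1)^{1-\l}$. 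I then apply the Cauchy--Schwarz inequality in the momentum variables, distributing the uniformly bounded factors $|\widehat v_N^\beta|\ls1$ and $|c_\bullet|\ls1$ (Lemma~\ref{lem:cp:sp}) and the decaying factors $|s_\bullet|\ls|\bullet|^{-2}$ between the two slots. The structural point is that every $A_{bj}$ carries at least two $s$-factors, and the three summation indices $p,q,r$ can be changed bijectively (e.g.\ $u=q+r$, $w=p+r$) so that the $m$ momenta of the operators acting on $\psi$ determine all but one of the summation directions, and likewise for the $n$ momenta of the operators acting on $\xi$; moreover, after this change of variables, one $s$-factor depends on a coordinate parametrising the free direction on the $\psi$-side and another on the free direction on the $\xi$-side. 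Assigning these $s$-factors to the respective slots, I sum out the free direction using $\sum_p|s_p|^2\ls1$, and collapse the remaining (now bijectively reparametrised) momenta via $\sum_{\alpha_1,\dots,\alpha_m}\norm{a_{\alpha_1}\cdots a_{\alpha_m}\phi}^2\ls\norm{(\Np+1)^{m/2}\phi}^2$. The diagonal-type terms $A_{b7},A_{b8},A_{b9}$ are even simpler: the extra momentum sum is literally $\sum_q|s_q|^2\ls1$ or $\sum_p|p|^{-4}\ls1$, and the operator part has degree at most $2$.

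The main obstacle is purely bookkeeping: one must check, term by term, that the two available $s$-factors are positioned so that after a suitable bijective change of the three summation variables they dominate the free momentum directions on \emph{both} sides at once. For the terms carrying only two $s$-factors ($A_{b3},A_{b4},A_{b5}$) this pairing is essentially forced and must be identified correctly, whereas for $A_{b1},A_{b2}$ (three $s$-factors) and $A_{b6}$ (four $s$-factors) there is room to spare. It is worth noting that, in contrast to $\FockR_a$ and $\FockG_2$, no kinetic-energy factor $\boldKz$ is needed here: the $|s_\bullet|\ls|\bullet|^{-2}$ decay already supplies all required summability, which is exactly why the bound of the lemma is stated purely in terms of $\Np$. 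Collecting the estimates for all fourteen terms and absorbing the various shifts using $(\Np+1)^{a}\le(\Np+1)^{1+\l}$ for $a\le1+\l$ yields the claimed bound $\left|\lr{\psi,\FockR_b\xi}\right|\ls\tfrac1N\norm{(\Np+1)^{1+\l}\psi}\,\norm{(\Np+1)^{1-\l}\xi}$.
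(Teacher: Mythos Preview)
Your proposal is correct and follows essentially the same route as the paper's proof: both proceed term by term, normal-order where necessary (the paper does this explicitly for $A_{b2}$, $A_{b5}$, $A_{b6}$; note that $A_{b4}$ is already normal-ordered, so your listing it there is a harmless slip), apply Cauchy--Schwarz in the momentum indices, place the $s$-factors so that $\sum_p|s_p|^2\ls1$ absorbs the free summation direction on each side, and close with $\sum\norm{a_{\alpha_1}\cdots a_{\alpha_m}\phi}^2\ls\norm{(\Np+1)^{m/2}\phi}^2$. Your write-up is more conceptual---you articulate the structural reason the scheme works (at least two $s$-factors per term, positioned to dominate the free directions after a bijective change of variables)---whereas the paper simply displays the explicit Cauchy--Schwarz splitting for each $A_{bj}$; but the mechanics are identical.
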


\begin{proof}
We estimate all the contributions of $\mathbb{R}_b$ separately. Using Lemma \ref{lem:cp:sp}, we start with 
\begin{align}
\left|\lr{\psi,A_{b1}\xi}\right|
&\leq \frac{\norm{\hat{v}}_{\ell^\infty}\norm{s}_{\ell^\infty}}{2N}
\bigg(\sum_{p,q,r\in\Lsp}\norm{a_{-p}a_qa_{-(q+r)} \left( \Np + 1 \right)^{\l-\frac12}\psi}^2\bigg)^\frac12
\nonumber \\
&\quad \times 
\bigg(\sum_{\substack{p,q,r\in\Lsp\\q+r\neq 0}}|s_p|^2|s_{q+r}|^2\norm{a_{-(p+r)}(\Np+3)^{-\l + \frac12}\xi}^2\bigg)^\frac12\nonumber\\
&\ls \frac{1}{N}\norm{(\Np+1)^{\l+1} \psi}\norm{(\Np+1)^{-\l + 1} \xi}\,.
\end{align}
Since $[a_q,\ad_{-(q+r)}]=0$ for $r\neq 0$,   normal ordering yields
\begin{align}
\left|\lr{\psi,A_{b2}\xi}\right|
&\leq\frac{\norm{\hat{v}}_{\ell^\infty}}{2N}\sum_{\substack{p,q,r\in\Lsp\\q+r\neq 0}}|s_p| |s_q| |s_{q+r}|\left|\lr{\psi, \left( \Np + 1 \right)^{\l} \ad_{p+r} \ad_{-p} \ad_{-(q+r)} a_{-q} \left( \Np + 3 \right)^{-\l} \xi}\right| \nonumber\\
&\quad+ \frac{\norm{\hat{v}}_{\ell^\infty}\norm{s}_{\ell^\infty}}{2N}\sum_{p,r\in\Lsp}|s_p|^2 \left|\lr{\psi,\left( \Np + 1 \right)^{\l} \ad_{p+r}\ad_{-(p+r)} \left( \Np + 3 \right)^{-\l} \xi}\right|\nonumber\\
&\ls \frac{1}{N}\bigg(\sum_{p,q,r\in\Lsp}|s_q|^2\norm{a_{-p}a_r \left( \Np + 1 \right)^{\l} \psi}^2\bigg)^\frac12 \bigg(\sum_{p,q,r\in\Lsp}|s_p|^2|s_r|^2\norm{(\Np+1)^{-\l + \frac{1}{2}} a_q\xi}^2\bigg)^\frac12\nonumber\\
&\quad + \frac{1}{N}\sum_{p,q\in\Lsp}\norm{a_q \left( \Np + 1 \right)^\l \psi}\norm{(\Np+1)^{- \l + \frac{1}{2}} \xi}\nonumber\\
&\ls \frac{1}{N}\norm{(\Np+1)^{\l+ 1} \psi}\norm{(\Np+1)^{-\l+1} \xi}\,.
\end{align}
Moreover,
\begin{align}
&\left|\lr{\psi,A_{b3}\xi}\right| \notag \\
&\quad \ls\frac{\norm{\hat{v}}_{\ell^\infty}}{N}\sum_{\substack{p,q,r\in\Lsp\nonumber\\ p+r\neq0,q+r\neq0}}|s_{q+r}||s_{p+r}|\norm{a_{-(q+r)}a_q \left( \Np + 1 \right)^{\l} \psi}\norm{a_{-(p+r)}a_p \left( \Np + 1 \right)^{-\l} \xi}\nonumber\\
&\quad \ls \frac{1}{N}\bigg(\sum_{\substack{p,q,r\in\Lsp\\ p+r\neq0}}|s_{p+r}|^2\norm{a_{-(q+r)}a_q \left( \Np + 1 \right)^\l \psi}^2\bigg)^\frac12 
\bigg(\sum_{\substack{p,q,r\in\Lsp\\ q+r\neq0}}|s_{q+r}|^2\norm{a_{-(p+r)}a_p ( \Np + 1 )^{-\l} \xi}^2\bigg)^\frac12\nonumber\\
&\quad \ls \frac{1}{N}\norm{( \Np + 1 )^{\l+1} \psi}\norm{\left( \Np + 1 \right)^{-\l + 1} \xi}\, . 
\end{align}
The next term can be estimated as
\begin{align}
& \left|\lr{\psi,A_{b4}\xi}\right| \notag \\
&\quad\ls \frac{\norm{\hat{v}}_{\ell^\infty}}{2N}\bigg(\sum_{p,q,r\in\Lsp}|s_q|^2\norm{a_{p+r} a_{-(q+r)} \left( \Np + 1 \right)^{\l} \psi}^2\bigg)^\frac12 \bigg(\sum_{\substack{p,q,r\in\Lsp\\q+r\neq 0}}|s_{q+r}|^2\norm{a_pa_{-q}\left( \Np + 1 \right)^{-\l} \xi}^2\bigg)^\frac12\nonumber\\
&\quad\ls \frac{1}{N}\norm{\left( \Np + 1 \right)^{\l+1} \psi}\norm{\left( \Np + 1 \right)^{-\l+1}\xi}\,.
\end{align}
Normal ordering yields
\begin{align}
&\left|\lr{\psi,A_{b5}\xi}\right| \notag \\
&\quad \ls \frac{\norm{\hat{v}}_{\ell^\infty}}{2N}\bigg(\sum_{p,q,r\in\Lsp}|s_q|^2\norm{a_{p+r}a_{-p} \left( \Np + 1 \right)^{\l} \psi}^2\bigg)^\frac12 \bigg(\sum_{p,q,r\in\Lsp}|s_p|^2\norm{a_{-q}a_{q+r} \left( \Np + 1 \right)^{-\l}\xi}^2\bigg)^\frac12\nonumber\\
&\quad \quad+ \frac{\norm{\hat{v}}_{\ell^\infty}}{2N}\sum_{p\in\Lsp}|s_p|^2\sum_{r\in\Lsp}\norm{a_{p+r} \left( \Np + 1 \right)^{\l} \psi}\norm{a_{p+r}  \left( \Np + 1 \right)^{-\l} \xi}\nonumber\\
&\quad\ls\frac{1}{N}\norm{\left( \Np + 1 \right)^{\l+1}\psi}\norm{\left( \Np + 1 \right)^{-\l+1} \xi}\,
\end{align}
and 
\begin{align}
&\left|\lr{\psi,A_{b6}\xi}\right| \notag \\
&\quad \ls \frac{\norm{\hat{v}}_{\ell^\infty}\norm{s}^2_{\ell^\infty}}{2N}\bigg(\sum_{p,q,r\in\Lsp}|s_q|^2\norm{a_{-p}a_{-(q+r)} \left( \Np + 1 \right)^{\l} \psi}^2\bigg)^\frac12 
\bigg(\sum_{p,q,r\in\Lsp}|s_p|^2\norm{a_{-q} a_{-(p+r)} \left( \Np + 1 \right)^{-\l} \xi}^2\bigg)^\frac12\nonumber\\
&\quad +\frac{\norm{\hat{v}}_{\ell^\infty}}{2N}\sum_{p,r\in\Lsp}|s_p|^2|s_{p+r}|^2\norm{(\Np+1)^{\l+\frac{1}{2}} \psi}\norm{(\Np+1)^{-\l+\frac{1}{2}} \xi}\nonumber\\
&\quad \ls\frac{1}{N}\norm{(\Np+1)^{\l+1} \psi}\norm{(\Np+1)^{-\l + 1} \xi}\,.
\end{align}
The remaining contributions of $\mathbb{R}_b$ can be estimated similarly.
\end{proof}

\subsection{Analysis of $\FockR_c$}\label{subsec:sqrt}
We will show that the remainder $\mathbb{R}_c$, given by 
\begin{align*}
\mathbb{R}_c = 
\FockT\tilde{\FockR}_{\sqrt{}}\FockT^*+\frac{1}{N}\sum_{p\in\Lsp}\hvNb(p)c_ps_p\FockT(\Np+\frac12)\FockT^* 
\end{align*}
with
\begin{align*}
\tilde{\FockR}_{\sqrt{}}
=& 
\left(\left[\frac{N-\Np}{N}\right]_+ -1\right)\boldKo \notag \\
&+\left(\boldKt    \frac{\sqrt{\left[(N-\Np)(N-\Np-1)\right]_+} - N}{N} 
+  \hc\right) \nonumber\\
&+  \left(\boldKth\frac{\sqrt{\left[N-\Np\right]_+} - \sqrt{N}}{N}+ \hc\right)\notag \\
=:& A_{c_1} + A_{c_2} +A_{c_2}^* + A_{c_3} + A_{c_3}^* 
\end{align*} 
is of order $N^{3(\beta-1)/2}$ for all $\beta>1/2$.  

\begin{lem}\label{lem:R2}
Let $\psi, \xi \in \mathcal{F}_\perp$ and $\l_1,\l_2,\l_3 \in \mathbb{R}$. Then
\begin{align}
\abs{\scp{\psi}{\FockR_c \xi }}
&\quad \ls  N^{ \beta/2-1 } \norm{(\mathbb{K}_0 +1 )^{1/2} \left(\Np + 1 \right)^{\l_1 + 1/2} \psi} \norm{\left( \Np + 1 \right)^{-\l_1+1} \xi} 
\nonumber \\
&\qquad +  N^{\beta/2-1 } 
\norm{\left( \Np + 1 \right)^{-\l_2+1} \psi}
\norm{(\mathbb{K}_0 +1 )^{1/2} \left(\Np + 1 \right)^{\l_2 + 1/2} \xi}
\nonumber \\
&\qquad +  \left( N^{-1} + N^{\frac32(\beta- 1) } \right)
\norm{\left( \Np + 1 \right)^{\l_3+ \frac{5}{4}} \psi }
\norm{\left( \Np + 1 \right)^{- \l_3 + \frac{5}{4}}  \xi }\,,
\end{align}
which, for $\beta\in(1/2,1)$, simplifies to
\begin{align}
\left|\lr{\psi,\FockR_c\xi}\right|
&\ls   N^{\frac32(\beta- 1) }  
\norm{(\mathbb{K}_0 +1 )^{1/2}\left(\Np + 1 \right)^{3/4} \psi}
\norm{(\mathbb{K}_0 +1 )^{1/2} \left(\Np + 1 \right)^{3/4} \xi} .
\end{align}
\end{lem}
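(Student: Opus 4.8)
The plan is to estimate separately the three summands $A_{c_1}=\big(\big[\tfrac{N-\Np}{N}\big]_+-1\big)\boldKo$, $A_{c_2}+A_{c_2}^*=\boldKt\,g_2(\Np)+g_2(\Np)\boldKtbar$ and $A_{c_3}+A_{c_3}^*=\boldKth\,g_3(\Np)+g_3(\Np)\boldKthbar$ of $\FockT\tilde{\FockR}_{\sqrt{}}\FockT^*$ (with $g_2,g_3$ the square-root coefficients of \eqref{def:R_sqrt}), together with the counterterm $\tfrac1N\sum_{p\in\Lsp}\hvNb(p)c_ps_p\FockT(\Np+\tfrac12)\FockT^*$ that appears in the definition \eqref{def:R_sqrt:new} of $\mathbb{R}_c$. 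The first step is to Taylor-expand the $\Np$-dependent coefficients around their leading behaviour: one has $\big[\tfrac{N-\Np}{N}\big]_+-1=-\tfrac{\Np}{N}$ on $\FNp$, and the elementary pointwise bounds
\begin{align*}
\Big| g_2(k) + \tfrac{k+1/2}{N} \Big| \ls \tfrac{(k+1)^2}{N^2}\,,\quad \Big| g_3(k) + \tfrac{k}{2N^{3/2}} \Big| \ls \tfrac{(k+1)^2}{N^{5/2}}\,,\quad |g_2(k)|\ls\tfrac{k+1}{N}\,,\quad |g_3(k)|\ls\tfrac{k+1}{N^{3/2}}
\end{align*}
hold for all $k\geq0$ (the sectors $\FgNp$, where the positive parts simplify the coefficients and where $\id^{>N}\le(\Np/N)^m$ supplies arbitrarily many extra powers of $N^{-1}$, are absorbed into these bounds). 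This splits each $A_{c_i}$ into a leading term, linear in $\Np$ with prefactor of size $N^{-1}$ for $i=1,2$ and $N^{-3/2}$ for $i=3$, plus a remainder with coefficient of size $\Np^2/N^2$ (resp.\ $\Np^2/N^{5/2}$).

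The crucial point is the cancellation of the leading term of $A_{c_2}+A_{c_2}^*$ against the counterterm. Conjugating $-\tfrac1N\boldKt(\Np+\tfrac12)$ by $\FockT$ gives $-\tfrac1N(\FockT\boldKt\FockT^*)\,\FockT(\Np+\tfrac12)\FockT^*$, and \eqref{eq:actionbogo} yields
\begin{align*}
\FockT\boldKt\FockT^* = \tfrac12\sum_{p\in\Lsp}\hvNb(p)\Big(c_p^2\,\ad_p\ad_{-p} + c_ps_p(\ad_pa_p+\ad_{-p}a_{-p}) + s_p^2\,a_pa_{-p}\Big) + \tfrac12\sum_{p\in\Lsp}\hvNb(p)c_ps_p\,,
\end{align*}
whose scalar part, after adding the contribution of $A_{c_2}^*$, produces exactly $-\tfrac1N\sum_{p\in\Lsp}\hvNb(p)c_ps_p\FockT(\Np+\tfrac12)\FockT^*$; this is annihilated by the counterterm. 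The cancellation is indispensable because, by Lemma~\ref{lem:cp:sp} and Lemma~\ref{lem:aux:potential}, $\sum_{p\in\Lsp}|\hvNb(p)|\,|p|^{-2}\ls N^\beta$, so the uncancelled scalar alone would be of order $N^{\beta-1}$, larger than the claimed bound for $\beta\in(1/2,1)$.

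Once this cancellation is taken into account, every remaining contribution is genuinely subleading and is estimated by Cauchy--Schwarz in the momentum sums, using Lemma~\ref{lem:T:Number} to replace $\FockT$-dressed powers of $\Np$ by bare ones (which is what the free shift parameters $\ell_1,\ell_2,\ell_3$ are for). The $c_p^2\,\ad_p\ad_{-p}$-piece of $\tfrac1N\FockT\boldKt\FockT^*\FockT(\Np+\tfrac12)\FockT^*$ must absorb a factor $\hvNb(p)/|p|$, and since $\sum_{p\in\Lsp}\hvNb(p)^2/|p|^2\ls N^\beta$ (Lemma~\ref{lem:aux:potential}) this costs $N^{\beta/2}$ but forces a kinetic-energy weight $(\boldKz+1)^{1/2}$ onto one of $\psi,\xi$: this produces the two $N^{\beta/2-1}$ terms. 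The $c_ps_p$- and $s_p^2$-pieces have summable coefficients, and together with the $A_{c_1}$ term $-\tfrac1N(\FockT\boldKo\FockT^*)\FockT\Np\FockT^*$ and all the Taylor remainders they give only $N^{-1}$ terms. Finally, the leading piece $-\tfrac1{2N^{3/2}}(\FockT\boldKth\FockT^*)\FockT\Np\FockT^*$ of $A_{c_3}+A_{c_3}^*$ is controlled either via the identity $\tfrac1{\sqrt N}\FockT\boldKth\FockT^*=\sum_{i=1}^{8}A_{d_i}$ underlying \eqref{eq:tildeG1} together with Lemma~\ref{lem:G1}, or more crudely via $\norm{\boldKth(\Np+1)^{-3/2}}_\op\ls\norm{\hvNb}_{\ell^2}\ls N^{3\beta/2}$; this yields the bound $N^{3\beta/2-3/2}=N^{\frac32(\beta-1)}$ times a product of $(\Np+1)^{5/4}$-norms, the exponent $5/4$ being the even split of the total weight $\Np^{5/2}$. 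Collecting all estimates and, for $\beta\in(1/2,1)$, using $\beta/2-1<\tfrac32(\beta-1)$, $-1<\tfrac32(\beta-1)$ and the operator inequality $\boldKz+1\gs\Np+1$ to absorb the various weights, gives the simplified bound. I expect the main obstacle to be the bookkeeping in this final collection step: deciding which of the many monomials genuinely needs a kinetic-energy factor — precisely those keeping a bare $\hvNb(p)\ad_p\ad_{-p}$-structure, for which $\hvNb$ itself is not $\ell^2$-small but only $\hvNb/|p|$ is controlled — and verifying that every monomial carrying a large prefactor $N^\beta$ or $N^{3\beta/2}$ is either cancelled by the counterterm or multiplied by enough powers of $N^{-1}$.
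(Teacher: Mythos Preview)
Your approach is essentially the paper's: the crucial cancellation of the scalar part of $\FockT\boldKt\FockT^*$ against the counterterm is exactly what drives the argument, and the paper handles $A_{c_3}$ by your crude option~(b), bounding $\sum_p\hvNb(p)^2\lesssim N^{3\beta}$ directly rather than going through Lemma~\ref{lem:G1}.

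There is one organizational difference worth noting. The paper only Taylor-expands where the cancellation forces it (the scalar part of $A_{c_2}$); for the non-scalar operator pieces $c_p^2\ad_p\ad_{-p}$, $c_ps_p\ad_pa_p$, $s_p^2a_pa_{-p}$ it keeps the \emph{full} coefficient $g_1(\Np)$ and uses only the bound $|g_1(\Np)|\lesssim(\Np+1)/N$. Your plan to Taylor-expand everything first and then estimate all monomials runs into a small trap you half-anticipate in your last sentence: the Taylor remainder of the $c_p^2\ad_p\ad_{-p}$-piece yields a term of size
\[
N^{\beta/2-2}\,\norm{(\boldKz+1)^{1/2}(\Np+1)^{\ell+1/2}\psi}\,\norm{(\Np+1)^{-\ell+2}\xi},
\]
which carries a kinetic factor \emph{and} total $\Np$-weight $5/2$, so it matches neither the first line of the lemma (kinetic, but weight $3/2$) nor the third (weight $5/2$, but no kinetic). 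The fix is immediate --- either do not Taylor-expand this piece at all, or observe that the Taylor remainder also obeys the coarser bound $|R|\lesssim(\Np+1)/N$, so the second-order precision is simply not needed there --- but it confirms your expectation that the bookkeeping rather than the ideas is where the care is required.
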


\begin{proof}
In order to estimate the first contribution $\mathbb{T} A_{c_1} \mathbb{T}^*$, we first observe that the function $g_0 (\mathcal{N}_\perp):= \left[\frac{N-\Np}{N}\right]_+ -1$ satisfies 
\begin{align}
g_0^2 (\mathcal{N}_\perp) \ls \mathcal{N}_\perp /N 
\end{align}
as operator inequality on $\mathcal{F}_\perp$. Thus, 
\begin{align}
\abs{\scp{\psi}{\mathbb{T} A_{c_1} \mathbb{T}^* \xi}}
&\leq  \sum_{p\in\Lsp} \hvNb(p) \abs{\scp{\psi}{ \mathbb{T}\left( \Np +1 \right)^{\l - \frac{1}{2}}   g_0 (\mathcal{N}_\perp ) \ad_p a_p  \left( \Np +1 \right)^{-\l + \frac{1}{2}} \mathbb{T}^*\xi}}
\nonumber \\
&\leq \norm{\widehat{v}}_{\l^{\infty} (\Lsp)}  \sum_{p\in\Lsp} \norm{ a_pg_0( \mathcal{N}_\perp) \left( \Np +1 \right)^{\l - \frac{1}{2}}\mathbb{T}^* \psi} \norm{ a_p \left( \Np +1 \right)^{-\l + \frac{1}{2}} \xi} 
\nonumber \\
&\ls \norm{ \left( \Np +1 \right)^{\l+1/2} \psi} \norm{\left( \Np +1 \right)^{-\l + 1}  \xi}\,,
\end{align}
where we used Lemma \ref{lem:T:Number} for the last line. 
For the next term $A_{c_2}$, we compute the action of the Bogoliubov transformation on the annihilation and creation operators 
\begin{align}
\FockT \boldKt \FockT^*
&=  \frac{1}{2}
\sum_{p\in\Lsp} \widehat{v}_N^\beta (p)
\FockT  \ad_p\ad_{-p} \FockT^*
\nonumber \\
&= \frac{1}{2}
\sum_{p\in\Lsp} \widehat{v}_N^\beta (p)
\left(
c_p s_p + c_p^2  \ad_p \ad_{-p} 
+ 2 c_p s_p  \ad_p a_p 
+ s_p^2 a_p a_{-p}
\right) .
\end{align}
We denote $g_1(\Np):=\frac{\sqrt{\left[(N-\Np)(N-\Np-1)\right]_+} }{N} -1$ and observe that $g_1^2(\Np)\ls N^{-1}\Np$. 
Since $\sum_{p\in\Lsp} \widehat{v}_N^\beta (p)c_p s_p \ls  N^{\beta}$
by Lemma \ref{lem:aux:potential}, we obtain the bound
\begin{align}\label{eq:estimate l-1 norm potential with c-p and s-p}
&\abs{\scp{\psi}{\left(\mathbb{T} A_{c_2}\mathbb{T}^* - \frac{1}{2}
\sum_{p\in\Lsp} \hvNb(p)
c_p s_p \FockT g_1(\Np) \FockT^* \right)  \xi}}
\nonumber \\
&\quad=\abs{\scp{\psi}{\bigg( \FockT \boldKt \FockT^* - \frac{1}{2}
\sum_{p\in\Lsp} \widehat{v}_N^\beta (p) 
c_p s_p \bigg)\FockT g_1(\Np)\FockT^* \xi}}
\nonumber \\
&\quad \ls
\sum_{p\in\Lsp} \widehat{v}_N^\beta (p)
\abs{\scp{a_p a_{-p} \left( \Np +1 \right)^\l \psi}{ \left( \Np + 3 \right)^{-\l} \FockT g_1(\Np)\FockT^*\xi}}\nonumber\\
&\qquad + \sum_{p\in\Lsp} \abs{s_p} \widehat{v}_N^\beta (p)
\abs{\scp{\ad_p a_p \left(\Np + 1 \right)^\l \psi}{ \left(\Np + 1 \right)^{-\l} \FockT g_1(\Np)\FockT^*\xi}}
\nonumber \\
&\qquad
+ \sum_{p\in\Lsp} \abs{s_p}^2  \widehat{v}_N^\beta (p)
\abs{\scp{\ad_p \ad_{-p} \left( \Np + 3 \right)^{\l} \psi}{ \left( \Np + 1 \right)^{-\l} \FockT g_1(\Np)\FockT^*\xi}}
\nonumber \\
&\quad \ls 
\sum_{p\in\Lsp} \abs{p}^{-1} \widehat{v}_N^\beta (p)
\norm{ \abs{p} a_p \left( \Np +1 \right)^{\l + \frac{1}{2}} \psi} \norm{\left( \Np + 1 \right)^{-\l} \FockT g_1(\Np)\FockT^*\xi}
\nonumber \\
&\qquad
+\norm{\left( \Np + 1 \right)^{\l+1} \psi} \norm{\left( \Np + 1 \right)^{-\l} \FockT g_1(\Np)\FockT^*\xi}
\nonumber \\
&\ls N^{\beta/2} \norm{\mathbb{K}_0^{1/2} \left( \Np +1 \right)^{\l + \frac{1}{2}} \psi} \norm{\left( \Np + 1 \right)^{-\l} \FockT g_1(\Np)\FockT^*\xi}\nonumber\\
&\ls N^{\beta/2-1}\norm{\mathbb{K}_0^{1/2} \left( \Np +1 \right)^{\l + \frac{1}{2}} \psi} \norm{\left( \Np + 1 \right)^{1-\l}\xi}
\end{align}
by Lemma \ref{lem:T:Number}. Next we show that the additional term subtracted from $A_{c_2}$ above together with the last term of the r.h.s. of \eqref{def:R_sqrt:new} is small. For this we observe that 
\begin{align}
g_1(\Np) = -\frac{\Np + \frac{1}{2}}{N} +N^{-2}\widetilde{R}_1^{(2)}
\quad \text{with} \quad 
\norm{\widetilde{R}_1^{(2)} \psi} &\leq C \norm{\left( \Np + 1 \right)^2 \psi}
\end{align}
 and by Lemma \ref{lem:T:Number} 
\begin{align}
\abs{\scp{\psi}{\left( \FockT g_1(\Np) \FockT^* + N^{-1} \FockT \left( \Np + 1/2 \right) \FockT^* \right) \xi }}
 &= \abs{\scp{\psi}{\FockT \left(  g_1(\Np)  + N^{-1}  \left( \Np + 1/2 \right) \right) \FockT^* \xi }}
\nonumber \\
& \ls N^{-2} \norm{\left( \Np + 1 \right)^{\l+1}  \psi}
\norm{\left( \Np + 1 \right)^{-\l+1}   \xi} \; . 
\end{align}
In combination with \eqref{eq:estimate l-1 norm potential with c-p and s-p} this leads to
\begin{align}
&\frac{1}{2} \abs{\scp{\psi}{ \sum_{p\in\Lsp} \hvNb(p)
c_p s_p \left(  \FockT g_1(\Np) \FockT^* +  N^{-1} \FockT \left( \Np + 1/2 \right) \FockT^* \right) \xi}}
\nonumber \\
&\quad \ls N^{-2 + \beta} \norm{\left( \Np + 1 \right)^{\l+1}  \psi}
\norm{\left( \Np + 1 \right)^{-\l+1}  \xi}
\end{align} 
and thus altogether 
\begin{align}
&\abs{\scp{\psi}{\left( \mathbb{T}A_{c_2}\mathbb{T}^* + \frac{1}{2} N^{-1}
\sum_{p\in\Lsp} \hvNb(p)
c_p s_p \FockT \left( \Np + 1/2 \right) \FockT^* \right)  \xi}}
\nonumber \\
&\ls  N^{-1 + \beta/2} \norm{ \mathbb{K}_0^{1/2} \left(\Np + 1 \right)^{\l + 1/2} \psi} \norm{\left( \Np + 1 \right)^{-\l+1} \xi} .
\end{align}
The hermitian conjugate can be bounded similarly. 

Finally, observing that $g_2(\Np)^2\ls N^{-3/2}\Np$ for $g_2(\Np):=\frac{\sqrt{\left[N-\Np\right]_+} - \sqrt{N} }{N}$ and using Lemma \ref{lem:T:Number} again, we obtain 
\begin{align}
&\abs{\scp{\psi}{ \mathbb{T} A_{c_3}\mathbb{T}^*\xi}}
\nonumber \\
&\quad \ls  \sum_{p,q \in \Lsp} 
\hvNb(p)
\abs{\scp{\psi}{ \FockT \left( \Np + 1 \right)^{\l + 1/4} \ad_{p+q} \ad_{-p} a_q  g_2(\Np) (\Np + 2)^{-\l -1/4} \FockT^* \xi}}
\nonumber \\
&\quad \ls \left( \sum_{p,q \in \Lsp} \norm{a_{-p} a_{p+q} \left( \Np + 1 \right)^{\l + 1/4} \FockT^* \psi }^2 \right)^{1/2}\nonumber\\
&\quad \times
\left( \sum_{p,q \in \Lsp} \hvNb(p)
\norm{a_q g_2(\Np) (\Np + 2)^{-\l -1/4} \FockT^* \xi}^2 \right)^{1/2}
\nonumber \\
&\quad \ls C N^{3(- 1 + \beta)/2 }
\norm{\left( \Np + 1 \right)^{\l + 5/4} \psi }
\norm{\left( \Np + 1 \right)^{- \l + 5/4}  \xi }  \; .
\end{align}
As the hermitian conjugate can be estimated similarly, this concludes the proof of the lemma. 
\end{proof}

\section{Estimates for the excitation vector}\label{sec:excitation:vector}
The goal of this section is to prove that the quantity
$$\lr{\Chi,(\boldKz+1)(\Np+1)^\l\Chi}$$
is bounded uniformly in $N$. To show this, we first need a bound on the kinetic energy in terms of the generator $\FockG$.

\subsection{Estimate of the kinetic energy in terms of $\FockG$}

\begin{lem}\label{lem:G}
\begin{align}
\boldKz\ls \FockG + \left( \Np + 1 \right)^4 .
\end{align}
\end{lem}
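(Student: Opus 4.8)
The statement to prove is $\boldKz \ls \FockG + (\Np+1)^4$. My strategy is to run the chain of estimates backwards through the decomposition established in Proposition~\ref{prop:renormalized:hamiltonian}, namely $\FockG = \FockGz + \FockG_1 + \FockG_2 + \FockR_2$. The idea is that the dominant positive piece controlling the kinetic energy is hidden inside $\FockGz$: indeed $\FockGz = \sum_p F_p \ad_p a_p + \tfrac12\sum_p G_p(\ad_p\ad_{-p}+a_pa_{-p})$, and by Lemma~\ref{lemma:propF,G} we have $F_p \geq p^2/2$ and $|G_p|\ls p^{-2}$, so a standard completion-of-the-square/Cauchy--Schwarz argument gives $\FockGz \geq c\,\boldKz - C(\Np+1)$ as an operator inequality on $\Fp$: the off-diagonal term $\tfrac12\sum_p G_p(\ad_p\ad_{-p}+a_pa_{-p})$ is bounded below by $-\varepsilon\sum_p p^2 \ad_pa_p - \varepsilon^{-1}\sum_p G_p^2 p^{-2}(\Np+1)$, and since $\sum_p G_p^2/p^2 \ls \sum_p p^{-6}<\infty$ and $\sum_p G_p^2 < \infty$, the error is $\ls (\Np+1)$. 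Choosing $\varepsilon$ small, $\FockGz \gtrsim \boldKz - C(\Np+1)$.

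Next I would absorb the remaining pieces $\FockG_1$, $\FockG_2$, $\FockR_2$ into the kinetic term plus powers of the number operator, using the bilinear estimates already proven: Lemma~\ref{lem:G1} gives $|\lr{\psi,\widetilde\FockG_1\psi}| \ls N^{(\beta-1)/2}\norm{(\boldKz+1)^{1/2}(\Np+1)^{\l_1}\psi}\norm{(\Np+1)^{-\l_1+1}\psi}$ (and the analogous, smaller bound for $\FockG_1$ via $\mathbb{R}_d$); Lemma~\ref{lem:G2} controls $\FockG_2$ by $N^{\beta-1}\norm{\boldKz^{1/2}(\Np+1)^{1/2+\l}\psi}\norm{\boldKz^{1/2}(\Np+1)^{1/2-\l}\psi}$; and the remainder bound in Proposition~\ref{prop:renormalized:hamiltonian} controls $\FockR_2$ by $N^{3(\beta-1)/2}$ times products of $(\boldKz+1)^{1/2}(\Np+1)^{3/4\mp\l}$ and $(\Np+1)^{5/4\pm\l}$ norms. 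For each of these, I would apply Cauchy--Schwarz in the form $ab \leq \delta a^2 + \delta^{-1}b^2$ with a small $\delta$ on the factor carrying $(\boldKz+1)^{1/2}$, so that the $\boldKz$-part is reabsorbed into $\varepsilon\boldKz$ from the $\FockGz$ lower bound (this is where the factors $N^{(\beta-1)/2}$, $N^{\beta-1}$, $N^{3(\beta-1)/2}\to 0$ matter — they make $\delta$ small automatically for large $N$), while the other factor leaves behind a power of $(\Np+1)$. Tracking the worst power: $\FockR_2$ produces a term like $(\Np+1)^{5/4+\l}$ paired with $(\boldKz+1)^{1/2}(\Np+1)^{3/4-\l}$; choosing $\l$ appropriately (e.g.\ $\l=3/4$) the non-kinetic factor is at worst $(\Np+1)^{2}$, and after Cauchy--Schwarz with the kinetic factor one gets $(\Np+1)^{4}$ — precisely the power in the claimed bound. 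The $\FockG_1$ term gives only $(\Np+1)^2$ at worst, and $\FockG_2$ even less.

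Putting it together: for any $\psi\in\Fp$ in the form domain,
\[
\lr{\psi,\FockG\psi} \geq c\lr{\psi,\boldKz\psi} - C\lr{\psi,(\Np+1)\psi} - \tfrac{c}{2}\lr{\psi,\boldKz\psi} - C\lr{\psi,(\Np+1)^4\psi} \geq \tfrac{c}{2}\lr{\psi,\boldKz\psi} - C\lr{\psi,(\Np+1)^4\psi},
\]
which rearranges to $\boldKz \ls \FockG + (\Np+1)^4$ as claimed.

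\textbf{Main obstacle.} The delicate point is the bookkeeping of powers of $\Np$ together with the choice of the free parameters $\l$, $\l_1$, $\l_2$, $\l_3$ in the various bilinear bounds, making sure that every term can be split so that the $\boldKz$-carrying factor comes with a small coefficient (absorbable into $\varepsilon\boldKz$) while the residual $(\Np+1)$-power never exceeds $4$. One must also be slightly careful that $\FockGz$ is only bounded below, not above, so this argument gives the one-sided inequality needed — which is all that is claimed — but one should double-check that the completion-of-the-square step for $\FockGz$ is valid as a genuine operator inequality on the relevant form domain, i.e.\ that $\FockGz$ and $\boldKz$ have a common form core (the finite-particle vectors with smooth coefficients), which is standard.
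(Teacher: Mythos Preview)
Your treatment of $\FockGz$, $\FockG_1$, and $\FockR_2$ is correct and matches the paper's argument. The gap is in your handling of $\FockG_2$. The bilinear bound of Lemma~\ref{lem:G2} places a factor $\boldKz^{1/2}$ on \emph{both} sides of the inner product: with $\l=0$ it reads
\[
|\lr{\psi,\FockG_2\psi}|\ \ls\ N^{\beta-1}\,\lr{\psi,(\Np+1)\,\boldKz\,\psi}\,.
\]
No Young/Cauchy--Schwarz splitting of the product $\norm{\boldKz^{1/2}(\Np+1)^{1/2+\l}\psi}\norm{\boldKz^{1/2}(\Np+1)^{1/2-\l}\psi}$ can eliminate $\boldKz$ from one of the two factors; whichever factor receives the small $\delta$, the other one still carries $\boldKz$ and now comes with $\delta^{-1}$. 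Concretely, $(\Np+1)\boldKz$ cannot be dominated by $\varepsilon\boldKz + C(\Np+1)^k$ for any fixed $k$ (take an $M$-particle state of momentum $p$ and let $p\to\infty$). So the quartic term cannot be absorbed this way, and your claim that ``$\FockG_2$ gives even less'' is precisely where the argument breaks.

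The paper circumvents this by \emph{not} using Lemma~\ref{lem:G2} for the lower bound. Instead it exploits the pointwise positivity $v\geq 0$ from Assumption~\ref{ass}: adding back the excluded $r=0$ terms to $\FockG_2$ reconstructs the manifestly non-negative position-space quartic $\int v(x-y)\,\ad(\check c_x)\ad(\check c_y)a(\check c_x)a(\check c_y)$, which yields the operator lower bound
\[
\FockG_2\ \geq\ -\frac{\hat v(0)}{2N}\sum_{p,q\in\Lsp} c_p^2 c_q^2\,\ad_p\ad_q a_p a_q\ \gtrsim\ -\frac{1}{N}(\Np+1)^2\,.
\]
This bound is free of $\boldKz$, so $\FockG_2$ can be dropped at the cost of $(\Np+1)^2/N$. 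After that, your Cauchy--Schwarz absorption for $\FockG_1$ and $\FockR_2$ goes through exactly as you describe, producing the $(\Np+1)^4$ in the statement.
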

\begin{proof}
Let $\psi\in\Fp$. Since $a_0\psi=0$, a straightforward computation yields
\begin{align}
\lr{\psi,\FockG_2\psi}
&=\lr{\psi,\left(\sum_{p,q,r\in\Lambda}\frac{\hvNb(p)}{2N} c_{p+r}c_qc_pc_{q+r}\ad_{p+r}\ad_qa_pa_{q+r}
-frac{\hat{v}(0)}{2N} \sum_{p,q\in\Lsp}c_p^2c_q^2\ad_p\ad_qa_pa_q\right)\psi}\nonumber\\
&=\lr{\psi,\int\dx\dy v(x-y)\ad(\check{c}_x)\ad(\check{c}_y) a(\check{c}_x) a(\check{c}_y)\psi}\nonumber\\
&\quad-\frac{\hat{v}(0)}{2N}\sum_{p,q\in\Lsp}c_p^2c_q^2\lr{\psi,\ad_p\ad_qa_pa_q\psi}\nonumber\\
&\geq -\frac{\hat{v}(0)}{2N}\sum_{p,q\in\Lsp}c_p^2c_q^2\lr{\psi,\ad_p\ad_qa_pa_q\psi}\,.\label{eqn:lower:bound:G2}
\end{align}
 We consequently have that
\begin{align}\label{eqn:lower:bound:G}
\FockG = \FockG_0 + \FockG_1 + \FockG_2 + \FockR_2
\geq  \FockG_0 + \FockG_1 + \FockR_2-\frac{\hat{v}(0)}{2N}\sum_{p,q\in\Lsp}c_p^2c_q^2\ad_p\ad_qa_pa_q
\end{align}
in the sense of operators on $\Fp$.

\paragraph{Estimate for $\FockG_0$.}

Using Lemma \ref{lemma:propF,G}, we obtain
\begin{align}
\frac12\scp{\psi}{\boldKz \psi} 
&\leq \scp{\psi}{\sum_{p \in \Lambda_+^*} F_p \ad_pa_p \psi}
\nonumber \\
&\leq  \lr{\psi,\FockGz\psi}+
\abs{\scp{\psi}{\sum_{p \in \Lambda_+^*} G_p ( \ad_p \ad_{-p}  + a_pa_{-p})  \psi}}\nonumber\\
&\leq \scp{\psi}{\FockG_0 \psi}+C\norm{\left( \Np + 1 \right)^{1/2} \psi}^2 \,. 
\end{align}

\paragraph{Estimate for the other terms.}
From Lemma \ref{lem:G1} and Proposition \ref{prop:renormalized:hamiltonian}, we obtain
\begin{align}
\abs{\scp{\psi}{\FockG_1 \psi}}
&\leq C N^{(\beta - 1)/2} \norm{(\boldKz+1)^{1/2}  \psi} \norm{\left( \Np + 1 \right) \psi} \,,\\
\abs{\scp{\psi}{\FockR_2 \psi}}
&\leq C N^{\frac32(\beta-1)}\norm{\left( \boldKz + 1 \right)^{1/2} \psi} \norm{\left( \Np + 1 \right)^2 \psi} .
\end{align}
Moreover,
\begin{align}
\left|\frac{\hat{v}(0)}{2N}\sum_{p,q\in\Lsp}c_p^2c_q^2\lr{\psi,\ad_p\ad_qa_pa_q\psi}\right|\leq C N^{-1}\norm{(\Np+1)\psi}^2\,.
\end{align}

\paragraph{Final estimate.}
Combining these estimates with \eqref{eqn:lower:bound:G} yields
\begin{align}
\frac{1}{2} \scp{\psi}{\boldKz \psi}  
&\leq \scp{\psi}{\FockG \psi}
+ \abs{\scp{\psi}{\left( \FockG_1 + \FockR_2 \right) \psi}}
+\left|\frac{\hat{v}(0)}{2N}\sum_{p,q\in\Lsp}c_p^2c_q^2\lr{\psi,\ad_p\ad_qa_pa_q\psi}\right|\nonumber\\
&\quad
+ C \norm{\left( \Np + 1 \right)^{1/2} \psi}^2
\nonumber \\
&\leq \scp{\psi}{\FockG \psi} + C \norm{\left( \Np + 1 \right)^2 \psi}^2
+ C \norm{\boldKz^{1/2} \psi} \norm{\left( \Np + 1 \right)^{2} \psi}
\nonumber \\
&\leq \scp{\psi}{\FockG \psi} + C \norm{\left( \Np + 1 \right)^2 \psi}^2
+ \frac{1}{4} \scp{\psi}{\boldKz \psi}  .
\end{align}
Subtracting the third summand on the right hand side from both sides of the inequality then shows the claim.
\end{proof}

\subsection{Kinetic energy of the excitations}
With this tool, we can now prove the main result of this section.

\begin{lem}\label{lem:kinetic:energy:excitations}
Recall that  $\Chi=\FockT(U_N\PsiN\oplus 0)$  and let $\l\geq 0$. Then 
\begin{equation}
\label{eq:lemma-excitation}
\lr{\Chi,(\Np+1)^\l\left(\boldKz+1\right)\Chi}\ls 1\,.
\end{equation}
\end{lem}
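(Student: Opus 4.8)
The plan is to bound $\langle\Chi,(\Np+1)^\l(\boldKz+1)\Chi\rangle$ by comparing $\boldKz$ with the generator $\FockG$, for which we have $\FockG\Chi=E\Chi$ with $E=\ENb-\mathcal{C}$ bounded (uniformly in $N$) by \eqref{eqn:E-E0}. More precisely, I would first establish a moment bound on the number of excitations, namely $\langle\Chi,(\Np+1)^\l\Chi\rangle\ls1$ for every $\l\ge0$, and then combine it with Lemma \ref{lem:G} to upgrade to the kinetic energy. The number-operator bound is most naturally proved by a commutator/localization argument: one writes $(\Np+1)^\l=(\Np+1)^\l\P+(\Np+1)^\l\Qz$ using the spectral projector $\Pz$ of $\FockGz$, but since $\Chi=\FockT\tChi$ and $\tChi$ lives on the truncated Fock space $\FNp$ where $\Np\le N$, the cleanest route is to use that $\Chi$ is the ground state of $\FockG$ together with the lower bound $\FockG\ge\FockGz+\FockG_1+\FockR_2-CN^{-1}(\Np+1)^2$ from \eqref{eqn:lower:bound:G} and the fact that $\FockGz$ controls $\Np$ via Lemma \ref{lemma:propF,G} (i.e. $\FockGz\gtrsim\Np-C(\Np+1)^{1/2}$, hence $\Np\ls\FockG+(\Np+1)^{\text{lower}}$). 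A bootstrap on the power $\l$ then closes the estimate, since the error terms $\FockG_1,\FockR_2$ are bounded by Lemma \ref{lem:G1} and Proposition \ref{prop:renormalized:hamiltonian} by small powers of $N$ times powers of $(\Np+1)$ that can be absorbed after the induction hypothesis at a lower power is used.

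Concretely, I expect the argument to run as an induction on $\l$. For $\l=0$: from $\langle\Chi,\FockG\Chi\rangle=E=\mathcal{O}(1)$ and the lower bound $\FockG\ge c\Np-C(\Np+1)^{1/2}-CN^{(\beta-1)/2}(\boldKz+1)^{1/2}(\Np+1)-CN^{3(\beta-1)/2}(\boldKz+1)^{1/2}(\Np+1)^2-CN^{-1}(\Np+1)^2$, one uses Lemma \ref{lem:G} to replace $(\boldKz+1)^{1/2}$ by $(\FockG+(\Np+1)^4)^{1/2}$ and then a Cauchy--Schwarz/Young argument to absorb; the delicate point is that $(\Np+1)^2$ and $(\Np+1)^4$ appear, so one actually needs the higher moments \emph{first}. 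This suggests reversing the order: prove the \emph{high} moment bounds $\langle\Chi,(\Np+1)^\l\Chi\rangle\ls1$ directly by testing the eigenvalue equation $\FockG\Chi=E\Chi$ against $(\Np+1)^{2\l}\Chi$ and commuting $(\Np+1)^{2\l}$ through $\FockG$, exploiting that every term in $\FockG$ changes $\Np$ by a bounded amount so the commutators are again of the same structural form with one fewer power of $(\Np+1)$. The smallness of $\FockG_1$ ($\sim N^{(\beta-1)/2}$) and $\FockR_2$ ($\sim N^{3(\beta-1)/2}$) is what makes the induction close: the "bad" contributions come with a negative power of $N$ that dominates any fixed power of $N^{\beta-1}$ picked up from $\boldKz\ls N^{\beta}$ worst-case estimates, after at most finitely many iterations. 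Once all number moments are controlled, \eqref{eq:lemma-excitation} follows from Lemma \ref{lem:G}: $\langle\Chi,(\Np+1)^\l\boldKz\Chi\rangle\ls\langle\Chi,(\Np+1)^\l\FockG\Chi\rangle+\langle\Chi,(\Np+1)^{\l+4}\Chi\rangle\ls|E|\langle\Chi,(\Np+1)^\l\Chi\rangle+C\ls1$, where moving $(\Np+1)^\l$ past $\FockG$ again only produces lower-order commutator terms already under control.

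The main obstacle I anticipate is the bootstrap structure: $\boldKz$ is only known a priori to be of size $N^{\beta}$, so naive estimates of $\FockG_1$ and $\FockR_2$ against $(\boldKz+1)^{1/2}$ lose powers of $N$, and one must carefully interleave the kinetic-energy bound (Lemma \ref{lem:G}, which itself costs $(\Np+1)^4$) with the number-operator induction so that at each step the net power of $N$ is still negative. A secondary technical nuisance is that $\FockT$ does not preserve the truncation $\FNp$, so one cannot directly use $\Np\le N$ on $\Chi=\FockT\tChi$; this is handled by Lemma \ref{lem:T:Number}, which lets one transfer number-operator moments across $\FockT$ at the cost of harmless constants. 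Finally, care is needed in the commutator computations with $(\Np+1)^{2\l}$ for non-integer or large $\l$ — the shifting identities $(\Np+c)^\l-(\Np)^\l\ls\l(\Np+c)^{\l-1}$ used already in Lemma \ref{lem:BT:K0} are the right tool and keep everything polynomially bounded.
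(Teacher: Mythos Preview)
Your overall architecture---compare $\boldKz$ to $\FockG$ via Lemma~\ref{lem:G}, use the eigenvalue equation $\FockG\Chi=E\Chi$ with $|E|\ls1$, and handle the weighting by $(\Np+1)^\l$ through commutators $[\FockG,(\Np+1)^{\l+1}]$---is exactly what the paper does for the kinetic and mixed-moment steps. The difference lies in how the number-operator moments $\langle\Chi,(\Np+1)^\l\Chi\rangle\ls1$ are obtained.

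The paper does \emph{not} derive them by the self-contained bootstrap you sketch. Instead it imports the a~priori bound $\langle\Psi_N,\Np^\l\Psi_N\rangle\le C(\l)$ from \cite{nam2023} and transfers it through $U_N$ and $\FockT$ via Lemma~\ref{lem:T:Number}. This sidesteps precisely the circularity you identify: Lemma~\ref{lem:G} costs $(\Np+1)^4$, so bounding $\langle\Chi,(\Np+1)^{2\l}\boldKz\Chi\rangle$ requires $\langle\Chi,(\Np+1)^{2\l+4}\Chi\rangle$, i.e.\ a \emph{higher} number moment than one starts with. Your proposed fix---get the number moments first by commuting $(\Np+1)^{2\l}$ through $\FockG$---does not obviously close either, because the available estimates on $\FockG_1$ and $\FockR_2$ (Lemma~\ref{lem:G1}, Proposition~\ref{prop:renormalized:hamiltonian}) carry a factor $(\boldKz+1)^{1/2}$ on at least one side. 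Your claim that ``$\boldKz\ls N^\beta$ worst-case'' is not justified: the only crude a~priori input is $\langle\tChi,\boldKz\tChi\rangle\le\ENb\ls N$ from $v\ge0$, and $N^{(\beta-1)/2}\cdot N^{1/2}=N^{\beta/2}$ does not vanish, so the smallness does not beat the kinetic factor without further work.

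In short: steps two and three of your plan match the paper, but step one (uniform number moments) is where the paper takes a shortcut via an external result, and your proposed self-contained replacement has a genuine gap that would need an additional idea---e.g.\ a localization-in-$\Np$ argument or an independent condensation estimate---to close.
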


\begin{proof} We first show that any moment of the number of particles operator is bounded. We then use this bound to show that the kinetic energy is bounded, and as a last step estimate products of the kinetic energy and any power of the number of particles operator. 

\paragraph*{Powers of the number of particles.} We recall an a priori bound on powers of the number of excitations in the ground state $\PsiN$ that was recently proven in \cite[Remark 1.2]{nam2023}. More precisely, we have 
\begin{align}
\langle \PsiN,  \mathcal{N}_\perp^\ell  \PsiN \rangle_{L^2(\Lambda^N)} \leq C(\l)
\end{align}
for any power $\ell  \in \mathbb{N}$. Since $U_N \mathcal{N}_\perp U_N^* = \mathcal{N}_\perp$ and 
\begin{align}
\mathbb{T} \mathcal{N}_\perp^\ell \mathbb{T}^* \ls (\mathcal{N}_\perp + 1)^\ell
\end{align}
from Lemma \ref{lem:T:Number}, we consequently get
\begin{align}
\langle \Chi,  (\mathcal{N}_\perp+1)^\ell  \Chi \rangle \ls 1 \; . \label{eq:moments-chi}
\end{align}

\paragraph*{Kinetic energy.} From \eqref{eq:moments-chi}, we now derive a bound on the kinetic energy of $\Chi$. For this we recall that $\Chi$ is the ground state of $\mathbb{G}$, i.e. it satisfies the eigenvalue equation 
$\mathbb{G}  \Chi =  E \Chi  $
for $E=\ENb-\mathcal{C}$ as defined in \eqref{def:E}. From Lemma \ref{lem:G} we have the upper bound 
\begin{align}
\label{eq:formbound-1}
\mathbb{G}   \gtrsim \boldKz - (\mathcal{N}_\perp+1)^4 \,,
\end{align}
which we use to estimate 
\begin{align}
\langle\Chi, \boldKz \chi \rangle \ls \langle \Chi, \big( \mathbb{G} + (\mathcal{N}_\perp+1)^4 \big) \Chi \rangle = \langle \Chi, ( E + (\mathcal{N}_\perp+1)^4 ) \Chi \rangle\,.
\end{align}
Since $|E|\ls 1$ by \eqref{eqn:E-E0},  we conclude with \eqref{eq:moments-chi} that 
\begin{align}\label{eqn:K0}
\langle\Chi, \boldKz \Chi \rangle  \leq 1 \;  . 
\end{align}

\paragraph*{Moments.} Next we prove \eqref{eq:lemma-excitation}. For this we find with \eqref{eq:formbound-1} and \eqref{eq:moments-chi} 
\begin{align}
\langle \Chi, ( \mathcal{N}_\perp + 1)^{2(\ell+1)} \boldKz \Chi \rangle 
\ls& \langle \Chi, ( \mathcal{N}_\perp + 1)^{\ell+1} \big( \mathbb{G}   + (\mathcal{N}_\perp+1)^4 \big) ( \mathcal{N}_\perp + 1)^{\ell+1} \Chi \rangle \notag \\
\ls& \langle \Chi, ( \mathcal{N}_\perp + 1)^{\ell+1} \mathbb{G}    ( \mathcal{N}_\perp + 1)^{\ell+1} \Chi \rangle + 1  \notag \\
\ls& \langle \Chi,  ( \mathcal{N}_\perp + 1)^{\ell+1} \big[  \mathbb{G} , ( \mathcal{N}_\perp + 1)^{\ell+1}\big]  \Chi \rangle +  E \langle \Chi, ( \mathcal{N}_\perp + 1)^{2 (\ell+1)}  \Chi \rangle +1 \notag \\ 
\ls& \langle \Chi,  ( \mathcal{N}_\perp + 1)^{\ell+1} \big[  \mathbb{G} , ( \mathcal{N}_\perp + 1)^{\ell+1}\big]  \Chi \rangle +1 \; . 
\end{align}
Since 
\begin{align}
 \big[  \mathbb{G} , ( \mathcal{N}_\perp + 1)^{\ell+1}\big] =& \sum_{j=0}^{\ell +1} (\mathcal{N}_\perp+1)^j  \big[  \mathbb{G} ,  \mathcal{N}_\perp \big] (\mathcal{N}_\perp+1)^{\ell -j}  \notag \\
 =& \sum_{j=1}^{\ell +1} (\mathcal{N}_\perp+1)^j  \left[  \FockG_1+\tfrac12\sum_{p\in\Lsp}G_p(\ad_p\ad_{-p}+\hc)+\FockR_2,  \mathcal{N}_\perp \right] (\mathcal{N}_\perp+1)^{\ell -j}  \,,
\end{align}
we find from Lemma \ref{lem:G1}, Proposition \ref{prop:renormalized:hamiltonian} and \ref{lemma:propF,G}
\begin{align}
\lr{\Chi,(\Np+1)^j\left[\FockG_1,\Np\right](\Np+1)^{\l+1-j}\Chi}&\ls N^\frac{\beta-1}{2}\norm{(\boldKz+1)^\frac12\Chi}\norm{(\Np+1)^{\l+2}\Chi}\,,\\
\lr{\Chi,(\Np+1)^j\left[\FockR_2,\Np\right](\Np+1)^{\l+1-j}\Chi}&\ls N^{\frac32(\beta-1)}\norm{(\boldKz+1)^\frac12\Chi}\norm{(\Np+1)^{3+\l}\Chi}\,,
\end{align}
and
\begin{align}
&\tfrac12\sum_{p\in\Lsp}|G_p|\lr{\Chi,(\Np+1)^j\left[(\ad_p\ad_{-p}+\hc),\Np\right](\Np+1)^{\l+1-j}\Chi}\\
&\qquad\ls\norm{(\Np+1)^{\l+1}\Chi}\norm{(\Np+1)\Chi}\,.
\end{align}
Hence, by \eqref{eqn:K0} and \eqref{eq:moments-chi}, we  conclude  \eqref{eq:lemma-excitation}. 

\end{proof}

\section{Perturbation theory}\label{sec:pert:theory}
In this final section we complete the proof of the perturbative expansion of the ground state projector $\P=|\Chi\rangle\langle\Chi|$. As a first step, we expand $\P$ around $\Pz$; subsequently, we prove estimates on the remainders of this expansion (Proposition \ref{prop:expansion:P}).

\subsection{Proof of Proposition \ref{prop:expansion:P}}\label{sec:proof:prop:expansion:P}
Recall that Proposition \ref{prop:expansion:P} states that
$$\left|\Tr\FockA\P-\sum_{\l=0}^2\Tr\FockA\P_\l\right|\leq N^{\frac32 ( \beta-1)} $$
for $\P_1$ and $\P_2$ as in \eqref{def:Pl}. Our goal is now to compute these quantities and also the remainder term explicitly. In Sections \ref{sec:proof:prop:perturbation:estimates:P}-- \ref{sec:proof:prop:perturbation:estimates:Q}, we show that the remainder satisfies  the bound above. 

Since the strategy is very close to \cite[Proposition 3.13, Proposition 3.14 and Theorem 2]{spectrum}, we only sketch the main ideas.\\

\noindent\textbf{Proof of Proposition \ref{prop:expansion:P}.}
The first step is to express $\FockG$ as
\begin{equation}
\label{def:Ri}
\FockG=\FockGz+\FockG_1+\FockG_2+\FockR_2=:\FockGz+\FockG_1+\FockR_1=:\FockGz+\FockRz
\end{equation}
(see Proposition \ref{prop:renormalized:hamiltonian}). The remainder terms $\mathbb{R}_i$ correspond to errors of of different order in $N$. Lemmas \ref{lem:Ra}-\ref{lem:R2} show that $\mathbb{R}_2 = \mathbb{R}_a + \mathbb{R}_b + \mathbb{R}_c $ satisfies for $\l,\l_1,\l_2 \in \mathbb{R}$
\begin{align}
\label{eq:R2-estimate}
\left|\lr{\psi,\FockR_2\xi}\right|
&\ls   N^{\frac32(\beta- 1) }  
\bigg(\norm{(\mathbb{K}_0 +1 )^{1/2}\left(\Np + 1 \right)^{3/4-\l_1} \psi} \; 
\norm{ \left(\Np + 1 \right)^{5/4+\l_1} \xi} \notag \\
&\qquad\qquad\qquad+ \norm{(\left(\Np + 1 \right)^{5/4-\l_2} \psi}\; 
\norm{ (\mathbb{K}_0 +1 )^{1/2}\left(\Np + 1 \right)^{3/4+\l_2} \xi} \bigg)\; . 
\end{align}
From Lemma \ref{lem:G2}-\ref{lem:R2}, it follows that $\mathbb{R}_1 = \mathbb{G}_2 + \mathbb{R}_2$ scales as 
\begin{align}
\label{eq:R1-estimate}
\left|\lr{\psi,\FockR_1\xi}\right|
&\ls   N^{\beta- 1}  
\norm{(\mathbb{K}_0 +1 )^{1/2}\left(\Np + 1 \right)^{3/4-\l} \psi} \; 
\norm{(\mathbb{K}_0 +1 )^{1/2} \left(\Np + 1 \right)^{3/4+\l} \xi} \,.
\end{align}
The remainder term $\mathbb{R}_0 = \mathbb{G}_1 + \mathbb{G}_2 + \mathbb{R}_2$ is by Lemmas \ref{lem:G1}-\ref{lem:R2} of order 
\begin{align}
\label{eq:R0-estimate}
\left|\lr{\psi,\FockR_0\xi}\right|
&\ls   N^{(\beta- 1)/2}  
\norm{(\mathbb{K}_0 +1 )^{1/2}\left(\Np + 1 \right)^{3/4-\l} \psi} \; 
\norm{ (\mathbb{K}_0 +1 )^{1/2}\left(\Np + 1 \right)^{3/4+\l} \xi}\; . 
\end{align}

We now expand $(z-\FockG)^{-1}$ around $(z-\FockGz)^{-1}$ as
\begin{align}
\RG
&=\Rz+\Rz\FockG_1\Rz + \Rz\left(\FockG_2+\FockG_1\Rz\FockG_1\right)\Rz\nonumber\\
&\quad +\RG\left(\FockR_0\Rz(\FockG_2+\FockG_1\Rz\FockG_1)+\FockR_1\Rz\FockG_1+\FockR_2\right)\Rz\,.
\end{align}
Then \eqref{def:Pl}  follows from an application of the residue theorem analogously to \cite[Proposition 3.14]{spectrum}. 
Note that $\FockG_1$ is cubic in the number of creation/annihilation operators and $\Chiz$ is a quasi-free state, hence $\lr{\Chiz,\FockG_1\Chiz}=0$ by Wick's rule.
Analogously to \cite[Proposition 3.14]{spectrum}, we find that
\begin{align}
\Tr\FockA\P=\sum_{\l=0}^2\Tr\FockA\P_\l +\Tr\FockA\FockB_P +\Tr\FockA\FockB_Q\,,
\end{align}
where the remainder $\Tr\FockA\FockB_P$ is given as
\begin{align}
\Tr\FockA\FockB_P
&=\goint'\frac{1}{z-E}\bigg(\lr{\Chi,\FockR_0\Rz\left(\FockG_2+\FockG_1\Rz\FockG_1\right)\Rz\FockA\Chi}\notag\\
&\qquad+\lr{\Chi,\FockR_1\Rz\FockG_1\Rz\FockA\Chi}
+\lr{\Chi,\FockR_2\Rz\FockA\Chi}\Bigg)\dz\,,\label{TrAB_P}
\end{align}
where we used the abbreviation $\goint':=\frac{1}{2\pi\i}\goint$. 
For  $\Tr\FockA\FockB_Q $, we find analogously to the computation in \cite{spectrum} that
\begin{align}
\Tr\FockA\FockB_Q
&=\goint'\frac{\dz}{z-\Ez}\Bigg(
\lr{\FockG_2\RQz\FockA\RQG\FockRz} + \lr{\FockA\RQG\FockRz\RQz\FockG_2} \notag\\
&\qquad\qquad\qquad+ \lr{\FockG_1\RQz\FockG_1\RQz\FockA\RQG\FockRz} 
+ \lr{\FockG_1\RQz\FockA\RQG\FockRz\RQz\FockG_1}\notag\\
&\qquad\qquad\qquad+  \lr{\FockA\RQG\FockRz\RQz\FockG_1\RQz\FockG_1} 
+\lr{\FockG_1\RQz\FockA\RQG\FockR_1}\notag\\
&\qquad\qquad\qquad+ \lr{\FockA\RQG\FockR_1\RQz\FockG_1} +\lr{\FockA\RQG\FockR_2}
\Bigg)\notag\\
&+\goint'\frac{\dz}{(z-\Ez)^2}\Bigg(
\lr{\FockG_2}\lr{\FockA\RQG\FockRz} + \lr{\FockG_1}\lr{\FockG_1\RQz\FockA\RQG\FockRz}\notag\\
&\qquad\qquad\qquad+  \lr{\FockG_1\RQz\FockG_1}\lr{\FockA\RQG\FockRz}
+\lr{\FockG_1}\lr{\FockA\RQG\FockRz\RQz\FockG_1}\notag\\
&\qquad\qquad\qquad+\lr{\FockG_1}\lr{\FockA\RQG\FockR_1}\Bigg)\notag\\
&+\goint'\frac{\dz}{(z-\Ez)^3}\lr{\FockG_1}^2\lr{\FockA\RQG\FockRz}\,,\label{TrAB_Q}
\end{align}
where we used the notation
\begin{equation}
\lr{\FockB}:=\lr{\Chiz,\FockB\Chiz}\,.
\end{equation}
Recall that $\lr{\FockG_1}=0$ by Wick's rule, hence some of the terms in \eqref{TrAB_Q} vanish.
Since $z\in\gamma$ with $\gamma$ as defined in \eqref{gamma}, it follows that $|z-\Ez|= c$ and $|z-E|\geq |z-\Ez|-|E-\Ez|\geq c/2$ for sufficiently large $N$, hence 
\begin{equation}\label{eqn:z-E}
\sup_{z\in\gamma}\left|\frac{1}{z-\Ez}\right|\ls 1\,,\qquad \sup_{z\in\gamma}\left|\frac{1}{z-E}\right|\ls 1\,.
\end{equation} 
We use this argument to estimate $\Tr \mathbb{A}\mathbb{B}_P$ by \eqref{RPA} in Section \ref{sec:proof:prop:perturbation:estimates:P} and $\Tr \mathbb{A}\mathbb{B}_Q$  by \eqref{RQA} in Section \ref{sec:proof:prop:perturbation:estimates:Q}. The final estimates of Section \ref{sec:proof:prop:perturbation:estimates:P} and Section \ref{sec:proof:prop:perturbation:estimates:Q} (see \eqref{eq:RP-final} and \eqref{eq:RQ-final}) show that 
\begin{align}
\vert \Tr \mathbb{A}\mathbb{B}_P \vert +\vert  \Tr \mathbb{A}\mathbb{B}_Q \vert \ls N^{\frac32(\beta-1)}  \;, \label{eq:estimate:TRAP,TRAQ}
\end{align}
which proves the first statement \eqref{eqn:expansion:TrAP} of Proposition \ref{prop:expansion:P}. 
\\

\noindent\textbf{Ground state energy.} We observe analogously to \cite[Theorem 2]{spectrum} that
\begin{equation}
E=\Tr\,\FockG\,\P=\frac{1}{2\pi\i}\Tr\,\goint\frac{\FockG}{z-\FockG}\dz=\frac{1}{2\pi\i}\Tr\,\goint\frac{z}{z-\FockG}\dz =E_0+\frac{1}{2\pi\i}\Tr\goint\frac{z-\Ez}{z-\FockG}\dz\,.
\end{equation}
Consequently, expanding $(z-\FockG)^{-1}$ around $(z-\FockGz)^{-1}$ as above yields
\begin{align}
E -\Ez&=\Tr\goint'\Rz\FockG_1\frac{z-\Ez}{z-\FockGz}\dz\notag\\
&\quad + \Tr\goint'\Rz\FockG_2\frac{z-\Ez}{z-\FockGz} +\Tr\goint'\Rz\FockG_1\Rz\FockG_1\frac{z-\Ez}{z-\FockGz}\dz\notag\\
&\quad +\sum_{\nu=0}^2\sum_{m=1}^{2-\nu}\sum_{\substack{\bj\in\N^m\\|\bj|=2-\nu}}\Tr\goint'\RG\FockR_\nu\Rz\FockG_{j_1}\Rz\FockG_{j_2}\mycdots \FockG_{j_m}\frac{z-\Ez}{z-\FockGz}\dz\label{eqn:E-Ez:long}
\end{align}
For the first term, we observe that
\begin{equation}
\Tr\goint'\Rz\FockG_1\frac{z-\Ez}{z-\FockGz}\dz=\goint'\frac{1}{z-\Ez}\lr{\Chiz,\FockG_1\Chiz}\dz =0
\end{equation}
by Wick's rule. Applying the residue theorem, we find that the second and third term give $E_\mathrm{pert}$ as in \eqref{E_pert}. This is of the same form as $\Tr\FockA \FockB_P+\Tr\FockA\FockB_Q$ but with $\FockA$ replaced by $(z-\Ez)$ in each term in \eqref{TrAB_P} and \eqref{TrAB_Q}. Together with the estimate \eqref{eqn:z-E}, this implies that the absolute value of the third line in \eqref{eqn:E-Ez:long} is  bounded by 
\begin{align}
\vert \Tr \id\mathbb{B}_P \vert +\vert  \Tr \id\mathbb{B}_Q \vert \leq N^{\frac32(\beta-1)} 
\end{align}
following from \eqref{eq:estimate:TRAP,TRAQ}. This proves the second statement \eqref{eq:expansion:energy} of Proposition \ref{prop:expansion:P}. 
\qed

\subsection{Bounds on the resolvent of $\FockGz$}
In this section we prove estimates on the resolvent of the quadratic Hamiltonian $\mathbb{G}_0$.

\begin{lem}\label{lem:aux}
Let $\l\in\mathbb{R}$, $\psi\in\Fp$ and $z\in\gamma$ for $\gamma$ as in \eqref{gamma} and let 
$$\FockO\in\left\{\Rz\,,\,\frac{\Qz}{\Ez-\FockGz}\right\}.$$ 
Then
\begin{align}
\norm{(\boldKz+1)^\frac12 (\Np+1)^\l \FockO (\mathbb{K}_0+1)^{\frac{1}{2}}\psi}\ls\norm{(\Np+1)^{\l}\psi}\label{eqn:aux:1}\,.
\end{align}
\end{lem}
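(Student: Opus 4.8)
The plan is to prove the estimate by commuting the operator $\FockO$ past the powers of $\boldKz+1$ and $\Np+1$, reducing everything to the boundedness of the ``bare'' weighted resolvent $(\boldKz+1)^{1/2}\FockO(\boldKz+1)^{1/2}$ and to the fact that the diagonalizing transformation $\BogUz$ approximately commutes with $\boldKz$ and $\Np$ (Lemma \ref{lem:BT:K0}). Concretely, I would first use the explicit diagonalization \eqref{eq:G0-diag}, writing $\FockGz=\BogUz^*\FockD\BogUz$ with $\FockD=\sum_p e_p\ad_pa_p+\text{const}$ and $e_p=\sqrt{F_p^2-G_p^2}\gs p^2$ by Lemma \ref{lemma:propF,G}. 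Then both choices of $\FockO$ satisfy $\FockO=\BogUz^*\widetilde{\FockO}\BogUz$ where $\widetilde{\FockO}\in\{(z-\FockD)^{-1},\Qz(\Ez-\FockD)^{-1}\}$ is, on the range of $\Qz$ (or for $z\in\gamma$ by \eqref{eqn:z-E}), a diagonal operator that is bounded by $C(e_p+1)^{-1}\ls C(p^2+1)^{-1}$ componentwise; in particular $\widetilde{\FockO}$ commutes with $\Np$ and with $\boldKz$ is controlled since $\boldKz(z-\FockD)^{-1}$ and $(\boldKz+1)^{1/2}\widetilde{\FockO}(\boldKz+1)^{1/2}$ are bounded uniformly in $z\in\gamma$ and $N$.

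The key steps, in order, are as follows. \emph{Step 1:} Insert $\id=\BogUz^*\BogUz$ to rewrite the left-hand side as $\norm{(\boldKz+1)^{1/2}(\Np+1)^\l\BogUz^*\widetilde{\FockO}\BogUz(\boldKz+1)^{1/2}\psi}$. \emph{Step 2:} Move $\BogUz^*$ to the left: by Lemma \ref{lem:BT:number} and \eqref{eqn:BT:number:1}, $(\boldKz+1)^{1/2}(\Np+1)^\l\BogUz^*\leq C\,\BogUz^*(\boldKz+1)^{1/2}(\Np+1)^\l$ in the appropriate quadratic-form sense, so it suffices to bound $\norm{(\boldKz+1)^{1/2}(\Np+1)^\l\widetilde{\FockO}\BogUz(\boldKz+1)^{1/2}\psi}$. \emph{Step 3:} Since $\widetilde{\FockO}$ is diagonal in the $\ad_pa_p$ basis, it commutes with $\Np$; and for the kinetic weight, because $\widetilde{\FockO}$ acts on the $k$-particle sector as multiplication by a bounded function of $z-\sum e_{p_i}$, one has the componentwise bound $(\boldKz+1)^{1/2}\widetilde{\FockO}\ls(\boldKz+1)^{-1/2}$ (uniformly for $z\in\gamma$, using $e_p\gs p^2$ and that $\gamma$ stays at distance $\gs1$ from the spectrum of $\FockD$ by \eqref{eqn:z-E}). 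Hence $\norm{(\boldKz+1)^{1/2}(\Np+1)^\l\widetilde{\FockO}\BogUz(\boldKz+1)^{1/2}\psi}\ls\norm{(\Np+1)^\l(\boldKz+1)^{-1/2}\BogUz(\boldKz+1)^{1/2}\psi}$. \emph{Step 4:} Apply Lemma \ref{lem:BT:number:inverse}, which gives $\BogUz(\Np+1)^{2\l}(\boldKz+1)^{-1}\BogUz^*\leq C(\Np+1)^{2\l}(\boldKz+1)^{-1}$, together with $\BogUz(\boldKz+1)\BogUz^*\leq C(\Np+1)(\boldKz+1)$ composed appropriately, to absorb $\BogUz(\boldKz+1)^{1/2}$ and conclude the bound by $C\norm{(\Np+1)^\l\psi}$.

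The main obstacle, I expect, is \emph{Step 3} together with the bookkeeping of the $\Np$-weights through $\BogUz$. The diagonal bound $(\boldKz+1)^{1/2}\widetilde{\FockO}\ls(\boldKz+1)^{-1/2}$ is intuitively clear from $e_p\gs p^2$, but making it precise requires care: one must check that on the range of $\Qz$ the denominator $|z-\sum_i e_{p_i}|$ (or $|\Ez-\sum_i e_{p_i}|$) is bounded below by a constant times $\sum_i(p_i^2+1)$, which follows because $\gamma$ was chosen in \eqref{gamma} to enclose only the lowest eigenvalue with an $\mathcal{O}(1)$ gap and $e_p\gs p^2\gs1$; for the single-excitation sectors this needs the spectral gap, for higher sectors it is automatic. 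The other delicate point is that $\BogUz$ does not commute with $(\boldKz+1)^{-1/2}$, but Lemma \ref{lem:BT:number:inverse} is precisely tailored to handle the combination $(\Np+1)^\l(\boldKz+1)^{-1}$, so chaining it with the already-established bounds on $\BogUz(\boldKz+1)\BogUz^*$ closes the argument. Once these two points are in place the rest is routine operator manipulation with the Cauchy--Schwarz inequality and the shifting property of the number operator.
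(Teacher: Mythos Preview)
Your proposal is correct and follows essentially the same route as the paper: diagonalize $\FockGz$ via $\BogUz$, use $e_p\gs p^2$ from Lemma~\ref{lemma:propF,G} to bound $(\boldKz+1)$ against the diagonal resolvent, and shuttle $\BogUz$ past the $(\boldKz+1)^{1/2}(\Np+1)^\l$ weights using Lemma~\ref{lem:BT:K0} (including the inverse bound~\eqref{lem:BT:number:inverse}). The only organizational difference is that the paper first isolates the $\l=0$ case via the operator inequality $\FockGz\geq c_5(\boldKz+\Np)-c_6$ (your Step~3 in disguise, pulled back from the diagonal picture) and then reduces general $\l$ to it, whereas you handle all $\l$ at once in the diagonal picture---your argument is slightly more streamlined but equivalent in content.
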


\begin{proof}
We consider first the case $\FockO=(z-\FockGz)^{-1}$.
First note that by diagonalizing $\mathbb{G}_0$ using $\mathbb{U}_\tau$ we get 
\begin{align}
\mathbb{U}_\tau \mathbb{G}_0\mathbb{U}_\tau^* \geq c \mathcal{N}_\perp - C   \; .\label{eq:boung-G01}
\end{align}
We multiply \eqref{eq:boung-G01} with $\mathbb{U}_\tau^*$ from the left and $\mathbb{U}_\tau$ from the right and arrive with $\mathbb{U}_\tau^* (\mathcal{N}_\perp +1)\mathbb{U}_\tau \geq \tfrac1c \mathcal{N}_\perp $ from Lemma \ref{lem:BT:K0} at 
\begin{align}
\label{eq:bound-G02}
\mathbb{G}_0 \geq c_1 \mathcal{N}_\perp  - c_2 
\end{align}
for some  constants $c_1,c_2>0$. Moreover, Lemma \ref{lemma:propF,G} shows that 
\begin{align}
\label{eq:bound-G03}
\mathbb{G}_0 \geq c_3  \boldKz - c_4 (\mathcal{N}_\perp + 1)  \; . 
\end{align}
Interpolating both from bounds \eqref{eq:bound-G02}, \eqref{eq:bound-G03} we arrive at 
\begin{align}
\label{eq:bound-G04}
\mathbb{G}_0 \geq c_5 ( \boldKz + \mathcal{N}_\perp) - c_6 \,,
\end{align}
which implies \eqref{eqn:aux:1} for $\ell =0$ and $\Im z =0$. For $\ell =0$ and $\Im z \not=0$ we find 
\begin{align}
\label{eq:G04-1}
\| & ( \mathbb{K}_0 + 1)^{\frac{1}{2}} \frac{1}{z-\mathbb{G}_0} ( \mathbb{K}_0 + 1)^{\frac{1}{2}} \psi \|^2 \notag \\
=& \left\langle ( \mathbb{K}_0 + 1)^{\frac{1}{2}}  \psi, \frac{\mathbb{G}_0 - \Re z + \i \Im z}{( \mathbb{G}_0 - \Re z )^2 + (\Im z)^2 } ( \mathbb{K}_0 + 1) \frac{\mathbb{G}_0 - \Re z + \i \Im z}{( \mathbb{G}_0 - \Re z )^2 + (\Im z)^2 }   ( \mathbb{K}_0 + 1)^{\frac{1}{2}}  \psi\right\rangle  \, . 
\end{align} 
From \eqref{eq:bound-G04} we get 
\begin{align}
\Big\| ( \mathbb{K}_0 + 1)^\frac{1}{2}  & \frac{\mathbb{G}_0 - \Re z}{( \mathbb{G}_0 - \Re z )^2 + (\Im z)^2 }   ( \mathbb{K}_0 + 1)^{\frac{1}{2}}  \psi \Big\| \notag \\
\ls&  \Big\|   \frac{(\mathbb{G}_0 - \Re z)^{3/2} + 1}{( \mathbb{G}_0 - \Re z )^2  }   ( \mathbb{K}_0 + 1)^{\frac{1}{2}}  \psi \Big\| 
\ls \| \psi \| \; 
\end{align}
and
\begin{align}
\Big\| ( \mathbb{K}_0 + 1)^\frac{1}{2}  & \frac{1}{( \mathbb{G}_0 - \Re z )^2 + (\Im z)^2 }   ( \mathbb{K}_0 + 1)^{\frac{1}{2}}  \psi \Big\| \notag \\
\ls&  \Big\|   \frac{(\mathbb{G}_0 - \Re z)^{1/2} + 1}{( \mathbb{G}_0 - \Re z )^2 }   ( \mathbb{K}_0 + 1)^{\frac{1}{2}}  \psi \Big\| 
\ls \| \psi \| \; 
\end{align}
which implies \eqref{eqn:aux:1} for $\l=0$ and $z\in\gamma$.
To prove \eqref{eqn:aux:1} for general $\ell \in \mathbb{R}$, we use once more that $\BogUz$ diagonalizes $\FockGz$ and compute with Lemma \ref{lem:BT:K0}
\begin{align} 
\norm{\boldKz^\frac12\Np^\l \Rz (\boldKz+1)^\frac12 \psi}
&=\norm{ \boldKz^\frac12 \Np^\l\BogUz^* \frac{1}{z- \mathbb{U}_\tau\mathbb{G}_0\mathbb{U}_\tau^*} (\mathbb{U}_\tau \boldKz \mathbb{U}_\tau^* +1)^{\frac{1}{2}}\BogUz\psi}\nonumber\\
&\ls\norm{\boldKz^\frac12 \Np^\l \frac{1}{z- \mathbb{U}_\tau\mathbb{G}_0\mathbb{U}_\tau^*} (\mathbb{U}_\tau \boldKz \mathbb{U}_\tau^* +1)^{\frac{1}{2}} \BogUz\psi} \; . 
\end{align}
As the number of particles operator commutes with the diagonal Hamiltonian $\mathbb{U}_\tau\mathbb{G}_0\mathbb{U}_\tau^*$,  we thus get, by \eqref{eq:bound-G04} and using Lemma \ref{lem:BT:K0} several times, that
\begin{align}
\norm{\boldKz^\frac12\Np^\l \Rz (\boldKz+1)^\frac12 \psi}
&\ls\norm{\boldKz^\frac12 \frac{1}{z- \mathbb{U}_\tau\mathbb{G}_0\mathbb{U}_\tau^*}  \Np^\l\mathbb{U}_\tau (\boldKz+1)^{\frac{1}{2}} \psi}  \notag \\
& \ls\norm{\boldKz^\frac12 \frac{1}{z- \mathbb{G}_0}(\boldKz+1)^\frac12\Big(\frac{1}{\boldKz+1}\Big)^{\frac12}\BogUz^*\Np^\l  \mathbb{U}_\tau (\boldKz+1)^\frac{1}{2} \psi}  \notag \\
&\ls \Big\|\Big(\frac{1}{\boldKz+1}\Big)^\frac12\BogUz^*\Np^\l\BogUz(\boldKz+1)^\frac12\psi\Big\|\nonumber\\
&\ls \norm{ \Np^\l\BogUz\psi} \,.
\end{align}

For $\FockO=\Qz(\Ez-\FockGz)^{-1}$ and $\l=0$, we find similarly to above that
\begin{align}
\norm{(\boldKz+1)^\frac12\frac{\Qz}{\Ez-\FockGz}(\boldKz+1)^\frac12\psi}^2
&\ls \lr{\psi,(\boldKz+1)^\frac12\frac{\Qz}{\Ez-\FockGz}(\boldKz+1)^\frac12\psi}\nonumber\\
&\ls\lr{\psi,(\boldKz+1)^\frac12\Qz\frac{1}{\boldKz-1}\Qz(\boldKz+1)^\frac12\psi}\nonumber\\
&\ls\norm{\psi}^2
\end{align}
where the last step follows from decomposing $\Qz=\id-\Pz$ and using that $\norm{(\boldKz+1)^{1/2}\Chiz}\ls 1$ by Lemma \ref{lem:BT:K0}. The case $\l\neq 0$ is analogous to the computation for $\FockO=(z-\FockGz)^{-1}$.
\end{proof}

Next, we combine Lemma \ref{lem:aux} and the estimates on the remainder terms $\mathbb{R}_i$ and $\mathbb{G}_1$ im \eqref{eq:R0-estimate}-\eqref{eq:R2-estimate} resp. Lemma \ref{lem:G1} to derive the following weigthed norm estimates. 

\begin{lem}
\label{lem:aux:2}
Let $\l\in\mathbb{Z}/2$, $\psi\in\Fp$ and $z\in\gamma$ for $\gamma$ as in \eqref{gamma} and let 
$$\FockO\in\left\{\Rz\,,\,\frac{\Qz}{\Ez-\FockGz}\right\}\,.$$
Then
\begin{align}
\norm{(\boldKz+1)^\frac12(\Np+1)^\l\FockO\,\FockRz\psi}&\ls N^\frac{\beta-1}{2}\norm{(\boldKz+1)^\frac12(\Np+1)^{\l+\frac32}\psi}\,,\label{eqn:aux:2:1}\\
\norm{(\boldKz+1)^\frac12(\Np+1)^\l\FockO\,\FockG_1\psi}&\ls N^\frac{\beta-1}{2}\norm{(\boldKz+1)^\frac12(\Np+1)^{\l+1}\psi}\,,\label{eqn:aux:2:3}\\
\norm{(\boldKz+1)^\frac12(\Np+1)^\l\FockO\,\FockR_1\psi}&\ls N^{\beta-1}\norm{(\boldKz+1)^\frac12(\Np+1)^{\l+\frac32}\psi}\,.\label{eqn:aux:2:2} \\
\norm{(\boldKz+1)^\frac12(\Np+1)^\l\FockO\,\FockG_2\psi}&\ls N^{\beta-1}\norm{(\boldKz+1)^\frac12(\Np+1)^{\l+\frac32}\psi}\,.\label{eqn:aux:2:4}
\end{align}
\end{lem}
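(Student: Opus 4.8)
The plan is to prove the four weighted-norm estimates of Lemma~\ref{lem:aux:2} by a uniform strategy: insert a factor $(\mathbb{K}_0+1)^{1/2}(\mathbb{K}_0+1)^{-1/2}$ between the resolvent-type operator $\mathbb{O}$ and the perturbation operator ($\mathbb{R}_0$, $\mathbb{G}_1$, $\mathbb{R}_1$, or $\mathbb{G}_2$), then apply Lemma~\ref{lem:aux} to the piece $(\mathbb{K}_0+1)^{1/2}(\mathcal{N}_\perp+1)^\ell\mathbb{O}(\mathbb{K}_0+1)^{1/2}$ and the operator-norm bounds from Lemmas~\ref{lem:G1}, \ref{lem:G2}, and from \eqref{eq:R0-estimate}--\eqref{eq:R2-estimate} to the remaining piece $(\mathbb{K}_0+1)^{-1/2}(\text{perturbation})\psi$. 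Concretely, for \eqref{eqn:aux:2:1} I would write
\begin{align}
\|(\mathbb{K}_0+1)^{1/2}(\mathcal{N}_\perp+1)^\ell\mathbb{O}\,\mathbb{R}_0\psi\|
&= \sup_{\|\phi\|=1}\big|\langle\phi,(\mathbb{K}_0+1)^{1/2}(\mathcal{N}_\perp+1)^\ell\mathbb{O}\,\mathbb{R}_0\psi\rangle\big| \notag\\
&= \sup_{\|\phi\|=1}\big|\langle (\mathbb{K}_0+1)^{1/2}(\mathcal{N}_\perp+1)^\ell\mathbb{O}^*\phi,\,\mathbb{R}_0\psi\rangle\big|,
\end{align}
where I use that $\mathbb{O}$ is self-adjoint (both $\Rz$ with real $z$ and $\Qz(\Ez-\FockGz)^{-1}$ are; for complex $z\in\gamma$ one works with $\overline{z}$ as in the \ResHzzbar{} notation). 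Then I would apply the bound \eqref{eq:R0-estimate} with a suitable choice of the free parameter $\ell$ in that estimate — matched so that the $(\mathcal{N}_\perp+1)$-weight landing on $\mathbb{O}^*\phi$ is exactly $(\mathcal{N}_\perp+1)^{\ell}$ (so $3/4-\ell_{\text{free}} = -\ell$, i.e. $\ell_{\text{free}} = \ell+3/4$), giving an $N^{(\beta-1)/2}$ prefactor times $\|(\mathbb{K}_0+1)^{1/2}(\mathcal{N}_\perp+1)^{\ell}\mathbb{O}^*\phi\|\cdot\|(\mathbb{K}_0+1)^{1/2}(\mathcal{N}_\perp+1)^{\ell+3/2}\psi\|$. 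Finally Lemma~\ref{lem:aux} bounds $\|(\mathbb{K}_0+1)^{1/2}(\mathcal{N}_\perp+1)^{\ell}\mathbb{O}^*\phi\|\ls\|(\mathcal{N}_\perp+1)^{\ell}\phi\|$ — wait, here Lemma~\ref{lem:aux} is stated with a $(\mathbb{K}_0+1)^{1/2}$ on the \emph{right} as well, so one actually writes $(\mathbb{K}_0+1)^{1/2}(\mathcal{N}_\perp+1)^\ell\mathbb{O}^*\phi = (\mathbb{K}_0+1)^{1/2}(\mathcal{N}_\perp+1)^\ell\mathbb{O}^*(\mathbb{K}_0+1)^{1/2}(\mathbb{K}_0+1)^{-1/2}\phi$ and applies Lemma~\ref{lem:aux} to get $\ls\|(\mathcal{N}_\perp+1)^\ell(\mathbb{K}_0+1)^{-1/2}\phi\|\ls\|(\mathcal{N}_\perp+1)^\ell\phi\|$. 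Since $\phi$ was a unit vector for $\ell=0$, and for general $\ell$ one absorbs $(\mathcal{N}_\perp+1)^\ell$ by testing against $(\mathcal{N}_\perp+1)^{-\ell}\phi$ instead at the start, this closes the estimate.

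The estimates \eqref{eqn:aux:2:3}, \eqref{eqn:aux:2:2}, \eqref{eqn:aux:2:4} follow by exactly the same insertion trick, now using Lemma~\ref{lem:G1} (for $\mathbb{G}_1$, giving $N^{(\beta-1)/2}$ and only one extra power of $(\mathcal{N}_\perp+1)$ rather than $3/2$), the estimate \eqref{eq:R1-estimate} (for $\mathbb{R}_1$, giving $N^{\beta-1}$ and $3/2$ extra powers), and Lemma~\ref{lem:G2} (for $\mathbb{G}_2$, giving $N^{\beta-1}$; note $\mathbb{G}_2$ already comes with a $\mathbb{K}_0^{1/2}$ on both sides, so the $(\mathbb{K}_0+1)^{-1/2}$ from the insertion is absorbed directly and one gains $(\mathcal{N}_\perp+1)^{1/2}$ on each side, i.e. $1$ in total — but to reach the stated $3/2$ one simply bounds $(\mathcal{N}_\perp+1)^{1}\le(\mathcal{N}_\perp+1)^{3/2}$, which is harmless). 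In each case the key structural point is that all these bounds are $(\mathbb{K}_0+1)^{1/2}$-weighted on \emph{both} sides, which is what makes the insertion of $(\mathbb{K}_0+1)^{1/2}(\mathbb{K}_0+1)^{-1/2}$ work cleanly together with Lemma~\ref{lem:aux}.

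The one point that needs a little care — and which I expect to be the main (minor) obstacle — is the treatment of the $\mathbb{K}_0$-weights when $\mathbb{O}=\Qz(\Ez-\FockGz)^{-1}$, because there the spectral calculus in Lemma~\ref{lem:aux} relies on the positivity $\mathbb{G}_0\ge c_5(\mathbb{K}_0+\mathcal{N}_\perp)-c_6$ together with the fact that $\mathcal{N}_\perp$ commutes with $\mathbb{U}_\tau\mathbb{G}_0\mathbb{U}_\tau^*$; one must make sure the projector $\Qz$ does not spoil the commutation, which is handled exactly as at the end of the proof of Lemma~\ref{lem:aux} by decomposing $\Qz=\mathbbm{1}-\Pz$ and using $\|(\mathbb{K}_0+1)^{1/2}\Chiz\|\ls 1$ from Lemma~\ref{lem:BT:K0}. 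A second small subtlety is that for half-integer $\ell$ (the stated range $\ell\in\mathbb{Z}/2$) one cannot naively shift the weight onto the test vector as an integer power; this is dealt with by the interpolation already built into Lemma~\ref{lem:aux}, which is stated for all real $\ell$. Apart from these bookkeeping issues the proof is a direct concatenation of Lemma~\ref{lem:aux} with the already-established operator bounds, so I would keep the write-up short, proving \eqref{eqn:aux:2:1} in detail and remarking that \eqref{eqn:aux:2:3}--\eqref{eqn:aux:2:4} are obtained by the same argument with Lemma~\ref{lem:G1}, \eqref{eq:R1-estimate}, and Lemma~\ref{lem:G2} in place of \eqref{eq:R0-estimate}.
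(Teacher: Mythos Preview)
Your approach is correct and essentially the same as the paper's: both arguments feed the bilinear estimate \eqref{eq:R0-estimate} (respectively Lemma~\ref{lem:G1}, \eqref{eq:R1-estimate}, Lemma~\ref{lem:G2}) into Lemma~\ref{lem:aux} to absorb the resolvent piece, with the only cosmetic difference that the paper squares the norm (writing $\|X\|^2=\langle \FockO^*(\boldKz+1)(\Np+1)^{2\ell}\FockO\FockR_0\psi,\FockR_0\psi\rangle$ and dividing by $\|X\|$ at the end), whereas you dualize against a unit test vector. The paper likewise proves only \eqref{eqn:aux:2:1} in detail and refers to the same auxiliary lemmas for the remaining three estimates.
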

\begin{proof}
We prove the statement for $\FockO=(z-\FockGz)^{-1}$, the case $\FockO=\Qz(\Ez-\FockGz)^{-1}$ works analogously.
We start with the proof of \eqref{eqn:aux:2:1}. By \eqref{eq:R0-estimate} we find that
\begin{align}
&\norm{(\boldKz+1)^\frac12(\Np+1)^\l\Rz\FockRz\psi}^2\nonumber\\
&\quad=\lr{\Rz(\boldKz+1)(\Np+1)^{2\l}\Rz\FockRz\psi,\FockRz\psi}\nonumber\\
&\quad\ls N^\frac{\beta-1}{2}\norm{(\boldKz+1)^\frac12(\Np+1)^{\frac32+\l}\psi} \nonumber\\
&\qquad\times
\norm{(\boldKz+1)^\frac12(\Np+1)^{-\l}\Rz(\boldKz+1)(\Np+1)^{2\l}\Rz\FockRz\psi}\,.
\end{align}
From Lemma \ref{lem:aux} we get 
\begin{align}
&\norm{(\boldKz+1)^\frac12(\Np+1)^{-\l}\Rz(\boldKz+1)(\Np+1)^{2\l}\Rz\FockRz\psi}\nonumber\\
&\quad \ls \norm{(\Np+1)^\l(\boldKz+1)^\frac12\Rz\FockRz\psi}
\end{align}
and thus arrive at 
\begin{align}
\norm{(\boldKz+1)^\frac12(\Np+1)^\l\Rz\FockRz\psi} \ls N^\frac{\beta-1}{2}\norm{(\boldKz+1)^\frac12(\Np+1)^{\frac32+\l}\psi}\,.
\end{align}
The other two estimates follows same with similar ideas using Lemma \ref{lem:G1} and \eqref{eq:R2-estimate} instead of \eqref{eq:R0-estimate}. 
\end{proof}

\subsection{Bounds on the resolvent of $\FockG$}
Finally, we also need to control expressions involving a resolvent of the full generator $\FockG$.

\begin{lem}\label{lem:resolventG}
Let $a \geq 5/2$, $\psi\in\Fp$ and $z\in\gamma$ for $\gamma$ as in \eqref{gamma}. Then
\begin{align}
\norm{(\boldKz+1)^\frac12 \frac{1}{(\Np + 1 )^{a}} \frac{\Q}{z - \FockG} \psi}
&\ls \norm{\psi} .
\end{align}
\end{lem}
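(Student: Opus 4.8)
The plan is to control the operator $(\boldKz+1)^{1/2}(\Np+1)^{-a}\Q(z-\FockG)^{-1}$ by comparing $\FockG$ with $\FockGz$ and exploiting the bounds already available for the resolvent of $\FockGz$ (Lemma \ref{lem:aux}, Lemma \ref{lem:aux:2}), together with the kinetic-energy bound $\boldKz\ls\FockG+(\Np+1)^4$ from Lemma \ref{lem:G} and the moment bounds on $\Chi$ from Lemma \ref{lem:kinetic:energy:excitations}. First I would write, using the resolvent identity in the form
\begin{align}
\frac{1}{z-\FockG}=\frac{1}{z-\FockGz}+\frac{1}{z-\FockG}\,\FockRz\,\frac{1}{z-\FockGz}\,,
\end{align}
and insert the spectral projection $\Q=\id-\P$. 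On the $\P$-part, $\Q(z-\FockG)^{-1}\P=0$ since $\P$ commutes with $\FockG$ and $\Q\P=0$; more precisely one writes $\Q(z-\FockG)^{-1}=(z-E)^{-1}\Q$ restricted appropriately, but the cleaner route is to observe that on $\mathrm{ran}\,\Q$ one has $(z-\FockG)^{-1}\Q=\Q(z-\FockG)^{-1}\Q$ and the spectral gap of $\FockG$ (established in Section \ref{sec:intro:pert:theory} via \cite{boccato2017_2}) gives $\|\Q(z-\FockG)^{-1}\Q\|_{\op}\ls1$ uniformly for $z\in\gamma$.

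Next I would handle the weight $(\boldKz+1)^{1/2}(\Np+1)^{-a}$. Since $a\geq5/2$, the factor $(\Np+1)^{-a}$ has enough decay to absorb the polynomial growth coming from $\boldKz\ls\FockG+(\Np+1)^4$. Concretely, write $(\boldKz+1)^{1/2}(\Np+1)^{-a}=(\boldKz+1)^{1/2}(\Np+1)^{-2}\cdot(\Np+1)^{2-a}$; the first factor is bounded because $(\boldKz+1)(\Np+1)^{-4}\ls \FockG(\Np+1)^{-4}+1$ and one moves the resolvent past $(\Np+1)^{-2}$ at the cost of commutators, each of which is controlled by Lemmas \ref{lem:G1}, \ref{lem:G2} and Proposition \ref{prop:renormalized:hamiltonian} (these commutators all gain a negative power of $N$, in particular they are bounded). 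The remaining factor $(\Np+1)^{2-a}$ is bounded by $1$. So one reduces to showing $\|(\boldKz+1)^{1/2}(\Np+1)^{-2}\,\Q(z-\FockG)^{-1}\psi\|\ls\|\psi\|$, and for this one iterates the resolvent identity once or twice to trade $\FockG$ for $\FockGz$, each iteration producing a factor $\FockRz$ which by \eqref{eq:R0-estimate} gains $N^{(\beta-1)/2}$ while costing at most $(\Np+1)^{3/2}$, absorbed into the surplus of weights; the leading $(z-\FockGz)^{-1}$ terms are then estimated directly by Lemma \ref{lem:aux} and Lemma \ref{lem:aux:2}.

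The main obstacle I anticipate is the interplay between the two incompatible facts that $\FockRz$ is \emph{not} bounded relative to $\boldKz$ uniformly (it is only bounded relative to $(\boldKz+1)^{1/2}(\Np+1)^{3/4}$ on each side) and that $(z-\FockG)^{-1}$ is only controlled as a bounded operator on the bare Hilbert space, with no a priori kinetic-energy smoothing. Hence one cannot naively put $(\boldKz+1)^{1/2}$ to the left of $(z-\FockG)^{-1}$; instead one must first strip off $(\boldKz+1)^{1/2}$ using Lemma \ref{lem:G} \emph{applied to the state} $(z-\FockG)^{-1}\psi$, i.e. bound $\|(\boldKz+1)^{1/2}\phi\|^2\ls\lr{\phi,\FockG\phi}+\|(\Np+1)^2\phi\|^2$ with $\phi=(\Np+1)^{-a}\Q(z-\FockG)^{-1}\psi$, and then commute $(\Np+1)^{-a}$ and $\FockG$ through the resolvent. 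The term $\lr{\phi,\FockG\phi}$ becomes, after commuting, roughly $\lr{\psi,(\ldots)(z-\FockG)^{-1}(\ldots)\psi}$ with $\FockG$ producing a cancellation against $(z-\FockG)^{-1}$ up to a factor $z$ and a commutator $[\FockG,(\Np+1)^{-a}]$; the commutator is lower order in $N$ by Lemma \ref{lem:G1}, Lemma \ref{lem:G2} and Proposition \ref{prop:renormalized:hamiltonian}, and what remains is $\|(\Np+1)^{-a}\Q(z-\FockG)^{-1}\psi\|$, controlled by the spectral gap. Carefully bookkeeping the weights so that every $\FockRz$ or $\FockG_1$ insertion is matched by an available power of $(\Np+1)^{-1}$ from the $a\geq5/2$ budget is the technical heart of the argument, but it is routine given the lemmas already proved.
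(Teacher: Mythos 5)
Your final (``obstacle'') paragraph lands on essentially the paper's proof: apply the lower bound of Lemma~\ref{lem:G} to the vector $\phi=(\Np+1)^{-a}\Q(z-\FockG)^{-1}\psi$, push one $(\Np+1)^{-a}$ through, and exploit the cancellation $(\FockG-z)\Q(z-\FockG)^{-1}\psi=-\Q\psi$. The paper does the bookkeeping a bit differently: rather than commuting $(\Np+1)^{-a}$ past all of $\FockG$, it starts from $\boldKz\ls\FockGz+\FockG_2+\tilde{\FockR}$, where $\tilde{\FockR}:=\frac{\hat{v}(0)}{2N}\sum_{p,q}c_p^2c_q^2\ad_p\ad_qa_pa_q$ and $\FockG_2+\tilde{\FockR}\geq 0$. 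Since $\FockG_2$ and $\tilde{\FockR}$ preserve particle number, moving $(\Np+1)^{-a}$ through produces a nontrivial commutator only from the off-diagonal $G_p$-pairing part of $\FockGz$, which is bounded because $G\in\ell^2$; the cubic operators $\FockG_1,\FockR_2$ are then brought in \emph{after} the cancellation via $\FockGz+\FockG_2=\FockG-\FockG_1-\FockR_2=(\FockG-z)+z-\FockG_1-\FockR_2$, so one never has to commute $(\Np+1)^{-a}$ past them.

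Two points to tighten. First, the residual $\FockG_1$ and $\FockR_2$ contributions (whether arising from your commutator $[\FockG,(\Np+1)^{-a}]$ or from the paper's decomposition) are \emph{not} simply ``lower order'': estimating them via Lemma~\ref{lem:G1} and \eqref{eq:R2-estimate} unavoidably reproduces the factor $\norm{(\boldKz+1)^{1/2}(\Np+1)^{-a}\Q(z-\FockG)^{-1}\psi}$ on the right-hand side, i.e.\ the very quantity being bounded, with small prefactors $N^{(\beta-1)/2}$ and $N^{\frac32(\beta-1)}$. The closing Young's-inequality absorption is therefore an essential step, not a formality, and your summary ``what remains is $\norm{(\Np+1)^{-a}\Q(z-\FockG)^{-1}\psi}$, controlled by the spectral gap'' misstates the structure. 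Second, the opening plan based on iterating the resolvent identity does not close as written: $\FockRz$ is controlled only as a quadratic form with $(\boldKz+1)^{1/2}$ weights on \emph{both} sides, not as a bounded map into $\Fp$, so $\FockRz(z-\FockGz)^{-1}\psi$ is not a vector whose norm you can estimate, and the second term of the identity still contains the operator $(\boldKz+1)^{1/2}(\Np+1)^{-a}\Q(z-\FockG)^{-1}$ that you are trying to bound. You do abandon this plan mid-proposal, but as drafted it would not lead anywhere.
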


\begin{proof}
Let $a \geq 5/2$.
From \eqref{eq:bound-G04} and $\FockG_2 \geq -\tilde{\FockR}$  with $\tilde{\FockR}:=\frac{\hat{v}(0)}{2N}\sum_{p,q\in\Lsp}c_p^2c_q^2\ad_p\ad_qa_pa_q$ as in \eqref{eqn:lower:bound:G2}, we get 
\begin{align}
&\norm{\boldKz^{1/2} \frac{1}{(\Np + 1 )^{a}} \frac{\Q}{z - \FockG} \psi}^2
\nonumber \\
&\quad \ls  
\scp{\frac{1}{(\Np + 1 )^{a}} \frac{\Q}{z - \FockG} \psi}{\left(  \mathbb{G}_0 + \FockG_2 +\tilde{\FockR} \right) \frac{1}{(\Np + 1 )^{a}} \frac{\Q}{z - \FockG} \psi}
\nonumber \\
&\quad = 
\scp{\frac{1}{(\Np + 1 )^{2 a}} \frac{\Q}{z - \FockG} \psi}{\left(  \FockG_0 + \FockG_2 +\tilde{\FockR} \right)  \frac{\Q}{z - \FockG} \psi}
\nonumber \\
&\qquad + 
\abs{\scp{\frac{1}{(\Np + 1 )^{a}} \frac{\Q}{z - \FockG} \psi}{\left(  \sum_{p \in \Lambda_+^*} G_p ( \left[ (\mathcal{N}_+ +3)^{-a} - (\mathcal{N}_+ +1)^{-a} \right]a_pa_{-p} + {\rm h.c.})   \right)  \frac{\Q}{z - \FockG} \psi}}\,.
\end{align}
Since $G_p \in \ell^2(\Lsp)$ from Lemma \ref{lemma:propF,G}, the last term of the r.h.s. is bounded by a constant times $\norm{\psi}^2$. Moreover, since $\FockG_0  + \FockG_2 = \FockG - \FockG_1 -  \FockR_2 =  \FockG - z + z - \FockG_1   - \FockR_2$,  we get
\begin{align}
\norm{\boldKz^{1/2} \frac{1}{(\Np + 1 )^{a}} \frac{\Q}{z - \FockG} \psi}^2
& \ls  \norm{\psi}^2 
\nonumber \\
\label{eq:estimate for full resolvent 1}
&\quad +  
\abs{\scp{\frac{1}{(\Np + 1 )^{2 a}} \frac{\Q}{z - \FockG} \psi}{\FockG_1 \frac{\Q}{z - \FockG} \psi}}
\\
\label{eq:estimate for full resolvent 2}
&\quad +  
\abs{\scp{\frac{1}{(\Np + 1 )^{2 a}} \frac{\Q}{z - \FockG} \psi}{\FockR_2  \frac{\Q}{z - \FockG} \psi}}\\
\label{eq:estimate for full resolvent 3}
&\quad + \abs{\scp{\frac{1}{(\Np + 1 )^{2 a}} \frac{\Q}{z - \FockG} \psi}{\tilde{\FockR} \frac{\Q}{z - \FockG} \psi}}\; . 
\end{align}
We estimate the three terms separately and start with Lemma \ref{lem:G1} by 
\begin{align}
\eqref{eq:estimate for full resolvent 1}
&\ls  N^{\frac{\beta -1}{2}} 
\norm{\left( \boldKz + 1 \right)^{1/2} (\Np + 1 )^{1 - 2 a} \frac{\Q}{z - \FockG} \psi}
\norm{ \frac{\Q}{z - \FockG} \psi}
\nonumber \\
&\quad +  N^{\frac{\beta -1}{2}} 
\norm{ (\Np + 1 )^{1 + a - 2 a} \frac{\Q}{z - \FockG} \psi}
\norm{\left( \boldKz + 1 \right)^{1/2} (\Np + 1 )^{- a} \frac{\Q}{z - \FockG} \psi}
\nonumber \\
&\ls N^{\frac{\beta -1}{2}} 
\norm{\left( \boldKz + 1 \right)^{1/2} (\Np + 1 )^{- a} \frac{\Q}{z - \FockG} \psi}
\norm{ \psi} .
\end{align}
From \eqref{eq:R2-estimate} we furthermore get 
\begin{align}
\eqref{eq:estimate for full resolvent 2}
&\ls  N^{\frac32(\beta- 1) }  
\bigg[ \norm{\left( \boldKz + 1 \right)^{1/2} (\Np + 1 )^{2 - 2 a} \frac{\Q}{z - \FockG} \psi}
\norm{ \frac{\Q}{z - \FockG} \psi}
\nonumber \\
&\hspace{3cm}  +
\norm{ (\Np + 1 )^{2 + a - 2 a} \frac{\Q}{z - \FockG} \psi}
\norm{\left( \boldKz + 1 \right)^{1/2} (\Np + 1 )^{- a} \frac{\Q}{z - \FockG} \psi}
\bigg]
\nonumber \\
&\ls  N^{\frac32(\beta- 1) }  
\norm{\left( \boldKz + 1 \right)^{1/2} (\Np + 1 )^{- a} \frac{\Q}{z - \FockG} \psi} 
\norm{\psi} .
\end{align}
Finally,
\begin{align}
\eqref{eq:estimate for full resolvent 3}
&\ls N^{-1}\sum_{p,q\in\Lsp}\norm{(\Np+1)a_pa_q(\Np+1)^{-a}\frac{\Q}{z-\FockG}\psi}\norm{(\Np+1)^{-1}a_pa_q\frac{\Q}{z-\FockG}\psi}\nonumber\\
&\ls N^{-1}\norm{\psi}^2\,.
\end{align}
In total, this shows
\begin{align}
\norm{\boldKz^{1/2} \frac{1}{(\Np + 1 )^{a}} \frac{\Q}{z - \FockG} \psi}^2
\ls& \norm{\psi}^2 
+   N^{\frac{\beta- 1}{2} } 
\norm{\left( \boldKz + 1 \right)^{1/2} (\Np + 1 )^{- a} \frac{\Q}{z - \FockG} \psi} 
\norm{\psi} \notag \\
&\ls \norm{\psi}^2  + \frac{1}{2}
\norm{\boldKz^{1/2} (\Np + 1 )^{- a} \frac{\Q}{z - \FockG} \psi}^2 .
\end{align}
\end{proof}

\subsection{Estimate of $\Tr \mathbb{A}\mathbb{B}_P $}\label{sec:proof:prop:perturbation:estimates:P}

By definition of $\Tr \mathbb{A} \mathbb{B}_P$ in \eqref{TrAB_P}, it follows from \eqref{eqn:z-E} that
\begin{align}
\label{RPA}
\vert \Tr \mathbb{A} \mathbb{B}_P \vert 
\ls&\sup_{z\in\gamma}\bigg(\left|\lr{\Chi,\FockR_0\Rz\FockG_2\Rz\FockA\Chi}\right| + \left|\lr{\Chi,\FockR_0\Rz\FockG_1\Rz\FockG_1\Rz\FockA\Chi}\right|\notag\\
&+\left|\lr{\Chi,\FockR_1\Rz\FockG_1\Rz\FockA\Chi} \right| +\left|\lr{\Chi,\FockR_2\Rz\FockA\Chi}\right|\Bigg)\,. \notag \\
&=: \cR_{P1} +\cR_{P2}+\cR_{P3}+\cR_{P4} \; . 
\end{align}
For the first term, Lemmas \ref{lem:G1} and\ref{lem:aux} lead to the estimate
\begin{align}
|\cR_{P1}|&\ls \sup_{z\in\gamma}\left|\lr{\Chi,\FockG_1\Rz\FockG_2\Rz\mathbb{A}\Chi}\right|\nonumber\\
&\ls N^{\beta-1}\sup_{z\in\gamma}\norm{\Np^\frac32\boldKz^\frac12\Rz\Chi}\norm{\Np^{-\frac12}\boldKz^\frac12\Rz\FockG_1\mathbb{A}\Chi}\nonumber\\
&\ls N^{\beta-1}\sup_{z\in\gamma}\norm{(\Np+1)^2 (\mathbb{K}_0 + 1)^{-\frac{1}{2}}\Chi} \norm{(\boldKz+1)^{-\frac12}(\Np+1)^{-\frac12} \FockG_1 \mathbb{A} \Chi}\nonumber\\
&\ls N^{\frac32(\beta-1)}\norm{(\boldKz+1)^\frac12(\Np+1)^2\Chi}^2 \,,
\end{align}
where we used for the last two estimates Lemma \ref{lem:aux:2}, the bound $\| \mathbb{A} \psi \| \ls \| A \| \| \mathcal{N}_\perp \psi \|$ and Lemma \ref{lem:kinetic:energy:excitations}. 
For the second term, we find similarly with Lemmas \ref{lem:G1}, \ref{lem:aux} and \ref{lem:aux:2} 
\begin{align}
|\cR_{P2}|&\ls\sup_{z\in\gamma}\left|\lr{\Chi,\FockRz\Rz\FockG_1\Rz\FockG_1\Rz\mathbb{A}\Chi}\right|\nonumber\\
&\ls N^\frac{\beta-1}{2}\sup_{z\in\gamma}\norm{(\Np+1)^\frac12(\boldKz+1)^\frac12\Rz\FockRz\Chi}\notag\\
&\qquad\qquad\times\norm{(\Np+1)^\frac12(\boldKz+1)^\frac12\Rz\FockG_1\Rz\mathbb{A}\Chi}\nonumber\\
&\ls N^{\frac32(\beta-1)}\norm{(\boldKz+1)^\frac12(\Np+1)^3\Chi}^2\ls N^{3(\beta-1)/2}\,,
\end{align}
where the last estimate is again a consequence of Lemma \ref{lem:kinetic:energy:excitations}. For the third term we proceed analogously and find 
\begin{align}
|\cR_{P3}|&\ls\sup_{z\in\gamma}\left|\lr{\Chi,\FockR_1\Rz\FockG_1\Rz\mathbb{A}\Chi}\right|\nonumber\\
&\ls N^\frac{\beta-1}{2}\sup_{z\in\gamma}\norm{(\Np+1)(\boldKz+1)^\frac12\Rz\Chi}\norm{(\boldKz+1)^\frac12\Rz\FockR_1\mathbb{A}\Chi}\nonumber\\
&\ls N^{\frac32(\beta-1)}\norm{(\boldKz+1)^\frac12(\Np+1)^3\Chi}^2\ls N^{3( \beta-1)/2} \; . 
\end{align}
Finally, the last term is given by
\begin{align}
|\cR_{P3}|\ls\sup_{z\in\gamma}\left|\lr{\Chi,\FockR_2\Rz\mathbb{A}\Chi}\right|\ls N^{\frac32(\beta-1)}\norm{(\boldKz+1)^\frac12(\Np+1)^\frac52 \Chi}^2\,.
\end{align}
In summary, we find that
\begin{align}
\left|\Tr\FockA\FockB_P\right|\ls N^{3(\beta-1)/2} \; . \label{eq:RP-final}
\end{align}

\subsection{Estimate of $\Tr \mathbb{A}\mathbb{B}_Q$}\label{sec:proof:prop:perturbation:estimates:Q}

In this section we estimate $\Tr \mathbb{A}\mathbb{B}_P$ defined in \eqref{TrAB_Q}. It follows from \eqref{eqn:z-E} that
\begin{subequations}\label{RQA}
\begin{align}
|\Tr \mathbb{A}\mathbb{B}_Q|& \ls \sup_{z\in\gamma} \Bigg(
\left| \lr{\Chiz,\mathbb{A}\RQG\FockR_0\Chiz}\lr{\Chiz,\FockG_2\Chiz}\right| \label{eq:RQa}\\
&\hspace{3cm} + \left|\lr{\Chiz,\mathbb{A}\RQG\FockR_0\RQz\FockG_2\Chiz}\right|  \label{eq:RQb} \\
&\hspace{3cm} + \left| \lr{\Chiz,\FockG_2\RQz\mathbb{A}\RQG\FockR_0\Chiz}\right|\label{eq:RQc} \\
&\hspace{3cm} + \left| \lr{\Chiz,\mathbb{A}\RQG\FockR_0\Chiz} \lr{\Chiz\FockG_1\RQz\FockG_1\Chiz} \right|\label{eq:RQf} \\
&\hspace{3cm} + \left| \lr{\Chiz,\mathbb{A}\RQG\FockR_0\RQz\FockG_1\RQz\FockG_1\Chiz} \right| \label{eq:RQh}\\
&\hspace{3cm} + \left| \lr{\Chiz,\FockG_1\RQz\mathbb{A}\RQG\FockR_0\RQz\FockG_1\Chiz} \right|\label{eq:RQi} \\
&\hspace{3cm} + \left| \lr{\Chiz,\FockG_1\RQz\FockG_1\RQz\mathbb{A}\RQG\FockR_0\Chiz} \right| \label{eq:RQj} \\
&\hspace{3cm} + \left| \lr{\Chiz,\FockG_1\RQz\mathbb{A}\RQG\FockR_1\Chiz} \right| \label{eq:RQl}\\
&\hspace{3cm} + \left| \lr{\Chiz,\mathbb{A}\RQG\FockR_1\RQz\FockG_1\Chiz} \right| \label{eq:RQm}\\ 
&\hspace{3cm} + \left|\lr{\Chiz,\mathbb{A}\RQG\FockR_2\Chiz}\right| \Bigg). \label{eq:RQn}
\end{align}
\end{subequations}
We estimate the terms separately: 

\paragraph{Term \eqref{eq:RQa}.} For the first line of the r.h.s. of \eqref{RQA} we find with \eqref{eq:R0-estimate} and Lemma \ref{lem:G2}
\begin{align}
\vert \eqref{eq:RQa} \vert \ls& \sup_{z \in \gamma} {N^{(\beta-1)/2}}\Big\| ( \boldKz + 1)^{1/2} ( \mathcal{N}_\perp + 1)^{-5/2} \frac{\mathbb{Q}}{z-\mathbb{G}} \mathbb{A} \Chiz \Big\|  \; \Big\| ( \boldKz + 1)^{1/2} ( \mathcal{N}_\perp + 1)^{4} \Chiz \| \notag \\
&\hspace{2cm}\times N^{\beta -1}\Big\| ( \boldKz + 1)^{1/2} ( \mathcal{N}_\perp + 1)^{1/2} \Chiz \|^2 \; . 
\end{align}
With Lemma \ref{lem:resolventG} and using that $|z|\ls1$ for $z\in\gamma$, we get for the first term of the r.h.s. 
\begin{align}
\vert \eqref{eq:RQa} \vert \ls&   N^{3(\beta -1)/2} \; \| \mathbb{A} \Chiz \|   \| ( \boldKz + 1)^{1/2} ( \mathcal{N}_\perp + 1)^{4} \Chiz \| \; \| ( \boldKz + 1)^{1/2} ( \mathcal{N}_\perp + 1)^{1/2} \Chiz \| \,,
\end{align}
which we can finally bound since 
\begin{align}
\label{eq:bound-A}
\| \mathbb{A} \psi \| \ls \| \mathcal{N}_\perp \psi \|
\end{align}
 for any $\psi \in \mathcal{F} $.
With Lemma \ref{lem:BT:K0}, recalling that $\Chiz = \mathbb{U}_\tau \Omega$, we find that 
\begin{align}
\vert \eqref{eq:RQa} \vert \ls&  N^{3(\beta -1)/2}  \; .  
\end{align}

\paragraph{Term \eqref{eq:RQb}.} We proceed similarly and find with \eqref{eq:R0-estimate} 
\begin{align}
\vert \eqref{eq:RQb} \vert \ls & \sup_{z \in \gamma}N^{(\beta -1)/2} \Big\| ( \boldKz + 1)^{1/2} ( \mathcal{N}_\perp + 1)^{-5/2} \frac{\mathbb{Q}}{z-\mathbb{G}} \mathbb{A} \Chiz \Big\| \notag \\
& \hspace{2cm} \times \; \Big\| ( \boldKz + 1)^{1/2} ( \mathcal{N}_\perp + 1)^{4} \frac{\mathbb{Q}_0}{z-\mathbb{G}_0} \mathbb{G}_2 \Chiz \|   \; . 
\end{align}
For the first term of the r.h.s. we use Lemma \ref{lem:resolventG} and \eqref{eq:bound-A} and for the second term Lemma~\ref{lem:aux}. This yields
\begin{align}
\vert \eqref{eq:RQb} \vert  \ls N^{(\beta -1)/2}  \; \| \mathcal{N}_\perp \Chiz \| \; \Big\|  ( \mathcal{N}_\perp + 1)^{4}( \mathbb{K}_0 + 1)^{\frac{1}{2}} \RQz \mathbb{G}_2 \Chiz \| \; . \label{eq:RQb-1}
\end{align}
With Lemmas \ref{lem:aux:2} and \ref{lem:BT:K0}, we find that
\begin{align}
\vert \eqref{eq:RQb} \vert  \ls N^{3(\beta -1)/2} \; \| \mathcal{N}_\perp \Chiz \| \;  \| ( \mathcal{N}_\perp + 1)^{11/2}( \boldKz + 1)^{1/2} \Chiz \| 
 \ls N^{3( \beta-1)/2} \; . 
\end{align}

\paragraph{Term \eqref{eq:RQc}} From \eqref{eq:R0-estimate} we find 
\begin{align}
\vert \eqref{eq:RQc} \vert \ls& \sup_{z \in \gamma} N^{(\beta-1)/2} \Big\| ( \boldKz+1)^{1/2} ( \mathcal{N}_\perp +1)^{-5/2} \frac{\mathbb{Q}}{z-\mathbb{G}} \mathbb{A}\frac{\mathbb{Q}_0}{z-\mathbb{G}_0} \mathbb{G}_2 \Chiz \Big\| \notag \\
&\hspace{2cm} \times \; \Big\| ( \boldKz+1)^{1/2} ( \mathcal{N}_\perp +1)^{4}  \Chiz \|  \; . 
\end{align}
We use Lemmas \ref{lem:resolventG}, \ref{lem:aux} and \eqref{eq:bound-A} for the first and Lemma \ref{lem:BT:K0} for the second term and find (similarly as before) 
\begin{align}
\vert \eqref{eq:RQc} \vert \ls& \sup_{z \in \gamma} N^{(\beta-1)/2}\Big\|\mathcal{N}_\perp \frac{\mathbb{Q}_0}{z-\mathbb{G}_0} \mathbb{G}_2 \Chiz \Big\|  \; .
\end{align}
The remaining term can be estimated with Lemma \ref{lem:aux:2} and thus we finally get 
\begin{align}
\vert \eqref{eq:RQc} \vert \ls  N^{3(\beta-1)/2} \Big\| (\mathcal{N}_\perp+1)^2 ( \boldKz + 1)^{1/2} \Chiz \Big\| \ls N^{3(\beta-1)/2} \; . 
\end{align}

\paragraph{Term \eqref{eq:RQf}} It follows immediately from the estimate in \eqref{eq:RQa} and Lemma \ref{lem:aux:2} that
\begin{equation}
|\eqref{eq:RQf}|\ls N^{\frac32(\beta-1)}\,.
\end{equation}

\paragraph{Term \eqref{eq:RQh}. } We use \eqref{eq:R0-estimate} first and afterwards as above Lemmas \ref{lem:resolventG}, \ref{lem:aux}, \ref{lem:aux:2}, \ref{lem:BT:K0} and \eqref{eq:bound-A} to obtain
\begin{align}
\vert \eqref{eq:RQh} \vert \ls& \sup_{z \in \gamma} N^{(\beta-1)/2} \Big\|  ( \boldKz + 1)^{1/2}   ( \mathcal{N}_\perp + 1)^{-5/2} \frac{\mathbb{Q}}{z-\mathbb{G} } \mathbb{A} \Chiz \Big\| \; \notag \\
& \hspace{3cm} \times \|  ( \boldKz + 1)^{1/2}   ( \mathcal{N}_\perp + 1)^{4}  \frac{\mathbb{Q}_0}{z-\mathbb{G}_0} \mathbb{G}_1 \frac{\mathbb{Q}_0}{z-\mathbb{G}_0} \mathbb{G}_1  \Chiz \| \notag \\
\ls& \sup_{z \in \gamma} N^{\beta-1} \|  ( \boldKz + 1)^{1/2}   ( \mathcal{N}_\perp + 1)^{5}   \frac{\mathbb{Q}_0}{z-\mathbb{G}_0} \mathbb{G}_1  \Chiz \| \; \| \mathcal{N}_\perp \Chiz \| \ls N^{3(\beta-1)/2} \; .
\end{align}

\paragraph{Term \eqref{eq:RQi}.} Formula \eqref{eq:R0-estimate} with Lemmas \ref{lem:resolventG}, \ref{lem:aux}, \ref{lem:aux:2},  \ref{lem:BT:K0} and well as \eqref{eq:bound-A} lead to 
\begin{align}
\vert \eqref{eq:RQi} \vert \ls&  \sup_{z \in \gamma} N^{(\beta-1)/2} \Big\|  ( \boldKz + 1)^{1/2}   ( \mathcal{N}_\perp + 1)^{-5/2} \frac{\mathbb{Q}}{z-\mathbb{G}} \mathbb{A}\frac{\mathbb{Q}_0}{z-\mathbb{G}_0} \mathbb{G}_1 \Chiz \Big\| \; \notag \\
& \hspace{3cm} \times \|  ( \boldKz + 1)^{1/2}   ( \mathcal{N}_\perp + 1)^{4}  \frac{\mathbb{Q}_0}{z-\mathbb{G}_0} \mathbb{G}_1 \ \Chiz \| \notag \\
\ls& N^{(\beta-1)/2} \|  \mathcal{N}_\perp \RQz  \mathbb{G}_1 \Chiz \|  \; \|  ( \boldKz + 1)^{1/2}   ( \mathcal{N}_\perp + 1)^{4}\RQz \mathbb{G}_1 \ \Chiz \| \notag\\
\ls& N^{3(\beta-1)/2}\,.
\end{align}

\paragraph{Term \eqref{eq:RQj}.} Similarly as before we use \eqref{eq:R0-estimate} first, and then Lemmas \ref{lem:resolventG}, \ref{lem:aux}, \ref{lem:aux:2} and \ref{lem:BT:K0}. We obtain 
\begin{align}
\vert \eqref{eq:RQj} \vert \ls&  \sup_{z \in \gamma} N^{(\beta-1)/2} \Big\|  ( \boldKz + 1)^{1/2}   ( \mathcal{N}_\perp + 1)^{-5/2} \frac{\mathbb{Q}}{z-\mathbb{G}} \mathbb{A} \frac{\mathbb{Q}_0}{z-\mathbb{G}_0} \mathbb{G}_1 \frac{\mathbb{Q}_0}{z-\mathbb{G}_0} \mathbb{G}_1 \Chiz \Big\| \; \notag \\
\ls& N^{3( \beta-1)/2} \; . 
\end{align}

\paragraph{Term \eqref{eq:RQl}.} As before we use \eqref{eq:R1-estimate}, Lemmas \ref{lem:resolventG}, \ref{lem:aux}, \ref{lem:aux:2}  and \ref{lem:BT:K0} as well as the estimate \eqref{eq:bound-A} to compute
\begin{align}
\vert \eqref{eq:RQl} \vert \ls& \sup_{z \in \gamma} N^{\beta-1} \Big\|  ( \boldKz + 1)^{1/2}   ( \mathcal{N}_\perp + 1)^{-5/2} \frac{\mathbb{Q}}{z-\mathbb{G}} \mathbb{A} \frac{\mathbb{Q}_0}{z-\mathbb{G}_0}\mathbb{G}_1 \Chiz \Big\| \;\notag \\
& \hspace{2cm} \times \|  ( \boldKz + 1)^{1/2}   ( \mathcal{N}_\perp + 1)^{4} \Chiz \| \notag \\
\ls&  N^{3(\beta-1)/2} \; . 
\end{align}

\paragraph{Term \eqref{eq:RQm}.} We combine \eqref{eq:R1-estimate}, \ref{lem:resolventG}, \ref{lem:aux}, \ref{lem:aux:2} and \eqref{eq:bound-A}, which yields 
\begin{align}
\vert \eqref{eq:RQm} \vert \ls& \sup_{z \in \gamma} N^{\beta-1}\Big\|  ( \boldKz + 1)^{1/2}   ( \mathcal{N}_\perp + 1)^{-5/2} \frac{\mathbb{Q}}{z-\mathbb{G}} \mathbb{A} \Chiz \Big\| \; \notag \\
&\hspace{3cm} \times \|  ( \boldKz + 1)^{1/2}   ( \mathcal{N}_\perp + 1)^{4} \frac{\mathbb{Q}_0}{z-\mathbb{G}_0}\mathbb{G}_1 \Chiz \| \notag \\
\ls&  N^{3(\beta-1)/2} \; . 
\end{align}

\paragraph{Term \eqref{eq:RQn}.} For the last term we proceed similarly as before. With \eqref{eq:R2-estimate}, Lemma \ref{lem:resolventG} and \ref{lem:BT:K0}, we conclude that
\begin{align}
\vert \eqref{eq:RQn} \vert \ls& \sup_{z \in \gamma} N^{3(\beta-1)/2} \Big\|  ( \boldKz + 1)^{1/2}   ( \mathcal{N}_\perp + 1)^{-5/2} \frac{\mathbb{Q}}{z-\mathbb{G}}  \mathbb{A} \Chiz \Big\| \; \notag \\
&\hspace{3cm} \times \|  ( \boldKz + 1)^{1/2}   ( \mathcal{N}_\perp + 1)^{4} \Chiz \| \notag \\
\ls&  N^{\frac32(\beta-1)} \; . 
\end{align}

\paragraph{Summary.} Thus, in summary, we have proven that 
\begin{align}
\vert  \Tr\FockA\FockB_Q\vert \ls N^{3(\beta-1)/2} \; .  \label{eq:RQ-final}
\end{align}

\section{Explicit calculation of energy correction}
\label{section:explicit calculation of E-pert}

\subsection{Calculation for $E_{0,1}$}
Recall that $E_{0,1}$ is defined in \eqref{E01}.
Here, we prove that for $\beta \in \big(\frac{1}{2},1\big)$ we have
\begin{align}
E_{0,1} = C_{N,\beta}^{(1)} \Bigg( -\frac{1}{2N} \sum_{q \in \Lsp} \frac{\widehat{v}_N^\beta (q)^2}{2q^2} \Bigg) + O(N^{2(\beta-1)}),
\end{align}
with 
\begin{align}
C_{N,\beta}^{(1)} = \frac{1}{2} \sum_{p \in \Lsp} (s_pc_p-\eta_p) + \widehat{v}(0)^2 \sum_{p \in \Lsp} \frac{1}{\sqrt{\vert p \vert^4 + 2 p^2 \widehat{v}(0)} \left( p^2 + \sqrt{\vert p \vert^4 + 2 p^2 \widehat{v}(0)} \right)}.
\end{align}
Note that $|C_{N,\beta}^{(1)}| \ls 1$. We define 
\begin{align}
E^{(1)}_{0,1}& := -\frac{1}{2N} \sum_{\substack{p,q \in \Lsp \\ p\not= q}} \widehat{v}_N^\beta (p-q) (s_pc_p-\eta_p) \bigg[ s_qc_q + \frac{\widehat{v}_N^\beta (q)}{q^2}\bigg],
\end{align}
and
\begin{align}
E^{(2)}_{0,1}& := \frac{1}{N} \sum_{\substack{p,q \in \Lsp \\ p\not= q}} \frac{ \widehat{v}_N^\beta (p)^2 \; \; \widehat{v}_N^\beta (p-q) s_qc_q }{\sqrt{\vert p \vert^4 + 2 p^2 \widehat{v}_N^\beta(p) }  \left(  p^2 + \sqrt{\vert  p \vert^4 + 2 p^2 \widehat{v}_N^\beta(p) } \right) }\,,
\end{align}
then it follows that $E_{0,1} = E^{(1)}_{0,1} + E^{(2)}_{0,1}$. Note that the scattering equation \eqref{def:eta} implies that
\begin{align}\label{another_eta_bound}
\bigg| \eta_p + \frac{\widehat{v}_N^\beta(p)}{2p^2} \bigg| = \frac{1}{2N} |p|^{-2} \bigg| \sum_{q \in \Lsp} \widehat{v}_N^\beta(p-q) \eta_q \bigg| \ls N^{\beta-1} |p|^{-2}
\end{align}
by Lemmas~\ref{lem:aux:potential} and \ref{lem:HST}. Also, since $v$ has compact support, we have the estimate
\begin{align}\label{Taylor_of_v_hat}
\Big| \widehat{v}_N^\beta(p-q) - \widehat{v}_N^\beta(q) \Big| \leq N^{-\beta} |p| \int_0^1 |\nabla\widehat{v}_N^{\beta}(tp-q)| \dt \ls N^{-\beta} |p|.
\end{align}
By an expansion of the square root as in \cite[Sec.~5]{boccato2017_2}, this implies the pointwise estimate
\begin{equation}\label{Taylor_of_v_hat_in_square_root}
\left| \sqrt{\vert p \vert^4 + 2 p^2 \widehat{v}_N^\beta(p)} - \sqrt{\vert p \vert^4 + 2 p^2 \widehat{v}(0)} \right| \ls N^{-\beta} |p|.
\end{equation}

\noindent \textbf{Simplification of $E^{(1)}_{0,1}$.} We write
\begin{subequations}
\begin{align}
E^{(1)}_{0,1} &= \Bigg(\frac{1}{2} \sum_{p \in \Lsp} (s_pc_p-\eta_p)\Bigg) \Bigg( -\frac{1}{2N} \sum_{q \in \Lsp} \frac{\widehat{v}_N^\beta (q)^2}{q^2} \Bigg) \\
&\quad - \frac{1}{2N} \sum_{\substack{p,q \in \Lsp \\ p\not= q}} \widehat{v}_N^\beta (p-q) (s_pc_p-\eta_p) \bigg[ s_qc_q -\eta_q + \eta_q + \frac{\widehat{v}_N^\beta (q)}{2q^2}\bigg] \label{error_r_1} \\
&\quad + \frac{1}{2N} \sum_{p \in \Lsp} (s_pc_p-\eta_p) \frac{\widehat{v}_N^\beta(p)^2}{2p^2} \label{error_r_2} \\
&\quad - \frac{1}{2N} \sum_{\substack{p,q \in \Lsp \\ p\not= q}} \Big( \widehat{v}_N^\beta(p-q) - \widehat{v}_N^\beta(q) \Big) (s_pc_p-\eta_p) \frac{\widehat{v}_N^\beta (q)}{2q^2}. \label{error_r_3}
\end{align}
\end{subequations}
Then, using Lemma~\ref{lem:aux:potential}, Lemma~\ref{lem:cp:sp}, and \eqref{another_eta_bound}, we find that
\begin{equation}
|\eqref{error_r_1}| \ls N^{-1} \sum_{p,q \in \Lsp, p\not= q} |\widehat{v}_N^\beta(p-q)| \, |p|^{-6} \Big( |q|^{-6} + N^{\beta-1}|q|^{-2} \Big) \ls N^{-1} + N^{2(\beta-1)},
\end{equation}
and
\begin{equation}
|\eqref{error_r_2}| \ls N^{-1} \sum_{p \in \Lsp} \frac{\widehat{v}_N^\beta(p)^2}{2p^2} |p|^{-6} \ls N^{-1}.
\end{equation}
Furthermore, by Lemma~\ref{lem:aux:potential}, Lemma~\ref{lem:cp:sp}, and \eqref{Taylor_of_v_hat} we find
\begin{equation}
|\eqref{error_r_3}| \ls N^{-1} \sum_{\substack{p,q \in \Lsp \\ p\not= q}} N^{-\beta} |p| |p|^{-6} \frac{|\widehat{v}_N^\beta(q)|}{q^2} \ls N^{-1}.
\end{equation}
In conclusion, for $\beta \in \big(\frac{1}{2},1\big)$ we have
\begin{align}
E^{(1)}_{0,1} &= \Bigg(\frac{1}{2} \sum_{p \in \Lsp} (s_pc_p-\eta_p)\Bigg) \Bigg( -\frac{1}{2N} \sum_{q \in \Lsp} \frac{\widehat{v}_N^\beta (q)^2}{2q^2} \Bigg) + \mathcal{O}(N^{2(\beta-1)}).
\end{align}

\noindent \textbf{Simplification of $E^{(2)}_{0,1}$.} We start by writing
\begin{subequations}
\begin{align}
E^{(2)}_{0,1}& = - \frac{1}{N} \sum_{p,q \in \Lsp} \frac{ \widehat{v}_N^\beta(0)^2 \; \; \widehat{v}_N^\beta (q)}{\sqrt{\vert p \vert^4 + 2 p^2 \widehat{v}_N^\beta(p) }  \left(  p^2 + \sqrt{\vert  p \vert^4 + 2 p^2 \widehat{v}_N^\beta(p) }\right)} \left( \frac{\widehat{v}_N^\beta(q)}{2q^2} \right) \\
&\quad + \frac{1}{N} \sum_{p \in \Lsp} \frac{ \widehat{v}_N^\beta(0)^2 \; \; \widehat{v}_N^\beta(p)}{\sqrt{\vert p \vert^4 + 2 p^2 \widehat{v}_N^\beta(p) }  \left(  p^2 + \sqrt{\vert  p \vert^4 + 2 p^2 \widehat{v}_N^\beta(p) }\right)} \left( \frac{\widehat{v}_N^\beta(p)}{2p^2} \right) \label{error_r_4} \\
&\quad - \frac{1}{N} \sum_{\substack{p,q \in \Lsp \\ p\not= q}} \frac{ \left( \widehat{v}_N^\beta(p)^2 \; \; \widehat{v}_N^\beta(p-q) - \widehat{v}_N^\beta(0)^2 \; \; \widehat{v}_N^\beta(q) \right)}{\sqrt{\vert p \vert^4 + 2 p^2 \widehat{v}_N^\beta(p) }  \left(  p^2 + \sqrt{\vert  p \vert^4 + 2 p^2 \widehat{v}_N^\beta(p) }\right)} \left( \frac{\widehat{v}_N^\beta(q)}{2q^2} \right) \label{error_r_5} \\
&\quad + \frac{1}{N} \sum_{\substack{p,q \in \Lsp \\ p\not= q}} \frac{ \widehat{v}_N^\beta (p)^2 \; \; \widehat{v}_N^\beta (p-q)}{\sqrt{\vert p \vert^4 + 2 p^2 \widehat{v}_N^\beta(p) }  \left(  p^2 + \sqrt{\vert  p \vert^4 + 2 p^2 \widehat{v}_N^\beta(p) }\right)} \left( s_qc_q - \eta_q + \eta_q + \frac{\widehat{v}_N^\beta(q)}{2q^2} \right).\label{error_r_6}
\end{align}
\end{subequations}
We directly find $|\eqref{error_r_4}| \ls N^{-1}$. Next, using \eqref{Taylor_of_v_hat} for momenta $|p| \leq N^{\beta}$ and $\sup_{p \in \Lsp} \big|\widehat{v}_N^\beta(p)\big| \ls 1$ for momenta $|p| \geq N^{\beta}$, together with Lemma~\ref{lem:aux:potential}, we find
\begin{align}\label{r_5_estimate}
|\eqref{error_r_5}| &\ls \frac{1}{N} \sum_{\substack{p,q \in \Lsp, p\not= q \\ |p| \leq N^{\beta}}} \frac{ N^{-\beta} |p|}{\sqrt{\vert p \vert^4 + 2 p^2 \widehat{v}_N^\beta(p) }  \left(  p^2 + \sqrt{\vert  p \vert^4 + 2 p^2 \widehat{v}_N^\beta(p) }\right)} \left( \frac{\widehat{v}_N^\beta (q)^2}{2q^2} \right) \nonumber\\
&\quad + \frac{1}{N} \sum_{\substack{p,q \in \Lsp, p\not= q \\ |p| \geq N^{\beta}}} \frac{1}{\sqrt{\vert p \vert^4 + 2 p^2 \widehat{v}_N^\beta(p) }  \left(  p^2 + \sqrt{\vert  p \vert^4 + 2 p^2 \widehat{v}_N^\beta(p) }\right)} \left( \frac{\big|\widehat{v}_N^\beta(q)\big|}{2q^2} \right) \nonumber\\
&\ls N^{-1} \ln(N).
\end{align}
Furthermore, by Lemma~\ref{lem:cp:sp} and \eqref{another_eta_bound},
\begin{equation}
|\eqref{error_r_6}| \ls N^{-1} + N^{2(\beta-1)}.
\end{equation}
Finally, we use \eqref{Taylor_of_v_hat_in_square_root} to deduce that
\begin{align}
&\sum_{p \in \Lsp} \left| \frac{1}{\sqrt{\vert p \vert^4 + 2 p^2 \widehat{v}_N^\beta(p) }  \left(p^2 + \sqrt{\vert  p \vert^4 + 2 p^2 \widehat{v}_N^\beta(p)}\right)} - \frac{1}{\sqrt{\vert p \vert^4 + 2 p^2 \widehat{v}(0) }  \left(p^2 + \sqrt{\vert  p \vert^4 + 2 p^2 \widehat{v}(0) }\right)} \right| \nonumber\\
&\quad = \sum_{p \in \Lsp} \left| \frac{p^2 \Big( \sqrt{\vert p \vert^4 + 2 p^2 \widehat{v}_N^\beta(p)} - \sqrt{\vert p \vert^4 + 2 p^2 \widehat{v}(0)} \Big) + 2p^2 \Big( \widehat{v}_N^\beta(p) - \widehat{v}(0) \Big)}{\sqrt{\vert p \vert^4 + 2 p^2 \widehat{v}_N^\beta(p) }  \left(p^2 + \sqrt{\vert  p \vert^4 + 2 p^2 \widehat{v}_N^\beta(p) }\right)\sqrt{\vert p \vert^4 + 2 p^2 \widehat{v}(0) }  \left(p^2 + \sqrt{\vert  p \vert^4 + 2 p^2 \widehat{v}(0) }\right)} \right| \nonumber\\
&\quad \ls \sum_{p \in \Lsp} \frac{|p|^3 N^{-\beta}}{p^8} \ls N^{-\beta},
\end{align}
which proves, for $\beta \in \big(\frac{1}{2},1\big)$, that
\begin{align}
E^{(2)}_{0,1} &= \left(\widehat{v}(0)^2 \sum_{p \in \Lsp} \frac{1}{\sqrt{\vert p \vert^4 + 2 p^2 \widehat{v}(0)} \left( p^2 + \sqrt{\vert p \vert^4 + 2 p^2 \widehat{v}(0)} \right)} \right) \Bigg( -\frac{1}{2N} \sum_{q \in \Lsp} \frac{\widehat{v}_N^\beta (q)^2}{2q^2} \Bigg) \nonumber\\ 
&\quad + \mathcal{O}(N^{2(\beta-1)}).
\end{align}

\subsection{Calculation for $E_\mathrm{pert}$}

The goal of this section is to estimate and explicitly calculate the leading order contribution of
\begin{equation}
E_\mathrm{pert}:= \lr{\Chiz,\FockG_2\Chiz} + \lr{\Chiz,\FockG_1\frac{\Q_0}{E_0-\FockGz}\FockG_1\Chiz}
\end{equation}
with $\Chiz=\BogUz^*\vac$. To this end, we introduce the shorthand notation 
\begin{align}
\widetilde{c}_p &= \cosh (\tau_p)
\quad \text{and} \quad
\widetilde{s}_p = \sinh (\tau_p) .
\end{align}
By the definition of  $\FockG_2$ we get
\begin{align}
\scp{\Chiz}{\FockG_2 \Chiz}
&= \frac{1}{2N} \sum_{\substack{p,q,r\in\Lsp\\ p+r\neq0,q+r\neq0}}\hvNb(r) c_{p+r} c_q c_p  c_{q+r} 
\scp{\Omega}{\BogUz \ad_{p+r} \ad_q   a_p a_{q+r} \BogUz^* \Omega} .
\end{align}
Note that 
\begin{align}
&\scp{\Omega}{\BogUz \ad_{p+r} \ad_q   a_p a_{q+r} \BogUz^* \Omega}
\nonumber \\
&\quad = \widetilde{s}_{p+r} \widetilde{s}_{q+r}
\scp{\Omega}{ a_{-(p+r)} \BogUz \ad_q   a_p \BogUz^* \ad_{-(q+r)}  \Omega}
\nonumber \\
&\quad = \widetilde{s}_{p+r} \widetilde{s}_{q+r}
\scp{\Omega}{ a_{-(p+r)} \left( \widetilde{c}_q \ad_q + \widetilde{s}_q a_{-q} \right) \left( \widetilde{c}_p a_p + \widetilde{s}_p \ad_{-p} \right) \ad_{-(q+r)}  \Omega}
\nonumber \\
&\quad = \widetilde{s}_{p+r} \widetilde{s}_{q+r}\widetilde{c}_q \widetilde{c}_p
\scp{\Omega}{ a_{-(p+r)} \ad_q  a_p  \ad_{-(q+r)}  \Omega}
 + \widetilde{s}_{p+r} \widetilde{s}_{q+r}\widetilde{s}_q \widetilde{s}_p
\scp{\Omega}{ a_{-(p+r)} a_{-q} \ad_{-p}  \ad_{-(q+r)}  \Omega}
\nonumber \\
&\quad = \widetilde{s}_{p+r} \widetilde{s}_{q+r}\widetilde{c}_q \widetilde{c}_p
\delta_{-(p+r),q} \delta_{-(q+r),p}
 + \widetilde{s}_{p+r} \widetilde{s}_{q+r}\widetilde{s}_q \widetilde{s}_p
\left( \delta_{p,q} \delta_{r,0} + \delta_{r,0} \right) .
\end{align}
Since $\delta_{-(q+r),p} = \delta_{-(p+r),q}$,$\delta_{r,0} = 0$ for all $r \in \Lsp$, $\widetilde{s}_{-p} = \widetilde{s}_p$ and $\widetilde{c}_{-p} = \widetilde{c}_p$ we obtain
\begin{align}
\scp{\Chiz}{\FockG_2 \Chiz}
&= \frac{1}{2N} \sum_{\substack{p,q,r\in\Lsp\\ p+r\neq0,q+r\neq0}}\hvNb(r) c_{p+r} c_q c_p  c_{q+r} 
\widetilde{s}_{p+r} \widetilde{s}_{q+r}\widetilde{c}_q \widetilde{c}_p \delta_{-(p+r),q} 
\nonumber \\
&= \frac{1}{2N} \sum_{\substack{p,r\in\Lsp\\ p+r\neq0}}\hvNb(r) c_{p+r}^2 c_p^2   
\widetilde{s}_{p+r} \widetilde{s}_{p}\widetilde{c}_{p+r} \widetilde{c}_p  .
\end{align}
Using \eqref{eq:bound-tau} we estimate
\begin{align}
\abs{\scp{\Chiz}{\FockG_2 \Chiz} }
&\leq C N^{-1} \norm{\widehat{v}}_{\infty}  \sum_{\substack{p,r\in\Lsp\\ p+r\neq0}} \abs{\tau_{p+r}} \abs{\tau_p} 
\leq C N^{-1} \norm{\widehat{v}}_{\infty}  \sum_{\substack{p,r\in\Lsp\\ p+r\neq0}} p^{-4} (p+r)^{-4}
\leq C N^{-1} ,
\end{align}
showing that the contribution of $\scp{\Chiz}{\FockG_2 \Chiz}$ is subleading.
Next, we use the splitting \eqref{eq:tildeG1} and consider
\begin{align}
\lr{\Chiz,\FockG_1\frac{\Q_0}{E_0-\FockGz}\FockG_1\Chiz}
&= \lr{\Chiz, \widetilde{\mathbb{G}}_1 \frac{\Q_0}{E_0-\FockGz} \widetilde{\mathbb{G}}_1 \Chiz}
\\
&\quad +  \lr{\Chiz, \mathbb{R}_d  \frac{\Q_0}{E_0-\FockGz} \mathbb{R}_d  \Chiz}
\\
&\quad +2 \Re  \lr{\Chiz, \widetilde{\mathbb{G}}_1 \frac{\Q_0}{E_0-\FockGz} \mathbb{R}_d  \Chiz} .
\end{align}

\paragraph{The first term.}

A direct computation leads to
\begin{align}
\BogUz \widetilde{\mathbb{G}}_1 \BogUz^* \vac &= \frac{1}{\sqrt{N}}\sum_{\substack{p,q\in\Lsp\\p+q\neq 0}} f(p,q) \ad_{q} \ad_{p} \ad_{-p-q} \vac,
\end{align}
with
\begin{align}\label{f_def}
f(p,q) &= \frac{1}{6} c_{p+q} c_p c_q \bigg( 
\widehat{v}_N^\beta(p) (\widetilde{c}_p + \widetilde{s}_p) (\widetilde{c}_{p+q}\widetilde{s}_q + \widetilde{c}_{q}\widetilde{s}_{p+q})
+ \widehat{v}_N^\beta(q) (\widetilde{c}_q + \widetilde{s}_q) (\widetilde{c}_{p+q}\widetilde{s}_p + \widetilde{c}_{p}\widetilde{s}_{p+q}) \nonumber\\
&\qquad + \widehat{v}_N^\beta(p+q) (\widetilde{c}_{p+q} + \widetilde{s}_{p+q}) (\widetilde{c}_p\widetilde{s}_q + \widetilde{c}_{q}\widetilde{s}_p) \bigg) \nonumber\\
&\quad + \frac{1}{3} \Big( \widetilde{c}_p\widetilde{c}_q\widetilde{c}_{p+q} + \widetilde{s}_p\widetilde{s}_q\widetilde{s}_{p+q} \Big) \bigg( \widehat{v}_N^\beta(p) c_p (c_{p+q}s_q + c_qs_{p+q}) + \widehat{v}_N^\beta(q) c_q (c_{p+q}s_p + c_ps_{p+q}) \nonumber\\
&\qquad + \widehat{v}_N^\beta(p+q) c_{p+q} (c_ps_q + c_qs_p) \bigg).
\end{align}
Note that this function is written in a symmetric way, i.e., $f(p,q) = f(q,p) = f(-p-q,q) = f(p,-p-q)$.
Together with
\begin{align}
\BogUz \frac{\Q_0}{E_0-\FockGz}  \BogUz^* \ad_{p_1} \ldots \ad_{p_n} \vac
=  - \frac{1}{ \mathfrak{e}(p_1) + \ldots + \mathfrak{e}(p_n)} \ad_{p_1} \ldots \ad_{p_n} \vac,
\end{align}
where $\mathfrak{e}(p) = \sqrt{F_p^2 - G_p^2}$, this leads to
\begin{align}
&\lr{\Chiz, \widetilde{\mathbb{G}}_1 \frac{\Q_0}{E_0-\FockGz} \widetilde{\mathbb{G}}_1 \Chiz}
\nonumber \\
&\quad = \scp{ \BogUz \widetilde{\mathbb{G}}_1 \BogUz^* \Omega}{\BogUz \frac{\Q_0}{E_0-\FockGz} \BogUz^* \BogUz \widetilde{\mathbb{G}}_1 \BogUz^* \Omega}
\nonumber \\
&\quad = - \frac{1}{N} \sum_{\substack{p,q\in\Lsp\\p+q\neq0}} \sum_{\substack{r,s\in\Lsp\\r+s\neq0}}   \frac{f(p,q) f(r,s)}{ \mathfrak{e}(p+q) + \mathfrak{e}(-p) + \mathfrak{e}(-q) }
\scp{\ad_{p+q} \ad_{-p}   \ad_{-q} \Omega}{\ad_{r+s} \ad_{-r}   \ad_{-s} \Omega} \nonumber\\
&\quad = - \frac{6}{N} \sum_{\substack{p,q\in\Lsp\\p+q\neq0}} \frac{f(p,q)^2}{ \mathfrak{e}(p+q) + \mathfrak{e}(-p) + \mathfrak{e}(-q) }.
\end{align}
From Lemma~\ref{lemma:propF,G} we find that $\mathfrak{e}(p) \gs |p|^2$, and thus $(\mathfrak{e}(p+q) + \mathfrak{e}(-p) + \mathfrak{e}(-q))^{-1} \ls |p|^{-2}$. Using this, Cauchy--Schwarz, Lemma~\ref{lem:cp:sp}, $|\widetilde{c}_p| \ls 1$, $|\widetilde{s}_p| \ls |p|^{-4}$, and the symmetry of $f(p,q)$, we can estimate
\begin{align}
\bigg|\lr{\Chiz, \widetilde{\mathbb{G}}_1 \frac{\Q_0}{E_0-\FockGz} \widetilde{\mathbb{G}}_1 \Chiz}\bigg| \ls N^{-1} \sum_{\substack{p,q\in\Lsp\\p+q\neq 0}} |p|^{-2} \left(\widehat{v}_N^\beta(p)\right)^2 |q|^{-4} \ls N^{\beta-1},
\end{align}
where the last step follows from Lemma~\ref{lem:aux:potential}.

\paragraph{The second term.}

Note that Lemma~\ref{lem:G1}  implies
\begin{align}
\norm{\mathbb{R}_d  \Chiz}^2 
&= \scp{\mathbb{R}_d  \Chiz}{\mathbb{R}_d  \Chiz} 
\ls N^{- \frac{1}{2} } \norm{\mathbb{R}_d  \Chiz} \norm{\left( \Np +1 \right)^{\frac{3}{2}} \Chiz} 
\end{align}
and therefore
\begin{align}
\norm{\mathbb{R}_d  \Chiz} 
&\ls N^{- \frac{1}{2} }  \norm{\left( \Np +1 \right)^{\frac{3}{2}} \Chiz} .
\end{align}
Together with Lemma~\ref{lem:BT:K0} we get
\begin{align}
\abs{\lr{\Chiz, \mathbb{R}_d  \frac{\Q_0}{E_0-\FockGz} \mathbb{R}_d  \Chiz}}
&\ls N^{-1}  \norm{\left( \Np +1 \right)^{\frac{3}{2}} \Chiz}^2 
\ls N^{-1}  ,
\end{align}
showing that the left hand side only has a subleading contribution which can be absorbed in the overall error in our main theorem.

\paragraph{The third term.}
Having estimated the first and second term, the third term can immediately be bound by Cauchy--Schwarz, i.e.,
\begin{align}
\bigg|\lr{\Chiz, \widetilde{\mathbb{G}}_1 \frac{\Q_0}{E_0-\FockGz} \mathbb{R}_d  \Chiz}\bigg| \ls \sqrt{N^{-1}} \sqrt{N^{\beta-1}} =  N^{\beta/2 - 1} \ls N^{\frac{3}{2}(\beta-1)}
\end{align}
for $\beta\in(\frac{1}{2},1)$.

\paragraph{Summary.} We have shown that for $\beta\in(\frac{1}{2},1)$,
\begin{align}\label{E_tilde_pert}
E_\mathrm{pert} = \widetilde{E}_\mathrm{pert} + \mathcal{O}\big(N^{\frac{3}{2}(\beta-1)}\big), ~~\text{with}´~ \widetilde{E}_\mathrm{pert} := - \frac{6}{N} \sum_{\substack{p,q\in\Lsp \\ p+q\neq0}} \frac{f(p,q)^2}{ \mathfrak{e}(p+q) + \mathfrak{e}(p) + \mathfrak{e}(q)}\,.
\end{align}

Next, we simplify $\widetilde{E}_\mathrm{pert}$. Recall that $f(p,q)$ is defined in \eqref{f_def}, and $\mathfrak{e}(p) = \sqrt{F_p^2-G_p^2}$, with $F_p$ and $G_p$ defined in \eqref{def:F,G}. Our goal is to prove that
\begin{align}
\widetilde{E}_\mathrm{pert} = C_{N,\beta}^{(2)} \left( - \frac{1}{2N} \sum_{p\in\Lsp} \frac{\widehat{v}_N^\beta(p)^2}{2p^2} \right) + O(N^{-1}(\ln N)^2),
\end{align}
with
\begin{align}
C_{N,\beta}^{(2)} =  \sum_{q \in \Lsp} 4 \Big( c_q\widetilde{s}_q + 2 \widetilde{c}_q s_q \Big)^2.
\end{align}
Note that $|C_{N,\beta}^{(2)}| \ls 1$. We first collect a few preparatory estimates. Lemma~\ref{lemma:propF,G} implies that
\begin{align}\label{e_p_estimate_p_squared}
p^2 \ls \mathfrak{e}(p) \ls p^2
\end{align}
for $p\in\Lsp$, and thus we have
\begin{align}\label{e_p_denominator_estimate}
\frac{1}{ \mathfrak{e}(p+q) + \mathfrak{e}(p) + \mathfrak{e}(q)} \ls \frac{1}{p^2}, \frac{1}{q^2}, \frac{1}{|p| \, |q|}.
\end{align}
Furthermore, \eqref{eq:fquare-gsquare} implies that
\begin{align}\label{e_p_p_squared_approximation}
\left| p^2 - \mathfrak{e}(p) \right| &= p^2 \left| 1 - \sqrt{1 + \frac{2\widehat{v}_N^\beta(p)}{p^2} + \frac{A_p}{p^4}} \right| \nonumber\\
&= p^2 \left| \frac{1}{2} \left(\frac{2\widehat{v}_N^\beta(p)}{p^2} + \frac{A_p}{p^4}\right) \int_0^1 \ds \left( 1 + s\left(\frac{\widehat{v}_N^\beta(p)}{p^2} + \frac{A_p}{p^4}\right) \right)^{-1/2} \right| \nonumber\\
&\ls 1\,,
\end{align}
for $A_p$ as in \eqref{def:Ap} with $|A_p|\ls N^{\beta-1}$ by \eqref{eq:boundAp}. We start to estimate \eqref{E_tilde_pert} by using Lemma~\ref{lem:cp:sp} and $|\widetilde{s}_p| \ls |p|^{-4}$, $|\widetilde{c}_p - 1| \ls |p|^{-8}$. Then many contributions of $f(p,q)$ can be summed up, and renaming summation indices if necessary leads to
\begin{align}\label{E_tilde_pert_simplified}
\widetilde{E}_\mathrm{pert} = - \frac{1}{4N} \sum_{\substack{p,q\in\Lsp \\ p+q\neq0}} \frac{\widehat{v}_N^\beta(p) \Big(\widehat{v}_N^\beta(p) + \widehat{v}_N^\beta(p+q)\Big) f(q)}{ \mathfrak{e}(p+q) + \mathfrak{e}(p) + \mathfrak{e}(q)} + \mathcal{O}((\ln N)^2 N^{-1}),
\end{align}
with
\begin{equation}
f(q) := 4\Big(c_q\tilde{s}_q+2\tilde{c}_qs_q\big)^2\,.
\end{equation}
Note that here for the terms which decay only like $|p|^{-3}|q|^{-3}$ we have used that for any $r\in\Lsp$,
\begin{align}\label{p_to_the_minus_three_summation}
\sum_{p \in \Lsp} \frac{|\widehat{v}_N^\beta(p+r)|}{|p|^3} &= \sum_{\substack{p \in \Lsp \\ |p| \leq N^{\beta}}} \frac{|\widehat{v}_N^\beta(p+r)|}{|p|^3} + \sum_{\substack{p \in \Lsp \\ |p| > N^{\beta}}} \frac{|\widehat{v}_N^\beta(p+r)|}{|p|^3} \nonumber\\
&\leq \left(\sup_{p\in \Lsp} |\widehat{v}_N^\beta(p)|\right) \sum_{\substack{p \in \Lsp \\ |p| \leq N^{\beta}}} \frac{1}{|p|^3} + \sqrt{\sum_{p \in \Lsp} |\widehat{v}_N^\beta(p)|^2} \sqrt{\sum_{\substack{p \in \Lsp \\ |p| > N^{\beta}}} \frac{1}{|p|^6}} \nonumber\\
&\ls \ln(N) + \sqrt{N^{3\beta}} \sqrt{N^{-3\beta}}.
\end{align}
Note that $|f(q)| \ls |q|^{-4}$. We then split the remaining double sum in \eqref{E_tilde_pert_simplified} as
\begin{subequations}
\begin{align}
&- \frac{1}{4N} \sum_{\substack{p,q\in\Lsp \\ p+q\neq0}} \frac{\widehat{v}_N^\beta(p) \Big(\widehat{v}_N^\beta(p) + \widehat{v}_N^\beta(p+q)\Big) f(q)}{ \mathfrak{e}(p+q) + \mathfrak{e}(p) + \mathfrak{e}(q)} \notag\\
&\qquad =- \frac{1}{4N} \sum_{\substack{p,q\in\Lsp\\p+q\neq0}} \frac{\widehat{v}_N^\beta(p) \Big(\widehat{v}_N^\beta(p+q) - \widehat{v}_N^\beta(p)\Big) f(q)}{ \mathfrak{e}(p+q) + \mathfrak{e}(p) + \mathfrak{e}(q)} \label{error_r_8} \\
&\quad\qquad - \frac{1}{2N} \sum_{\substack{p,q\in\Lsp\\p+q\neq0}} \widehat{v}_N^\beta(p)^2 f(q) \left( \frac{1}{ \mathfrak{e}(p+q) + \mathfrak{e}(p) + \mathfrak{e}(q)} - \frac{1}{2p^2} \right) \label{error_r_9} \\
&\quad\qquad - \frac{1}{2N} \sum_{p\in\Lsp} \frac{\widehat{v}_N^\beta(p)^2}{2p^2}  \sum_{q\in\Lsp} f(q)\label{main:term}\\
&\quad\qquad+\frac{1}{2N}\sum_{p\in\Lsp}\frac{\hvNb(p)^2}{2p^2}f(-p)\label{error_r_7}\,.
\end{align}
\end{subequations}
We will now prove that \eqref{main:term} is the leading contribution.
Clearly $|\eqref{error_r_7}| \ls N^{-1}$, and similarly to \eqref{r_5_estimate} we estimate
\begin{equation}
|\eqref{error_r_8}| \ls N^{-1} \sum_{\substack{p,q\in\Lsp \\ |q| \leq N^{\beta}}} \frac{|\widehat{v}_N^\beta(p)| \, |q| N^{-\beta} |q|^{-4}}{p^2} + N^{-1} \sum_{\substack{p,q\in\Lsp \\ |q| > N^{\beta}}} \frac{|\widehat{v}_N^\beta(p)| |q|^{-4}}{p^2} \ls N^{-1} \ln(N),
\end{equation}
where we used \eqref{Taylor_of_v_hat} for momenta $|q| \leq N^{\beta}$ and $\norm{\hat{v}}_{\l^\infty} \ls 1$ for momenta $|q| \geq N^{\beta}$, together with Lemma~\ref{lem:aux:potential}. Finally, in order to estimate \eqref{error_r_9}, we first split into momenta $|q| \leq N^{\beta}$ and $|q| > N^\beta$, and then use \eqref{e_p_estimate_p_squared}, \eqref{e_p_denominator_estimate}, and \eqref{e_p_p_squared_approximation}. We find
\begin{align}
&|\eqref{error_r_9}|\notag\\
&= \Bigg| \frac{1}{2N} \sum_{\substack{p,q\in\Lsp \\p+q\neq0\\ |q| \leq N^{\beta}}} \widehat{v}_N^\beta(p)^2 f(q) \left( \frac{1}{ \mathfrak{e}(p+q) + \mathfrak{e}(p) + \mathfrak{e}(q)} - \frac{1}{2p^2} \right) \nonumber\\ 
&\quad + \frac{1}{2N} \sum_{\substack{p,q\in\Lsp\\p+q\neq0 \\ |q| > N^{\beta}}} \widehat{v}_N^\beta(p)^2 f(q) \left( \frac{1}{ \mathfrak{e}(p+q) + \mathfrak{e}(p) + \mathfrak{e}(q)} - \frac{1}{2p^2} \right) \Bigg| \nonumber\\
&\quad \ls N^{-1} \sum_{\substack{p,q\in\Lsp \\p+q\neq0\\ |q| \leq N^{\beta}}} \widehat{v}_N^\beta(p)^2 |f(q)| \left( \frac{|p^2 - \mathfrak{e}(p)| + |\mathfrak{e}(q)| + |p^2 - (p+q)^2| + |(p+q)^2 - \mathfrak{e}(p+q)|}{ p^2 \big( \mathfrak{e}(p+q) + \mathfrak{e}(p) + \mathfrak{e}(q)\big)}\right) \nonumber\\
&\quad\quad + \mathcal{O}(N^{-1}) \nonumber\\
&\quad \ls N^{-1} \sum_{\substack{p,q\in\Lsp \\ |q| \leq N^{\beta}}} \widehat{v}_N^\beta(p)^2 |q|^{-4} \left( \frac{1}{|p|^4} + \frac{|q|^2}{p^2|p||q|} + \frac{|p||q|}{p^2 p^2} + \frac{|q|^2}{p^2 |p||q|} \right) + \mathcal{O}(N^{-1}) \nonumber\\
&\quad \ls N^{-1} (\ln N)^2.
\end{align}
Note that in the last step we have additionally used  \eqref{p_to_the_minus_three_summation}.

\section{Proof of Theorem \ref{thm:state}}\label{sec:proof:thm:state}

Let $p_0 = \ket{\varphi_0} \bra{\varphi_0}$, $q_0 = 1 - p_0 = \sum_{p \in \Lambda_+^*} \ket{\varphi_p} \bra{\varphi_p}$ and $A \in \mathcal{L}(L^2(\Lambda))$ with $\norm{A}_{\mathcal{L} \left( L^2(\Lambda) \right) } = 1$. Then,
\begin{align}
\Tr \left( A \gamma^{(1)} \right) 
&= \Tr \left( A \left( p_0 \gamma^{(1)} p_0 +  q_0 \gamma^{(1)} q_0
+ p_0 \gamma^{(1)} q_0 + q_0\gamma^{(1)} p_0   \right) \right)
\nonumber \\
&= \frac{1}{N} \Tr \left( A  p_0 \right) 
\scp{\Psi_N}{\ad(\varphi_0) a(\varphi_0) \Psi_N}
\nonumber \\
&\quad 
+ \frac{1}{N} \sum_{p,q \in \Lambda_+^*} \Tr \left( A \ket{\varphi_p} \bra{\varphi_q}  \right)
\scp{\Psi_N}{\ad(\varphi_q) a(\varphi_p) \Psi_N}
\nonumber \\
&\quad +
 \frac{1}{N} \sum_{p \in \Lambda_+^*} \Big(\Tr \left( A  \ket{\varphi_0} \bra{\varphi_p} \right)
\scp{\Psi_N}{\ad(\varphi_p) a(\varphi_0) \Psi_N} 
\nonumber \\
&\quad\qquad\qquad\quad +
\Tr \left( A  \ket{\varphi_p} \bra{\varphi_0} \right)
\scp{\Psi_N}{\ad(\varphi_0) a(\varphi_p) \Psi_N}\Big) .
\end{align}
By means of \eqref{eqn:substitution:rules} this can be written as
\begin{subequations}
\begin{align}
\label{eq:trace norm distance estimate a}
\Tr \left( A \left( \gamma^{(1)} - p_0 \right) \right)
&=  -  \frac{1}{N}
\Tr \left( A  p_0 \right) 
\scp{\Chi}{ \FockT \Np \FockT^* \Chi} 
\\
\label{eq:trace norm distance estimate b}
&\quad 
+ \frac{1}{N} \sum_{p,q \in \Lambda_+^*} \Tr \left( A \ket{\varphi_p} \bra{\varphi_q}  \right)
 \scp{\Chi}{\FockT \ad_q a_p \FockT^* \Chi}
\\
\label{eq:trace norm distance estimate c}
&\quad +
\frac{1}{N} \sum_{p \in \Lambda_+^*} \Tr \left( A  \ket{\varphi_0} \bra{\varphi_p} \right)
\scp{\Chi}{\FockT \ad_p \sqrt{N - \Np} \FockT^* \Chi} 
\\
\label{eq:trace norm distance estimate d}
&\quad +
\Tr \left( A  \ket{\varphi_p} \bra{\varphi_0} \right)
\scp{\Chi}{\FockT \sqrt{N - \Np} a_p \FockT^* \Chi}  .
\end{align}
\end{subequations}
Note that 
\begin{align}
\label{eq:estimate number operator with T wrt to chi}
\scp{\Chi}{ \FockT \Np \FockT^* \Chi}  \ls 1
\end{align}
because of Lemma \ref{lem:T:Number} and Lemma \ref{lem:kinetic:energy:excitations}.
With \eqref{eq:estimate number operator with T wrt to chi}, we obtain
\begin{align}
\bigg| \sum_{p,q \in \Lambda_+^*} \Tr \left( A \ket{\varphi_p} \bra{\varphi_q}  \right)
 \scp{\Chi}{\FockT \ad_q a_p \FockT^* \Chi}
\bigg|
&=\lr{\Chi,\FockT\d\Gamma(q_0Aq_0)\FockT^*\Chi}\\
&\ls \norm{A}_{\mathcal{L} \left( \l^2(\Lambda) \right) } 
\norm{\left( \Np + 1 \right)^{1/2} \FockT^* \Chi}^2 
\ls 1 .
\end{align}
Hence,
\begin{align}
\abs{\eqref{eq:trace norm distance estimate a} + \eqref{eq:trace norm distance estimate b}} 
\ls \frac{1}{N} .
\end{align}
Using
\begin{align}
\label{eq:estimate for l-2 norm of trace of operator with varphi-0 and varphi-p}
 \sum_{p \in \Lambda_+^*} \abs{\Tr \left( A  \ket{\varphi_0} \bra{\varphi_p} \right) }^2
&\leq  \norm{A \varphi_0}^2
\leq \norm{A}_{\mathcal{L} \left( L^2(\Lambda) \right) }^2 \leq 1 ,
\end{align}
Lemma \ref{lem:T:Number}, Lemma \ref{lem:kinetic:energy:excitations} and
\eqref{eq:estimate for the projector P},
we estimate
\begin{align}
\abs{\eqref{eq:trace norm distance estimate c}}
&= \frac{1}{N} \bigg| \sum_{p \in \Lambda_+^*} \Tr \left( A  \ket{\varphi_0} \bra{\varphi_p} \right)
\scp{\Chi}{\FockT \ad_p \left( \sqrt{N - \Np} - N^{1/2} + N^{1/2} \right) \FockT^* \Chi} \bigg|
\nonumber \\
&\ls 
N^{-3/2} \norm{\left( \Np + 1 \right)^{3/4} \FockT^* \Chi}^2
+
N^{-1/2}  \bigg| \sum_{p \in \Lambda_+^*} \Tr \left( A  \ket{\varphi_0} \bra{\varphi_p} \right)
\scp{\Chi}{\FockT \ad_p \FockT^* \Chi} \bigg|
\nonumber \\
&\ls N^{-3/2} 
+ N^{-1/2} \norm{\Chi - \left( \Chi_0 + \Chi_1 + \Chi_2 \right) }
\Big[ \norm{\left( \Np + 1 \right)^{1/2} \FockT^* \Chi}
\nonumber \\
&\qquad \qquad \qquad \qquad \qquad \qquad \qquad \qquad \qquad  
+ \norm{\left( \Np + 1 \right)^{1/2} \FockT^* \left( \Chi_0 + \Chi_1 + \Chi_2 \right)}
\Big]
\nonumber \\ 
&\quad +
N^{-1/2}  \bigg| \sum_{p \in \Lambda_+^*} \Tr \left( A  \ket{\varphi_0} \bra{\varphi_p} \right)
\scp{\left( \Chi_0 + \Chi_1 + \Chi_2 \right)}{\FockT \ad_p \FockT^* \left( \Chi_0 + \Chi_1 + \Chi_2 \right)} \bigg|
\nonumber \\
&\ls N^{-3/2} 
+ N^{\frac{3}{2} \beta - 2} 
\Big[ 1
+ \norm{\left( \Np + 1 \right)^{1/2} \FockT^* \left( \Chi_0 + \Chi_1 + \Chi_2 \right)}
\Big]
\nonumber \\ 
&\quad +
N^{-1/2}  \bigg| \sum_{p \in \Lambda_+^*} \Tr \left( A  \ket{\varphi_0} \bra{\varphi_p} \right)
\scp{\left( \Chi_0 + \Chi_1 + \Chi_2 \right)}{\FockT \ad_p \FockT^* \left( \Chi_0 + \Chi_1 + \Chi_2 \right)} \bigg| .
\end{align}
Note that
\begin{align}
\scp{\Chi_0 }{\FockT \ad_p \FockT^* \Chi_0 } 
=  \scp{\Chi_0 }{\FockT \ad_p \FockT^* \Chi_2 } 
= \scp{\Chi_1 }{\FockT \ad_p \FockT^* \Chi_1} = \scp{\Chi_2 }{\FockT \ad_p \FockT^* \Chi_2}=
0
\end{align}
because all these expressions are vacuum expectation values of operators with an odd number of creation/annihilation operators.
Moreover, recall that
\begin{align}
\Chi_1 &=\REz\FockG_1\Chiz
\nonumber \\
&= \BogUz^* \BogUz \frac{\Qz}{\Ez-\FockGz}\BogUz^* \BogUz \FockG_1 \BogUz^* \vac
\nonumber \\
&= -
\frac{1}{\sqrt{N}} \sum_{\substack{p,q\in\Lsp\\p+q\neq0}} \frac{f(p,q)}{\mathfrak{e}(p+q) + \mathfrak{e}(p) + \mathfrak{e}(q)}
 \BogUz^*  \ad_{p+q} \ad_{-p}   \ad_{-q} \vac 
\end{align}
with $f(p,q)$ being defined as in Section \ref{section:explicit calculation of E-pert}.
Consequently,
\begin{align}
\lr{\Chi_0,\FockT\ad_p\FockT^*\Chi_1}
&=-
\frac{1}{\sqrt{N}} \sum_{\substack{p,q\in\Lsp\\p+q\neq0}} \frac{f(p,q)}{\mathfrak{e}(p+q) + \mathfrak{e}(p) + \mathfrak{e}(q)}\lr{\Omega,\BogUz\FockT\ad_p\FockT^*\BogUz^*\ad_{p+q} \ad_{-p}   \ad_{-q} \Omega}=0
\end{align}
because $\BogUz\FockT\ad_p\FockT^*\BogUz^*$ contributes only an annihilation operator, hence the inner product vanishes. 
Finally, 
\begin{align}
\norm{\left( \Np + 1 \right)^{1/2} \FockT^* \Chi_0 }
&\ls 1 ,
\quad 
\norm{\left( \Np + 1 \right)^{1/2} \FockT^*  \Chi_1 }
\ls N^{\frac{\beta - 1}{2}}  
, 
\quad 
\norm{\left( \Np + 1 \right)^{1/2} \FockT^*  \Chi_2}
\ls  N^{\beta - 1}
\end{align}
by Lemma \ref{lem:T:Number}, Lemma \ref{lem:BT:K0} and similar estimates as used in the proofs of Lemma \ref{lem:aux} and Lemma \ref{lem:aux:2}.
Using \eqref{eq:estimate for l-2 norm of trace of operator with varphi-0 and varphi-p} again and the Cauchy--Schwarz inequality leads to
\begin{align}
\abs{\eqref{eq:trace norm distance estimate c}}
&\ls N^{\frac{3}{2} (\beta - 1 )} .
\end{align}
By similar means we get
\begin{align}
\abs{\eqref{eq:trace norm distance estimate d}}
&\ls N^{\frac{3}{2} (\beta - 1 )} .
\end{align}
In total this proves the claim because the space of compact operators is the dual of the space of trace-class operators.\qed

\section*{Acknowledgements}
It is a pleasure to thank Chiara Boccato, Robert Seiringer, Christian Hainzl, Alessandro Olgiati and Phan Th\`anh Nam for helpful discussions. 
We are very grateful to the  Oberwolfach Research Institute for Mathematics for their hospitality during a Research in Pairs fellowship, where the first part of this project was completed.
L.B.\ was supported by the German Research Foundation within the Munich Center of Quantum Science and Technology (EXC 2111). N.L.\ acknowledges funding from the Swiss National Science Foundation through the NCCR SwissMap and support from the European Union's Horizon 2020 research and innovation programme under the Marie Sk{\textl}odowska-Curie grant agreement No.\ 101024712. S.P.\ acknowledges funding by the Deutsche Forschungsgemeinschaft (DFG, German Research Foundation) - project number 512258249.

\bibliographystyle{abbrv}

\end{document}